\newcommand{\R}{{\mathord{\mathbb R}}}
\newcommand{\N}{{\mathord{\mathbb N}}}
\newcommand{\C}{{\mathord{\mathbb C}}}
\def\chib {\overline{\chi}}
\newcommand{\HH}{\mathcal{H}}
\newcommand{\FF}{\mathcal{F}}
\newcommand{\WW}{\mathcal{W}}
\newcommand{\hh}{\mathfrak{h}}
\newcommand{\kk}{\mathfrak{k}}
\newcommand{\umm}{\underline{m}}
\newcommand{\unn}{\underline{n}}
\newcommand{\upp}{\underline{p}}
\newcommand{\uqq}{\underline{q}}
\newcommand{\uzz}{\underline{0}}
\def\e {{e}}
\newcommand{\ran}{{\rm Ran}}
\newcommand{\ben}{\begin{displaymath}}
\newcommand{\een}{\end{displaymath}}
\newcommand{\beqn}{\begin{equation}}
\newcommand{\eeqn}{\end{equation}}
\newcommand{\beqna}{\begin{eqnarray*}}
\newcommand{\eeqna}{\end{eqnarray*}}
\DeclareMathOperator*{\esssup}{ess\,sup}
\def\inf{{\rm inf}\,}
\def\supp{\operatorname{supp}}
\newcommand{\sfrac}[2]{\textrm{\footnotesize $\frac{#1}{#2}$}}
\newtheorem{lemma}{Lemma}
\newtheorem{theorem}[lemma]{Theorem}
\newtheorem{remark}[lemma]{Remark}
\newtheorem{proposition}[lemma]{Proposition}
\newtheorem{corollary}[lemma]{Corollary}
\newtheorem{definition}[lemma]{Definition}
\begin{document}
\title{Ground States  in the Spin Boson Model}
\author{\vspace{5pt} D. Hasler $^1$\footnote{
E-mail: dghasler@wm.edu} and I.
Herbst$^2$\footnote{E-mail: iwh@virginia.edu.} \\
\vspace{-4pt} \small{$1.$ Department of Mathematics,
College of William and Mary} \\ \small{Williamsburg, VA 23187-8795, USA}\\
\vspace{-4pt}
\small{$2.$ Department of Mathematics, University of Virginia,} \\
\small{Charlottesville, VA 22904-4137, USA}\\}
\date{}
\maketitle

\begin{abstract}
 We prove that the Hamiltonian of the model describing a spin which is linearly
coupled to a field of relativistic and massless bosons, also known as the spin-boson model, admits a ground state
for small values of the coupling constant $\lambda$.
We show that the ground state energy is an analytic function of  $\lambda$ and
that the corresponding ground state can also be chosen to be an analytic function of $\lambda$.
No infrared regularization is imposed.
Our proof is based on a modified version of the BFS operator theoretic
renormalization analysis. Moreover, using a positivity argument we prove
that the ground state of the spin-boson model is unique. We show that
the expansion coefficients of the ground state and the ground state energy
can be calculated using regular analytic perturbation theory.
\end{abstract}

\section{Introduction}
\label{sec:int}

The spin boson model describes a quantum mechanical two level system which
is linearly coupled to the quantized field of bosons. We assume
that the quantized field is a relativistic field  of  massless bosons, and
we do not impose any infrared regularization.
In that case the
spin-boson model can be used  as a simplified caricature
describing
an atom coupled to the quantized electromagnetic
field. The two level system is a coarse approximation of
the energy levels of the atom. This model has been extensively
investigated, see for example \cite{spohn89,HS95,G00} and references therein.

Our first result states that
for all values of the coupling constant a possible ground state of
the spin boson model must be  unique. This result
is shown using a positivity argument with respect to a suitable choice of measure space.

Our second result  is that  the spin boson model admits a ground state for small values
of the coupling constant.  Quantum mechanical systems which are coupled to a relativistic field
of massless bosons typically do not admit ground states  unless
 cancellations of infrared divergences occur.
The reason the spin boson model admits
a ground state originates from the fact that the coupling matrix has no
diagonal entries, see  \eqref{eq:defofspinboson}.
In non-relativistic quantum electrodynamics (qed)
the gauge symmetry seems to be responsible for the existence of ground states
of molecules  \cite{BFS99,GLL01}.

Our third and main result is that a suitable choice of the ground state
as well as its energy  are analytic functions of the coupling
constant.
In non-relativistic qed expansions  of the
ground state and its energy as the coupling constant tends to zero  have
 recently attracted attention.
In \cite{BFP06,BFP09} it was proven that there  exists  an asymptotic expansion involving
coefficients which depend on the coupling parameter and may contain 
logarithmic expressions. Other expansion algorithms were employed for example in  
 \cite{BCVV09,HHS05,CH04} and it was shown that 
logarithmic terms can occur in
non-relativistic qed. 
On the other hand
it was shown that an atom in the dipole
approximation of qed (which effectively leads to an infrared regularization) has a ground state and ground state energy
which are analytic functions of the coupling constant \cite{GH09}.
We hope that our analyticity result concerning the spin boson model,
 will help to shed light on the nature
of infrared divergences occurring in such  expansions.

Once the analyticity of the ground state and its energy
have been shown, it is natural to ask
whether the coefficients of their power series
expansions can be obtained from regular analytic
perturbation theory. We prove that this is indeed the
case and illustrate how the ground state and its energy can be calculated using
Rayleigh-Schr\"odinger perturbation theory.
To this end we artificially introduce an infrared cutoff in
the Hamiltonian and show that the ground state
and the ground state energy are continuous
functions of that cutoff. Validity of
Rayleigh-Schr\"odinger perturbation theory will then
follow from the uniqueness property of the ground state.
In view of the explicit form of the  Rayleigh-Schr\"odinger coefficients
it is rather surprising that these coefficients are
infrared finite. The coefficients are given
as a sum of terms. While infrared divergent terms
occur 
our analyticity result
implies that the sum of these terms must be finite in the limit when the infrared cutoff is removed.

Let us now address the proof of the main results.
The ground state energy is embedded in the continuous
spectrum, see Proposition \ref{prop:halfline}. In such a situation  regular perturbation theory is
typically not applicable and other methods have to be employed.
To prove the existence result as well as the analyticity result for
the spin-boson model we use   a variant of the operator theoretic renormalization analysis as introduced
in \cite{BFS98} and further developed in \cite{BCFS03}.
The analysis as outlined in these papers
is not directly applicable to problems which are  infrared critical.
 To be able to apply a renormalization procedure, we first perform two initial
so called Feshbach transformations. This converts
the spectral problem of the original Hamiltonian into a problem
involving sums of normal ordered  operators containing only an even number of creation and
annihilation operators. We then must prove that on the  space of such operators
the renormalization procedure converges. To show this in a proper
way  we have to provide a detailed exposition  of the operator theoretic renormalization
transformation.

In \cite{GH09}  the  analyticity of  the ground state as well as the ground state energy of
an atom in the dipole approximation of non-relativistic qed was proven.
We want to point out that also in \cite{GH09}
operator theoretic renormalization was used in the proof, with a somewhat different representation of
 the spectral parameter.
Whereas the
problem considered in \cite{GH09}  was infrared regular,  the problem
considered in this paper is not subject to an infrared regularization.
Moreover, in \cite{GH09} the proof used that renormalization preserves
analyticity on the space of operators, in  this paper we use that renormalization preserves
analyticity on the space  of integral kernels.

In the next section we introduce the model and state the main results, which will then
be proven in later sections.

\section{Model and Statement of Results}
\label{sec:mod}
For a Hilbert space  $\hh$ we introduce the bosonic Fock space
$$
\Gamma(\hh) := \bigoplus_{n=0}^\infty S_n (\hh^{\otimes n}) \; ,
$$
where $S_n$ denotes the orthogonal projection onto the subspace of totally
symmetric tensors in $\hh^{\otimes n}$, and $S_0(\hh^{\otimes 0}) := \C$.
We introduce the vacuum vector $\Omega := (1,0,0,...) \in \FF(\hh)$.
Henceforth we  fix $\hh$ to be  $L^2(\R^3)$ and set $\FF := \FF(\hh)$.
We shall identify vectors $\psi \in \FF$ with sequences $(\psi_n)_{n=0}^\infty$ of
$n$-particle wave functions, $\psi_n(k_1,...,k_n)$, which are totally symmetric
in their $n$ arguments, and $\psi_0 \in \C$. The scalar product of two vectors
$\psi$ and $\phi$ is inherited from $\hh$ and is given by
$$
\langle \psi, \phi \rangle = \sum_{n=0}^\infty \int \overline{\psi_n(k_1,...,k_n)} \phi_n(k_1,...,k_n) d^3k_1...d^3k_n \; .
$$
For $g \in \hh$ one associates a creation operator defined as follows. For $\eta \in S_n (\hh^{\otimes n}) $,
$a^*(g) \eta$ is given by
$$
a^*(g) \eta = \sqrt{n+1} S_{n+1} ( g \otimes \eta ) \; .
$$
This defines a closable linear operator whose closure is also denoted by $a^*(g)$.
The annihilation operator $a(g)$ is defined to be the adjoint of $a^*(g)$.
Formally, we write
\beqn \label{eq:formala}
a(g) = \int \overline{g(k)} a(k) d^3k  , \quad a^*(g) = \int g(k) a^*(k) d^3k ,
\eeqn
where $a(k)$ and $a^*(k)$ are operator-valued distributions. They satisfy the
so called canonical commutation relations
$$
[a(k), a^*(k') ] = \delta(k - k') , \quad [a^{\#}(k) , a^{\#}(k') ] = 0 \; ,
$$
where $a^{\#}$ stands for  $a$ or $a^*$.

Let  $h$ be a measurable function on $\R^3$. We define the operator $d \Gamma(h)$
in $\FF$, as follows  on vectors $\psi$ in its domain
\begin{equation} \label{eq:defhf}
(d \Gamma(h) \psi)_n(k_1,...,k_n) = \sum_{j=1}^n h(k_j) \psi_n(k_1,...,k_n) \; .
\end{equation}
The domain of $d \Gamma(h)$ consists of all vectors $\psi$ such that $d \Gamma(h) \psi$  is a vector in  $\FF$.
We define the free-field Hamiltonian
$H_f := d \Gamma(\omega)$,  where $\omega(k) := |k|$.
The Hilbert space is given by
$$
\HH := \C^2 \otimes \FF \; .
$$
We define  the following Hamilton operator with coupling parameter $\lambda \in \C$
\begin{equation} \label{eq:defofspinboson}
H_\lambda := \tau  \otimes 1 + 1 \otimes H_f + \lambda  \sigma_x \otimes \phi(f)
\; ,
\end{equation}
where
$$
\phi(f) :=  \int  \frac{1}{\sqrt{\omega(k)}} ( f(k) a^*(k) + \overline{f}(k) a(k) )\frac{d^3k}{4 \pi} \ ,
$$
and
$$
\tau  :=  \sigma_z + 1 = \left( \begin{array}{cc} 2 & 0 \\ 0 & 0  \end{array}
\right) \quad , \quad \sigma_x = \left( \begin{array}{cc} 0 & 1 \\
1 & 0
\end{array} \right) .
$$

 Throughout this paper we shall assume that
 $f/\sqrt{\omega} \in \hh$ and $f/\omega \in \hh$. It is well known that creation and annihilation operators
are infinitesimally small with respect to the free-field Hamiltonian,
see Lemma \ref{thm:estimates1} in the Appendix A.
Thus the operator $H_\lambda$ is a self-adjoint operator on the natural domain of
$H_0$.
The main results of this paper hold under the following hypothesis.

\vspace{0.5cm}

\noindent
(H)  $f \in \hh$ and $\| f \|_\infty < \infty$.

\vspace{0.5cm}

Note that (H) implies that $f/\sqrt{\omega}$ and $f/{\omega} \in \hh$.
 We will use
the following notation
$$
B_r := D_r := \{ z \in \C | |z| < r \} .
$$
A main result  of this paper is the following theorem.

\begin{theorem}  \label{thm:main1} Assume (H). There exists a $\lambda_0 > 0$ such that for all $\lambda \in B_{\lambda_0}$,
 $H_\lambda$ has
an eigenvalue $E(\lambda)$ with eigenvector $\psi(\lambda)$ and eigenprojection $P(\lambda)$ satisfying,
\begin{itemize}
\item[(i)] for $\lambda \in \R \cap B_{\lambda_0}$, $E(\lambda) = {\rm inf}  \sigma ( H_\lambda) $ and $E(\lambda)$ is non-degenerate,
\item[(ii)] $\lambda \mapsto E(\lambda)$ and  $\lambda \mapsto \psi(\lambda)$
are analytic on $B_{\lambda_0}$,
\item[(iii)] $\lambda \mapsto P(\lambda)$ is analytic on $B_{\lambda_0}$ and $P(\lambda)^* = P(\overline{\lambda})$.

\end{itemize}
\end{theorem}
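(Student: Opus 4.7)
The strategy is to recast the eigenvalue problem for $H_\lambda$ as a fixed-point problem under an iterated Feshbach--Schur renormalization map of BCFS type, carried out throughout on analytic families in $\lambda \in B_{\lambda_0}$ so that the analytic structure of $E(\lambda)$ and $\psi(\lambda)$ is built in. The proof has three stages: two preparatory Feshbach reductions tailored to the spin-boson structure, an iteration of the smooth renormalization transformation on a suitable Banach space of Wick-ordered integral kernels, and the reconstruction of the eigendata together with the identification of $E(\lambda)$ with $\inf \sigma(H_\lambda)$ via the separately established positivity/uniqueness argument.

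First stage: perform two initial Feshbach transformations. Choose $P_-$ to be the projection onto the zero eigenvalue of $\tau$ tensored with $\one_\FF$. Since $P_- \sigma_x P_- = 0$, the first Feshbach transform $F_{P_-}(H_\lambda - z)$ eliminates the direct linear coupling and picks up the interaction only through the second Neumann term, so its interaction part on $\ran P_- \simeq \FF$ is of even total degree in $a, a^\ast$. A second Feshbach transformation using the projection onto $\ran \chi_{H_f < \rho_0}$, for a fixed small $\rho_0 > 0$, then reduces the problem to an effective operator on the low-energy boson sector of the form $H_f + W_0(\lambda, z)$, where $W_0$ is a sum of Wick-ordered monomials whose integral kernels depend jointly analytically on $(\lambda, z)$ near $(0,0)$ and involve only an even total number of creation/annihilation operators.

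Second stage: iterate the smooth Feshbach renormalization map $\mathcal{R}_\rho$ on the Banach space of Wick kernels of the type produced above, with a simultaneous rescaling of the spectral parameter $z$. The decisive improvement over the naive renormalization of the unreduced spin-boson interaction is that each surviving monomial carries at least two field operators, which is exactly what is needed for the $H_f$ scaling to tame the infrared behavior and render $\mathcal{R}_\rho$ strictly contractive on a small ball, uniformly in $\lambda \in B_{\lambda_0}$. The main obstacle, and I expect it to absorb most of the technical work, is to carry out contraction and analyticity-preservation estimates for $\mathcal{R}_\rho$ in an appropriate kernel norm through infinitely many iterations, tracking analyticity on the space of integral kernels (rather than on operators, as was done in \cite{GH09}).

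Third stage: a standard fixed-point / implicit-function argument on the iterated effective operators gives a unique analytic function $E: B_{\lambda_0} \to \C$ at which the iterated operator has a one-dimensional kernel, together with an analytic right eigenvector. Back-transforming through the two initial Feshbach maps produces an analytic eigenvector $\psi(\lambda)$ of $H_\lambda$ at eigenvalue $E(\lambda)$, giving (ii). For (iii), one defines $P(\lambda) = \psi(\lambda) \langle \psi(\overline{\lambda}), \cdot \rangle / \langle \psi(\overline{\lambda}), \psi(\lambda) \rangle$, which is analytic and visibly satisfies $P(\lambda)^\ast = P(\overline{\lambda})$. Finally, for (i), the separately proved uniqueness of any ground state (by positivity improvement of $e^{-tH_\lambda}$ in a suitable $Q$-space representation), together with $E(0) = 0 = \inf \sigma(H_0)$ and continuity of $E(\lambda)$ and of $\inf \sigma(H_\lambda)$ at $\lambda = 0$, forces $E(\lambda) = \inf \sigma(H_\lambda)$ and non-degeneracy on $B_{\lambda_0} \cap \R$.
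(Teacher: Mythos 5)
Your Stages 1 and 2 and the reconstruction of $E(\lambda)$, $\psi(\lambda)$, $P(\lambda)$ in Stage 3 match the paper's strategy closely: two initial Feshbach maps (first with the projection onto the zero eigenvalue of $\tau$, exploiting the off-diagonal structure of $\sigma_x$ to produce an even-degree Wick-ordered interaction; second with a smooth $H_f$-cutoff), followed by an iterated BCFS-type renormalization on a Banach space of kernels, with the contraction coming from evenness of the monomials rather than an infrared regularization, tracking analyticity on the kernel level, and finally defining $P(\lambda)$ exactly as you do. No substantive differences there.

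However, your argument for item (i) has a genuine gap. You propose to deduce $E(\lambda) = \inf\sigma(H_\lambda)$ from the uniqueness of the ground state, $E(0)=0=\inf\sigma(H_0)$, and continuity of both $E(\lambda)$ and $\inf\sigma(H_\lambda)$ at $\lambda=0$. But continuity and closeness to the infimum do not force equality: for a self-adjoint operator, an eigenvalue produced by the renormalization scheme could a priori be strictly above $\inf\sigma(H_\lambda)$ and embedded in the continuous spectrum, and the uniqueness theorem only says \emph{if} $\inf\sigma(H_\lambda)$ is an eigenvalue then it is simple --- it says nothing about eigenvalues lying above the infimum. The paper closes this gap by a direct argument: for symmetric kernels (which requires tracking the symmetry property $w_{m,n}(\overline z)=\overline{w_{n,m}(z)}$ through the renormalization flow) it proves, via a lower bound on the quadratic form $\langle\varphi,H(w^{(n)}(\zeta))\varphi\rangle\geq\frac{1}{16}\rho\|\varphi\|^2$ for $\zeta\in(-\frac12,-\frac{3}{16}\rho]$ and an iterated use of Feshbach isospectrality, that $H^{(0)}_\lambda(z)$, and hence $H_\lambda-z$, is bounded invertible for all real $z\in(-\frac12, E(\lambda))$; an elementary relative-boundedness estimate handles $z\leq-\frac12$. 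This establishes that nothing in the spectrum lies below $E(\lambda)$. You need this (or some equivalent lower bound on the spectrum), not just a continuity argument, to conclude $E(\lambda)=\inf\sigma(H_\lambda)$.
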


\begin{remark} Since we had the application to non-relativistic qed in mind, we chose Hypothesis (H).
Using a different norm for the Banach spaces one could also show that the conclusion of Theorem \ref{thm:main1} holds under the
assumptions $\omega^{-1 - \epsilon} f \in \hh$ and  $\omega^{-1/2} f \in \hh$, for any $\epsilon > 0$. Moreover, the assertion of the
Theorem \ref{thm:main1} without uniqueness holds if  $\tau$ and $\sigma_x$ are replaced by hermitian $N \times N$ matrices $T$ and $A$, respectively, such that $T$ has a
unique ground state and its eigenprojection $P$ satisfies $P A P = 0$ and $(1-P)A(1-P) = 0$.
\end{remark}

The above result is non-trivial since the ground state energy is not
isolated from the rest of the spectrum. In that situation regular analytic perturbation
theory is not applicable. We prove the existence and analyticity results of  Theorem \ref{thm:main1} using an operator theoretic
renormalization analysis. Since that method  yields the existence of a ground state but not
its uniqueness, we complement the existence
with the following  uniqueness theorem, which we prove in the next section.

\begin{theorem} \label{thm:uniqueness} Suppose $\lambda \in \R$, $\omega^{-1/2} f \in \hh$, and $\omega^{-1} f \in \hh$.
Suppose $E = \inf \sigma (H_\lambda)$ is an eigenvalue. Then $E$ is simple.
\end{theorem}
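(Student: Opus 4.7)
The plan is to prove uniqueness by a Perron--Frobenius argument in the Segal $Q$-space realization of the Fock space, exploiting a $\Z_2$ grading of $\HH$ that reduces the spin--boson spectral problem to a pair of scalar problems on $L^2(Q)$.

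First, I would identify $\FF$ with $L^2(Q,d\mu_C)$ in such a way that $H_f$ becomes the generator of Nelson's hypercontractive, positivity--improving Markovian semigroup, and $\phi(f)$ becomes multiplication by a real--valued Gaussian random variable $\Phi_f \in \bigcap_p L^p(Q)$. Under the induced isomorphism $\HH \simeq L^2(\{1,2\}\times Q)$, the self--adjoint unitary $\Theta := \sigma_z \otimes \Gamma(-1)$ commutes with $H_\lambda$, since the anticommutations $\{\sigma_z,\sigma_x\}=0$ and $\{\Gamma(-1),\phi(f)\}=0$ combine to give $[\Theta,\sigma_x\otimes\phi(f)] = 0$. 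Hence $\HH = \HH_+ \oplus \HH_-$ with each sector $H_\lambda$-invariant, and each sector is unitarily equivalent to $L^2(Q)$ via the folding map $V_\pm\colon (\psi_1,\psi_2)\mapsto \psi_1 + \psi_2$, which uses the parity constraint imposed by $\Theta$ to identify the two spin components with the complementary parity subspaces of $L^2(Q)$. A direct calculation yields
\[
V_{-}H_\lambda V_{-}^{*} = K := H_f + \lambda\Phi_f + (1-\Gamma(-1)),\qquad V_{+}H_\lambda V_{+}^{*} = K' := H_f + \lambda\Phi_f + (1+\Gamma(-1)),
\]
both self--adjoint on $L^2(Q,d\mu_C)$, with $K' = K + 2\Gamma(-1)$.

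Next, I would show that $e^{-tK}$ and $e^{-tK'}$ are positivity improving on $L^2(Q,d\mu_C)$ for every $t>0$. By Trotter's formula this follows from two facts: (i) $e^{-s(H_f+\lambda\Phi_f)}$ is positivity improving by Nelson's theorem combined with Feynman--Kac (valid because $\Phi_f$ is Gaussian, so $e^{s\lambda\Phi_f}\in\bigcap_p L^p(Q)$); and (ii) the bounded semigroup $e^{-s(1\mp\Gamma(-1))} = e^{-s}\bigl(\cosh(s)\,I \pm \sinh(s)\,\Gamma(-1)\bigr)$ is a strictly positive linear combination of the identity and the parity reflection, and hence preserves strict pointwise positivity a.e. The Perron--Frobenius theorem then gives that $\inf\sigma(K)$ and $\inf\sigma(K')$, whenever attained as eigenvalues, are simple with strictly positive eigenvectors in $L^2(Q)$.

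Simplicity of $E$ follows from a sector--separation argument. Since $\sigma(H_\lambda) = \sigma(K)\cup\sigma(K')$, one has $\inf\sigma(K), \inf\sigma(K') \ge E$. If $E$ were attained in both sectors, then both $K$ and $K'$ would have Perron--Frobenius ground states $\psi_{-},\psi_{+}$, strictly positive a.e. Using $K = K' - 2\Gamma(-1)$ and $\|\psi_+\|=1$ one computes
\[
\inf\sigma(K) \le \langle \psi_+,K\psi_+\rangle = E - 2\int_{Q} \psi_+(q)\psi_+(-q)\,d\mu_C(q) < E,
\]
since $\psi_+>0$ a.e.\ forces the integral to be strictly positive, contradicting $\inf\sigma(K)\ge E$. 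Hence $E$ is attained in exactly one sector, and within that sector Perron--Frobenius yields a one--dimensional eigenspace, giving simplicity. The main obstacle is verifying the positivity--improving property of $e^{-tK}$: the non--local parity term $\Gamma(-1)$ lies outside the standard Nelson framework, so one must handle it via the explicit expansion above together with a careful Trotter argument ensuring that alternating the hypercontractive semigroup of $H_f+\lambda\Phi_f$ with the reflection--based factor preserves strict positivity a.e.
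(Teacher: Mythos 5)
Your $\Z_2$-decomposition via $\Theta=\sigma_z\otimes\Gamma(-1)$ is a clean idea, and the folded operators $K=H_f+\lambda\Phi_f+(1-\Gamma(-1))$, $K'=H_f+\lambda\Phi_f+(1+\Gamma(-1))$ are computed correctly, but there is a sign error in step (ii) that is fatal to the argument. You assert that both $e^{-s(1\mp\Gamma(-1))}=e^{-s}(\cosh(s)\,I\pm\sinh(s)\,\Gamma(-1))$ are \emph{positive} linear combinations of $I$ and $\Gamma(-1)$; with the bottom sign (the $K'$ sector) the coefficient of $\Gamma(-1)$ is $-\sinh(s)<0$, and in fact $e^{-s(1+\Gamma(-1))}$ is not even positivity preserving. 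Take $\psi=\one_{\{\Phi_f>0\}}\geq 0$; on the set $\{\Phi_f<0\}$ (which has positive $\mu_C$-measure and on which $\psi=0$, $\Gamma(-1)\psi=1$) one finds $\bigl(\cosh(s)\psi-\sinh(s)\Gamma(-1)\psi\bigr)(q)=-\sinh(s)<0$. Consequently Perron--Frobenius does not apply to $K'$, you cannot conclude that a putative ground state $\psi_+$ in the $\Theta$-even sector is strictly positive a.e., and the sector-separation inequality $\langle\psi_+,K\psi_+\rangle<E$ that is meant to rule out degeneracy across sectors does not follow. The obstruction is structural: after folding, the diagonal matrix $\tau=\sigma_z+1$ always lands on the $\Gamma(-1)$-even subspace in $\HH_+$, producing the term $1+\Gamma(-1)$, and no re-choice of the folding isometry removes it.

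There is also a secondary, patchable gap even in the good ($K$) sector: the Trotter argument you gesture at only yields that $e^{-tK}$ is positivity \emph{preserving}, since positivity improving is not closed under strong limits; to get improving one can use the order estimate $e^{-(t/n)(1-\Gamma(-1))}\geq e^{-2t/n}I$ in the positivity-preserving cone, which gives $e^{-tK}\geq e^{-2t}e^{-t(H_f+\lambda\Phi_f)}$, together with the fact that $\Gamma(A)$ is positivity improving when $A$ is a strict contraction ($\|Af\|<\|f\|$ for $f\neq0$); note the naive criterion $\|A\|<1$ fails for $A=e^{-t\omega}$ in the massless case. The paper avoids both difficulties by a different reduction: it rotates the spin with $u=e^{-i(\pi/4)\sigma_y}$ so that the coupling becomes the \emph{local} multiplication operator $\sigma_z\otimes\Phi(f_0)$ on $L^2(\{-1,1\}\times Q)$, while the residual $-\sigma_x\otimes 1$ is a bounded positivity-preserving matrix; the unperturbed $\widetilde H_0$ then has a strictly positive non-degenerate ground state (the constant $1$), so $L^\infty\cup\{e^{-\widetilde H_0}\}$ is irreducible by Reed--Simon XIII.43, irreducibility survives adding a multiplication perturbation by XIII.45, and simplicity follows. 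That route needs only positivity preservation, never improving, and never confronts the non-local $\Gamma(-1)$.
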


Once Theorem \ref{thm:main1} has been established, one
knows that  the eigenvalue  of $H_\lambda$ and the associated eigenprojection have
power series expansions with nonzero radius of convergence,
\begin{equation}
{P}(\lambda) = \sum_{n=0}^\infty {P}^{(n)} \lambda^n \quad , \quad {E}(\lambda) = \sum_{n=0}^\infty {E}^{(n)} \lambda^n .
\end{equation}
It is natural to ask whether the expansion coefficients can be obtained
by means of analytic  perturbation theory. This is indeed the case, as we now
outline.  For details see Theorem \ref{thm:perturb1} in Section  \ref{sec:ana}.   We introduce a  cutoff $\sigma > 0$ and
define the infrared regularized Hamiltonian
$$
H_{\lambda,\sigma} := H_{0} + \lambda T_\sigma ,
$$
with $T_\sigma : = \sigma_x\otimes \phi(\chi_\sigma f)$, where   $\chi_\sigma(k) = 1$ if $|k|\geq  \sigma$ and 0 otherwise,
and with $$H_0 := \tau \otimes 1 + 1 \otimes H_f .$$
This effectively turns the ground state energy into an isolated eigenvalue, after
 a trivial part of the Hamiltonian has been factored out. In this situation regular perturbation theory
becomes applicable. It is straight forward to show using analytic perturbation theory, see the proof of Theorem   \ref{thm:perturb1},
that for each $\sigma >0$ there exists a $\lambda_0(\sigma) > 0$
such that  for all $\lambda \in B_{\lambda_0(\sigma)}$, the Hamiltonian
$H_{\lambda,\sigma}$ has an eigenvalue $\widehat{E}_\sigma(\lambda)$  with eigenprojection
 $\widehat{P}_\sigma(\lambda)$. Furthermore,  we have convergent power series expansions (see Kato's book  \cite{K})
\begin{equation} \label{eq:powerperturb}
\widehat{P}_\sigma(\lambda) = \sum_{n=0}^\infty \widehat{P}_\sigma^{(n)} \lambda^n \quad , \quad \widehat{E}_\sigma(\lambda) = \sum_{n=0}^\infty \widehat{E}_\sigma^{(n)} \lambda^n .
\end{equation}
Using  analytic perturbation theory one can show that
\begin{equation} \label{eq:perturbproj0}
\widehat{P}^{(n)}_{\sigma} = -  \sum_{\substack{ \nu_1 + ... + \nu_{n+1} = n, \\ \nu_i \geq 0} } S_\sigma^{(\nu_1)} T_\sigma S_\sigma^{(\nu_2)} ... T_\sigma S_\sigma^{(\nu_{n+1})}
\end{equation}
where
\begin{equation} \label{eq:perturbproj00}
S_\sigma^{(\nu)} = \left\{ \begin{array}{ll} - P_{\Omega_\downarrow}   \quad &, \ \nu = 0 \\    H_{0}^{-\nu}
\overline{P}_{\Omega_\downarrow}   Q_\sigma \quad &, \ \nu \geq 1  , \end{array} \right.
\end{equation}
$P_{\Omega_{\downarrow}}$ denotes the orthogonal projection onto, $\Omega_{\downarrow}$,  the ground state of $H_0$, i.e.,
\begin{equation} \label{eq:perturbproj000}
\Omega_{\downarrow}  :=  \left( \begin{array}{c}  0 \\ 1  \end{array} \right)   \otimes \Omega ,
\end{equation}
$\overline{P}_{\Omega_{\downarrow}} = 1 - P_{\Omega_{\downarrow}}$,
 and
 $Q_\sigma$ denotes the orthogonal projection in $\FF$ onto the natural embedding of
$\FF(\hh_\sigma^{(+)})$  in $\FF$, with $\hh_\sigma^{(+)} :=  L^2(\{ k \in \R^3 | |k| \geq \sigma \})$.
Moreover, the coefficients of the energy expansion can be obtained using the relation
$$
\widehat{E}_\sigma^{(n)} = {\rm tr}(T_\sigma \widehat{P}^{(n-1)}_{\sigma}/n) ,
$$
which can be found in  \cite{K} (Page 80, Eq. (2.34)), and is in fact easy to see.
Analytic perturbation theory does not allow us to control the radius of convergence $\lambda_0(\sigma)$ as $\sigma$ tends to zero.  That is, we cannot
rule out the possibility that $\lambda_0(\sigma) \to 0$ in this limit.  In order  to control the radius of convergence of \eqref{eq:powerperturb}
we have to resort back to renormalization.
Using a continuity argument in connection with the renormalization procedure we obtain the following theorem, which essentially states that the
 ground state energy and the eigenprojection depend continuously on $\sigma$.

 \begin{theorem}  \label{thm:main2} Assume (H). There exists a $\lambda_0 > 0$ such that for all $\lambda \in B_{\lambda_0}$
 and all $\sigma \geq 0$,
 $H_{\lambda,\sigma}$ has
an eigenvalue $E_\sigma(\lambda)$ with eigenvector $\psi_\sigma(\lambda)$ and eigen-projection $P_\sigma(\lambda)$ satisfying (i)-(iii) of
Theorem \ref{thm:main1}. Moreover, $E_\sigma(\lambda)$, $\psi_\sigma(\lambda)$, and  $P_\sigma(\lambda)$, as well
as
the expansion coefficients $E_\sigma^{(n)}$ and $P_\sigma^{(n)}$, in
\begin{align} \label{eq:coeffnocutE}
E_\sigma(\lambda) &= \sum_{n=0}^\infty E_\sigma^{(2n)} \lambda^{2n} \\
P_\sigma(\lambda) &= \sum_{n=0}^\infty P_\sigma^{(n)} \lambda^n  \label{eq:coeffnocutP} ,
\end{align}
are continuous functions of $\sigma \in [0,\infty)$.
\end{theorem}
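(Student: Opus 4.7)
The plan is to apply the renormalization construction used for Theorem \ref{thm:main1} to the parametric family $H_{\lambda,\sigma}$ and then read off continuity in $\sigma$ from the continuous dependence of the input data. Observe first that $H_{\lambda,\sigma}$ has exactly the form of $H_\lambda$ with $f$ replaced by $\chi_\sigma f$, and the latter satisfies Hypothesis (H) with constants independent of $\sigma\in[0,\infty)$ since $\|\chi_\sigma f\|_\infty\leq\|f\|_\infty$ and $\|\chi_\sigma f\|_2\leq\|f\|_2$. Consequently the proof of Theorem \ref{thm:main1} applies verbatim to this family and yields, with a radius $\lambda_0$ uniform in $\sigma$, the eigenvalue $E_\sigma(\lambda)$, an eigenvector $\psi_\sigma(\lambda)$, and the eigen-projection $P_\sigma(\lambda)$ on $B_{\lambda_0}$ satisfying (i)--(iii). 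The fact that only even powers enter \eqref{eq:coeffnocutE} comes from the parity symmetry $U:=(-1)^{d\Gamma(1)}$, which satisfies $U\phi(\chi_\sigma f)U^{-1}=-\phi(\chi_\sigma f)$ and therefore $(1\otimes U)H_{\lambda,\sigma}(1\otimes U^{-1})=H_{-\lambda,\sigma}$, forcing $E_\sigma(-\lambda)=E_\sigma(\lambda)$.

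To establish continuity in $\sigma$, I would track $\sigma$ through each stage of the renormalization construction. Dominated convergence shows that $\chi_\sigma f\to\chi_{\sigma_0}f$ (and the same for $\chi_\sigma f/\sqrt{\omega}$ and $\chi_\sigma f/\omega$) in the $L^2\cap L^\infty$ norms needed to bound the interaction, as $\sigma\to\sigma_0\geq 0$. Hence the integral kernels produced by the two initial Feshbach transformations depend continuously on $\sigma$ in the Banach space on which the renormalization transformation acts. That transformation is a uniform contraction on a ball in this space, with Lipschitz constant independent of $\sigma$ and of $\lambda$ for $\lambda\in B_{\lambda_0}$; iterating a parametrised contraction preserves continuous dependence on the parameter, so the limiting kernels vary continuously in $\sigma$. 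The eigenvalue $E_\sigma(\lambda)$ and eigen-projection $P_\sigma(\lambda)$ are then reconstructed from these kernels by formulas that are themselves continuous in them, and $\psi_\sigma(\lambda)$ is obtained by applying $P_\sigma(\lambda)$ to a fixed reference vector, so all three are continuous in $\sigma$ for each $\lambda\in B_{\lambda_0}$.

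Continuity of the expansion coefficients is then a consequence of analyticity in $\lambda$ on a disk uniform in $\sigma$ together with pointwise continuity in $\sigma$. For any $0<r<\lambda_0$, Cauchy's integral formula gives
$$
E_\sigma^{(2n)}=\frac{1}{2\pi i}\oint_{|\lambda|=r}\frac{E_\sigma(\lambda)}{\lambda^{2n+1}}\,d\lambda,\qquad P_\sigma^{(n)}=\frac{1}{2\pi i}\oint_{|\lambda|=r}\frac{P_\sigma(\lambda)}{\lambda^{n+1}}\,d\lambda,
$$
and the dominated convergence theorem---applicable because the renormalization estimates produce uniform-in-$\sigma$ bounds on $|E_\sigma(\lambda)|$ and $\|P_\sigma(\lambda)\|$ on the circle $|\lambda|=r$---transfers continuity in $\sigma$ from $E_\sigma(\lambda)$ and $P_\sigma(\lambda)$ to their Taylor coefficients. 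The main obstacle sits squarely in the second paragraph: one must verify that every estimate underlying the renormalization analysis---the smoothing properties of the initial Feshbach maps, the contractivity of the renormalization transformation on the space of integral kernels, and the explicit reconstruction formulas for $E_\sigma(\lambda)$ and $P_\sigma(\lambda)$---is not merely uniform in $\lambda\in B_{\lambda_0}$ but also depends continuously on $\chi_\sigma f$ in the relevant norm. Carrying the analysis out on the Banach space of integral kernels, as emphasised in the Introduction, is precisely what makes this continuity transparent.
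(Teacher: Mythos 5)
Your overall architecture matches the paper's (initial Feshbach transformations, iterate the renormalization contraction, recover eigenvalue and eigenvector, transfer continuity to Taylor coefficients via Cauchy's formula and dominated convergence, parity argument for the even expansion of $E_\sigma$). But there is a genuine gap in the second paragraph, and it is exactly the point on which the paper spends most of its technical effort.

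You assert that $\chi_\sigma f \to \chi_{\sigma_0} f$ ``in the $L^2 \cap L^\infty$ norms'' and hence that the initial integral kernels depend continuously on $\sigma$ ``in the Banach space on which the renormalization transformation acts,'' so that ``iterating a parametrised contraction preserves continuous dependence on the parameter.'' The $L^\infty$ convergence is false: $\chi_\sigma$ is a sharp cutoff, so $\|\chi_\sigma f - \chi_{\sigma_0}f\|_\infty = \esssup_{\sigma\wedge\sigma_0\le |k|<\sigma\vee\sigma_0}|f(k)|$, which does not tend to zero for generic $f$ satisfying (H). Consequently the kernels $w^{(I)}(\lambda,\sigma,z)$, and likewise $w^{(0)}(\lambda,\sigma,\cdot)$, are \emph{not} continuous in $\sigma$ with respect to the $\|\cdot\|^\#_\xi$ norm (which is built from essential suprema of the kernel and its $r$-derivative). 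Since the renormalization transformation $\mathcal{R}_\rho$ is a contraction precisely in this norm, the general ``parametrised contraction preserves continuity in the parameter'' principle is not available: it requires continuity of the input in the metric in which you are contracting, and you do not have it here.

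The paper's resolution is to introduce a strictly coarser topology on the integral kernels --- the componentwise $L^2$-norm $\|\cdot\|_2$ and the associated notion of \emph{c-continuity} (Definition after Lemma~\ref{lem:operatornormestimates2}) --- in which the sharp cutoff \emph{is} continuous, and then to prove separately (Theorem~\ref{thm:analytbasic}, proved via Lemma~\ref{lem:newcont} and the uniform-convergence estimates following it) that $\mathcal{R}_\rho^\#$ maps c-continuous families to c-continuous families. That statement is not a corollary of contractivity; it requires its own multilinear estimates together with the bound $\|H_{m,n}(w_{m,n})\|_{\rm op}\le \|w_{m,n}\|_2$ (Lemma~\ref{lem:operatornormestimates2}) to pass from kernel continuity to operator-norm continuity (Lemma~\ref{lem:pcontop}). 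Without this additional machinery, the continuity-in-$\sigma$ claim for $E_\sigma(\lambda)$, $\psi_\sigma(\lambda)$ and $P_\sigma(\lambda)$ is not established. Your closing caveat gestures at the issue, but the argument as written relies on a false $L^\infty$ convergence and an inapplicable contraction principle; identifying the correct (weaker) norm and showing it is preserved by renormalization is the missing idea.
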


By the uniqueness of the ground state, we know from Theorem  \ref{thm:main2} and the result from perturbation theory (for details see Theorem   \ref{thm:perturb1})
 that for any $\sigma > 0$ there exists an open ball, $B_{\lambda_0(\sigma)}$,  such that
$\widehat{P}_\sigma(\lambda)= {P}_\sigma(\lambda)$ and  $\widehat{E}_\sigma(\lambda)= {E}_\sigma(\lambda)$ for all $\lambda \in B_{\lambda_0(\sigma)} \cap \R$.
By analytic continuation it follows that $\widehat{P}_\sigma(\lambda)$ and $\widehat{E}_\sigma(\lambda)$ have an analytic extension to a ball, $B_{\lambda_0}$, which is independent
of $\sigma > 0$. Moreover,  these extensions  agree with ${P}_\sigma(\lambda)$ and ${E}_\sigma(\lambda)$ on that ball, respectively.
Thus  we have shown that Theorem  \ref{thm:main2} implies  the following corollary.

\begin{corollary} Assume (H). There exists a $\lambda_0 > 0$ such that for all $\sigma > 0$,
$\widehat{P}_\sigma(\lambda)$ and  $\widehat{E}_\sigma(\lambda)$  have an analytic extension to $B_{\lambda_0}$,
and  on $B_{\lambda_0}$  they agree with  ${P}_\sigma(\lambda)$ and  ${E}_\sigma(\lambda)$.
In particular, for any $\sigma > 0$ we have
$$
\widehat{P}_\sigma^{(n)} =  P_\sigma^{(n)} \quad , \quad  \widehat{E}^{(n)}_\sigma = E_\sigma^{(n)} ,
$$
and the following limits exist,
\begin{equation} \label{eq:coeffconvinsigma0}
\lim_{\sigma \downarrow 0} \widehat{P}_\sigma^{(n)} =       P^{(n)}  , \quad
\lim_{\sigma \downarrow 0} \widehat{E}_\sigma^{(n)} =       E^{(n)}  .
\end{equation}
\end{corollary}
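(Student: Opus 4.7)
The plan is exactly the one sketched in the paragraph preceding the corollary: for each fixed $\sigma > 0$, use analytic perturbation theory together with uniqueness of the ground state to identify the perturbative objects $\widehat{P}_\sigma(\lambda), \widehat{E}_\sigma(\lambda)$ with the non-perturbative objects $P_\sigma(\lambda), E_\sigma(\lambda)$ produced by renormalization, and then invoke analytic continuation to lift the domain of holomorphy from the small, $\sigma$-dependent disk $B_{\lambda_0(\sigma)}$ furnished by perturbation theory to the uniform disk $B_{\lambda_0}$ furnished by Theorem \ref{thm:main2}. The continuity in $\sigma$ already contained in Theorem \ref{thm:main2} will then deliver the $\sigma \downarrow 0$ limits of the Taylor coefficients.

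For fixed $\sigma > 0$ and $\lambda \in B_{\lambda_0(\sigma)} \cap B_{\lambda_0} \cap \R$, both $\widehat{P}_\sigma(\lambda)$ and $P_\sigma(\lambda)$ are orthogonal projections onto a ground-state eigenvector of the self-adjoint Hamiltonian $H_{\lambda,\sigma}$. Hypothesis (H) implies $\omega^{-1/2}\chi_\sigma f, \omega^{-1}\chi_\sigma f \in \hh$, so Theorem \ref{thm:uniqueness} applies and guarantees that this ground state is simple; the two projections therefore coincide, and consequently $\widehat{E}_\sigma(\lambda) = E_\sigma(\lambda)$. Since the set of such real $\lambda$ has accumulation points in the connected open set $B_{\lambda_0(\sigma)} \cap B_{\lambda_0}$, the identity theorem yields $\widehat{P}_\sigma \equiv P_\sigma$ and $\widehat{E}_\sigma \equiv E_\sigma$ on the whole intersection. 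The restriction of $P_\sigma$ (respectively $E_\sigma$) to $B_{\lambda_0}$ therefore provides the sought analytic extension of $\widehat{P}_\sigma$ (respectively $\widehat{E}_\sigma$), and matching Taylor coefficients at $\lambda = 0$ yields $\widehat{P}_\sigma^{(n)} = P_\sigma^{(n)}$ and $\widehat{E}_\sigma^{(n)} = E_\sigma^{(n)}$ for every $n$.

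For the limits in \eqref{eq:coeffconvinsigma0} I would invoke the continuity in $\sigma \in [0,\infty)$ of the coefficients $P_\sigma^{(n)}$ and $E_\sigma^{(n)}$ asserted in Theorem \ref{thm:main2}: at $\sigma = 0$ the regularized Hamiltonian is just $H_\lambda$ of Theorem \ref{thm:main1}, so $P_0^{(n)} = P^{(n)}$ and $E_0^{(n)} = E^{(n)}$, and continuity at $0$ then gives $\lim_{\sigma \downarrow 0} \widehat{P}_\sigma^{(n)} = P^{(n)}$ and $\lim_{\sigma \downarrow 0} \widehat{E}_\sigma^{(n)} = E^{(n)}$. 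The only delicate point in the argument is the matching on the overlap disk, for which simplicity of the ground state of $H_{\lambda,\sigma}$ at real $\lambda$ is indispensable; once Theorem \ref{thm:uniqueness} supplies this, everything else reduces to the identity principle for holomorphic functions and to the continuity statement already contained in Theorem \ref{thm:main2}.
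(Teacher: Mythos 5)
Your proposal is correct and follows exactly the paper's own argument: identify $\widehat{P}_\sigma$ with $P_\sigma$ (and $\widehat{E}_\sigma$ with $E_\sigma$) on a small real interval via the uniqueness result, extend the identification to the intersection of disks by the identity theorem, and then deduce the limits \eqref{eq:coeffconvinsigma0} from the $\sigma$-continuity of the coefficients stated in Theorem \ref{thm:main2}. The observation that Hypothesis (H) carries over to $\chi_\sigma f$ so that Theorem \ref{thm:uniqueness} applies to $H_{\lambda,\sigma}$ is the same ingredient the paper uses, just stated more explicitly.
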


Note that the existence of the limit  \eqref{eq:coeffconvinsigma0} is in view of Equation \eqref{eq:perturbproj0} not obvious.
In particular certain summands in that sum are divergent as $\sigma \to 0$. But the total sum must be
convergent by \eqref{eq:coeffconvinsigma0}. We note the  following remark
illustrating this observation (see Section \ref{sec:ana}).

\begin{remark}   \label{thm:formalperturb00} Consider the sum
 \eqref{eq:perturbproj0}.
For $n \leq 3$ all terms in the sum  converge as $\sigma$ tends to 0.
For $n = 4$, there are terms which diverge. Let
\begin{eqnarray*}
A_\sigma &:=& S_\sigma^{(1)} T_\sigma S_\sigma^{(1)} T_\sigma S_\sigma^{(1)} T_\sigma S_\sigma^{(1)} T_\sigma S_\sigma^{(0)} \\
B_\sigma &:=&  S_\sigma^{(2)}  T_\sigma S_\sigma^{(1)} T_\sigma  S_\sigma^{(0)} T_\sigma S_\sigma^{(1)} T_\sigma S_\sigma^{(0)} .
\end{eqnarray*}
Then $\lim_{\sigma \downarrow 0} A_\sigma$ and  $\lim_{\sigma \downarrow 0} B_\sigma$ diverge but
$\lim_{\sigma \downarrow 0}( A_\sigma +  B_\sigma)$ converges. 
\end{remark}

It would be interesting to understand the nature of the cancellations occurring in  the  coefficients \eqref{eq:perturbproj0}
in a systematic way. Moreover, a sufficiently good  estimate on these coefficients could possibly provide
 an alternate way to prove Theorem  \ref{thm:main1}.

Let us now outline the paper. In section \ref{sec:uni}, we prove that a possible ground
state of the spin-boson model has to be  unique. We use this result to establish
the equivalence of expansion coefficients obtained on the one hand by
perturbation theory and on the other hand by operator theoretic renormalization.

Since  Theorem \ref{thm:main1} is a special case of   Theorem \ref{thm:main2},
we only prove the latter.
The proof  is based on the
operator theoretic renormalization analysis, as outlined in
 \cite{BCFS03}. Sections \ref{sec:smo}--\ref{sec:con}
 are devoted
 to the renormalization analysis.
 In Section \ref{sec:smo}, we introduce  the smooth Feshbach map associated to
 a pair of operators and we review
some of its isospectrality properties, which will be needed later.
In Section \ref{sec:ban}, we define a  Banach space
of integral kernels and show its bijective correspondence
to a subspace of Hamiltonians acting on Fock-space.
In Section \ref{sec:ren:def}, we define the renormalization transformation on the
level of operators. In  Section \ref{sec:ren:ker}, we derive the
induced action of the renormalization transformation on
the space of integral kernels.  In Section \ref{sec:ren:ana}, we show that
the renormalization transformation preserves analyticity and continuity properties
of the integral kernels.
In Section \ref{sec:cod}, we show that the renormalization transformation acts as
a contraction in a subset of the Banach space of integral kernels for which the sum of
the number of creation and annihilation operators is even.
In Section \ref{sec:con}, we construct the eigenvector and the corresponding  eigenvalue
of $H_{\lambda,\sigma}$. We collect certain convergence estimates which are uniform and which will be needed to prove
the analyticity of the ground state. This section contains the main results needed
from the operator theoretic renormalization analysis to prove Theorem  \ref{thm:main2}.
In Section \ref{sec:ini},
we perform the initial two Feshbach transformations. This
allows us to turn the spectral problem of the spin-boson Hamiltonian
into a spectral problem of a new operator involving a sum of normal ordered monomials in creation and annihilation operators where for each summand the total number of creation and annihilation operators is even.
Moreover, we present a basic estimate which allows us to initiate the renormalization procedure.
In Section \ref{sec:prov}, we put all the pieces together and
prove  Theorem \ref{thm:main2}. For this we will mainly use results
stated in Sections \ref{sec:con} and \ref{sec:ini}.

In Section \ref{sec:ana}, we discuss analytic perturbation theory and
Remark  \ref{thm:formalperturb00}. 
In Appendix A, we collect a few basic estimates and identities involving creation and annihilation
operators. In Appendix B, we discuss Wick's theorem and a generalization thereof.

\section{Uniqueness}
\label{sec:uni}
In this section we prove Theorem
\ref{thm:uniqueness}. It involves a special choice of $L^2$ space and a positivity argument.
We first introduce the notation
$$
\Phi(f) = a^*(f) + a(f) \quad , \quad f \in \hh = L^2(\R^3)
$$
and prove a lemma.

\begin{lemma} \label{lemma:zorn} Given $f_0 \in \hh$ then there exists a real Hilbert space $\kk \subset \hh$ with the properties
\begin{itemize}
\item[(1)] $\kk$ is invariant under $\{ e^{- t \omega} | t \geq 0 \}$.
\item[(2)] $\kk +i \kk = \hh$
\item[(3)] $[\Phi(f), \Phi(g) ] = 0$ if $f,g \in \kk$
\item[(4)] $f_0 \in \kk$.
\end{itemize}
\end{lemma}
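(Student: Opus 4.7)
The plan is to invoke Zorn's lemma on the family $\mathcal{S}$ of closed real-linear subspaces $V \subseteq \hh$ satisfying properties (1), (3), and (4), partially ordered by inclusion, and then produce $\kk$ as a maximal element of $\mathcal{S}$. The family is nonempty because the closed real-linear span of $\{e^{-t\omega} f_0 : t \geq 0\}$ lies in $\mathcal{S}$: invariance is built in, and $\langle e^{-s\omega}f_0, e^{-t\omega}f_0\rangle = \|e^{-(s+t)\omega/2}f_0\|^2 \in \R$ extends to the closed real span by $\R$-bilinearity and continuity. Every chain in $\mathcal{S}$ admits the closure of its union as an upper bound, since boundedness of $e^{-t\omega}$ and continuity of the inner product preserve (1) and (3) under closure. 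Zorn then yields a maximal $\kk \in \mathcal{S}$.

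To obtain (2) I first show $\kk + i\kk$ is dense in $\hh$. Suppose not; then the orthogonal complement $\ell := (\overline{\kk + i\kk})^\perp$ is nonzero, and it is invariant under $e^{-t\omega}$ because $e^{-t\omega}$ is self-adjoint and leaves $\overline{\kk + i\kk}$ invariant. Pick $h \in \ell \setminus \{0\}$ and let $W$ be the closed real-linear span of $\{e^{-t\omega} h : t \geq 0\}$, which also lies in $\ell$. Then $\overline{\kk + W}$ still belongs to $\mathcal{S}$: inner products between elements of $\kk$ and $W$ vanish by orthogonality, and within $W$ they are real since $\langle e^{-s\omega}h, e^{-t\omega}h\rangle = \|e^{-(s+t)\omega/2}h\|^2$. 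This subspace strictly contains $\kk$ (as $h \perp \kk$ and $h \neq 0$ forces $h \notin \kk$), contradicting maximality. Hence $\kk + i\kk$ is dense.

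To promote density to equality I use an antiunitary conjugation. Property (3) forces $\kk \cap i\kk = \{0\}$: if $f = ig$ with $f, g \in \kk$, then $\|f\|^2 = -i\langle g, f\rangle$ would be simultaneously real and purely imaginary. Hence $C(f + ig) := f - ig$ is a well-defined antilinear map on $\kk + i\kk$, and property (3) gives $\|f + ig\|^2 = \|f\|^2 + \|g\|^2 = \|C(f + ig)\|^2$, so $C$ is isometric and extends to an antiunitary involution on all of $\hh$. A routine check on the dense subspace shows $C$ commutes with $e^{-t\omega}$, so the fixed point set of $C$ lies in $\mathcal{S}$ and contains $\kk$; by maximality it equals $\kk$. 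Every $h \in \hh$ then admits the decomposition $h = \frac{1}{2}(h + Ch) + i \cdot \frac{1}{2i}(h - Ch)$ with both summands fixed by $C$, giving $h \in \kk + i\kk$.

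The main obstacle is the extension step used to force density: given a putative $h \notin \overline{\kk + i\kk}$, the single vector $\R h$ by itself is not invariant under the semigroup $\{e^{-t\omega}\}$ and must be replaced by the closed real span $W$ of its orbit, while still keeping $\kk + W$ inside $\mathcal{S}$. The two ingredients that make this work are invariance of $\ell$ under $e^{-t\omega}$ (from self-adjointness of $\omega$) and reality of $\langle e^{-s\omega}h, e^{-t\omega}h\rangle$ for all $s,t \geq 0$ (from positivity of $e^{-u\omega}$). Both reduce to $\omega$ being a positive self-adjoint multiplication operator on $\hh = L^2(\R^3)$.
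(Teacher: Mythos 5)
Your proof is correct, and it takes a genuinely different route from the paper's. The paper applies Zorn's lemma not to subspaces but to families $\{f_j\}$ of ``superorthogonal'' vectors (each $V_{f_j} := \{g(\omega)f_j : g \text{ real measurable}\}$ orthogonal to every other $f_k$), and then sets $\kk = \bigoplus_j V_{f_j}$; you instead apply Zorn directly to the family of closed real subspaces satisfying (1), (3), (4), with the caveat that to contradict maximality you must enlarge by the full semigroup orbit $W$ of a putative $h \in (\overline{\kk + i\kk})^\perp$ rather than by $h$ alone, which you handle correctly. Both approaches first establish that $\kk + i\kk$ is \emph{dense} (in the paper, $\kk^\perp = \{0\}$; in yours, the orbit argument), and then must upgrade density to equality, which is the genuinely nontrivial step. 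There the two proofs diverge completely: the paper takes an orthonormal basis $\{v_j\}$ of the real Hilbert space $\kk$, observes that by totality it is also an orthonormal basis of $\hh$, and then decomposes any $g = \sum (a_j + ib_j)v_j = \sum a_j v_j + i\sum b_j v_j$ with both sums convergent in $\kk$; you instead introduce the antiunitary conjugation $C(f+ig) = f - ig$, show it is isometric on the dense set $\kk + i\kk$ using $\kk \cap i\kk = \{0\}$ and property (3), extend it to an antiunitary involution commuting with $e^{-t\omega}$, and then use maximality to identify the fixed-point set of $C$ with $\kk$ so that the standard decomposition $h = \frac{1}{2}(h+Ch) + i\cdot\frac{1}{2i}(h-Ch)$ finishes. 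Your conjugation route is more structural and coordinate-free, essentially exhibiting $\kk$ as the real form associated to a real structure on $\hh$; the paper's route gives a more explicit picture of $\kk$ as an orthogonal sum of cyclic $\{g(\omega)\}$-modules, at the cost of some basis bookkeeping. Both are complete and valid.
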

\begin{proof}
Given $f \in \hh \setminus \{0\}$, let $V_f$ be the real Hilbert space given by
$$
V_f = \{ g(\omega) f \in \hh   \ | \  g \ {\rm a \ real \ measurable \ function } \}
$$
It is easy to see that $V_f$ is closed. We consider the family
$\mathfrak{H}$ of superorthogonal sets of vectors $\{f_j \in \hh \ | \ 0 \leq j < N \}$, $N \leq \infty$, where
superorthogonal means that  $V_{f_j}$ is orthogonal to $f_k$   for all $k\neq j$.
We order the set $\mathfrak{H}$ by inclusion. An easy application of Zorn's lemma shows there is
a maximal element, $ \tau = \{f_j \in \hh\ \ | \ 0 \leq j < N_{\tau} \}$, of $\mathfrak{H}$. Let us write
$$
\kk = \bigoplus_{j=0}^{N_\tau - 1} V_{f_j} ,
$$
where in the direct sum we only allow linear combinations with real coefficients so that $\kk$ is a real Hilbert space.
The properties (1) and (4) are clear while (3) follows from
$$
[ \Phi(f), \Phi(g) ] = 2 i {\rm Im} ( f, g)
$$
To see that $(f,g)$ is real for $f,g \in \kk$ note $(f,g) = 0$ if $f$ and $g$ are in different ${V_{f_j}}$'s while
if $f = h_1(\omega) f_j \in \hh$ and $g = h_2(\omega) f_j \in \hh$ with $h_1$ and $h_2$ real then
$(f,g) = \int h_1(\omega) h_2(\omega) | f_j(\omega) |^2 d^3k$ is clearly real. To see (2) note that if $h \in \hh$ is orthogonal
to $\kk$, then by an approximation argument the same is true of all $g(\omega) h \in \hh$ with $g$ measurable. Thus if $ h \neq 0 ,\  \tau \cup {\{h}\} \in \mathfrak{H}$ and
$\tau$ is not maximal. Thus $ h = 0$. Let
$B = \{ v_j | j \in \N \}$, $v_j \in \kk$, be an orthonormal basis for $\kk$. Then by what we have just proved, $B$ is an orthonormal basis for $\hh$.
If $g \in \hh$ then $g = \sum_{j=1}^\infty (a_j + i b_j ) v_j$ with $a_j, b_j$ real and $\sum_{j=1}^\infty \left( |a_j|^2 + |b_j|^2 \right) < \infty$.
Then
$$
g = \sum_{j=1}^\infty a_j v_j + i \sum_{j=1}^\infty b_j v_j \in \kk + i \kk .
$$
\end{proof}

\vspace{0.5cm}
\noindent
{\it Proof of Theorem \ref{thm:uniqueness}}.
From the lemma and the fact that the closure of the linear span of
$$
\{ e^{ i \Phi(f)} \Omega | f \in \kk \}
$$
is in fact all of Fock space, the spectral theorem shows that $\FF$ is unitarily equivalent to $L^2(Q, d \mu)$ for some probability measure
space $(Q,\mu)$ (we suppress the $\sigma$-algebra). In this representation $\Omega$ is the function 1 and we can take all the $\Phi(f)$'s, $f \in \kk$ to be
real Gaussian random variables with $\Phi(f+g) = \Phi(f) + \Phi(g)$ for $f,g \in \kk$.
Following \cite{sim74}, in the new representation $e^{- t H_f}$ is a positivity preserving operator on $L^2(Q , d \mu )$. Let $U = u \otimes 1$, with $u = e^{-i (\pi/4) \sigma_y}$.
Note that $u \sigma_x u^{-1} = \sigma_z$, $u \sigma_z u^{-1} = - \sigma_x$, and thus taking $\lambda = 1$ without loss of generality,
$$
H := U H_1 U^{-1} = \widetilde{H}_0 + \sigma_z \otimes \phi(f)
$$
where
$$
\tilde{H}_0 = 1 - \sigma_x \otimes 1 + 1 \otimes H_f .
$$
We write $\omega^{-1/2} f = f_0$ so that $\phi(f) = \Phi(f_0)$.
$\widetilde{H}_0$ has a non-degenerate ground state in $\C^2 \otimes \FF$, namely
$$
\left( \begin{array}{c} 1 \\ 1 \end{array} \right) \otimes \Omega =: \Psi_0 .
$$
We note that
$$
\C^2 \otimes \FF \cong L^2( \{ - 1, 1 \} \times Q , dp \otimes d \mu )
$$
where $p(\{1\}) = p(\{-1\}) = 1$.  In this representation, if $f \in L^2(\{-1,1\} \times Q ; dp \otimes d \mu)$ then
\begin{align}
\left( ( \sigma_x \otimes 1 ) f \right)(\pm 1, \cdot ) &= f( \mp 1 , \cdot ) \\
\left( ( 1 \otimes e^{- t H_f}  ) f \right)(s , \cdot ) &= e^{- t H_f} f ( s , \cdot ) \\
\left( ( \sigma_z \otimes \Phi(f_0) ) f \right)(\pm 1, \cdot ) &=  \pm \Phi(f_0) f( \pm 1 , \cdot ).
\end{align}
In addition $e^{- t ( - \sigma_x \otimes 1 )}$ is positivity preserving (clear by expanding the exponential in a power
series) and thus so is $e^{- t \widetilde{H}_0}$. The operator $\widetilde{H}_0$ has a non-degenerate ground state
given by the function 1. A direct application of Theorem XIII.43 of \cite{reesim4} then implies
$L^\infty(\{-1,1\} \times Q) \cup \{ e^{- \widetilde{H}_0 } \}$ acts irreducibly in $L^2(\{-1,1\} \times Q)$. Since
$\sigma_z \otimes \Phi(f_0)$ is infinitesimally $\widetilde{H}_0$ bounded, Theorem XIII.45 of \cite{reesim4} then shows
$L^\infty(\{-1,1\} \times Q ) \cup \{ e^{-H} \}$ acts irreducibly in $L^2(\{-1,1\} \times Q )$. Finally according
to Theorem XIII.43 of \cite{reesim4}, if $E = \inf \sigma(H)$ is an eigenvalue of $H$, then it is non-degenerate.
\qed

\vspace{0.5cm}

We end this section with a proof that $\sigma(H_{\lambda})$ is a half line.  In fact using the ideas developed in Lemma \ref{lemma:zorn} we prove a bit more:

\begin{proposition} \label{prop:halfline}
Suppose $\lambda \in \R$, $\omega^{-1/2} f \in \hh$, and $\omega^{-1} f \in \hh$.
Let $E = \inf \sigma (H_\lambda)$. Then $\sigma_{ac}(H_{\lambda}) = [E, \infty)$.

\end{proposition}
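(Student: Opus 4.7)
The inclusion $\sigma_{ac}(H_\lambda) \subseteq [E,\infty)$ is automatic since $\sigma(H_\lambda) \subseteq [E,\infty)$ by the definition of $E$. The substance of the proof is the reverse inclusion, and my plan is to first construct, for every $t \geq 0$, a Weyl sequence for $H_\lambda$ at energy $E+t$ (yielding $[E,\infty) \subseteq \sigma(H_\lambda)$), and then upgrade this to absolute continuity by viewing these sequences as an approximate intertwining of the (trivially absolutely continuous) multiplication operator $\omega(k)=|k|$ on $L^2(\R^3)$ with the translated Hamiltonian $H_\lambda - E$.

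For the Weyl sequence at $E+t$, I would start from unit vectors $\psi_n$ with $\|(H_\lambda - E)\psi_n\| \to 0$, which exist since $E\in\sigma(H_\lambda)$. By replacing $\psi_n$ with the spectral projection onto $[E,E+1/n]$ and using the relative bound $H_f \leq C(H_\lambda - E + 1)$ implied by Appendix~A, one additionally arranges $\sup_n\|H_f^{1/2}\psi_n\| < \infty$, and a short self-consistency argument using $H_f = H_\lambda - \tau\otimes 1 - \lambda\sigma_x\otimes\phi(f)$ also yields $\|H_f^{1/2}(H_\lambda-E)\psi_n\| = O(1/n)$. A direct computation gives
\[
[H_\lambda, a^*(g)] = 1\otimes a^*(\omega g) + \lambda c_g\,\sigma_x\otimes 1, \qquad c_g := \frac{1}{4\pi}\int \overline{f}(k)\,g(k)\,\omega(k)^{-1/2}\,d^3k,
\]
so the c-number correction vanishes on the codimension-one subspace $\mathcal{G} := \ker c \subset L^2(\R^3)$. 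For $g \in \mathcal{G}\cap C_c^\infty(\R^3\setminus\{0\})$ supported in a thin shell $\{|\omega - t|<\epsilon\}$,
\[
(H_\lambda - E - t)\,a^*(g)\psi_n = a^*(g)(H_\lambda - E)\psi_n + a^*((\omega - t)g)\psi_n,
\]
whose norm is $o_n(1) + O(\epsilon)$ by the Appendix~A bound $\|a^*(h)\xi\| \leq \|h/\sqrt\omega\|\,\|H_f^{1/2}\xi\| + \|h\|\,\|\xi\|$; meanwhile $\|a^*(g)\psi_n\| \geq \|g\| > 0$ from $[a(g),a^*(g)]=\|g\|^2$. Normalizing yields a Weyl sequence at $E+t$.

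To promote $\sigma$ to $\sigma_{ac}$, I would view $g \mapsto a^*(g)\psi_n$ as a bounded map $T_n: \mathcal{G}\to \HH$ which, by the commutator identity, approximately intertwines $\omega$ acting on $\mathcal{G}$ with $H_\lambda - E$ acting on $\HH$. Since $\omega$ on $\mathcal{G}$ (a codimension-one subspace of $L^2(\R^3)$) still has purely absolutely continuous spectrum equal to $[0,\infty)$, the approximate intertwining transfers a.c.\ spectral behavior: for any $g$, the spectral measure of the unit vector $a^*(g)\psi_n / \|a^*(g)\psi_n\|$ for $H_\lambda$ differs from the (absolutely continuous) push-forward of $|g|^2 d^3k / \|g\|^2$ along the map $k \mapsto E+\omega(k)$ by an error going to zero in $n$ and $\epsilon$. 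Running $g$ over $\mathcal{G}$ and $t$ over $[0,\infty)$ then produces a reducing subspace of $H_\lambda$ whose spectrum is purely absolutely continuous and equals $[E,\infty)$.

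The main obstacle is the passage to the limit in the intertwining relation without losing injectivity: weak-$*$ limits of isometries need not be isometries, and, unlike in the uniqueness theorem, here $\psi_n$ need not converge strongly. To close this gap I would use the Gaussian $L^2(Q,d\mu)$ representation of Lemma~\ref{lemma:zorn}, as in the proof of Theorem~\ref{thm:uniqueness}, to track positive asymptotic ground states $\psi_n$ and estimate the cross inner products $\langle a^*(g)\psi_n, a^*(h)\psi_n\rangle = \langle g, h\rangle_{L^2} + \langle \psi_n, a^*(h)a(g)\psi_n\rangle$, showing the second term vanishes in $n$ along the chosen subsequence. This makes the limiting $T$ a genuine isometry of $\mathcal{G}$ into $\HH_{ac}(H_\lambda)$ and completes the argument.
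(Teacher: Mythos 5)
Your route is genuinely different from the paper's, which factors Fock space and closes with a one‑line convolution argument; your near-intertwining idea is plausible in spirit, but it has two gaps, one of which you flag and one of which you do not.

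The unflagged gap is the choice of subspace. Your $\mathcal{G}=\ker c$ is codimension one, but it is not invariant under multiplication by $\omega$, so the relation $(H_\lambda-E)Tg = T(\omega g)$ is only defined on $\mathcal{G}\cap M_\omega^{-1}\mathcal{G}$ and cannot be iterated -- and iterating it is exactly what transferring a.c.\ spectral behavior requires. The correct subspace is the smaller $\hh_2:=(V_{f_0}+iV_{f_0})^\perp$ from Lemma~\ref{lemma:zorn}, with $f_0=f/\sqrt\omega$: it is invariant under every operator $g(\omega)$, hence $M_\omega$-reducing, and it is contained in $\ker c=f_0^\perp$.

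The gap you do flag you propose to close with a claim that is false. For $g,h\in\ker c\setminus\hh_2$ the cross terms $\langle\psi_n,a^*(h)a(g)\psi_n\rangle$ do \emph{not} vanish along any subsequence: if $E$ is an eigenvalue, then $\psi_n\to\psi_{\mathrm{gs}}$ strongly and the cross term converges to $\langle\psi_{\mathrm{gs}},a^*(h)a(g)\psi_{\mathrm{gs}}\rangle$, which is generically nonzero for $\lambda\neq 0$; the Gaussian positivity structure locates the ground state but does nothing to suppress its photon content. (For $g,h\in\hh_2$ the cross terms \emph{do} vanish identically, precisely because under the factorization below any ground state has the form $\xi\otimes\Omega_2$ and carries no $\hh_2$-photons -- another sign that $\hh_2$, not $\ker c$, is the right subspace.) Once you pass to $\hh_2$, the construction collapses into the paper's proof: with $\HH\cong\C^2\otimes\FF(\hh_1)\otimes\FF(\hh_2)$ one has $H_\lambda=\tilde H_\lambda\otimes 1 + 1\otimes H_f^{(2)}$; since $H_f^{(2)}$ restricted to $\Omega^\perp$ is unitarily equivalent to $H_f$ on $\FF(L^2(\R^3))$, hence purely a.c.\ with spectrum $[0,\infty)$, and $E=\inf\sigma(\tilde H_\lambda)$, the spectral measure of any $\phi\otimes\psi$ with $\psi\perp\Omega$ is a convolution of a measure supported in $[E,\infty)$ with an absolutely continuous one, giving $[E,\infty)\subset\sigma_{ac}(H_\lambda)$ without Weyl sequences, without requiring $E$ to be an eigenvalue, and without any limiting procedure.
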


Here $\sigma_{ac}(H_{\lambda})$ is the absolutely continuous spectrum of $H_{\lambda}$.

\begin{proof}

Using the notation of Lemma \ref{lemma:zorn} and Theorem \ref{thm:uniqueness},  let $\hh_1 = V_{f_0} + iV_{f_0}$.  Using polar coordinates $(u,t)$ where $u \in S^2$ and $t \in (0, \infty)$ we have $f_0(t) \in L^2(S^2)$ for a.e. $t$.  According to \cite{Dix} the space orthogonal to $f_0(t)$ in $L^2(S^2)$ has an orthonormal basis $\{e_j(t)| j\in \N\}$ where the vectors $e_j(t)$ are measurable in the variable $t$.  The space of functions $\sum_{n = 1}^{\infty}g_n(t)e_n(t)$  with $g_n \in L^2(t^2dt)$ and $\sum_{n = 1}^{\infty}\int_0 ^\infty |g_n(t)|^2t^2dt < \infty$ is exactly $\hh_2:= \hh_1^{\bot}$.  Fix an orthonormal basis $\{\hat{e}_j| j\in \N\}$ for $L^2(S^2)$ and note that defining $u(t) :  f_0(t)^{\bot} \rightarrow L^2(S^2)$ by linearity and continuity from $u(t)\e_j(t): = \hat {e}_j$, then $u(t)$ is unitary and $U$ given by $Ug(t) = u(t)g(t)$ is a unitary map of $\hh_2$ onto $\hh = L^2(\R^3)$.\\
We now factor the Hilbert space $\HH = \C^2 \otimes \FF$  as  $\HH = \C^2 \otimes \FF(\hh_1)\otimes \FF(\hh_2)$.  With respect to this factorization we write
$H_\lambda = \tau  \otimes 1 \otimes 1 + 1 \otimes H_{f}^{(1)} \otimes 1 + \lambda \sigma_x \otimes \phi(f)\otimes 1 + 1 \otimes 1 \otimes H_{f}^{(2)}$
where $H_f^{(j)}$ is the restriction of $H_f$ to $\FF(\hh_j)\cap {D(H_f)}$.  We define $\tilde{H}_\lambda$ by the equation $H_\lambda = \tilde{H}_\lambda \otimes 1 + 1 \otimes H_f^{(2)}$.  Let $\Gamma(U): \FF(\hh_2) \rightarrow \FF(\hh)$ be the unitary operator satisfying $\Gamma(U)\Omega = \Omega$ and $\Gamma(U)S_n(g_1\otimes\cdots\otimes g_n) = S_n(Ug_1\otimes\cdots\otimes Ug_n)$.  It is easy to see that $\Gamma(U)H_f^{(2)} = H_f\Gamma(U)$ so that $H_f^{(2)}$ restricted to $\Omega^\bot$ is absolutely continuous.  Since clearly $E = \inf \sigma(\tilde{H}_\lambda)$ and the convolution of an absolutely continuous measure with another measure is absolutely continuous, the proposition easily follows.
\end{proof}

\section{Smooth Feshbach Property}
\label{sec:smo}

In this section we follow  \cite{BCFS03,GH08}. We introduce the Feshbach map and state its basic isospectrality
properties.  This will be needed to define the renormalization transformation and to construct the ground state
and the ground state energy.

Let $\chi$ and $\overline{\chi}$ be commuting, nonzero bounded operators, acting on a separable Hilbert space $\HH$
and satisfying $\chi^2 + \overline{\chi}^2=1$. A {\it Feshbach pair} $(H,T)$ for $\chi$ is a pair of
closed operators with the same domain,
$$
H,T : D(H) = D(T) \subset \HH \to \HH
$$
such that $H,T, W := H-T$, and the operators
\begin{align*}
&W_\chi := \chi W \chi ,  &  &W_{\overline{\chi}} := \overline{\chi} W  \chib  \\
&H_\chi :=T + W_\chi   ,  &  &H_{\overline{\chi}} := T +  W_{\chib} ,
\end{align*}
defined on $D(T)$ satisfy the following assumptions:
\begin{itemize}
\item[(a)] $\chi T \subset T \chi$ and $\chib T \subset T \chib$,
\item[(b)] $T, H_{\chib} : D(T) \cap \ran \chib \to \ran \chib$ are bijections with bounded inverse,
\item[(c)] $\chib H_{\chib}^{-1} \chib W \chi : D(T) \subset \HH \to \HH$ is a bounded operator.
\end{itemize}
\begin{remark} \label{rem:abuse}
By abuse of notation we write  $ H_{\chib}^{-1} \chib$ for $ \left( H_{\chib} \upharpoonright \ran \chib \cap D(T) \right)^{-1}   \chib$ and
likewise  $ T^{-1} \chib$ for $ \left( T \upharpoonright \ran \chib \cap D(T)\right)^{-1}   \chib$.
\end{remark}
An operator $A:D(A) \subset \HH \to \HH$ is called {\it bounded invertible} in a subspace $V \subset \HH$
($V$ not necessarily closed), if $A: D(A) \cap V \to V$ is a bijection with bounded inverse.
Given a Feshbach pair $(H,T)$ for $\chi$, the operator
\begin{align*}
&F_\chi(H,T) := H_\chi - \chi W \chib H_{\chib}^{-1} \chib W \chi
\end{align*}
on $D(T)$ is called the { \it Feshbach map of} $H$. The mapping $(H,T) \mapsto F_\chi(H,T)$ is called the
{\it Feshbach map}. The auxiliary operators
\begin{align*}
&  Q_\chi := Q_\chi(H,T) := \chi - \chib H_{\chib}^{-1} \chib W \chi , \\
& Q_\chi^{\#} := Q_\chi^\#(H,T) :=  \chi -  \chi W  \chib H_{\chib}^{-1} \chib   .
\end{align*}
are by conditions (a), (c), bounded, and $Q_\chi$ leaves $D(T)$ invariant. The Feshbach map is
isospectral in the sense of the following theorem.
\begin{theorem} \label{thm:fesh}
Let $(H,T)$ be a Feshbach pair for $\chi$ on a  Hilbert space $\HH$. Then the following holds:
\begin{itemize}
\item[(i)] Let $V$ be subspace with $\ran \chi \subset V \subset \HH$,
\begin{align*}
&T : D(T) \cap V \to V , &   {\rm and} &  & \chib T^{-1} \chib V \subset V .
\end{align*}
Then $H :D(H) \subset \HH  \to \HH$ is bounded invertible if and only if $F_\chi(H,T) : D(T) \cap V \to V$ is bounded invertible
in $V$. Moreover,
\begin{align*}
& H^{-1}  = Q_\chi F_\chi(H,T)^{-1} Q^{\#}_\chi + \chib H_{\chib}^{-1} \chib , \\
& F_\chi(H,T)^{-1} = \chi H^{-1}\chi + \chib T^{-1} \chib \; .
\end{align*}
\item[(ii)]  $\chi \ker H \subset \ker F_\chi(H,T)$ and $Q_\chi \ker F_\chi(H,T) \subset \ker H$. The mappings
\begin{align*}
&\chi : \ker H \to \ker F_\chi(H,T) , \\
&Q_\chi : \ker F_\chi(H,T)  \to \ker H  ,
\end{align*}
are linear isomorphisms and inverse to each other.
\end{itemize}
\end{theorem}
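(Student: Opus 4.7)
My approach rests on two operator intertwining identities that follow from the decomposition $1 = \chi^2 + \chib^2$ together with $[\chi,T] = [\chib,T] = 0$. First I would compute, by writing $W\chi = \chi\cdot\chi W\chi + \chib\cdot\chib W\chi$ and similarly for $W\chib$,
\[
H\chi = \chi H_\chi + \chib(\chib W\chi), \qquad H\chib = \chib H_\chib + \chi(\chi W\chib),
\]
together with the two left-multiplied analogues $\chi H = H_\chi\chi + (\chi W\chib)\chib$ and $\chib H = H_\chib\chib + (\chib W\chi)\chi$. Substituting these into the definitions of $Q_\chi$ and $Q_\chi^\#$, the terms containing $H_\chib$ cancel against $H_\chib^{-1}$ and one obtains the key intertwining identities
\[
H Q_\chi \;=\; \chi\, F_\chi(H,T), \qquad Q_\chi^\# H \;=\; F_\chi(H,T)\,\chi,
\]
valid on $D(T)$. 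These power the entire proof.

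For part (i) I would verify the inverse formulas by direct multiplication. Assuming $F_\chi$ boundedly invertible on $V$, set $G := Q_\chi F_\chi^{-1} Q_\chi^\# + \chib H_\chib^{-1}\chib$. The first intertwining identity collapses $HQ_\chi F_\chi^{-1}Q_\chi^\#$ to $\chi Q_\chi^\# = \chi^2 - \chi^2 W\chib H_\chib^{-1}\chib$, while $H\chib = \chib H_\chib + \chi^2 W\chib$ gives $H\chib H_\chib^{-1}\chib = \chib^2 + \chi^2 W\chib H_\chib^{-1}\chib$; adding, the cross terms cancel and $HG = \chi^2 + \chib^2 = 1$. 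The identity $GH = 1$ is obtained symmetrically using $Q_\chi^\# H = F_\chi\chi$ and $\chib H = H_\chib\chib + \chib W\chi^2$. The hypotheses on $V$ ensure that every factor in $G$ sends the correct subspace of $\HH$ into itself. Conversely, assuming $H$ invertible, an entirely dual computation verifies that $\chi H^{-1}\chi + \chib T^{-1}\chib$ is the two-sided inverse of $F_\chi$ on $V$; the analogous supporting identity used here is $\chib W\chib\, H_\chib^{-1}\chib = \chib - T H_\chib^{-1}\chib$, which comes from $\chib W\chib = H_\chib - T$.

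For part (ii), the intertwining identities immediately give $\chi(\ker H)\subset\ker F_\chi$ (since $F_\chi\chi\psi = Q_\chi^\# H\psi = 0$) and $Q_\chi(\ker F_\chi)\subset\ker H$ (since $HQ_\chi\phi = \chi F_\chi\phi = 0$). For $\psi\in\ker H$, applying $\chib$ to $H\psi = 0$ and using $\chib H = H_\chib\chib + \chib W\chi^2$ yields $H_\chib\chib\psi = -\chib W\chi^2\psi$; inverting $H_\chib$ on $\ran\chib\cap D(T)$ gives $\chib\psi = -H_\chib^{-1}\chib W\chi^2\psi$, and substituting into $Q_\chi\chi\psi = \chi^2\psi - \chib H_\chib^{-1}\chib W\chi^2\psi$ produces $\chi^2\psi + \chib^2\psi = \psi$. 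The opposite identity $\chi Q_\chi = \mathrm{id}$ on $\ker F_\chi$ is the parallel computation: applying $\chib$ to $F_\chi\phi = 0$ and using $\chib W\chib H_\chib^{-1}\chib = \chib - T H_\chib^{-1}\chib$ together with $[T,\chi]=0$, one extracts $T\xi = 0$ for $\xi := \chib\phi + \chi H_\chib^{-1}\chib W\chi\phi$. Bijectivity of $T$ on $\ran\chib\cap D(T)$ applied to $\chib\xi$ (which lies in that subspace, since $\chib$ commutes with $H_\chib^{-1}$) forces $\chib\xi = 0$, and a direct rearrangement shows $\chib\xi = \phi - \chi Q_\chi\phi$.

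The main technical obstacle is the domain bookkeeping around $H_\chib^{-1}$. Per Remark \ref{rem:abuse}, $H_\chib^{-1}$ is defined only on $\ran\chib\cap D(T)$, so each intermediate expression must be checked for legality; the observations that $\chib$ commutes with $H_\chib$ (hence with $H_\chib^{-1}$), that $T$ is bijective on $\ran\chib\cap D(T)$, and that the conditions on $V$ are exactly what is needed to keep $Q_\chi F_\chi^{-1} Q_\chi^\#$ well defined, all conspire to keep the manipulations inside the correct subspaces, and with this care the entire argument collapses to repeated use of $\chi^2 + \chib^2 = 1$.
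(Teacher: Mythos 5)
The paper defers the proof of this theorem to \cite{GH08} and states the first intertwining identity $HQ_\chi = \chi F_\chi(H,T)$ separately as Lemma~\ref{lem:feshbasic}, also citing \cite{GH08}. Your proof reconstructs exactly this standard argument: derive the decompositions $H\chi = \chi H_\chi + \chib^2 W\chi$, $H\chib = \chib H_\chib + \chi^2 W\chib$ and their left-multiplied duals from $\chi^2+\chib^2=1$ together with $\chi T\subset T\chi$, $\chib T\subset T\chib$; deduce the two intertwining identities $HQ_\chi=\chi F_\chi$ and $Q_\chi^\#H=F_\chi\chi$; and then verify the inverse formulas and the kernel isomorphisms by direct substitution. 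I have checked the computations of $HG=GH=1$ (with $G=Q_\chi F_\chi^{-1}Q_\chi^\#+\chib H_\chib^{-1}\chib$), of $Q_\chi\chi\psi=\psi$ for $\psi\in\ker H$, and of $\chi Q_\chi\phi=\phi$ for $\phi\in\ker F_\chi$ via $T\xi=0$ with $\xi=\chib\phi+\chi H_\chib^{-1}\chib W\chi\phi$, and they are correct.

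One inaccuracy in the closing paragraph: you assert that "$\chib$ commutes with $H_\chib$ (hence with $H_\chib^{-1}$)." This is not true in general, because $\chib H_\chib=T\chib+\chib^2W\chib$ while $H_\chib\chib=T\chib+\chib W\chib^2$, and these need not agree; only the $T$-part commutes by hypothesis (a). Fortunately you do not actually need this where you invoke it. To see $\chib\xi\in\ran\chib\cap D(T)$ it suffices to note that $\chib$ maps everything into $\ran\chib$, and that $\xi\in D(T)$ (since $\chib\phi\in D(T)$, $H_\chib^{-1}\chib W\chi\phi\in D(T)\cap\ran\chib$ by hypothesis (b), and $\chi$ preserves $D(T)$), hence $\chib\xi\in D(T)$ by $\chib T\subset T\chib$. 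Also, from $T\xi=0$ you then get $T\chib\xi=\chib T\xi=0$, and injectivity of $T$ on $\ran\chib\cap D(T)$ gives $\chib\xi=0$. With this correction the argument is complete. The converse direction and the formula $F_\chi^{-1}=\chi H^{-1}\chi+\chib T^{-1}\chib$ are, as you say, dual and require the analogous domain bookkeeping for $T^{-1}\chib$; spelling that out would close the remaining gap.
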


The proof of Theorem \ref{thm:fesh} can be found in \cite{GH08}. The next lemma
gives sufficient conditions for two operators to be a Feshbach pair. It follows
from a Neumann expansion, \cite{GH08}.

\begin{lemma} \label{fesh:thm2}
Conditions {\rm (a), (b)}, and {\rm (c)} on Feshbach pairs are satisfied if:
\begin{itemize}
\item[(a')] $\chi T \subset T \chi$ and $\chib T \subset T \chib$,
\item[(b')] $T$ is bounded invertible in $\ran \chib$,
\item[(c')] $\| T^{-1} \chib W \chib \| < 1$,  $\|  \chib W T^{-1} \chib \| < 1$, and $T^{-1} \chib W \chi$ is a bounded operator.
\end{itemize}
\end{lemma}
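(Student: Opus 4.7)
The plan is to verify (a), (b), and (c) under the hypotheses (a'), (b'), (c'), following a Neumann series strategy. Condition (a) is literally (a'), so no work is needed there. The heart of the argument lies in (b): given that (b') already provides bounded invertibility of $T$ on $\ran \chib$, what remains is to show the same for $H_{\chib} = T + \chib W \chib$ as a map $D(T) \cap \ran \chib \to \ran \chib$.

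For $H_{\chib}$, I would work with the formal factorizations
\[
H_{\chib} = T + \chib W \chib = T\bigl(1 + T^{-1} \chib W \chib\bigr) = \bigl(1 + \chib W \chib T^{-1}\bigr)T,
\]
valid on $D(T) \cap \ran \chib$ by virtue of (a'), which allows $\chib$ to pass through $T$ and preserve domains, together with the abuse of notation in Remark \ref{rem:abuse} that makes $T^{-1} \chib$ a bounded operator on all of $\HH$ with range in $D(T) \cap \ran \chib$. By the two norm bounds in (c'), the operators $A := T^{-1} \chib W \chib$ and $B := \chib W \chib T^{-1}$ both extend from $D(T)$ to bounded operators of norm strictly less than one, so the Neumann series $(1 + A)^{-1} = \sum_{n \ge 0} (-A)^n$ and $(1 + B)^{-1} = \sum_{n \ge 0} (-B)^n$ converge in operator norm. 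Combining these factorizations yields a two-sided bounded inverse $H_{\chib}^{-1} = (1 + A)^{-1} T^{-1} = T^{-1}(1 + B)^{-1}$ on $\ran \chib$, establishing (b).

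For (c), the explicit form $H_{\chib}^{-1} = (1 + A)^{-1} T^{-1}$ yields
\[
\chib H_{\chib}^{-1} \chib W \chi = \chib (1 + A)^{-1} T^{-1} \chib W \chi,
\]
a composition of bounded operators: $\chib$ is bounded by hypothesis, $(1 + A)^{-1}$ has just been shown bounded, and $T^{-1} \chib W \chi$ on $D(T)$ is bounded by the final clause of (c'). The main obstacle I anticipate is bookkeeping around the unbounded operator $W$: each of $T^{-1} \chib W \chib$, $\chib W \chib T^{-1}$, and $T^{-1} \chib W \chi$ is initially defined only on $D(T)$, and the Neumann expansion must be interpreted via its bounded extension to $\ran \chib$. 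Care is needed to check that the resulting inverse actually takes values in $D(T) \cap \ran \chib$, on which $H_{\chib}$ is defined; this is where the abuse of notation in Remark \ref{rem:abuse} and the fact that $T^{-1}$ already maps $\ran \chib$ into $D(T) \cap \ran \chib$ make the argument come out cleanly.
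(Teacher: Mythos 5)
Your proof is correct and carries out exactly the Neumann-expansion argument that the paper only alludes to (it cites \cite{GH08} and states ``It follows from a Neumann expansion'' without giving details). You correctly use the right-hand factorization $(1+B)T$ to get existence of $H_{\chib}^{-1}=T^{-1}(1+B)^{-1}$ with range in $D(T)\cap\ran\chib$, and the left-hand factorization to reduce condition (c) to the boundedness of $T^{-1}\chib W\chi$, which is precisely what (c') supplies.
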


Moreover we need the identity given in the following Lemma. The identity follows after some manipulations
of the definitions. A proof can be found for example in \cite{GH08}.

\begin{lemma} \label{lem:feshbasic} Let $(H,T)$ be a Feshbach pair for $\chi$. Then $H Q_{\chi} = \chi F_\chi(H,T)$ on $D(T)$.
\end{lemma}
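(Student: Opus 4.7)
The plan is to prove the identity by a direct algebraic computation, starting from the definitions of $Q_\chi$ and $F_\chi(H,T)$ and reducing both sides to a common form. Since $Q_\chi$ preserves $D(T) = D(H)$ by conditions (a) and (c), and the product $H_{\overline{\chi}}^{-1}\overline{\chi}$ maps $\HH$ into $\ran\overline{\chi}\cap D(T)$ by condition (b), all operators appearing below compose in a well-defined way on $D(T)$, so the computation is purely algebraic and I will not revisit domain issues further.

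The key step, which I would isolate as a preliminary identity, is
\[
H\,\overline{\chi} \;=\; \overline{\chi}\,H_{\overline{\chi}} + \chi^{2}\,W\overline{\chi}
\qquad \text{on } D(T).
\]
To derive it I would write $H = T + W$, use the commutation assumption $\overline{\chi}T \subset T\overline{\chi}$ from (a) to get $H\overline{\chi} = \overline{\chi}T + W\overline{\chi}$, and separately expand $\overline{\chi}H_{\overline{\chi}} = \overline{\chi}T + \overline{\chi}^{2}W\overline{\chi}$; their difference is $(1-\overline{\chi}^{2})W\overline{\chi} = \chi^{2}W\overline{\chi}$, giving the identity.

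Next I would substitute $Q_\chi = \chi - \overline{\chi}H_{\overline{\chi}}^{-1}\overline{\chi}W\chi$ into $HQ_\chi$ and apply the above identity to the term $H\overline{\chi}\,H_{\overline{\chi}}^{-1}\overline{\chi}W\chi$. The piece $\overline{\chi}H_{\overline{\chi}}\cdot H_{\overline{\chi}}^{-1}\overline{\chi}W\chi$ collapses to $\overline{\chi}^{2}W\chi$ because $H_{\overline{\chi}}^{-1}$ is the right inverse of $H_{\overline{\chi}}$ on $\ran\overline{\chi}$ by (b), leaving
\[
HQ_\chi \;=\; H\chi \;-\; \overline{\chi}^{2}W\chi \;-\; \chi^{2}W\overline{\chi}\,H_{\overline{\chi}}^{-1}\overline{\chi}W\chi.
\]
Writing $H\chi = \chi T + W\chi$ via $\chi T \subset T\chi$ and splitting $W\chi = (\chi^{2}+\overline{\chi}^{2})W\chi$, the $\overline{\chi}^{2}W\chi$ terms cancel and I can factor $\chi$ on the left to obtain $\chi\bigl(T + \chi W\chi - \chi W\overline{\chi}\,H_{\overline{\chi}}^{-1}\overline{\chi}W\chi\bigr) = \chi F_\chi(H,T)$, as desired.

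I do not expect any serious obstacle: the identity is essentially the operator-theoretic analogue of the Schur complement factorization, and everything reduces to (a), (b), and $\chi^{2} + \overline{\chi}^{2} = 1$. The only point requiring attention is the bookkeeping of which operators act on which subspaces, in particular using the convention from Remark~\ref{rem:abuse} that $H_{\overline{\chi}}^{-1}\overline{\chi}$ should be read as $(H_{\overline{\chi}}\!\upharpoonright\!\ran\overline{\chi}\cap D(T))^{-1}\overline{\chi}$, so that the cancellation $H_{\overline{\chi}}H_{\overline{\chi}}^{-1}\overline{\chi}W\chi = \overline{\chi}W\chi$ is legitimate.
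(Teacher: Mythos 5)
Your proof is correct. The paper itself does not provide a proof of this lemma — it merely remarks that ``the identity follows after some manipulations of the definitions'' and refers the reader to \cite{GH08} — so there is no internal proof to compare against. Your argument is exactly the direct algebraic verification the authors have in mind: isolating the identity $H\overline{\chi} = \overline{\chi}H_{\overline{\chi}} + \chi^{2}W\overline{\chi}$ on $D(T)$, substituting it into $HQ_\chi$, using $\chi^2 + \overline{\chi}^2 = 1$ to cancel the $\overline{\chi}^2 W\chi$ terms, and factoring $\chi$ on the left. The domain bookkeeping you flag (that $Q_\chi$ preserves $D(T)$, which the paper states explicitly, and the precise reading of $H_{\overline{\chi}}^{-1}\overline{\chi}$ from Remark~\ref{rem:abuse} so that $H_{\overline{\chi}}\,H_{\overline{\chi}}^{-1}\overline{\chi} = \overline{\chi}$) is exactly what justifies each composition. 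No gap.
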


\section{Banach Spaces of Hamiltonians}
\label{sec:ban}

In this section we introduce Banach spaces of integral kernels, which
parameterize  certain subspaces of the space of bounded operators on Fock space.
These subspaces are  suitable to study an iterative application of the Feshbach map
and to formulate the contraction property. We mainly follow  the exposition in \cite{BCFS03}.
However, we use a different class of  Banach spaces. 
The renormalization transformation will be defined on operators acting on the reduced Fock space
$$
\mathcal{H}_{\rm red}:= P_{\rm red} \FF ,
$$
where we introduced the notation  $P_{\rm red}:=  \chi_{[0,1]}(H_f)$.
We will investigate bounded operators in $\mathcal{B}(\mathcal{H}_{\rm red})$ of the form
\beqn \label{eq:standardexp}
T  +  W  ,
\eeqn
where $T = t(H_f)$ with  $t \in C^1([0,1])$  and the interaction term $W$ is  given formally by
\beqn \label{eq:sum}
W[w] := \sum_{m+n \geq 1} H_{m,n}(w_{m,n})
\eeqn
with
\beqn \label{eq:defhmn11}
H_{m,n}(w_{m,n}) := P_{\rm red} \int_{B_1^{m+n}} \frac{ dK^{(m,n)}}{|K^{(m,n)}|^{1/2}} a^*(k^{(m)}) w_{m,n}(H_f, K^{(m,n)}) a(\tilde{k}^{(n)}) P_{\rm red} \upharpoonright \mathcal{H}_{\rm red}
  \; ,
\eeqn
where $w_{m,n} \in L^\infty([0,1]\times B_1^m \times B_1^n)$  is an integral kernel and $w = (w_{m,n})_{m,n \in \N_0}$ a sequence of integral kernels.
We have used and will henceforth use the following notation.
\begin{align*}
& B_1 := \{ x \in \R^3 | |x|< 1 \} \\
& k^{(m)} := (k_1, ... , k_m ) \in \R^{3m}  , \quad \widetilde{k}^{(n)} := (\widetilde{k}_1, ... , \widetilde{k}_n ) \in \R^{3n} , \\
&  K^{(m,n)}  := (k^{(m)}, \widetilde{k}^{(n)}) , \quad d K^{(m,n)}  :=  \frac{d k^{(m)}}{(4\pi)^m}   \frac{d  \widetilde{k}^{(n)}}{(4 \pi)^n}  ,  \\
&   d k^{(m)}   :=  \prod_{i=1}^m d^3 k_i , \quad  d  \widetilde{k}^{(n)}  := \prod_{j=1}^n d^3 \widetilde{k}_j , \\
& a^*(k^{(m)}) :=  \prod_{i=1}^m a^*(k_i) , \quad a(\widetilde{k}^{(m)}) :=  \prod_{j=1}^m a(\widetilde{k}_j) \\
& | K^{(m,n)}| := | k^{(m)} | \cdot | \widetilde{k}^{(n)}| , \quad | k^{(m)} | := |k_1| \cdots |k_m | , \quad  | \tilde{k}^{(m)} | := |\tilde{k}_1| \cdots |\tilde{k}_m | , \\
& \Sigma[k^{(m)}] := \sum_{i=1}^m |k_i |  \; .
\end{align*}
For $w_{0,0} \in L^\infty([0,1])$, we define
$$
H_{0,0}(w_{0,0}) := w_{0,0}(H_f) .
$$
Note that \eqref{eq:defhmn11} is understood in the sense of forms, i.e. for $\psi, \phi$ two vectors in $\HH_{\rm red}$ with finitely many particles we define,
\begin{align} \label{eq:defhmnrig}
\langle \psi, H_{m,n}(w_{m,n}) \phi \rangle &=  \int_{B_1^{m+n}} \frac{ dK^{(m,n)}}{|K^{(m,n)}|^{1/2}} \left\langle a(k^{(m)})P_{\rm red} \psi,  w_{m,n}(H_f, K^{(m,n)}) a(\tilde{k}^{(n)}) P_{\rm red} \phi \right\rangle .
\end{align}
A  vector $\psi \in \FF$ is said to have finitely many particles if only  finitely many
$\psi_n$ are nonzero. For the precise meaning of the vectors   $a(k^{(m)})P_{\rm red} \psi$  and $a(\tilde{k}^{(n)}) P_{\rm red} \phi$ see
\eqref{eq:defofa} in Appendix A.
As shown in the proof of the next lemma, Lemma \ref{lem:operatornormestimates}, the  quadratic form \eqref{eq:defhmnrig} is bounded and thus
defines a bounded  operator.
Note that in view of the pull-through formula, Lemma \ref{lem:pullthrough}, the operator in \eqref{eq:defhmn11} is equal to the restriction of
\beqn \label{eq:defintegralkernel}
\int_{B_1^{m+n}} \frac{ dK^{(m,n)}}{|K^{(m,n)}|^{1/2}} a^*(k^{(m)})  \chi(H_f +  \Sigma[k^{(m)}] \leq 1 ) w_{m,n}(H_f, K^{(m,n)})
\chi(H_f +  \Sigma[\tilde{k}^{(n)}] \leq 1) a(\tilde{k}^{(n)} ) \;
\eeqn
to the subspace $\HH_{\rm red}$.
Thus we can restrict attention to integral kernels $w_{m,n}$ which are  essentially supported on the set
\begin{eqnarray*}
Q_{m,n} &:=& \left\{ ( r , K^{(m,n)}) \in [0,1] \times B_1^{m+n}  \ | \ r  \leq 1 - \max(\Sigma[k^{(m)}],
\Sigma[\widetilde{k}^{(m)}]) \right\}  \quad , \quad m + n \geq 1 ,  \\
Q_{0,0} &:=& [0,1] .
\end{eqnarray*}
Moreover, note that  integral kernels can always be assumed to be symmetric. That is, they lie in the range of the symmetrization operator,
which is defined as follows,
\begin{eqnarray} \label{eq:symmetrization}
w_{M,N}^{({\rm sym})}(r,K^{(M,N)}) := \frac{1}{N!M!} \sum_{\pi \in S_M} \sum_{\widetilde{\pi} \in S_N} {w}_{M,N}(r,
k_{\pi(1)},\ldots,k_{\pi(N)}, \widetilde{k}_{\widetilde{\pi}(1)},\ldots,\widetilde{k}_{\widetilde{\pi}(M)}).
\end{eqnarray}
To be able to relate the integral kernels with bounded operators we need the following lemma.
\begin{lemma} For $w_{m,n} \in L^\infty([0,1] \times B_1^m \times B_1^n)$\label{lem:operatornormestimates} we have
\beqn \label{eq:operatornormestimate1}
\|H_{m,n}(w_{m,n}) \|_{\rm op}  \leq  \| w_{m,n} \|_{\infty}  ( n! m!)^{-1/2}  \; ,
\eeqn
where $\| \cdot \|_{\rm op}$ denotes the operator norm of  $\HH_{\rm red}$.
\end{lemma}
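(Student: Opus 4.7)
The plan is to estimate the quadratic form \eqref{eq:defhmnrig} by combining Cauchy-Schwarz with the pull-through formula and a combinatorial symmetric-polynomial identity. For vectors $\psi, \phi \in \HH_{\rm red}$ with finitely many particles, bound the central multiplication operator $w_{m,n}(H_f, K^{(m,n)})$ in operator norm by $\|w_{m,n}\|_\infty$ and apply Cauchy-Schwarz on the inner product to obtain
\begin{displaymath}
|\langle \psi, H_{m,n}(w_{m,n})\phi\rangle| \leq \|w_{m,n}\|_\infty \int_{B_1^{m+n}} \frac{dK^{(m,n)}}{|K^{(m,n)}|^{1/2}} \|a(k^{(m)})P_{\rm red}\psi\|\,\|a(\tilde k^{(n)})P_{\rm red}\phi\|.
\end{displaymath}
Since $|K^{(m,n)}| = |k^{(m)}||\tilde k^{(n)}|$ and the integrand factors across the disjoint variable groups, the double integral splits as a product of a $k^{(m)}$-integral and a $\tilde k^{(n)}$-integral.

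For each single-variable factor, apply Cauchy-Schwarz with the weight split $|k^{(m)}|^{-1/2} = |k^{(m)}|^{-1}\cdot|k^{(m)}|^{1/2}$ against the measure $dk^{(m)}/(4\pi)^m$. The weight integral $\int_{B_1^m} dk^{(m)}/((4\pi)^m|k^{(m)}|^2)$ is finite by an explicit radial computation (equal to one per mode since $\int_0^1 r^{-2}\cdot r^2\,dr \cdot 4\pi/(4\pi) = 1$). The remaining integral is unfolded by the pull-through identity of Appendix A:
\begin{displaymath}
\int dk^{(m)}\,|k^{(m)}|\,\|a(k^{(m)})\psi\|^2 = \sum_{N\geq m}\frac{N!}{(N-m)!}\int|k_1|\cdots|k_m|\,|\psi_N(k_1,\ldots,k_N)|^2\,dk_1\cdots dk_N.
\end{displaymath}
By the full symmetry of $|\psi_N|^2$ in its arguments, $|k_1|\cdots|k_m|$ may be replaced by the average $\binom{N}{m}^{-1}e_m(|k_1|,\ldots,|k_N|)$, where $e_m$ is the elementary symmetric polynomial. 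Newton's inequality $e_m(x_1,\ldots,x_N) \leq (\sum_i x_i)^m/m!$ together with the constraint $H_f = \sum_i |k_i| \leq 1$ on $\HH_{\rm red}$ then controls this by $1/m!$, producing the desired factorial decay.

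Treating the $\tilde k^{(n)}$-factor identically and combining all estimates yields the claimed bound $\|H_{m,n}(w_{m,n})\|_{\rm op}\leq \|w_{m,n}\|_\infty/\sqrt{m!\,n!}$; a density argument extends this from vectors with finitely many particles to all of $\HH_{\rm red}$. The main obstacle is the careful bookkeeping that extracts the factorial $(m!n!)^{-1/2}$: a crude Cauchy-Schwarz using only $|k_i|\leq 1$ yields a bound of the form $C^{m+n}\|w_{m,n}\|_\infty$, adequate for small $m+n$ but insufficient for the convergent series expansions in the renormalization analysis. The factorial decay is secured precisely by the joint use of Newton's inequality and the intrinsic $\sqrt{m!}$ normalization of the $m$-particle symmetric Fock sector built into the definition of $a^*(k^{(m)})$.
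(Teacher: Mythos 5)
There is a genuine gap: you have placed the factorial in the wrong part of the argument, and if you track your own constants the $(m!\,n!)^{-1/2}$ disappears. Your second Cauchy--Schwarz splits the integral into a weight integral and the integral $D_m(\psi) = \int |k^{(m)}|\,\|a(k^{(m)})P_{\rm red}\psi\|^2\,dk^{(m)}$. You then bound $D_m(\psi)$ via the expansion
\begin{displaymath}
D_m(\psi) = \sum_{N\geq m}\frac{N!}{(N-m)!}\int|k_1|\cdots|k_m|\,|\psi_N|^2\,dk^{(N)},
\end{displaymath}
symmetrize $|k_1|\cdots|k_m|$ to $\binom{N}{m}^{-1}e_m$, and apply $e_m\leq \frac{1}{m!}(\sum_i|k_i|)^m\leq\frac{1}{m!}$. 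But these factors cancel exactly: $\frac{N!}{(N-m)!}\cdot\binom{N}{m}^{-1}\cdot\frac{1}{m!}=1$, so your Newton-inequality argument yields $D_m(\psi)\leq\|\psi\|^2$ --- precisely the same bound the paper obtains from Corollary \ref{lem:multanihiest2} plus $H_f\leq 1$ --- and contributes no factorial at all. Meanwhile you compute the weight integral over the \emph{entire} ball $B_1^m$ and correctly get $1$. Combining, your estimate as written gives only $\|H_{m,n}(w_{m,n})\|_{\rm op}\leq\|w_{m,n}\|_\infty$, with no factorial decay.

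The factorial actually comes from restricting the weight integral. Since $\|a(k^{(m)})P_{\rm red}\psi\|$ vanishes unless $\Sigma[k^{(m)}]\leq 1$ (pull-through), the Cauchy--Schwarz should be carried out over the set $S_{m,n}$ where $\Sigma[k^{(m)}]\leq 1$ and $\Sigma[\widetilde{k}^{(n)}]\leq 1$. After the angular integration each per-mode factor is the simplex volume, so $\int_{S_{m,n}} dK^{(m,n)}/|K^{(m,n)}|^2 = \frac{1}{m!\,n!}$; that is Eq.~\eqref{eq:intofwKminus2}. With that restriction your argument closes correctly (and your Newton-inequality route to $D_m(\psi)\leq\|\psi\|^2$ is a perfectly acceptable alternative to Corollary \ref{lem:multanihiest2}), but as written the factorial is unaccounted for.
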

\begin{proof}
For $\psi, \phi \in \mathcal{H}_{\rm red}$ with finitely many particles  we estimate  by
means of the Cauchy-Schwarz inequality,
\begin{eqnarray*}
| \langle \psi , H_{m,n}(w_{m,n}) \phi  \rangle  |  &&\leq \| w_{m,n} \|_\infty
\int_{S_{m,n}} \frac{d K^{(m,n)}}{|K^{(m,n)}|^{1/2}} \| a(k^{(m)}) \psi \|
\| a(\tilde{k}^{(n)}) \varphi \| \\
&&\leq \| w_{m,n} \|_\infty  D_m(\psi)^{1/2} D_n(\varphi)^{1/2}
\left[ \int_{S_{m,n}} \frac{d K^{(m,n)}}{|K^{(m,n)}|^{2}}  \right]^{1/2}
\end{eqnarray*}
where
$$
D_m(\psi) := \int_{B_1^m} |k^{(m)}| \| a(k^{(m)}) \psi \|^2 d k^{(m)}  \; ,
$$
and $S_{m,n} = \{ K^{(m,n)} \in  B_1^{m+n}  \ | \Sigma[k^{(m)}] \leq 1 ,
\Sigma[\widetilde{k}^{(m)}] \leq 1 \}$. By Corollary \ref{lem:multanihiest2} we have
$$
D_m(\psi) \leq  \| H_f^{m/2}   \psi \|^2   \leq \| \psi \|^2 .
$$
We calculate
\begin{equation} \label{eq:intofwKminus2}
 \int_{S_{m,n}} \frac{d K^{(m,n)}}{|K^{(m,n)}|^{2}} =  \frac{1}{{n! m!}} .
\end{equation}
Collecting estimates  the lemma follows.
\end{proof}

The renormalization procedure will involve kernels which lie in the following Banach spaces.
We shall identify the space $L^\infty(B_1^{m+n}; C[0,1])$ with a subspace   of $L^\infty([0,1]\times B_1^{m+n})$ by
setting $w_{m,n}(r,K^{(m,n)}) := w_{m,n}(K^{(m,n)})(r)$ for $w_{m,n} \in L^\infty(B_1^{m+n}; C[0,1])$. The norm in $L^\infty(B_1^{m+n}; C[0,1])$
is given by
$$
\| w_{m,n} \|_{\underline{\infty}} := \esssup_{K^{(m,n)} \in B_1^{m+n}}  {\rm sup}_{r \geq 0}| w_{m,n}(K^{(m,n)})(r)|  .
$$
We note that for $w \in L^\infty(B_1^{m+n}; C[0,1])$ we have $ \| w \|_\infty  \leq  \| w \|_{\underline{\infty}}$.
Conditions (i) and (ii) of the
following definition are needed for the injectivity property stated in Theorem \ref{thm:injective}, below.

\begin{definition} \label{def:wgartenhaag}
We define  $\WW_{m,n}^\#$ to be the Banach space consisting of functions $w_{m,n} \in L^\infty(B_1^{m+n};C^1[0,1])$ satisfying the following properties:
\begin{itemize}
\item[(i)]  $ w_{m,n} (1 - \chi_{Q_{m,n}} ) = 0$
\item[(ii)]  $w_{m,n}(\cdot,k^{(m)}, \widetilde{k}^{(n)})$ is totally
symmetric in the variables $k^{(m)}$ and $\widetilde{k}^{(n)}$
\item[(iii)] the following norm is finite
$$
\| w_{m,n} \|^\# := \| w_{m,n} \|_{\underline{\infty}} + \| \partial_r w_{m,n} \|_{\underline{\infty}} .
$$
\end{itemize}
Hence  for almost all  $K^{(m,n)} \in B_1^{m+n}$ we have $w_{m,n}(\cdot,K^{(m,n)}) \in C^1[0,1]$, where
the derivative is denoted by $\partial_r w_{m,n}$.
For $0<\xi < 1$, we define the Banach space
$$
 \mathcal{W}^\#_{\xi} := \bigoplus_{(m,n) \in \N_0^2  } \mathcal{W}_{m,n}^\#  \
$$
to consist of all sequences $w =( w_{m,n})_{m,n \in \N_0}$ satisfying
$$
\| w \|_\xi^\# := \sum_{(m,n)\in \N_0^2} \xi^{-(m+n)} \| w_{m,n}\|^\# < \infty   .
$$
\end{definition}
\begin{remark}
{ We shall also use the norm  $\| w_{m,n} \|^\#$  for any integral kernel  $w_{m,n} \in L^\infty(B_1^{m+n};C^1[0,1])$.
 Note that $\| w_{m,n}^{({\rm sym})} \|^\# \leq \| w_{m,n} \|^\#$.}
\end{remark}

Given $w \in \mathcal{W}_\xi^\#$, we
write   $w_{\geq r}$  for the vector in $\mathcal{W}_\xi^\#$  given by
$$
(w_{\geq r})_{(m,n)}  :=  \left\{ \begin{array}{ll} w_{m,n} & , \quad  {\rm if}  \ m+n \geq r \\ 0 & , \quad {\rm otherwise} . \end{array} \right.
$$
We will use the following balls to define the  renormalization transformation
\begin{align*}
\mathcal{B}^\#(\alpha,\beta,\gamma) := \left\{ w \in \mathcal{W}_\xi^\# \left|  \| \partial_r w_{0,0}  - 1 \|_\infty \leq \alpha , \
|w_{0,0}(0) |  \leq \beta
, \ \| w_{\geq 1} \|_{\xi}^\# \leq \gamma   \right. \right\} .
\end{align*}
For $w \in \mathcal{W}^\#_{\xi}$, it is easy to see  using \eqref{eq:operatornormestimate1} that the sum
$$
H(w) := \sum_{m,n} H_{m,n}(w)
$$
with $H_{m,n}(w) := H_{m,n}(w_{m,n})$
converges in operator norm with bound
\beqn \label{eq:opestimatgeq12}
\| H(w) \|_{\rm op} \leq  \| w\|_\xi^\# .
\eeqn
In fact using \eqref{eq:operatornormestimate1}, we see that
\beqn \label{eq:opestimatgeq1}
\| H(w_{\geq r}) \|_{\rm op} \leq   \xi^r   \| w_{\geq r} \|_\xi^\#  .
\eeqn
To identify  $H(w)$ with expressions of the form \eqref{eq:standardexp} we will set
$T[w] := w_{0,0}(H_f)$.
We will use the following theorem from  \cite{BCFS03}.
Note that in the  theorem  stated in  \cite{BCFS03}
 the integral kernels are not restricted to $Q_{m,n}$. But this seems to be necessary
for the injectivity property. We  sketch  a proof along the lines of   \cite{BCFS03}.

\begin{theorem} \label{thm:injective} The map $H : \WW_\xi^\# \to \mathcal{B}(\HH_{\rm red})$ is injective and  bounded.
\end{theorem}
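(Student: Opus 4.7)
\noindent\textbf{Proof plan for Theorem \ref{thm:injective}.}
Boundedness is immediate from \eqref{eq:opestimatgeq12}, which gives $\|H(w)\|_{\rm op}\le\|w\|_\xi^\#$; so $H$ is linear and $1$-Lipschitz from $\mathcal{W}_\xi^\#$ into $\mathcal{B}(\mathcal{H}_{\rm red})$.

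For injectivity I will assume $H(w)=0$ and recover each integral kernel $w_{m,n}$ from matrix elements of $H(w)$ between explicit pure-creation test vectors
\begin{equation*}
\psi_M = a^*(g_1)\cdots a^*(g_M)\Omega,\qquad \phi_N = a^*(f_1)\cdots a^*(f_N)\Omega,
\end{equation*}
with $g_i,f_j\in L^2(B_1)$. The two main tools are the pull-through formula (commuting $w_{m,n}(H_f,K^{(m,n)})$ past creation/annihilation operators shifts its first argument by the corresponding mode energies) and Wick's theorem (enumerating the pairings of $a$'s in $H_{m',n'}$ with $a^*$'s in $\phi_N$, and of $a^*$'s in $H_{m',n'}$ with $a$'s in $\psi_M^*$). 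The plan is a two-pass induction on $K := m+n$.

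\emph{Pass 1 (kernels at $r=0$).} Induct on $K$, with hypothesis $w_{m,n}(0,\cdot)\equiv0$ for $m+n<K$. For any $M+N=K$, the particle-number selection rule together with the bounds $m'\le M$ and $n'\le N$ restricts the expansion of $\langle\psi_M,H(w)\phi_N\rangle$ to pairs $(m',n')$ satisfying $m'-n'=M-N$ and $m'+n'\le M+N$. The inductive hypothesis eliminates every contribution except the fully-contracted $(m',n')=(M,N)$ pairing, which yields
\begin{equation*}
M!\,N!\int\frac{dK^{(M,N)}}{|K^{(M,N)}|^{1/2}}\,\overline{g_1(k_1)\cdots g_M(k_M)}\,w_{M,N}(0,K^{(M,N)})\,f_1(\widetilde{k}_1)\cdots f_N(\widetilde{k}_N)=0.
\end{equation*}
Arbitrariness of the $g_i,f_j$ forces $w_{M,N}(0,\cdot)\equiv0$ a.e.\ on $Q_{M,N}\cap\{r=0\}$.

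\emph{Pass 2 (extension to $r>0$).} Having killed every kernel at $r=0$, induct again on $K$, now with hypothesis $w_{m,n}(r,\cdot)\equiv0$ for all $r$ and $m+n<K$. Test against the spectator-enhanced vectors $\psi_{M+1}=a^*(g)\psi_M$ and $\phi_{N+1}=a^*(h)\phi_N$. Applying both the Pass~1 conclusion and the current inductive hypothesis to $\langle\psi_{M+1},H(w)\phi_{N+1}\rangle=0$ leaves exactly two surviving pairings: the top $(M+1,N+1)$ pairing, whose kernel is $w_{M+1,N+1}(0,\cdot)$ and thus vanishes by Pass~1 applied at level $K+2$; and the one-spectator $(M,N)$ pairing, whose kernel is $w_{M,N}(|p|,\cdot)$ integrated against $\overline{g(p)}h(p)$ over the spectator momentum $p$. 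Localizing $g,h$ in a small ball around a chosen $p_0$ with $|p_0|=r$, and varying the remaining test functions, will recover $w_{M,N}(r,\cdot)\equiv0$ for all $r\in(0,1)$; the continuity of $r\mapsto w_{M,N}(r,\cdot)$ built into the norm $\|\cdot\|^\#$ then extends this to all of $Q_{M,N}$.

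The hardest part will be the combinatorial bookkeeping of the Wick pairings in Pass~2, in particular verifying that, after both inductive reductions are applied, only the two pairings described above survive. Once this is set up, what remains is an elementary exercise of evaluating $L^\infty$ kernels against products of localized test functions, parallelling the injectivity argument in \cite{BCFS03} and adapted to the finer Banach space $\mathcal{W}_\xi^\#$ which also controls the $C^1$-regularity in $r$.
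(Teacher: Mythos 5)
Your boundedness step and your general strategy of probing $H(w)$ with matrix elements between finite-particle test vectors and inducting on $m+n$ are the right ideas, and match the paper's approach in spirit. However, your \emph{Pass 1} does not actually work, and since \emph{Pass 2} relies on it, the proof as written has a genuine gap.

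The problem with Pass 1 is that the inductive hypothesis ``$w_{m,n}(0,\cdot)\equiv 0$ for $m+n<K$'' controls the kernels only at $r=0$, whereas the lower-order contributions to $\langle\psi_M,H(w)\phi_N\rangle$ involve those kernels at strictly positive $r$. Concretely, when a spectator $g_i$ pairs directly with a spectator $f_j$ (rather than with a leg of $H_{m',n'}$), the pull-through formula shifts the $H_f$-argument of $w_{m',n'}$ by the spectator energy, so those terms are evaluated at $r>0$. The smallest counterexample is $K=2$, $M=N=1$: with $\psi_1=a^*(g)\Omega$ and $\phi_1=a^*(f)\Omega$ one gets
\begin{equation*}
\langle \psi_1, H_{0,0}(w_{0,0})\phi_1\rangle=\int \overline{g(k)}\,w_{0,0}(|k|)\,f(k)\,d^3k ,
\end{equation*}
which is not killed by $w_{0,0}(0)=0$ alone. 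Since Pass~2 invokes ``Pass~1 applied at level $K+2$'' to discard the top $(M+1,N+1)$ pairing, the two passes cannot be run in either order and the induction does not close.

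The paper avoids this with a \emph{single} induction carrying the strong hypothesis ``$w_{m,n}\equiv 0$ for all $r$ and $m+n\le l-1$,'' and in the test vectors it uses a scaled, localized spectator $f_{\epsilon,k}(x)=\epsilon^{-3/2}f(\epsilon^{-1}(x-k))$. Because $f_{\epsilon,k}\rightharpoonup 0$ in $L^2(B_1)$ as $\epsilon\downarrow 0$, the $(m'+1,n'+1)$ term (where the spectator contracts with a kernel leg) vanishes in the limit, while the spectator-to-spectator pairing localizes the kernel argument at $r=|k|$. This simultaneously kills the top term and gives access to $w_{\hat m,\hat n}(|k|,\cdot)$ for all $|k|\in(0,1)$. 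Your Pass~2 is structurally close to this once you (a) drop the reliance on Pass~1, (b) make the scaled localization explicit so that the spectator converges weakly to zero and kills the $(M+1,N+1)$ contribution, and (c) use the paper's compactness/dominated-convergence observation (e.g.\ that $w_{1,1}(0,\cdot,\cdot)/|X^{(1,1)}|^{1/2}$ is a compact integral operator) in the base step. With those repairs the argument becomes the paper's proof.
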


\begin{proof} The boundedness follows from \eqref{eq:opestimatgeq12}.
Assume that $H(w)=0$. We want to show that this implies that $w=0$. First we show that $H(w)=0$ implies $w_{0,0}=0$. To show this pick a non-negative function
$f \in C_0^\infty(\R^3)$ with ${\rm supp} f \subset B_1$ and $\int f^2(x) d^3 x = 1$. Define $f_{\epsilon,k}(x) := \epsilon^{-3/2} f(\epsilon^{-1}(x-k))$ for $k \in B_1$.
A straight forward computation gives
\begin{eqnarray*}
\lefteqn{ \langle a^*(f_{\epsilon,k}) \Omega, H(w) a^*(f_{\epsilon,k}) \Omega \rangle } \\
&& = \int_{B_1}  f_{\epsilon,k}^2(x)  w_{0,0}(|x|) d^3 x  +
 \int_{B_1^2}  f_{\epsilon,k}(x_1)  w_{1,1}(0,x_1,\widetilde{x}_1)f_{\epsilon,k}(\widetilde{x}_1) \frac{d X^{(1,1)}}{|X^{(1,1)}|^{1/2}} .
\end{eqnarray*}
As $\epsilon$ tends to zero, the second term on the right hand side converges to zero, because $f_{\epsilon,k}$ converges
weakly to zero in  $L^2(B_1)$ and the integral operator $w_{1,1}(0,x_1,\widetilde{x}_1)/ |X^{(1,1)}|^{1/2}$ is compact.
The first term converges in this limit to $w_{0,0}(|k|)$. Since by assumption $H(w)=0$, this implies
$w_{0,0}=0$.   To show that for $m+n \geq 0$ also $w_{m,n}$ has to be zero  we proceed by induction. We prove
that $w_{m,n} = 0$ for all $m+n \leq l-1$ implies that $w_{m,n}=0$ for $m+n = l$. Thus fix $(\widehat{m},\widehat{n})$ with $\widehat{m}+\widehat{n}=l$.
Let $g_1,...,g_{\widehat{m}},h_1,...,h_{\widehat{n}} \in L^2(B_1)$ and set
\begin{align*}
\psi = a^*(g_1) \cdots a^*(g_{\widehat{m}}) a^*(f_{\epsilon,k}) \Omega , \quad
\phi =a^*(h_1) \cdots a^*(h_{\widehat{n}}) a^*(f_{\epsilon,k}) \Omega .
\end{align*}
\begin{align*}
\langle \psi  , H(w) \phi \rangle =  \langle \psi , H_{\widehat{m},\widehat{n}} \phi \rangle +  \langle \psi, H_{\widehat{m}+1,\widehat{n}+1} \phi \rangle ,
\end{align*}
where we used that by the induction hypothesis $w_{m,n}=0$ if $m+n \leq l-1$.
As $\epsilon$ tends to zero, the second term on the right hand side converges to zero, because $w-\lim_{\epsilon \downarrow 0} f_{\epsilon,k}=0$
 in  $L^2(B_1)$. The first term on the right hand side converges in this limit to $(\widehat{m}+1)!(\widehat{n}+1)!$ times
$$
\int_{B_1^{\widehat{m}+\widehat{n}}} \frac{ dX^{(\widehat{m},\widehat{n})}}{|X^{(\widehat{m},\widehat{n})}|^{1/2}}
\overline g_1(x_1) \cdots \overline g_{\widehat{m}}(x_{\widehat{m}})  w_{\widehat{m},\widehat{n}}(|k|, X^{(\widehat{m},\widehat{n})}) h_1(\widetilde{x}_1) \cdots h_{\widehat{n}}(\widetilde{x}_{\widehat{n}})  ,
$$
other contributions to $\langle \psi , H_{\widehat{m},\widehat{n}} \phi \rangle$ vanish in this limit, again  because $w-\lim_{\epsilon \downarrow 0} f_{\epsilon,k}=0$ in  $L^2(B_1)$.
Since $H(w)=0$ and the choice of the functions $h_i$ and $g_i$ and the choice of $k \in B_1$ was arbitrary, we conclude that $w_{\widehat{m},\widehat{n}} = 0$. This shows
Theorem \ref{thm:injective}. \end{proof}

The renormalization transformation will be defined on kernels which depend on a spectral parameter. To account for
that, we introduce the following Banach space.

\begin{definition}
Let   $\WW_\xi$ denote  the Banach space consisting of strongly analytic functions on $D_{1/2}$ with values
in $\WW_\xi^\#$ and norm given by
$$
\| w(\cdot ) \|_\xi := \sup_{z \in D_{1/2}} \| w(z) \|_\xi^\# .
$$
\end{definition}
For $w \in \WW_\xi$ we will use the notation $w_{m,n}(z, \cdot) := w(z)_{m,n}(\cdot)$.
We extend the definition of $H(\cdot)$  to $\WW_\xi$ in the natural way: for $w \in \WW_\xi$, we set
$$
\left( H(w) \right) (z) := H(w(z))
$$
and likewise for $H_{m,n}(\cdot), \, W[\cdot], \, T[\cdot]$.
The renormalization transformation will be defined on the following balls in $\mathcal{W}_\xi$,
\begin{align*}
\mathcal{B}(\alpha,\beta,\gamma) := \left\{ w \in \mathcal{W}_\xi \left| \sup_{z \in D_{1/2}}  \| \partial_r w_{0,0}(z)  - 1 \|_\infty \leq \alpha , \
\sup_{z \in D_{1/2}} | w_{0,0}(z,0) + z | \leq \beta
, \ \| w_{\geq 1} \|_{\xi} \leq \gamma  \right. \right\} .
\end{align*}
Note that this set defines a basis of neighborhoods of the  point $w^*$ satisfying $H(w^*(z)) = H_f - z$, i.e.,  ${w^*}_{0,0}(z,r) = r - z $ and ${w^*_{\geq 1}} = 0$ , since
\begin{align*}
\{ w \in \WW_\xi | \| w - w^* \|_\xi \leq \epsilon  \}   \subset \mathcal{B}(\epsilon, \epsilon,  \epsilon)
 \subset \{ w \in \WW_\xi | \| w - w^* \|_\xi \leq  4 \epsilon  \} .
\end{align*}
To state the contraction property of the renormalization transformation in Section \ref{sec:contract}, we will need to introduce the  balls of  even integral kernels
$$
\mathcal{B}_0(\alpha,\beta,\gamma)  := \{ w \in \mathcal{B}(\alpha,\beta,\gamma)  | w_{m,n} = 0 \quad {\rm if } \quad m+n = {\rm odd} \ \} .
$$
We say that a kernel $w \in \WW_\xi$ is symmetric if $w_{m,n}(\overline{z}) = \overline{w_{n,m}(z)}$. Note that because of  Theorem
\ref{thm:injective} we have for $w \in \WW_\xi$,
\begin{equation} \label{eq:symmetricsa}
w \ {\rm  is \ symmetric } \   \Leftrightarrow \ H(w(\overline{z}))= H(w(z))^* .
\end{equation}
To show  the continuity of the ground state and the ground state energy as a function of the infrared cutoff we need to introduce a coarser norm

in $\WW_{m,n}^\#$. The supremum norm is to fine. For $w_{m,n} \in L^\infty(B_1^{m+n};C[0,1])$ we define the norm
$$
\| w_{m,n} \|_2  := \left[ \int \frac{dK^{(m,n)}}{|K^{(m,n)}|^2} \sup_{r \in [0,1]} | w_{m,n}(r,K^{(m,n)}) |^2 \right]^{1/2} .
$$
Observe that by  \eqref{eq:intofwKminus2}  we have
\begin{equation} \label{eq:infinity2ineq}
\| w_{m,n} \|_2 \leq \frac{ \|w_{m,n} \|_{\infty} }{\sqrt{ n! m!}}   
\end{equation}
We have the following lemma.
\begin{lemma} For $w_{m,n} \in L^\infty( B_1^m \times B_1^n; C[0,1])$\label{lem:operatornormestimates2} we have
\beqn \label{eq:operatornormestimate12}
\|H_{m,n}(w_{m,n}) \|_{\rm op}  \leq  \| w_{m,n} \|_{2}    \; .
\eeqn
\end{lemma}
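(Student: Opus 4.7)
The plan is to repeat the proof of Lemma \ref{lem:operatornormestimates} with a refined Cauchy--Schwarz split. The key observation is that in the earlier lemma the factor $\sup_{r}|w_{m,n}(r,K^{(m,n)})|$ was crudely pulled out as $\|w_{m,n}\|_\infty$, leaving a stand-alone integral $\int |K^{(m,n)}|^{-2}\,dK^{(m,n)}=(n!m!)^{-1}$. To obtain the sharper $\|\cdot\|_2$ bound we keep the $\sup_r$ inside the integral and group it with $|K^{(m,n)}|^{-2}$ before applying Cauchy--Schwarz, so that the resulting two factors naturally produce $\|w_{m,n}\|_2$ and a $D_m\cdot D_n$--type expression.

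Concretely, for $\psi,\phi\in\HH_{\rm red}$ with finitely many particles I would start from \eqref{eq:defhmnrig}. Inside the inner product I would bound the spectral operator by its supremum,
\[
\left|\big\langle a(k^{(m)})P_{\rm red}\psi,\, w_{m,n}(H_f,K^{(m,n)})a(\tilde k^{(n)})P_{\rm red}\phi\big\rangle\right|
\leq \sup_{r\in[0,1]}|w_{m,n}(r,K^{(m,n)})|\cdot \|a(k^{(m)})P_{\rm red}\psi\|\cdot\|a(\tilde k^{(n)})P_{\rm red}\phi\|,
\]
which is valid on $\HH_{\rm red}$ because $H_f$ there has spectrum in $[0,1]$ (and the support condition implicit in \eqref{eq:defintegralkernel} guarantees the values of $r$ outside $[0,1]$ do not contribute).

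Next I would rewrite the weight $|K^{(m,n)}|^{-1/2}\,dK^{(m,n)}$ as the product of $\sup_r|w_{m,n}|\cdot|K^{(m,n)}|^{-1}$ against $|K^{(m,n)}|^{1/2}\|a(k^{(m)})P_{\rm red}\psi\|\,\|a(\tilde k^{(n)})P_{\rm red}\phi\|$ with respect to $dK^{(m,n)}$, and apply Cauchy--Schwarz. The first factor becomes exactly $\|w_{m,n}\|_2$. The second factor factors through Fubini, yielding $D_m(P_{\rm red}\psi)^{1/2} D_n(P_{\rm red}\phi)^{1/2}$ up to a benign power of $(4\pi)^{-1}\le 1$. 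Corollary \ref{lem:multanihiest2} gives $D_m(P_{\rm red}\psi)\le \|H_f^{m/2}P_{\rm red}\psi\|^2\le\|\psi\|^2$, since $P_{\rm red}=\chi_{[0,1]}(H_f)$, and similarly for $\phi$.

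Combining these estimates gives $|\langle\psi,H_{m,n}(w_{m,n})\phi\rangle|\le \|w_{m,n}\|_2\|\psi\|\|\phi\|$ on the dense set of vectors with finitely many particles, and a density argument extends this to all of $\HH_{\rm red}$, proving \eqref{eq:operatornormestimate12}. There is no real obstacle here: the proof is a one-step refinement of the earlier lemma, the only subtlety being the choice of splitting in Cauchy--Schwarz so that $\sup_r$ pairs with $|K|^{-2}$ (producing $\|w_{m,n}\|_2$) rather than with the annihilation operator norms.
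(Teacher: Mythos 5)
Your argument is correct and is essentially the same as the paper's: bound the form by pulling $\sup_{r}|w_{m,n}(r,K^{(m,n)})|$ out of the inner product, then apply Cauchy--Schwarz with the split $\bigl(\sup_r|w|\,|K^{(m,n)}|^{-1}\bigr)\cdot\bigl(|K^{(m,n)}|^{1/2}\|a(k^{(m)})\psi\|\,\|a(\tilde k^{(n)})\phi\|\bigr)$ to produce $\|w_{m,n}\|_2$ and $D_m(\psi)^{1/2}D_n(\phi)^{1/2}$, and finish with Corollary~\ref{lem:multanihiest2}. The observation about the benign $(4\pi)^{-(m+n)/2}$ factor (which the paper silently discards in Lemma~\ref{lem:operatornormestimates} as well) is a correct fine point but does not change the estimate.
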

\begin{proof}
We use the notation and the estimates used in the proof of Lemma  \ref{lem:operatornormestimates}.
For $\psi, \phi \in \mathcal{H}_{\rm red}$ with finitely many particles  we estimate
\begin{eqnarray*}
\lefteqn{\left| \left\langle \psi,  H_{m,n}(w_{m,n})  \phi \right\rangle \right| }\\
&&\leq  \int_{B_1^{m+n}} \frac{ \sup_{r \in [0,1]} |  w_{m,n}(r,K^{(m,n)}) |}{|K^{(m,n)}|^{1/2}} \| a(k^{(m)})\psi \| \| a(\widetilde{k}^{(n)}) \phi\| d K^{(m,n)} \\
&&\leq  D_m(\psi)^{1/2} D_n(\phi)^{1/2} \left[  \int_{B_1^{m+n}} \frac{ \sup_{ r \in [0,1]}  |  w_{m,n}(r,K^{(m,n)} )|^2}{|K^{(m,n)}|^{2}}  d K^{(m,n)}  \right]^{1/2} . 
\end{eqnarray*}
Now observe that by Corollary \ref{lem:multanihiest2} we have
$ D_m(\psi)  \leq \| \psi \|^2$.
\end{proof}
\begin{definition}
Let $S$ be topological space. We say that the mapping $w : S \to \WW_\xi^\#$ is componentwise $L^2$--continuous (c-continuous) if for all
$m,n \in \N_0$  the map $s \mapsto w(s)_{m,n}$ is continuous with respect to $\| \cdot \|_2$, that is
$$
\lim_{s \in S, s \to s_0}  \left\| w(s_0)_{m,n} -  w(s)_{m,n} \right\|_2 =  0
$$
for all $s,s_0 \in S$.
\end{definition}
The above notion of continuity for integral kernels, yields continuity of the associated operators with respect to the norm topology. This is
the content of the following lemma.
\begin{lemma}  \label{lem:pcontop}
Let $w: S \to \WW_\xi^\#$ be c-continuous and uniformly bounded, i.e., $\sup_{s \in S} \| w(s) \|^\#_\xi < \infty$. Then
$H(w(\cdot)) : S \to \mathcal{B}(\HH_{\rm red})$ is continuous, with respect to the norm topology.
\end{lemma}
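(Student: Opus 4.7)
The plan is to decompose the operator-norm difference $\|H(w(s)) - H(w(s_0))\|_{\rm op}$ into a finite head $\sum_{m+n < N}$ and a tail $\sum_{m+n \geq N}$, control the tail uniformly in $s$ using the uniform bound on $\|w(s)\|_\xi^\#$, and then use c-continuity to drive the head to zero as $s \to s_0$.

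For the uniform tail estimate, I would combine Lemma \ref{lem:operatornormestimates2} with the inequality \eqref{eq:infinity2ineq} to get, for each $s \in S$,
\begin{equation*}
\|H_{m,n}(w(s)_{m,n})\|_{\rm op} \;\leq\; \|w(s)_{m,n}\|_2 \;\leq\; \frac{\|w(s)_{m,n}\|_{\underline{\infty}}}{\sqrt{m!\,n!}} \;\leq\; \frac{\xi^{m+n}\,M}{\sqrt{m!\,n!}},
\end{equation*}
where $M := \sup_{s \in S}\|w(s)\|_\xi^\# < \infty$ by hypothesis, and I have used $\|w(s)_{m,n}\|^\# \leq \xi^{m+n}\|w(s)\|_\xi^\#$ together with $\|\cdot\|_{\underline{\infty}} \leq \|\cdot\|^\#$. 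Since $\sum_{m,n \geq 0} \xi^{m+n}/\sqrt{m!\,n!} = \bigl(\sum_{m \geq 0} \xi^m/\sqrt{m!}\bigr)^2 < \infty$, given any $\varepsilon > 0$ I can fix $N$ so large that
\begin{equation*}
\sum_{m+n \geq N} \|H_{m,n}(w(s)_{m,n})\|_{\rm op} \;\leq\; M\sum_{m+n \geq N} \frac{\xi^{m+n}}{\sqrt{m!\,n!}} \;<\; \varepsilon/3 \qquad \text{uniformly in } s \in S.
\end{equation*}

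For the head, I would apply Lemma \ref{lem:operatornormestimates2} once more to obtain
\begin{equation*}
\Bigl\|\sum_{m+n < N}\bigl(H_{m,n}(w(s)_{m,n}) - H_{m,n}(w(s_0)_{m,n})\bigr)\Bigr\|_{\rm op} \;\leq\; \sum_{m+n < N} \|w(s)_{m,n} - w(s_0)_{m,n}\|_2.
\end{equation*}
The right-hand side is a finite sum, each term of which tends to $0$ as $s \to s_0$ by the c-continuity hypothesis, so it becomes $<\varepsilon/3$ for $s$ in a suitable neighborhood of $s_0$. Combining the two estimates via the triangle inequality yields $\|H(w(s)) - H(w(s_0))\|_{\rm op} < \varepsilon$ for $s$ near $s_0$, proving norm continuity.

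The only subtlety worth flagging is the interchange of the infinite sum over $(m,n)$ with the operator-norm limit; this is justified precisely because the series defining $H(w(s))$ converges absolutely in operator norm with a bound depending only on $M$ (cf.\ \eqref{eq:opestimatgeq12}), so no further regularity beyond uniform boundedness plus pointwise (in $(m,n)$) continuity of the kernels in the $\|\cdot\|_2$ norm is needed. I do not foresee a genuine obstacle — all the analytical input is already in Lemma \ref{lem:operatornormestimates2} and the summability estimate above.
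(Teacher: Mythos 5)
Your proof is correct and follows essentially the same head--tail decomposition as the paper: Lemma \ref{lem:operatornormestimates2} plus c-continuity handles the finite head, while the uniform bound $\sup_s \|w(s)\|_\xi^\# < \infty$ makes the tail small uniformly in $s$. The only cosmetic difference is that you reach the tail estimate via $\|\cdot\|_2$ and \eqref{eq:infinity2ineq}, whereas the paper invokes \eqref{eq:opestimatgeq1} directly (which gives the cleaner bound $\xi^N M$ without needing the factorials), but these are interchangeable.
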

\begin{proof}
From Lemma \ref{lem:operatornormestimates2}
it follows that   $H_{m,n}(w(s)) \stackrel{\| \cdot \|_{\rm op}}{\longrightarrow}   H_{m,n}(w(s_0))$ as $s$ tends to $s_0$.
The lemma now follows from a simple argument using  the estimate \eqref{eq:opestimatgeq1}
 and the uniform bound on $w(\cdot)$.
\end{proof}

\section{Renormalization Transformation: Definition}
\label{sec:ren:def}

In this section we define the renormalization transformation as in \cite{BCFS03}. It is a combination of
the Feshbach transformation which cuts out higher boson energies, a rescaling of the resulting operator so that it acts on the fixed subspace $\HH_{\rm red}$ and a conformal transformation of the spectral parameter.

Let  $0<\xi<1$ and $0 < \rho < 1$.
For $w \in \mathcal{W}_\xi$ we  define the analytic function
$$E_\rho[w](z) := \rho^{-1} E[w](z) := - \rho^{-1} w_{0,0}(z,0) = - \rho^{-1} \langle \Omega , H(w(z)) \Omega \rangle$$
 and the set
$$
U[w] := \{ z \in D_{1/2}  | | E[w](z) | <  \rho  / 2 \} .
$$
\begin{lemma} \label{renorm:thm3} Let $0< \rho\leq 1/2$. Then for all $w \in \mathcal{B}(\cdot, \rho/8, \cdot)$, we have
$$
D_{3\rho/8}\subset U[w]\subset D_{5\rho/8} ,
$$
$| \partial_z E[w](z)- 1 | \leq 4 \rho ( 4 - 5 \rho)^{-2} \leq 8/9$ for all $z \in U[w]$, and $E_{\rho}[w]: U[w] \to D_{1/2}$ is an analytic  bijection.
\end{lemma}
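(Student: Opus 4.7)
The key initial observation is that the hypothesis $w \in \mathcal{B}(\cdot, \rho/8, \cdot)$ unfolds exactly to $\sup_{z \in D_{1/2}} |w_{0,0}(z,0) + z| \leq \rho/8$, i.e., $|g(z)| \leq \rho/8$ on $D_{1/2}$ for the analytic function $g(z) := E[w](z) - z = -w_{0,0}(z,0) - z$. Both inclusions then follow immediately from triangle inequalities: if $z \in D_{3\rho/8}$ then $|E[w](z)| \leq |z| + |g(z)| < 3\rho/8 + \rho/8 = \rho/2$ and hence $z \in U[w]$, while if $z \in U[w]$ then $|z| \leq |E[w](z)| + |g(z)| < \rho/2 + \rho/8 = 5\rho/8$.

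For the derivative estimate my plan is to exploit the fact that $g$ is analytic on the \emph{full} disk $D_{1/2}$ and to apply Cauchy's integral formula on the boundary $|w| = 1/2$, rather than on a small disk around $z$. This yields, for $z \in U[w] \subset D_{5\rho/8}$,
\[
|g'(z)| \leq \frac{1}{2\pi} \int_{|w|=1/2} \frac{|g(w)|}{|w - z|^2}\, |dw| \leq \frac{\rho/16}{(1/2 - |z|)^2} < \frac{4\rho}{(4 - 5\rho)^2},
\]
where the last step inserts the bound $(1/2 - |z|)^2 > (4-5\rho)^2/64$ coming from $U[w] \subset D_{5\rho/8}$. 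Since $\rho \mapsto 4\rho/(4-5\rho)^2$ is increasing on $(0, 1/2]$ with value $8/9$ at $\rho = 1/2$, the numerical bound follows.

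The main step is the bijection, which I would prove by Rouch\'e's theorem, with the key trick of centering the contour at the target point $\rho c$ rather than at the origin. Given $c \in D_{1/2}$, let $h(z) := E[w](z) - \rho c$ and $k(z) := z - \rho c$, and take the circle $\Gamma$ of radius $\rho/2$ centered at $\rho c$; this circle lies in $D_{1/2}$ because $|\rho c| + \rho/2 < \rho \leq 1/2$. On $\Gamma$ one has $|k| = \rho/2$ and $|h - k| = |g| \leq \rho/8 < |k|$, so Rouch\'e yields precisely one zero $z_0$ of $h$ in the enclosed disk. This zero satisfies $|z_0 - \rho c| = |g(z_0)| \leq \rho/8$ and $|E[w](z_0)| = \rho|c| < \rho/2$, so $z_0 \in U[w]$; conversely any zero of $h$ in $U[w]$ automatically obeys $|z - \rho c| = |g(z)| \leq \rho/8 < \rho/2$ and hence sits inside $\Gamma$, so $z_0$ is the unique preimage of $c$ in $U[w]$. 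Analyticity of the inverse then follows because the derivative bound forces $|E_\rho[w]'(z)| = \rho^{-1}|E[w]'(z)| \geq \rho^{-1}(1 - 8/9) > 0$, so $E_\rho[w]$ is everywhere a local biholomorphism. The only mildly delicate point in the whole argument is the choice of the Rouch\'e contour; once that is fixed, everything else is direct estimation.
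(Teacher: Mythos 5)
Your proposal is correct and essentially mirrors the paper's own proof (which factors the argument through a slightly abstracted Lemma~\ref{renorm:thm3A} applied with $r=\rho/2$, $\epsilon=\rho/8$): triangle inequalities for the inclusions, Cauchy's integral formula near $\partial D_{1/2}$ for the derivative bound, and Rouch\'e's theorem for surjectivity/injectivity. The only differences are cosmetic — you center the Rouch\'e circle at $\rho c$ with radius $\rho/2$ where the paper uses the origin-centered circle $\partial D_{5\rho/8}$, and your Cauchy integral "over $|w|=1/2$" should strictly be taken over $|w|=1/2-\eta$ and passed to the limit $\eta\downarrow 0$ (as the paper does via a $\liminf$), since $g$ is only known to be analytic and bounded on the open disk $D_{1/2}$.
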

The lemma follows directly from the following lemma by choosing the appropriate values for the corresponding constants
($r = \rho/2$, $\epsilon = \rho / 8$).
\begin{lemma} \label{renorm:thm3A} Let $0 < \epsilon < 1/2$, and
let $E : D_{1/2} \to \C$ be an analytic function which satisfies
\begin{align*}
\sup_{z \in D_{1/2}} |E(z) - z | \leq \epsilon .
\end{align*}
Then for any $r > 0 $ with $r + \epsilon < 1/2$ the following is true.
\begin{itemize}
\item[(a)] For $w \in D_{r}$ there exists a unique $z \in D_{1/2}$ such that
$E(z) = w$.
\item[(b)] The map $E :  U_r := \{ z \in D_{1/2} | | E(z) | < r \} \to D_{r} $ is
biholomorphic.
\item[(c)]  We have $D_{r - \epsilon} \subset U_r \subset D_{r+\epsilon}$.
\item[(d)] If  $z \in  D_{r + \epsilon }$, then $|\partial_z E(z) - 1 | \leq \frac{\epsilon}{2} ( 1/2 - (r + \epsilon))^{-2}$.
\end{itemize}
\end{lemma}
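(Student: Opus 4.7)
The plan is to prove (c) from the triangle inequality, (a) by Rouché's theorem applied to $E(z)-w$, (b) as an immediate consequence of (a) combined with the fact that injective holomorphic maps are biholomorphic onto their images, and (d) by applying the Cauchy integral formula to the bounded analytic function $g(z) := E(z) - z$ on $D_{1/2}$.

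For (c), if $|z| < r - \epsilon$ then $|E(z)| \leq |z| + |E(z)-z| < r$, so $z \in U_r$; conversely, for $z \in U_r$ one has $|z| \leq |E(z)| + \epsilon < r + \epsilon$. For (a), I use the hypothesis $r + \epsilon < 1/2$ to pick any radius $r' \in (r+\epsilon, 1/2)$. Then on the circle $|z| = r'$ and for any fixed $w \in D_r$,
\[
|E(z) - z| \leq \epsilon < r' - r \leq |z - w|,
\]
so Rouché's theorem implies that $E(z) - w$ and $z - w$ have the same number of zeros in $D_{r'}$, namely one. This gives the existence of a preimage in $D_{r'} \subset D_{1/2}$. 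For uniqueness throughout $D_{1/2}$: if $r' \leq |z| < 1/2$ then $|E(z)| \geq |z| - \epsilon \geq r' - \epsilon > r$, so no such $z$ can map to $D_r$.

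For (b), part (a) and the definition $U_r = E^{-1}(D_r) \cap D_{1/2}$ show that $E : U_r \to D_r$ is bijective. An injective holomorphic map on an open subset of $\C$ has nonvanishing derivative (by the local form of holomorphic functions), hence it is biholomorphic onto its image. For (d), write $g(z) := E(z) - z$, holomorphic on $D_{1/2}$ with $\|g\|_\infty \leq \epsilon$. For $|z| < R < 1/2$, the Cauchy integral formula for the derivative on the circle of radius $R$ yields
\[
|g'(z)| \leq \frac{1}{2\pi} \cdot 2\pi R \cdot \frac{\epsilon}{(R-|z|)^2} = \frac{R\,\epsilon}{(R-|z|)^2}.
\]
Letting $R \nearrow 1/2$ and using $|z| \leq r + \epsilon$ gives $|E'(z) - 1| = |g'(z)| \leq \tfrac{\epsilon}{2}(1/2 - (r+\epsilon))^{-2}$, which is exactly the bound in (d); the constant $1/2$ in the numerator and the square in the denominator are both tracked directly from the Cauchy kernel on the largest admissible circle.

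None of the steps is genuinely hard; the only point requiring care is the choice of $r'$ in the Rouché argument (it has to strictly exceed $r + \epsilon$ so that the strict inequality needed for Rouché holds, while remaining inside $D_{1/2}$) together with the sweep-out argument ruling out preimages in the annulus $r' \leq |z| < 1/2$, which is what upgrades existence in $D_{r'}$ to uniqueness in the full disk $D_{1/2}$.
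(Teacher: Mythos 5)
Your proof is correct and follows essentially the same route as the paper's: Rouch\'e's theorem for existence, the estimate $|z|\leq |E(z)|+\epsilon$ to confine preimages (hence uniqueness and the second inclusion in (c)), the inverse function theorem for (b), and Cauchy's formula on circles approaching $\partial D_{1/2}$ for (d). The only cosmetic difference is that you run Rouch\'e on an auxiliary circle $|z|=r'$ with $r'\in(r+\epsilon,1/2)$, whereas the paper applies it directly on $\partial D_{r+\epsilon}$, where the strict inequality $\epsilon < |z|-|w|$ already holds because $|w|<r$; both are valid.
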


\begin{proof} (a).
  Existence:
For $z \in \partial D_{\epsilon + r}$ and $w \in D_{r}$,
$$
|E(z) - z | \leq \epsilon < |z| - |w|  \leq  |z - w | \; .
$$
By Rouch\'e's theorem, for any $w \in D_{r}$ there exists
a unique $z \in D_{\epsilon + r}$ such that $E(z) = w$. Uniqueness:
If $w \in D_r$, $z \in D_{1/2}$, and $E(z) = w$, then
\begin{align} \label{eq:energyestimate1}
|z| \leq |E(z)| + \epsilon < r + \epsilon \; .
\end{align}
\noindent
(b). This follows from (a) by the inverse function theorem of complex analysis.
\\
\noindent
(c).  The first inclusion follows from $|E(z) | \leq |z| + \epsilon$. The second from
\eqref{eq:energyestimate1}.
\\
\noindent
To obtain the estimate in  (d), we use Cauchy's integral formula
$$
| \partial_z ( E(z) - z)  | \leq \liminf_{\eta \downarrow 0} \left| \frac{1}{2 \pi i }\int_{\partial D_{1/2-\eta}}
\frac{E(w)  - w }{ (w - z )^2} dw \right| \leq  \frac{\pi }{2\pi} \frac{\epsilon}{ ( 1/2 - (r + \epsilon))^{2}} .
$$

\end{proof}

Let $\chi_1$ and  $\chib_1$ be two functions  in $C^\infty([0,\infty);[0,1])$  with $\chi_1^2 + \chib_1^2 = 1$,
$\chi_1 = 1$ on $[0,3/4)$, and $\supp \chi_1 \subset [0,1]$. We set
$$
\chi_\rho(\cdot)  = \chi_1(\cdot /\rho) \quad , \quad \chib_\rho(\cdot) = \chib_1(\cdot /\rho) \; ,
$$
and use the abbreviation
$\chi_\rho = \chi_\rho(H_f)$ and $\chib_\rho  = \chib_\rho(H_f)$.
It should be clear from the context whether $\chi_\rho$ or $\chib_\rho$ denotes a function or an operator.
For an explicit choice of $\chi_1$ and $\chib_1$ see \cite{BCFS03}. The following lemma will be
needed  to be able to define the Feshbach map which will be used later.


\begin{lemma} \label{renorm:thm1} Let $0 < \rho \leq  1/2$.  Then for all $w \in \mathcal{B}^\#(\rho/8,\rho/2,\rho/8)$ we have
\begin{eqnarray}
&& \| \left( H_{0,0}(w) \upharpoonright \ran \chib_\rho \right)^{-1}  \| \leq  \frac{16}{3 \rho} \label{eq:w00estimate}  \\
\label{renorm:thm1:eq31}
&&\| H_{0,0}(w)^{-1} \chib_\rho W[w]  \|   < \frac{2}{3}  \quad ,  \quad   \|   W[w] H_{0,0}(w)^{-1} \chib_\rho \| < \frac{2}{3}   .
\end{eqnarray}
In particular $(H(w),H_{0,0}(w))$ is a Feshbach pair for $\chi_\rho$.
\end{lemma}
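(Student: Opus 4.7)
The plan is to address the three conclusions in order, with Step~1 providing the scaffolding for the rest.

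First I would establish \eqref{eq:w00estimate} by showing that $w_{0,0}$ is bounded away from zero on the spectrum of $H_f$ restricted to $\ran \chib_\rho$, so that $H_{0,0}(w) = w_{0,0}(H_f)$ is boundedly invertible there via functional calculus. The hypothesis $w \in \mathcal{B}^\#(\rho/8,\rho/2,\rho/8)$ forces $w_{0,0}$ to be a small perturbation of $r \mapsto r$: writing $w_{0,0}(r) = w_{0,0}(0) + r + \int_0^r (\partial_s w_{0,0}(s) - 1)\,ds$ and using $|w_{0,0}(0)| \leq \rho/2$, $\|\partial_r w_{0,0} - 1\|_\infty \leq \rho/8$, one gets $|w_{0,0}(r)| \geq (1-\rho/8)\,r - \rho/2$. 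Since $\supp \chib_\rho \cap [0,1] \subset [3\rho/4,1]$, this evaluates on the relevant range to $|w_{0,0}(r)| \geq \rho/4 - 3\rho^2/32 \geq 13\rho/64 > 3\rho/16$ for $\rho \leq 1/2$. Functional calculus then yields $\|(H_{0,0}(w)\upharpoonright \ran\chib_\rho)^{-1}\| \leq 16/(3\rho)$.

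Next, I would combine Step~1 with the operator-norm estimate \eqref{eq:opestimatgeq1} to derive \eqref{renorm:thm1:eq31}. Since $W[w] = H(w_{\geq 1})$, \eqref{eq:opestimatgeq1} gives $\|W[w]\|_{\rm op} \leq \xi \|w_{\geq 1}\|_\xi^\# \leq \xi\rho/8$. Because $H_{0,0}(w)^{-1}\chib_\rho$ is a function of $H_f$ alone, composing on either side with $W[w]$ and using submultiplicativity yields
\[
\|H_{0,0}(w)^{-1}\chib_\rho W[w]\| \leq \tfrac{16}{3\rho}\cdot \xi\tfrac{\rho}{8} = \tfrac{2\xi}{3} < \tfrac{2}{3},
\]
and identically for $\|W[w] H_{0,0}(w)^{-1}\chib_\rho\|$, the strict inequality coming from $\xi < 1$. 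Finally, the Feshbach pair assertion would follow from Lemma \ref{fesh:thm2} applied to $\chi = \chi_\rho$, $T = H_{0,0}(w)$, $H = H(w)$: condition (a') is immediate because $\chi_\rho$, $\chib_\rho$, and $T = w_{0,0}(H_f)$ are all functions of $H_f$ and hence commute; (b') is exactly \eqref{eq:w00estimate}; and (c') follows from \eqref{renorm:thm1:eq31} together with $\|\chib_\rho\|\leq 1$, the boundedness of $T^{-1}\chib_\rho W[w]\chi_\rho$ being automatic since each factor is bounded.

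The argument is essentially a numerical check and presents no serious obstacle; the only place requiring care is Step~1, where the three ball radii $\rho/8,\rho/2,\rho/8$ and the cutoff $3\rho/4$ built into $\chib_\rho$ are finely balanced. In particular the inequality $13\rho/64 > 3\rho/16$ is tight enough that one must keep the hypothesis $\rho \leq 1/2$ active throughout, as it is needed to absorb the quadratic term $3\rho^2/32$ into the linear estimate.
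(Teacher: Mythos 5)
Your proposal is correct and follows essentially the same route as the paper: lower-bound $|w_{0,0}(r)|$ on $[3\rho/4,1]$ using the ball parameters to get the resolvent bound, estimate $\|W[w]\|$ via \eqref{eq:opestimatgeq1}, and verify conditions (a')--(c') of Lemma~\ref{fesh:thm2}. The only cosmetic difference is your sharper intermediate constant $13\rho/64$, which the paper skips by going directly to $3\rho/16$.
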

\begin{proof}
To prove the lemma we verify the assumptions of Lemma \ref{fesh:thm2}. Clearly $\chi_\rho$ commutes with $H_{0,0}(w)$.
  For $r \in [\frac{3}{4}\rho,1]$, we estimate
\begin{align}
| w_{0,0}(r) | & \geq r -      |r - ( w_{0,0}(r) - w_{0,0}(0) ) | - |w_{0,0}(0)|  \nonumber \\
& \geq r - r \rho/8 - \rho/2  \geq  \frac{3}{4} \rho (1 - \rho/8) - \rho/2 \geq   \frac{3 \rho}{16}  \label{eq:basicwest} \; .
\end{align}
This implies  that $H_{0,0}(w)$ is invertible on the range of $\chib_\rho$ and that \eqref{eq:w00estimate} holds.
By this and $\|W[w]\| \leq \rho/8$, which follows from \eqref{eq:opestimatgeq1},  inequalities  \eqref{renorm:thm1:eq31} follow.
The Feshbach property now follows from Lemma \ref{fesh:thm2}, since  $\| \chib_1 \|_\infty, \|\chi_1 \|_\infty \leq 1$.
\end{proof}

\begin{remark} \label{rem:gartenhaag}  { Note that $w \in \mathcal{B}(\alpha,\beta,\gamma)$ and $z \in U[w]$ imply $w(z) \in \mathcal{B}^\#(\alpha,\rho/2,\gamma)$.}
\end{remark}

In the definition of the renormalization transformation there is a  scaling transformation $S_\rho$ which scales the energy value $\rho$ to the value 1.
It is defined as follows.
For operators  $A \in \mathcal{B}(\FF)$ set
$$
S_\rho(A) =  \rho^{-1}   \Gamma_\rho A \Gamma_\rho^* ,
$$
where  $\Gamma_\rho$ is the unitary dilation on $\FF$ which is  uniquely determined  by $\Gamma_\rho \Omega = \Omega$ and
$ \Gamma_\rho a^\#(k) \Gamma_\rho^* = \rho^{-3/2} a^\#(\rho^{-1} k)$,
for all $k \in \R^3$.
It is easy to check that $\Gamma_\rho H_f \Gamma_\rho^* = \rho H_f$ and hence $\Gamma_\rho \chi_\rho \Gamma_\rho^* = \chi_1$.
We are now ready to precisely define the renormalization transformation, which  in
view of Lemmas \ref{renorm:thm3} and \ref{renorm:thm1}   and
Remark \ref{rem:gartenhaag} is well defined.

\begin{definition} \label{def:defofrenorm} Let $0 < \rho \leq 1/2$. For $w\in \mathcal{B}^\#(\rho/8,\rho/2,\rho/8)$, we define the operator
$$
\mathcal{R}_\rho^\# ( H(w) ) := S_\rho \left( F_{\chi_{\rho}}\left(H(w),H_{0,0}(w) \right) \right) \upharpoonright \HH_{\rm red}  ,
$$
and for $w\in \mathcal{B}(\rho/8,\rho/8,\rho/8)$ we define the renormalization transformation
$$
\left( \mathcal{R}_\rho H(w) \right)(z) :=  \mathcal{R}_\rho^\#(H( w(E_\rho[w]^{-1}(z))) ,
$$
where  $z \in D_{1/2}$.
\end{definition}

In view of the Feshbach property, Theorem \ref{thm:fesh} (ii), and since $\ran \chi_1 \subset \mathcal{H}_{\rm red}$, it will turn out to be sufficient
to study the restriction of the Feshbach map
to $\mathcal{H}_{\rm red}$,

\section{Renormalization Transformation: Kernels}
\label{sec:ren:ker}

We have defined the  renormalization transformation on the level of operators. In this section
we will describe  the induced transformation on the integral kernels. This transformation
is derived the same way as in \cite{BCFS03}. However, we use modified estimates to show
that the renormalized kernel is again an element of $\mathcal{W}_\xi^\#$.

Throughout this section we assume
$w\in \mathcal{B}^\#(\rho/8,\rho/2,\rho/8)$ and $0 < \rho \leq 1/2$. We will show that under suitable  conditions
there exists an integral kernel
$\mathcal{R}_\rho^\#(w) \in \mathcal{W}_\xi^\#$, given in \eqref{nenorm:eq1} below, such that
$$
\mathcal{R}_\rho^\# (H(w)) = H(\mathcal{R}_\rho^\#(w) ) .
$$
Note that the uniqueness of the integral kernel will follow from Theorem \ref{thm:injective}.
Next we show its formal existence. First we expand the Feshbach operator into a Neumann series which is justified by Lemma  \ref{renorm:thm1} and
rearrange  the factorization to arrive at the following identity which holds  on $\HH_{\rm red}$,
\begin{eqnarray}
 F_{\chi_{\rho}} ( H(w),H_{0,0}(w))
&=& T   + \chi_\rho   W \chi_\rho - \chi_\rho W \chib_\rho ( T  + \chib_\rho W \chib_\rho )^{-1} \chib_\rho W \chi_\rho \nonumber \\
&=&  T      +  \sum_{L=1}^\infty  (-1)^{L-1} \chi_\rho W \left[ \frac{\chib_\rho^2}{ T }  W  \right]^{L-1}    \chi_\rho \; ,  \label{eq:feshexpansion}
\end{eqnarray}
where here  we used the abbreviations  $W = W[w]$ and   $T = T[w]$.
Using the commutation relation of the creation and annihilation operators and the pull-through
formula we bring this expression into normal order. To this end  we introduce
\begin{eqnarray*}
\lefteqn{ W_{p,q}^{m,n}[w]( r , K^{(m,n)} ) } \\
&:=& P_{\rm red} \int_{B_1^{p+q}} \frac{d X^{(p,q)}}{|X^{(p,q)}|^{1/2}} a^*(x^{(p)}) w_{p+m,q+n}(H_f + r , x^{(p)},
k^{(m)}, \widetilde{x}^{(q)}, \widetilde{k}^{(n)} )
 a(\widetilde{x}^{(q)}) P_{\rm red}
\end{eqnarray*}
which defines an operator for a.e. $K^{(m,n)} \in B_1^{m+n}$. In the case $m = n = 0$ we set   $W^{0,0}_{m,n}[w](r) := W_{m,n}[w](r)$.
For later use we state an  inequality in the following lemma. The inequality  is obtained the same way as  \eqref{eq:operatornormestimate1}.
\begin{lemma} \label{lem:wwestimate} Let $w \in \WW_\xi^\#$. Then 
\begin{align*} \label{eq:wwestimate}
\| W_{p,q}^{m,n}[w](r,K^{(m,n)}) \|_{\rm op} &\leq \frac{ \| w_{p+m,q+n} \|_\infty}{\sqrt{p! q!}} \\
  \| \partial_r W_{p,q}^{m,n}[w](r,K^{(m,n)}) \|_{\rm op}  &\leq \frac{ \| \partial_r w_{p+m,q+n} \|_\infty}{\sqrt{p! q!}} ,
\end{align*}
where the partial derivative $\partial_r W_{p,q}^{m,n}[w]$ is taken in the weak operator topology.
\end{lemma}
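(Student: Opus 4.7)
The plan is to prove Lemma \ref{lem:wwestimate} by a direct adaptation of the argument behind Lemma \ref{lem:operatornormestimates}, treating the extra arguments $(r, K^{(m,n)})$ as fixed spectators.

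First, for fixed $(r, K^{(m,n)}) \in [0,1] \times B_1^{m+n}$, I would view $W_{p,q}^{m,n}[w](r, K^{(m,n)})$ as a bounded operator on $\HH_{\rm red}$ by testing its quadratic form against vectors $\psi, \phi \in \HH_{\rm red}$ with finitely many particles. Writing
\[
\langle \psi, W_{p,q}^{m,n}[w](r,K^{(m,n)}) \phi \rangle
= \int_{B_1^{p+q}} \frac{dX^{(p,q)}}{|X^{(p,q)}|^{1/2}} \bigl\langle a(x^{(p)})P_{\rm red}\psi,\ w_{p+m,q+n}(H_f+r, \cdot)\, a(\widetilde{x}^{(q)}) P_{\rm red}\phi \bigr\rangle,
\]
I would apply the spectral theorem to bound $\|w_{p+m,q+n}(H_f+r, x^{(p)}, k^{(m)}, \widetilde{x}^{(q)}, \widetilde{k}^{(n)})\|_{\rm op} \le \|w_{p+m,q+n}\|_\infty$ (since $\sigma(H_f+r) \subset [r, \infty)$ and the essential supremum of $w_{p+m,q+n}$ on the relevant set is dominated by $\|w_{p+m,q+n}\|_\infty$). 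Then Cauchy--Schwarz on the $X^{(p,q)}$ integral, exactly as in the proof of Lemma \ref{lem:operatornormestimates}, reduces the bound to the product of $D_p(\psi)^{1/2} D_q(\phi)^{1/2}$ and $\bigl[\int_{S_{p,q}} |X^{(p,q)}|^{-2}\, dX^{(p,q)}\bigr]^{1/2}$. Applying Corollary \ref{lem:multanihiest2} gives $D_p(\psi) \le \|\psi\|^2$ and $D_q(\phi) \le \|\phi\|^2$, while \eqref{eq:intofwKminus2} yields the combinatorial factor $(p!q!)^{-1/2}$.

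For the second inequality, I would differentiate the form $r \mapsto \langle \psi, W_{p,q}^{m,n}[w](r, K^{(m,n)}) \phi\rangle$ in $r$. Since $w_{p+m,q+n} \in L^\infty(B_1^{p+q+m+n}; C^1[0,1])$ with $\|\partial_r w_{p+m,q+n}\|_{\underline{\infty}}$ finite, the spectral theorem gives $\partial_r [w_{p+m,q+n}(H_f + r, \cdot)] = (\partial_r w_{p+m,q+n})(H_f + r, \cdot)$ in the strong sense. The differentiation can be passed under the $X^{(p,q)}$ integral by dominated convergence (the integrand is controlled uniformly in $r$ by an $L^1$ function on $S_{p,q}$ coming from the same Cauchy--Schwarz bound). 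This shows the derivative exists in the weak operator topology and equals the operator obtained by replacing $w_{p+m,q+n}$ with $\partial_r w_{p+m,q+n}$. Applying the first part of the argument to this replaced kernel then yields the desired bound.

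I do not expect any real obstacle here: the proof is essentially bookkeeping that extends Lemma \ref{lem:operatornormestimates} to kernels that depend on extra parameters and to their $r$-derivatives. The only mild point of care is the justification of the weak differentiation under the $dX^{(p,q)}$ integral, handled by dominated convergence using the uniform $C^1$ bound on the kernel.
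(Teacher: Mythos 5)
Your proposal is correct and follows essentially the same route the paper has in mind: the paper dispenses with a written proof and simply notes that the inequality is obtained the same way as \eqref{eq:operatornormestimate1}, and you have spelled out exactly that adaptation. The only points worth flagging are that you correctly exploit the product structure of the constraint set $S_{p,q}$ so that $\int_{S_{p,q}}|X^{(p,q)}|^{-2}\,dX^{(p,q)} = (p!q!)^{-1}$ gives the right combinatorial factor, and that the first bound is needed also for the kernel $\partial_r w_{p+m,q+n}$, which is merely in $L^\infty$; that is fine since Lemma~\ref{lem:operatornormestimates} (and hence the first estimate) requires only boundedness, not $C^1$ regularity, of the kernel.
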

The next theorem,  \cite{BCFS03},  is a variant of Wick's Theorem and will be used to write the $L$-th summand in  \eqref{eq:feshexpansion} in terms of integral kernels.
Its proof can be found in  Appendix B.
\begin{theorem} \label{thm:wicktheorem} Let  $w \in \mathcal{W}^\#_\xi$  and let $F_0,F_1,...,F_L$ be bounded Borel measureable functions on $[0,\infty)$.  Then
$$
F_0(H_f) W[w] F_1(H_f) W[w] \cdots W[w] F_{L-1}(H_f) W[w] F_L(H_f) = H( \widetilde{w}^{({\rm sym})}   ) ,
$$
where
\begin{eqnarray}
\lefteqn{ \widetilde{w}_{M,N}(r,K^{(M,N)}) }  \nonumber \\
& = &
\sum_{\substack{ m_1 + \cdots m_L = M   \\  n_1+...n_L=N     }} \sum_{\substack{ p_1, q_1,...,p_L,q_L: \\  m_l+p_l+n_l+q_l \geq 1 }} \prod_{l=1}^L
\left\{ \binom{ m_l + p_l }{ p_l}   \binom{ n_l + q_l }{ q_l }  \right\}
\nonumber  \\
& & \times F_0(r + \tilde{r}_0)
\langle \Omega   , \prod_{l=1}^{L-1} \left\{ W_{p_l,q_l}^{m_l,n_l}[w]( r + r_l , K_l^{(m_l,n_l)}) F_{l}(H_f + r + \widetilde{r}_{l}) \right\} \nonumber \\
& &
W_{p_L,q_L}^{m_L,n_L}[w]( r + r_L , K_L^{(m_L,n_L)}) \Omega \rangle F_L(r + \widetilde{r}_L) , \label{eq:complicated}
\end{eqnarray}
with
\begin{align*}
& K^{(M,N)} := (K_1^{(m_1,n_1)}, ... , K_L^{(m_L,n_L)}) , \quad K_l^{(m_l,n_l)} := (k_l^{(m_l)},\widetilde{k}_l^{(n_l)}) ,  \\
& r_l := \Sigma[\widetilde{k}_1^{(n_1)}] + \cdots + \Sigma[\widetilde{k}_{l-1}^{(n_{l-1})}] + \Sigma[{k}_{l+1}^{(m_{l+1})}] + \cdots + \Sigma[{k}_L^{(m_L)}] ,  \\
& \widetilde{r}_l := \Sigma[\widetilde{k}_1^{(n_1)}] + \cdots + \Sigma[\widetilde{k}_{l}^{(n_{l})}] + \Sigma[{k}_{l+1}^{(m_{l+1})}] + \cdots + \Sigma[{k}_L^{(m_L)}]
\end{align*}
\end{theorem}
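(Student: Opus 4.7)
The plan is to expand each occurrence of $W[w]$ as the series
$W[w]=\sum_{M'+N'\ge 1}H_{M',N'}(w_{M',N'})$, so that the full product
becomes a sum of products of monomials of the form
$\int a^*(k^{(M_l')})\,w_{M_l',N_l'}(H_f,\cdot)\,a(\tilde k^{(N_l')})$
separated by the $F_l(H_f)$'s, and then to bring each such product into
Wick normal form by commuting creation operators to the left and
annihilation operators to the right. In every factor $l$, I split the
$M_l'$ creation variables and the $N_l'$ annihilation variables into
\emph{external} ones (which contribute to the final
$a^*(k^{(M)})\cdots a(\tilde k^{(N)})$ of $H_{M,N}(\widetilde w_{M,N})$)
and \emph{internal} ones (which are to be contracted against operators
in other factors). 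Writing $m_l,n_l$ for the external counts and
$p_l,q_l$ for the internal ones, so that $M_l'=m_l+p_l$ and
$N_l'=n_l+q_l$, the total symmetry of $w_{M_l',N_l'}$ in its creation
and annihilation variables makes all
$\binom{m_l+p_l}{p_l}\binom{n_l+q_l}{q_l}$ choices of partition
equivalent, which produces the binomial factors in \eqref{eq:complicated}.

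Next, I pull the external $a^*$'s all the way to the left and the
external $a$'s all the way to the right. Each commutation past an
$H_f$-dependent factor (one of the $F_l$'s or one of the kernels $w$)
shifts its $H_f$-argument by $+|k|$ via the pull-through formula of
Appendix~A. Collecting all these shifts, the kernel $w_{m_l+p_l,n_l+q_l}$
at position $l$ ends up with first argument $r+r_l$ and each $F_l$ ends
up with argument $r+\widetilde r_l$, where $r$ is the eigenvalue of the
remaining (innermost) $H_f$ on which the final kernel
$\widetilde w_{M,N}$ depends through \eqref{eq:defhmnrig}. The internal
creations and annihilations that remain must then be contracted amongst
themselves using the CCR; summing over all such contraction patterns
reproduces exactly the $L$-fold vacuum expectation value in
\eqref{eq:complicated}, because
$\langle\Omega,\prod_{l}W^{m_l,n_l}_{p_l,q_l}[w](\cdot)F_l(H_f+\cdot)\Omega\rangle$
packages precisely the Wick contractions of the internal operators,
weighted by the surviving $H_f$-shifts.

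Finally, I read off $\widetilde w_{M,N}$ from the form
\eqref{eq:defintegralkernel} and symmetrize it using
\eqref{eq:symmetrization} (which does not change the operator
represented, by Theorem~\ref{thm:injective} together with the observation
$H_{M,N}(w)=H_{M,N}(w^{({\rm sym})})$). Reorganizing the sum over
$(M_l',N_l')$ into one over $(m_l,n_l,p_l,q_l)$ with
$m_l+p_l+n_l+q_l\ge 1$ yields \eqref{eq:complicated}. The main obstacle
is the combinatorial bookkeeping: verifying that the pull-through shifts
accumulated in each $w$ and each $F_l$ match $r_l$ and $\widetilde r_l$
on the nose, and that the binomial weights and the contraction vacuum
expectation value emerge correctly when one sums over all partitions
into external/internal variables and over all CCR contractions of the
internal ones. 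Convergence of the infinite sums over $(p_l,q_l)$
follows from the operator-norm bound \eqref{eq:opestimatgeq12}.
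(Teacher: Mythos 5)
Your proposal is correct and takes essentially the same route as the paper: expand $W[w]$ into monomials, partition creation/annihilation variables into external and internal groups (symmetry of the kernels producing the binomial weights), normal-order by pulling externals out via the pull-through formula (accumulating the $r_l$, $\widetilde r_l$ shifts), and recognize the surviving Wick contractions of the internals as the $L$-fold vacuum expectation value. The only presentational difference is that the paper packages the normal-ordering-plus-pull-through step into the standalone Lemma~\ref{lem:wick2} and then invokes it, whereas you unfold the same manipulations inline.
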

We use the standard convention that $\prod_{j=1}^n a_j = a_1 a_2 \cdots a_n$.

\begin{remark} \label{rem:rem1}  { If $F_0,F_L \in C^\infty[0,\infty)$ have support contained in $[0,1]$, then the summands occurring in \eqref{eq:complicated} satisfy property
(i) of the definition of $\WW_{m,n}^\#$ (see Definition \ref{def:wgartenhaag}). Because of property (i) of the definition of $\WW_{m,n}^\#$,
only the values of $F_1,...,F_{L-1}$ on $[0,1]$ matter in \eqref{eq:complicated}.
The supremum norm can be estimated using
\begin{equation} \label{eq:babyestimate1}
| \langle \Omega , A_1 A_2 \cdots  A_n \Omega \rangle | \leq \| A_1 \|_{\rm op} \| A_2 \|_{\rm op}  \cdots \| A_n \|_{\rm op}
\end{equation}
and Lemma \eqref{lem:wwestimate}. Now suppose $F_1,...,F_{L-1} \in C^1[0,1]$. Then by the defining property of $\mathcal{W}_{m,n}^\#$ we can
calculate the derivative  with respect to $r$ of each summand using the Leibniz rule, where  the interchange of integration and differentiation is justified by
(iii) of  the definition of $\WW_{m,n}^\#$.  Using again \eqref{eq:babyestimate1} and Lemma \ref{lem:wwestimate} it can be shown that each summand of \eqref{eq:complicated}
is in $C^{1}[0,1]$ a.e.. }
\end{remark}

It can be shown that the involved sums converge absolutely in the $\| \cdot \|_\xi^\#$ norm. But for the moment we are only interested in the
combinatorics.
Using  Theorem \ref{thm:wicktheorem}  to write expression \eqref{eq:feshexpansion} in terms of an operator involving integral kernels, it turns
out to useful to  introduce the multi-indices $\umm =(m_1,...,m_L) \in \N_0^L$,  for $L \in \N$. We set
$|\umm|= m_1 + \cdots + m_L$, and  $\underline{0} := (0,0,...,0) \in \N_0^L$.
As a final step we have to scale the operator or equivalently the integral kernels. The integral
kernels scale as follows.
For $(m,n) \in \N_0^2$
$$
s_\rho(w)_{m,n}(r,K^{(m,n)}) := \rho^{m+n - 1 } w_{m,n}(\rho r , \rho K^{(m,n)})  ,
$$
since then
$$
 P_{\rm red} S_\rho ( H(w) ) P_{\rm red}  \upharpoonright {\HH_{\rm red}}               = H (s_\rho ( w )) .
 $$
Following the outlined procedure above, we arrive at the renormalized integral kernels
\begin{equation} \label{nenorm:eq1}
\mathcal{R}_\rho^\#(w) := \widehat{w}^{({\rm sym})} \; ,
\end{equation}
where the  kernels  $\widehat{w}$ are given as follows.
For $M + N \geq 1$,
\begin{align} \label{eq:renormalizedkernels01}
\widehat{w}_{M,N}( r , K^{(M,N)} ) & :=
\sum_{L=1}^\infty (-1)^{L-1} \rho^{M+N - 1}  \sum_{ \substack{ (\umm,\upp,\unn,\uqq) \in \N_0^{4L}:  \\    |\umm|=M, |\unn|=N , \\  m_l+p_l+n_l+q_l \geq 1 }}    \\
& \ \     \prod_{l=1}^L \left\{ \binom{ m_l + p_l }{ p_l} \binom{ n_l + q_l }{ q_l}    \right\} v_{\umm,\upp,\unn,\uqq}[w](r, K^{(M,N)}) , \nonumber
\end{align}
and
\begin{align} \label{eq:renormalizedkernels00}
\widehat{w}_{0,0}( r ) &:= \rho^{-1} w_{0,0}(\rho r)  + \rho^{-1} \sum_{L=2}^\infty (-1)^{L-1}
\sum_{ \substack{ (\upp,\uqq) \in \N_0^{2L}: \\ p_l + q_l \geq 1 } }
v_{\underline{0},\upp, \underline{0},\uqq}[w](r) \; .
\end{align}
Moreover, we have introduced  the expressions
\begin{eqnarray}
\lefteqn{ v_{\umm,\upp,\unn,\uqq}[w]( r, K^{(|\umm|,|\unn|)}) :=  } \label{eq:defofv} \\
&&
\left\langle \Omega  ,  F_0[w] (H_f + \rho (r + \widetilde{r}_0) ) \prod_{l=1}^L \left\{  {W}_{p_l,q_l}^{m_l,n_l}[w](\rho(r+r_l), \rho K_l^{(m_l,n_l)} ) F_l[w]( H_f + \rho( r + \widetilde{r}_l ) ) \right\} \Omega
\right\rangle   , \nonumber
\end{eqnarray}
where $F_0[w](r) := \chi_\rho(r  )$ and  $F_L[w](r) := \chi_\rho(r  )$, and for $l = 1, ... , L - 1$
\begin{eqnarray*}
F_l[w](r)  := F[w](r) := \frac{ \overline{ \chi}_\rho^2(r )}{ w_{0,0}(r) } \; .
\end{eqnarray*}
Above we have used notation introduced in Theorem  \ref{thm:wicktheorem}.
From the previous discussion in this section,  Theorem \ref{renorm:thm2a}, below, follows
 apart from the property that the renormalized kernel is indeed an element of
the Banach space $\mathcal{W}^\#_\xi$ and satisfies a uniform bound.
\begin{theorem} \label{renorm:thm2a} Let $0<\rho \leq 1/2$ and $0< \xi \leq 1/2$ and  assume $w \in \mathcal{B}^\#(\rho/8,\rho/2,\rho/8)$.
Then $\mathcal{R}_\rho^\#(w) \in \mathcal{W}^\#_\xi$ and
\begin{equation*} \label{renorm:thm2:eq1}
 \mathcal{R}_\rho^\# (H(w))   = H(\mathcal{R}_\rho^\#(w)) \; .
\end{equation*}
Moreover, $\sup_{w \in \mathcal{B}^\#(\rho/8,\rho/2,\rho/8)} \| \mathcal{R}_\rho^\#(w) \|^\#_\xi < \infty$.
\end{theorem}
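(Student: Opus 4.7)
The plan is to follow the informal derivation preceding the theorem statement and then promote it to an estimate in $\WW^\#_\xi$. Lemma \ref{renorm:thm1} already guarantees that $(H(w),H_{0,0}(w))$ is a Feshbach pair for $\chi_\rho$ and that the Neumann expansion \eqref{eq:feshexpansion} converges in operator norm. I would apply Theorem \ref{thm:wicktheorem} to the $L$--th summand with $F_0 = F_L = \chi_\rho$ and $F_l = \chib_\rho^2/w_{0,0}$ for $1 \le l \le L-1$, and then the scaling $s_\rho$; by construction this produces precisely the kernels \eqref{eq:renormalizedkernels01}--\eqref{eq:renormalizedkernels00}. Together with the injectivity of Theorem \ref{thm:injective}, the identity $\mathcal{R}^\#_\rho(H(w)) = H(\mathcal{R}^\#_\rho(w))$ then follows once I have shown that the kernel series converges in the $\|\cdot\|^\#_\xi$ topology.

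Before estimating the norm, I would verify that each summand indeed lies in $\WW_{M,N}^\#$. Symmetry (Definition \ref{def:wgartenhaag}(ii)) is automatic from \eqref{eq:symmetrization}. The support property (i) holds because the outermost factors $F_0[w]=\chi_\rho(H_f+\rho(r+\widetilde r_0))$ and $F_L[w]=\chi_\rho(H_f+\rho(r+\widetilde r_L))$ in \eqref{eq:defofv}, together with $\widetilde r_0 = \Sigma[k^{(M)}]$ and $\widetilde r_L = \Sigma[\widetilde k^{(N)}]$, force $r + \Sigma[k^{(M)}] \le 1$ and $r + \Sigma[\widetilde k^{(N)}] \le 1$, matching the definition of $Q_{M,N}$. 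The $C^1$ regularity (iii) and the interchange of derivative and integral follow from Remark \ref{rem:rem1}; the denominators $w_{0,0}$ inside each $F_l[w]$ stay bounded away from zero on $\supp \chib_\rho$ by the pointwise lower bound \eqref{eq:basicwest}, so $F_l[w]$ and $\partial_r F_l[w]$ are $L^\infty$--bounded with constants depending only on $\rho$.

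The technical heart of the proof, and the main obstacle, is the uniform bound on $\|\mathcal{R}^\#_\rho(w)\|^\#_\xi$. Using \eqref{eq:babyestimate1} to pull the vacuum expectation through the product in \eqref{eq:defofv}, Lemma \ref{lem:wwestimate} to bound each $W^{m_l,n_l}_{p_l,q_l}[w]$ by $\|w_{m_l+p_l,\,n_l+q_l}\|^\#/\sqrt{p_l!\,q_l!}$, and the bound $\|F_l[w]\|_\infty \le 16/(3\rho)$ from \eqref{eq:w00estimate} (with parallel derivative bounds coming from $|\partial_r w_{0,0}-1|\le \rho/8$), I would derive
$$
\bigl\|v_{\umm,\upp,\unn,\uqq}[w]\bigr\|^\#
\;\le\; C_\rho\,(2L+1)\Bigl(\tfrac{16}{3\rho}\Bigr)^{L-1}
\prod_{l=1}^L \frac{\|w_{m_l+p_l,\,n_l+q_l}\|^\#}{\sqrt{p_l!\,q_l!}},
$$
the factor $2L+1$ accounting for the Leibniz rule applied to the product of $2L+1$ factors. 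Inserting this into \eqref{eq:renormalizedkernels01}, multiplying by $\rho^{M+N-1}\xi^{-(M+N)}$, exchanging the orders of summation, and factorising the sum over $(\umm,\upp,\unn,\uqq)$ into a product over $l$, the per-factor sum absorbs the binomial coefficients and factorials into a bound proportional to $\rho\,(\rho/\xi)^{m_l+n_l}\,\|w_{\ge 1}\|^\#_\xi$. Since $\|w_{\ge 1}\|^\#_\xi\le \rho/8$, this reproduces the contraction structure of \eqref{renorm:thm1:eq31}, so the geometric sum in $L$ converges uniformly for $w\in\mathcal{B}^\#(\rho/8,\rho/2,\rho/8)$; the scalar piece $\widehat w_{0,0}$ in \eqref{eq:renormalizedkernels00} is handled identically with $L\ge 2$. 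The delicate point is the combinatorial reassembly that converts the binomial coefficients and the factorials in \eqref{eq:renormalizedkernels01} into a factorised product matching the Neumann-series contraction of Lemma \ref{renorm:thm1}; once this is carried out, both membership in $\WW^\#_\xi$ and the uniform bound drop out simultaneously.
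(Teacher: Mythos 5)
Your proposal follows the paper's own proof essentially step for step: Neumann expansion of the Feshbach map, Wick's theorem (Theorem \ref{thm:wicktheorem}) to produce the kernel formulae \eqref{eq:renormalizedkernels01}--\eqref{eq:renormalizedkernels00}, injectivity from Theorem \ref{thm:injective}, the per-summand bound of Lemma \ref{codim:thm3first} (your estimate $C_\rho(2L+1)(16/(3\rho))^{L-1}\prod_l\|w_{m_l+p_l,n_l+q_l}\|^\#/\sqrt{p_l!q_l!}$ is precisely the paper's $C_L t^{1-L}$ bound with $t=3\rho/16$ and $C_L = 1 + 2L\|\partial_r\chi_1\|_\infty + 8(L-1)$), and the binomial/geometric combinatorics that closes the sum using $\|w_{\geq 1}\|^\#_\xi/t < 1$. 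Your verification of the support condition via $F_0, F_L$ and $\widetilde r_0 = \Sigma[k^{(M)}]$, $\widetilde r_L = \Sigma[\widetilde k^{(N)}]$ is a nice addition that the paper leaves implicit in Remark \ref{rem:rem1}; the only soft spot is the phrase about the per-factor sum being "proportional to $\rho(\rho/\xi)^{m_l+n_l}\|w_{\geq 1}\|^\#_\xi$," which is imprecise -- what actually happens is that the weight $\rho^{M+N-1}\xi^{-(M+N)}$ is split as $\rho^{-1}(2\rho)^{M+N}(2\xi)^{-(M+N)}$ and distributed over the $L$ factors, after which the binomial identity $\sum_{m+p=l}\binom{m+p}{p}\xi^p(1/2)^m=(\xi+1/2)^l\leq 1$ collapses the $(m_l,p_l)$ sum and leaves $\|w_{\geq 1}\|^\#_\xi$ per factor -- but this is the same calculation, so no gap.
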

The remaining part of this section  concerns the proof of Theorem \ref{renorm:thm2a}. To prove it we need an
estimate on the kernels  \eqref{eq:defofv}. Note that in
view of Remark  \ref{rem:rem1} the kernels \eqref{eq:defofv} as well as their derivatives are well defined
and can be shown to be bounded.
\begin{lemma}
\label{codim:thm3first} Let  $0 < \rho \leq  1/2$ and   $w \in \mathcal{B}^\#(\rho/8,\rho/2,\rho/8)$.
Then for $(\umm,\upp,\unn,\uqq) \in \N_0^{4L}$ we have
\begin{equation} \label{eq:vestimate1}
 ||  v_{\umm,\upp,\unn,\uqq}[w] ||^\#   \leq   C_L  \left( \frac{1}{t}\right)^{L-1}  \prod_{l=1}^{L}
    \frac{ \|  w_{m_l + p_l, n_l + q_l} \|^\# }  { \sqrt{p_l ! q_l !}}  ,
\end{equation}
where
\begin{align}
t := 3 \rho /16  \quad , \quad
C_L  :=   1 + 2 L  \| \partial_r \chi_1 \|_\infty  + (L-1)8   .  \label{eq:ctheta}
\end{align}
\end{lemma}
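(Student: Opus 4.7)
The approach is to bound $\|v\|^\# = \|v\|_{\underline\infty} + \|\partial_r v\|_{\underline\infty}$ separately by viewing the expression \eqref{eq:defofv} pointwise in $(r,K)$ as the vacuum expectation of an alternating product of $2L+1$ bounded operators $F_0 \cdot W_1 F_1 \cdots W_L F_L$, where $F_0 = F_L = \chi_\rho(\,\cdot\,)$, the interior $F_l = \overline{\chi}_\rho^2/w_{0,0}$, and $W_l = W^{m_l,n_l}_{p_l,q_l}[w](\rho(r+r_l),\rho K_l^{(m_l,n_l)})$. The elementary inequality $|\langle\Omega,A_1\cdots A_n\Omega\rangle|\le\prod_i\|A_i\|_{\rm op}$ then reduces everything to operator-norm estimates of individual factors and of their $r$-derivatives.

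For $\|v\|_{\underline\infty}$, Lemma \ref{lem:wwestimate} gives $\|W_l\|_{\rm op}\le\|w_{m_l+p_l,n_l+q_l}\|_\infty/\sqrt{p_l!q_l!}$; the end factors $\chi_\rho$ have operator norm at most $1$; and the interior $F_l$ has norm at most $1/t$, because on $\supp\overline{\chi}_\rho\subset[3\rho/4,\infty)$ the estimate \eqref{eq:basicwest} inside the proof of Lemma \ref{renorm:thm1} gives $|w_{0,0}|\ge 3\rho/16=t$. Multiplying all $2L+1$ factors yields $\|v\|_{\underline\infty}\le(1/t)^{L-1}\prod_l\|w_{m_l+p_l,n_l+q_l}\|_\infty/\sqrt{p_l!q_l!}$, which is the leading ``$1$'' contribution to $C_L$ after estimating $\|\cdot\|_\infty\le\|\cdot\|^\#$.

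For $\|\partial_r v\|_{\underline\infty}$ I would apply the Leibniz rule, differentiating each of the $2L+1$ factors in turn; the interchange of $\partial_r$ with the $K$-integrals hidden inside $W_l$ is legitimate by property (iii) of Definition \ref{def:wgartenhaag} together with Remark \ref{rem:rem1}. Differentiating $W_l$ yields $\rho\|\partial_r w_{m_l+p_l,n_l+q_l}\|_\infty/\sqrt{p_l!q_l!}$, which promotes one $\|\cdot\|_\infty$ to $\|\cdot\|^\#$ at the cost of a harmless $\rho\le 1/2$; differentiating an end factor $\chi_\rho(H_f+\rho r)$ gives $(\partial_r\chi_1)((H_f+\rho r)/\rho)$, of norm $\|\partial_r\chi_1\|_\infty$ (the chain-rule $\rho$ cancels the $\rho^{-1}$ from $\chi_\rho'$), contributing $2\|\partial_r\chi_1\|_\infty$ summed over both ends. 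Differentiating an interior $F_l=\overline{\chi}_\rho^2/w_{0,0}$ via the quotient rule produces two contributions: the piece $\rho\,\partial_r(\overline{\chi}_\rho^2)/w_{0,0}$ has norm $\le 2\|\partial_r\chi_1\|_\infty/t$ (using $\overline{\chi}_1^2=1-\chi_1^2$), while the piece $\overline{\chi}_\rho^2\,\rho\,\partial_r w_{0,0}/w_{0,0}^2$ has norm $\le\rho(1+\rho/8)/t^2=O(1/t)$, combining $|\partial_r w_{0,0}|\le 1+\rho/8$ from $w\in\mathcal{B}^\#(\rho/8,\rho/2,\rho/8)$ with the identity $\rho=(16/3)t$. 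Each interior differentiation preserves the overall $(1/t)^{L-1}$ scaling, and summing over the $L-1$ interior factors produces the $2L\|\partial_r\chi_1\|_\infty$ and $8(L-1)$ contributions to $C_L$, after using $\|\partial_r\chi_1\|_\infty\ge 4$ (forced by the MVT applied to $\chi_1\equiv 1$ on $[0,3/4]$ and $\chi_1\equiv 0$ outside $[0,1]$) to absorb the residual $L\rho$ coming from the $W_l$-derivative terms.

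The main obstacle is the quotient-rule estimate for the interior $\partial_r F_l$: the naive worry is that the factor $1/w_{0,0}^2$ produces an extra $1/t$ beyond what the target $(1/t)^{L-1}$ factor allows. The saving observation is that the chain-rule $\rho$ from differentiating $F_l(H_f+\rho(\,\cdot\,))$ in $r$ combines with $t=3\rho/16$ so that $\rho/t^2=O(1/t)$; once this identity is exploited, the remaining combinatorics are routine bookkeeping of Leibniz contributions against the target constant $C_L$.
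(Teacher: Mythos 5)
Your overall strategy — view $v_{\umm,\upp,\unn,\uqq}[w]$ pointwise as a vacuum expectation of $2L+1$ alternating factors, bound it by $|\langle\Omega,A_1\cdots A_n\Omega\rangle|\le\prod\|A_i\|_{\rm op}$, estimate each factor and each $r$-derivative via Leibniz, and exploit $\rho/t^2=O(1/t)$ — is exactly the paper's proof. The gap is in the final bookkeeping of the $W_l$-derivative Leibniz terms, and your proposed fix does not work.

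When you differentiate $W_{l'}$, you obtain a term of the form $\rho\,\|\partial_r w_{m_{l'}+p_{l'},n_{l'}+q_{l'}}\|_\infty/\sqrt{p_{l'}!q_{l'}!}$ in the $l'$-th slot, and you propose to ``promote $\|\cdot\|_\infty$ to $\|\cdot\|^\#$ at the cost of a harmless $\rho$,'' summing over $l'$ to get a residual $L\rho$ added to the constant. Then you claim $\|\partial_r\chi_1\|_\infty\ge 4$ absorbs this $L\rho$. This is a red herring: $\|\partial_r\chi_1\|_\infty$ appears with exactly the same multiplicity $2L$ in your upper bound as in $C_L$, so its size creates no slack relative to the target. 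The only slack you actually have is $(L-1)(8-\tfrac{17}{3})=\tfrac{7(L-1)}{3}$ from using $1+\rho/8\le 17/16$ instead of the paper's $3/2$, and this fails to dominate $L\rho$ precisely when $L=1$ (one obtains $1+2\|\partial_r\chi_1\|_\infty+\rho$, which exceeds $C_1=1+2\|\partial_r\chi_1\|_\infty$ for all $\rho>0$).

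The correct closing move does not add $L\rho$ to the constant at all. One keeps the Leibniz sum exactly as it is,
\begin{equation*}
\rho\sum_{l'=1}^L\frac{\|\partial_r w_{m_{l'}+p_{l'},n_{l'}+q_{l'}}\|_\infty}{\sqrt{p_{l'}!q_{l'}!}}\prod_{l\ne l'}\frac{\|w_{m_l+p_l,n_l+q_l}\|_\infty}{\sqrt{p_l!q_l!}},
\end{equation*}
and observes that this sum, together with the ``no $W$-derivative'' term $\prod_l\|w_l\|_\infty/\sqrt{p_l!q_l!}$, is a sub-sum of the expansion of $\prod_l\|w_l\|^\#/\sqrt{p_l!q_l!}=\prod_l(\|w_l\|_\infty+\|\partial_r w_l\|_\infty)/\sqrt{p_l!q_l!}$ into $2^L$ nonnegative terms. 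Since $\rho\le 1\le C_L$, multiplying that expansion by $C_L$ simultaneously dominates the ``$\infty$-only'' contribution (with coefficient $1+2L\|\partial_r\chi_1\|_\infty+8(L-1)=C_L$) and every $W_{l'}$-derivative term (with coefficient $\rho$), with no extra $L\rho$ appearing. That is how the paper closes the estimate, and it works uniformly in $L\ge 1$ without any appeal to a lower bound on $\|\partial_r\chi_1\|_\infty$.
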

\begin{proof}
To arrive at  \eqref{eq:vestimate1} we start with the following estimates.
For $l=0,L$ we have
\begin{eqnarray}
\| F_l[w]( H_f + \rho( r + \widetilde{r}_l ) ) \|_{\rm op} &\leq& 1 \\
\left\| \partial_ r \left( F_l[w]( H_f + \rho( r + \widetilde{r}_l ) ) \right) \right\|_{\rm op} &\leq&  \| \partial_r \chi_1 \|_\infty \ .
\end{eqnarray}
Using  \eqref{eq:basicwest} we find
\begin{equation} \label{est:sss1}
 \| F[w] \|_{\infty} \leq \left[ \inf_{r \in [\rho \frac{3}{4},1] } | w_{0,0}(r) | \right]^{-1} \leq  \frac{1}{t} .
\end{equation}
We also need an estimate on the derivative of $F[w]$,
\begin{equation}\label{eq:derofFFF}
\partial_r F[w](r) =
 \rho^{-1} \frac{ 2 \overline{ \chi}_\rho (r  )  [\partial_r \overline{ \chi}_1] (r/\rho  )}{
  w_{0,0}( r )}
   - \frac{ \overline{ \chi}_\rho^2(r )  [\partial_r w_{0,0}]( r )}{
 (w_{0,0}( r ))^2 } .
\end{equation}
Using  \eqref{est:sss1} and $ \overline{ \chi}_1 \partial  \overline{ \chi}_1  =  - { \chi}_1 \partial { \chi}_1 $
we  estimate \eqref{eq:derofFFF} and  obtain
\begin{equation} \label{est:sss2}
\| \partial_r F[w]  \|_\infty  \leq \rho^{-1} \frac{2 \| \partial_r \chi_1 \|_{\infty}}{t} + \frac{  3/2}{t^2} ,
\end{equation}
noting that $\|  \partial_r w_{0,0} \|_\infty  \leq 1 +  \rho/8 \leq 3/2$.
Next we use \eqref{eq:babyestimate1} and   Lemma \ref{lem:wwestimate}   to obtain the following  estimate
$$
| {v}_{\umm,\upp,\unn,\uqq}[ \underline{w}  ] (r , K^{(|\umm|,|\unn|)}) | \leq \| F[w] \|_\infty^{L-1} \prod_{l=1}^L \frac{ \|  w_{m_l + p_l, n_l + q_l} \|_\infty }  { \sqrt{p_l ! q_l !}} .
$$
Using Leibniz' rule a similar estimate yields,
\begin{eqnarray*} \label{eq:contmultilinear121}
\lefteqn{ | \partial_r {v}_{\umm,\upp,\unn,\uqq}[ \underline{w}  ] (r , K^{(|\umm|,|\unn|)}) |  } \\
 &\leq& \left(  2  \| \partial_r \chi_1 \|_\infty \| F[w] \|_\infty^{L-1}
+ (L-1)  \| F[w] \|_\infty^{L-2}  \| \rho \partial_r F[w] \|_\infty    \right)
  \prod_{l=1}^L \frac{ \|  w_{m_l + p_l, n_l + q_l} \|_\infty }  { \sqrt{p_l ! q_l !}}
  \\
  &&
   + \| F[w] \|_\infty^{L-1} \sum_{l'=1}^L  \frac{ \|  \rho \partial_r w_{m_{l'} + p_{l'}, n_{l'} + q_{l'}} \|_\infty }  { \sqrt{p_{l'} ! q_{l'} !}}
    \prod_{\substack{l=1\\l\neq l'}}^L
  \frac{ \|  w_{m_l + p_l, n_l + q_l} \|_\infty  }  { \sqrt{p_l ! q_l !}} .
\end{eqnarray*}
Collecting estimates yields that claim.
\end{proof}

\vspace{0.5cm}
\noindent
{\it Proof of Theorem \ref{renorm:thm2a}.}
Assume $w \in \mathcal{B}^\#(\rho/8,\rho/2,\rho/8)$. In
view of the discussion in this section it remains to show that  $\mathcal{R}_\rho^\#(w) \in \mathcal{W}^\#_\xi$.
To this end note that
by the definition of $\widehat{w}_{M,N}$, \eqref{eq:renormalizedkernels01},  we find for $M + N \geq 1$,
\begin{eqnarray*}
\lefteqn{ \| \widehat{w}_{M,N}  \|^\# } \\
&\leq & \sum_{L=1}^\infty \sum_{ \substack{ (\umm,\upp,\unn,\uqq) \in \N_0^{4L}:  \\  |\umm| = M  , | \unn| = N  , m_l+p_l+n_l+q_l \geq 1 }}
 \rho^{|\umm|+|\unn|-1}
    \prod_{l=1}^L       \left\{ \binom{ m_l + p_l }{ p_l } \binom{ n_l + q_l }{ q_l }                  \right\}  \| v_{\umm,\upp,\unn,\uqq}[w] \|^\#   . \nonumber \\
 \end{eqnarray*}
  Inserting this below and using the estimate of   Lemma   \ref{codim:thm3first}, we find using  $\frac{1}{{\sqrt{p_l ! q_l !}}} \leq 1$
\begin{eqnarray}
\lefteqn{ \| (\widehat{w}_{M,N})_{M+N  \geq 1} \|_\xi^\# } \nonumber \\
&& = \sum_{M+N \geq 1} \xi^{-(M+N)} \| \widehat{w}_{M,N} \|^\#  \nonumber \\
&&\leq   \sum_{L=1}^\infty      C_L t^{1-L}  \rho^{-1}   \sum_{ \substack{ (\umm,\upp,\unn,\uqq)  \in \N_0^{4L}: \\
   |\umm| + | \unn| \geq 1 , m_l+p_l+n_l+q_l \geq 1} }
 \left(  2 \rho\right)^{|\umm|+|\unn|}   (2 \xi)^{-(|\umm|+|\unn|)}  \nonumber \\
 && \quad \times
      \prod_{l=1}^L    \left\{  \binom{ m_l + p_l }{p_l } \binom{ n_l + q_l}{  q_l }            \frac{\| w_{m_l + p_l, n_l + q_l}  \|^\# }{\sqrt{p_l ! q_l !}} \right\}  \nonumber \\
   &&\leq
    \sum_{L=1}^\infty   C_L t^{-L}
     \sum_{  \substack{  (\umm,\upp,\unn,\uqq) \in \N_0^{4L}:  \\
   m_l+p_l+n_l+q_l \geq 1 }} \nonumber \\
   && \times
      \prod_{l=1}^L \left\{  \binom{ m_l + p_l }{p_l } \binom{ n_l + q_l }{  q_l }      \xi^{p_l + q_l} (1/2)^{m_l+n_l}
                                        \xi^{-(m_l + p_l +n_l + q_l)}   \| w_{m_l + p_l, n_l + q_l} \|^\#
  \right\} \nonumber \\
&& \leq
    \sum_{L=1}^\infty   C_L t^{-L}
    \left[   \sum_{  m+p+n+q \geq 1 }   \binom{ m + p }{ p } \binom{ n + q }{  q }      \xi^{p + q} (1/2)^{m+n}
                                        \xi^{-(m + p +n + q)}   \| w_{m + p, n + q} \|^\# \right]^L \nonumber \\
 &&\leq
    \sum_{L=1}^\infty   C_L t^{-L}
    \left[   \sum_{  l+k \geq 1 }
   \xi^{-(l + k )}   \| w_{l , k } \|^\# \right]^L \nonumber \\
  && \leq
        \sum_{L=1}^\infty  C_L t^{-L}     \left( \| w_{\geq 1} \|^\#_\xi \right)^L ,
   \nonumber
\end{eqnarray}
where in the second last inequality we used the binomial formula
\begin{equation} \label{eq:binomial}
\sum_{   m + p  = l  }   \binom{ m + p }{ p } \xi^p (1/2)^m = ( \xi + 1/2)^l \leq 1
\end{equation}
The term in the last line is bounded since $  \| w_{\geq 1} \|^\#_\xi / t < 1$.
A similar but simpler estimate yields
\begin{eqnarray}
 \|  \widehat{w}_{0,0}  \|^\#
&& \leq \rho^{-1} \| w_{0,0}(\rho \ \cdot ) \|^\#  +  \rho^{-1} \sum_{L=2}^{\infty}
\sum_{\substack{ (\upp,\uqq) \in \N_0^{2L}: \\ p_l + q_l \geq  1 } } \|  v_{\underline{0},\upp,
\uzz,\uqq}[w]  \|^\#  \nonumber \\
&& \leq    \rho^{-1} \| w_{0,0}(\rho \ \cdot ) \|^\#        +  \rho^{-1} \sum_{L=2}^{\infty} C_L t^{1-L}
\sum_{\substack{ (\upp,\uqq) \in \N_0^{2L}: \\ p_l + q_l \geq  1 }     }
\prod_{l=1}^L  \frac{ \| w_{p_l,q_l} \|^\#  }{\sqrt{p_l! q_l!}}  \nonumber  \\
&& \leq   \rho^{-1} \| w_{0,0}(\rho \ \cdot ) \|^\#       +  \sum_{L=2}^{\infty}  C_L (\xi/t)^L
 \left[ \sum_{p+q\geq 1}  \xi^{-(p+q)}\| w_{p,q} \|^\#  \right]^L   \nonumber \\
 && \leq  \rho^{-1} \| w_{0,0}(\rho \ \cdot ) \|^\#   +     \sum_{L=2}^{\infty}  C_L  (\xi/t)^L  \left(\| w_{\geq 1} \|^\#_{\xi} \right)^{L} , \nonumber
\end{eqnarray}
where the last line is bounded since $    \| w_{\geq 1} \|^\#_{\xi} / t <  1$.

\qed

\section{Analyticity and Continuity}
\label{sec:ren:ana}

In this section we show that the renormalization transformation acting on the integral kernels preserves analyticity and c-continuity.
We note that an alternate proof to show that the renormalization transformation preserves analyticity would be to show  that $\mathcal{R}_\rho^\#$
 is a Frechet differentiable map on
the space of integral kernels. Whenever we can treat a statement $A$ concerning analyticity and a statement $C$ concerning continuity in a similar way,
we will write ``$A$ ($C$)'' which stands for ``$A$ respectively $C$''.

\begin{theorem} \label{thm:analytbasic}  Let $0<\rho \leq 1/2$ and $0< \xi \leq 1/2$ ($0<\xi\leq 1/4$). Let $S$ be an open  subset of $\C^\nu$ with $\nu \in \N$ (a topological space). Suppose the map $w(\cdot) : S \to \WW_\xi^\#$ is analytic (c-continuous)
and $w(S) \subset \mathcal{B}^\#(\rho/8,\rho/2,\rho/8)$. Then
$$
\mathcal{R}^\#_\rho(w(\cdot)) : S \to \WW_\xi^\#
$$
is also analytic (c-continuous).
\end{theorem}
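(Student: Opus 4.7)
The plan is to use the explicit series representation \eqref{nenorm:eq1}--\eqref{eq:defofv} of $\mathcal{R}^\#_\rho(w)$ as a sum over $L$ and the multi-indices $(\umm,\upp,\unn,\uqq)$ of vacuum expectation values $v_{\umm,\upp,\unn,\uqq}[w]$, then combine termwise regularity with uniform convergence. The symmetrization step in \eqref{nenorm:eq1} is a bounded linear map and preserves both properties, so I can work with $\widehat{w}$ throughout.

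First (termwise analyticity), for fixed $L,\umm,\upp,\unn,\uqq$ I would show that $w\mapsto v_{\umm,\upp,\unn,\uqq}[w]$ is analytic from $\mathcal{B}^\#(\rho/8,\rho/2,\rho/8)\subset\WW^\#_\xi$ into $\WW^\#_{|\umm|,|\unn|}$. The map $w\mapsto W_{p_l,q_l}^{m_l,n_l}[w]$ is linear and bounded in the relevant operator-valued sense (Lemma \ref{lem:wwestimate}). The map $w\mapsto F[w]=\overline{\chi}_\rho^2/w_{0,0}$ is analytic on $\mathcal{B}^\#(\rho/8,\rho/2,\rho/8)$ because $|w_{0,0}|\geq 3\rho/16$ on $[\tfrac{3}{4}\rho,1]$ by \eqref{eq:basicwest}, so inversion is holomorphic there. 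Taking products and the scalar-valued vacuum expectation gives $v_{\umm,\upp,\unn,\uqq}[w]$ as an analytic $\WW^\#$-valued function, and composition with the analytic $s\mapsto w(s)$ yields analyticity in $s$.

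Second (uniform convergence), the estimates of Lemma \ref{codim:thm3first} together with the geometric series manipulation already carried out in the proof of Theorem \ref{renorm:thm2a} show that the partial sums
\[
\sum_{L\le L_0}(-1)^{L-1}\rho^{|\umm|+|\unn|-1}\!\!\sum_{(\umm,\upp,\unn,\uqq)\in\N_0^{4L}}\!\!\Bigl\{\prod_{l=1}^{L}\binom{m_l+p_l}{p_l}\binom{n_l+q_l}{q_l}\Bigr\}v_{\umm,\upp,\unn,\uqq}[w]
\]
converge in $\|\cdot\|^\#_\xi$ to $\widehat{w}$ with a tail that is bounded by $\sum_{L>L_0}C_L(\|w_{\ge 1}\|^\#_\xi/t)^L$, a bound uniform in $w\in\mathcal{B}^\#(\rho/8,\rho/2,\rho/8)$. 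This makes $\mathcal{R}^\#_\rho(w(\cdot))$ a locally uniform limit of analytic $\WW^\#_\xi$-valued maps and hence analytic, settling the analytic part of the theorem.

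For the c-continuity half, the same combinatorial machinery applies but with $\|\cdot\|_2$ in place of $\|\cdot\|^\#$; the factor two tighter restriction $\xi\le 1/4$ comes from trading $\|\cdot\|_\infty$-bounds for $\|\cdot\|_2$-bounds via \eqref{eq:infinity2ineq}, which costs a $(m!n!)^{-1/2}$ and therefore a power of two in the geometric series $\xi+1/2\le 1$ of \eqref{eq:binomial}. The main obstacle is to establish c-continuity of each single kernel $s\mapsto v_{\umm,\upp,\unn,\uqq}[w(s)]$ in $\|\cdot\|_2$ given only componentwise c-continuity of $w(\cdot)$. I would handle this by a telescoping decomposition
\[
v_{\umm,\upp,\unn,\uqq}[w(s)]-v_{\umm,\upp,\unn,\uqq}[w(s_0)]
\]
into a finite number of terms, each containing exactly one "small" factor of the form $W_{p_l,q_l}^{m_l,n_l}[w(s)-w(s_0)]$ or $F[w(s)]-F[w(s_0)]$ together with factors uniformly bounded by the $\mathcal{B}^\#$ hypothesis. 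Using a variant of Lemma \ref{lem:operatornormestimates2} and Corollary \ref{lem:multanihiest2} to factor the $\|\cdot\|_2$ integral over the external momenta $K^{(M,N)}$ variable by variable, each telescoping term is dominated by the $\|\cdot\|_2$-norm of the one small kernel times operator-norm bounds on the other factors; c-continuity of $w(\cdot)$ and continuity of $s\mapsto F[w(s)]$ (from $w_{0,0}$ being uniformly bounded below and $\|\cdot\|_2$-continuous) then make this vanish as $s\to s_0$. Combined with the uniform tail bound above, this gives the c-continuity of $\mathcal{R}^\#_\rho(w(\cdot))$.
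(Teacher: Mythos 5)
Your overall plan matches the paper's: (a) termwise regularity of the building blocks $v_{\umm,\upp,\unn,\uqq}[w]$, proven in the paper via Fr\'echet differentiability of $F[\cdot]$ and multilinearity of $\widetilde v$ (Lemma \ref{lem:frechetdiff1a}), combined with (b) a uniform convergence argument to pass to the limit. The termwise analyticity step and the c-continuity argument (telescoping plus $\|\cdot\|_2$ estimates) are essentially what the paper does in Lemmas \ref{lem:frechetdiff1aa} and \ref{lem:newcont}.

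Where you genuinely diverge is in how you establish uniform convergence for the analyticity half. You group the series by the index $L$ and observe that the $L$-tail is bounded by $\sum_{L>L_0}C_L(\|w_{\geq 1}\|_\xi^\#/t)^L \leq \sum_{L>L_0}C_L(2/3)^L$ uniformly over $\mathcal{B}^\#(\rho/8,\rho/2,\rho/8)$, then invoke ``locally uniform limit of analytic maps''. The paper instead makes no use of a two-stage decomposition: around each $s_0$ it constructs an $\epsilon$-ball in $\WW_\xi^\#$ with $\epsilon = (\rho/7 - \|w(s_0)_{\geq 1}\|_\xi^\#)/(16e^4)$ so that the full Weierstrass sum over \emph{all} indices at once converges there (via the bound $G\leq\rho/7$). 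The reason the paper works harder is that the crude uniform bound $\sup_s\|w(s)_{m,n}\|^\# \leq \xi^{m+n}\rho/8$ plugged into the full Weierstrass sum produces a factor $\sim (32 e^4/3)^L$ per $L$, which destroys summability over $L$; hence the explicit remark in the paper that uniform convergence ``on the whole set $S$'' cannot be shown directly and a covering by $\epsilon$-balls is used.

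Your two-stage argument avoids this divergence, and in fact can be made to work, but you leave a gap that needs to be stated: for the ``locally uniform limit of analytic maps'' step to apply, each partial sum $\sum_{L\leq L_0}\widehat w^{(L)}(s)$ must itself be analytic, i.e.\ each $L$-term $\widehat w^{(L)}(s)$ — which is still an infinite sum over $(\umm,\upp,\unn,\uqq)\in\N_0^{4L}$ and $(M,N)$ — must be shown to converge uniformly (Weierstrass) in $s$. You never say this. It does hold: with the crude bound $\sup_s\|w(s)_{m,n}\|^\#\leq\xi^{m+n}\rho/8$ and the factorial-improved combinatorial estimate \eqref{eq:combestimatedera}, the per-$L$ Weierstrass sum is $\lesssim C_L (32e^4/3)^L$, finite for each fixed $L$. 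Once this is added your argument is complete and is arguably more transparent than the paper's $\epsilon$-ball covering. Finally, a small inaccuracy: you attribute the tighter restriction $\xi\leq 1/4$ in the c-continuity half to the passage from $\|\cdot\|_\infty$ to $\|\cdot\|_2$ via \eqref{eq:infinity2ineq}. In fact it comes from the cruder combinatorial bound $\binom{n}{k}\leq 2^n$ used in \eqref{eq:uniformpcontestmn}–\eqref{eq:uniformpcontest00}, which the paper needs because the c-continuity proof establishes uniform convergence only componentwise (per $(M,N)$), so the $\xi^{-(M+N)}$ weight that made the analyticity-side binomial identity close up is unavailable.
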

Lemma  \ref{renorm:thm3},  Remark \ref{rem:gartenhaag}, Theorem \ref{renorm:thm2a},
and Theorem \ref{thm:analytbasic}    imply  the following theorem.
\begin{theorem} \label{thm:maingenerala}  Let $0<\rho \leq 1/2$ and $0< \xi \leq 1/2$.
For
$w\in \mathcal{B}(\rho/8,\rho/8,\rho/8)$ the  integral kernel
$\mathcal{R}_\rho w : D_{1/2} \to \mathcal{W}_\xi^\#$, defined by $(\mathcal{R}_\rho w)(z) := \mathcal{R}_\rho^\#(w(E_\rho[w]^{-1}(z))$ for  $z \in D_{1/2}$ is in $\mathcal{W}_\xi$ and
$$
(\mathcal{R}_\rho H(w))(z) = H((\mathcal{R}_\rho w)(z)) .
$$
If $w$ is symmetric then also  $\mathcal{R}_\rho w$  is symmetric.
\end{theorem}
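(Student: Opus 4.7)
The theorem is essentially a corollary that combines Lemma \ref{renorm:thm3}, Remark \ref{rem:gartenhaag}, Theorem \ref{renorm:thm2a}, and Theorem \ref{thm:analytbasic}; my plan is to unwind the definition of $\mathcal{R}_\rho$ and verify the three claims (membership in $\mathcal{W}_\xi$, the operator identity, and preservation of symmetry) in that order.

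First I would fix $w \in \mathcal{B}(\rho/8,\rho/8,\rho/8)$. The defining bound $\sup_{z\in D_{1/2}} | w_{0,0}(z,0)+z | \leq \rho/8$ puts us squarely in the hypothesis of Lemma \ref{renorm:thm3}, which hands over an analytic bijection $E_\rho[w] : U[w] \to D_{1/2}$ with analytic inverse $\psi := E_\rho[w]^{-1} : D_{1/2} \to U[w] \subset D_{5\rho/8} \subset D_{1/2}$. Composing, the map $z \mapsto w(\psi(z))$ is strongly analytic from $D_{1/2}$ into $\mathcal{W}_\xi^\#$, and Remark \ref{rem:gartenhaag} guarantees that its image lies in $\mathcal{B}^\#(\rho/8,\rho/2,\rho/8)$. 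Theorem \ref{thm:analytbasic} then yields analyticity of $z \mapsto \mathcal{R}_\rho^\#(w(\psi(z)))$ as a map $D_{1/2} \to \mathcal{W}_\xi^\#$, and the uniform bound at the end of Theorem \ref{renorm:thm2a} gives $\sup_{z\in D_{1/2}} \| \mathcal{R}_\rho^\#(w(\psi(z))) \|_\xi^\# < \infty$. Together these prove $\mathcal{R}_\rho w \in \mathcal{W}_\xi$. The operator identity then follows pointwise in $z$ by unfolding Definition \ref{def:defofrenorm} and quoting Theorem \ref{renorm:thm2a}:
\[
(\mathcal{R}_\rho H(w))(z) = \mathcal{R}_\rho^\#(H(w(\psi(z)))) = H(\mathcal{R}_\rho^\#(w(\psi(z)))) = H((\mathcal{R}_\rho w)(z)).
\]

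For the symmetry claim I would use the characterization \eqref{eq:symmetricsa}: $w$ symmetric is equivalent to $H(w(\bar z)) = H(w(z))^*$. In particular $w_{0,0}(\bar z, 0) = \overline{w_{0,0}(z,0)}$, so $E_\rho[w]$ commutes with complex conjugation, and since $D_{1/2}$ and $U[w]$ are both symmetric about the real axis, so does $\psi$; thus $\psi(\bar z) = \overline{\psi(z)}$ and consequently $H(w(\psi(\bar z))) = H(w(\psi(z)))^*$. The only remaining point is that $\mathcal{R}_\rho^\#$ itself intertwines the $*$-operation. This follows because the scaling $S_\rho$ is conjugation by the unitary $\Gamma_\rho$ and so preserves adjoints, while the Neumann expansion \eqref{eq:feshexpansion} for the Feshbach map is built from $\chi_\rho,\bar\chi_\rho,H_{0,0}(w)$ and $W[w]$ using only multiplication of operators, so reversing the order of factors and conjugating each one reproduces exactly the Feshbach expansion of the adjoint. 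Applied to $B = H(w(\psi(z)))$ this gives $\mathcal{R}_\rho^\#(B^*) = \mathcal{R}_\rho^\#(B)^*$, hence $(\mathcal{R}_\rho H(w))(\bar z) = (\mathcal{R}_\rho H(w))(z)^*$; combining with the operator identity of the previous paragraph and the injectivity of $H$ (Theorem \ref{thm:injective}) delivers the symmetry of $\mathcal{R}_\rho w$.

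I expect only the last step to contain any substance: everything concerning analyticity, boundedness, and the operator identity is a straightforward composition of results already proved. The compatibility of $\mathcal{R}_\rho^\#$ with the adjoint is essentially a bookkeeping verification on the Neumann expansion, but it is the one place where one must check that the kernel-level symmetry condition (which involves swapping the $a^*$- and $a$-arguments) correctly matches the operator-level Hermitian conjugation; this is done termwise in \eqref{eq:feshexpansion} and presents no genuine obstacle.
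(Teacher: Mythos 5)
Your proposal is correct and follows essentially the same route the paper takes: the paper itself only gestures at the proof, stating that the theorem "follows from Lemma \ref{renorm:thm3}, Remark \ref{rem:gartenhaag}, Theorem \ref{renorm:thm2a}, and Theorem \ref{thm:analytbasic}" and that the symmetry claim follows from \eqref{eq:symmetricsa} together with Definition \ref{def:defofrenorm}, which is exactly what you unwind (the compatibility of the Feshbach map and of $S_\rho$ with adjoints being the "bookkeeping" you mention).
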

The statement about the symmetry follows from \eqref{eq:symmetricsa} and the definition of the renormalization transformation, see Definition   \ref{def:defofrenorm}. The symmetry
property  could also  be ve\-ri\-fied using the explicit expressions \eqref{eq:renormalizedkernels01} and  \eqref{eq:renormalizedkernels00}.
We write F-differentiable for Frechet differentiable. Furthermore, Theorem \ref{thm:analytbasic} has the following theorem as
consequence.
\begin{theorem} \label{thm:analytbasic2}  Let $0<\rho \leq 1/2$ and $0< \xi \leq 1/2$ ($0<\xi\leq 1/4$). Let $S$ be an open  subset of $\C^\nu$ (a topological space). Suppose
\begin{eqnarray*}
w(\cdot, \cdot) : &&S \times D_{1/2} \to \WW_\xi^\# \\
&&(s,z) \mapsto w(s,z)
\end{eqnarray*}
is an analytic (a c-continuous) function such that
 $w(s)(\cdot) := w(s, \cdot)$ is in $\mathcal{B}(\rho/8,\rho/8,\rho/8)$. Then
$$
(s,z) \mapsto (\mathcal{R}_\rho(w(s)))(z)
$$
is also a $\WW_\xi^\#$-valued analytic (c-continuous) function.
\end{theorem}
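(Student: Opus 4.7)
\textbf{Proof plan for Theorem \ref{thm:analytbasic2}.}
By Theorem \ref{thm:maingenerala},
$$
(\mathcal{R}_\rho(w(s)))(z) \;=\; \mathcal{R}_\rho^\#\bigl(\widetilde{w}(s,z)\bigr), \qquad \widetilde{w}(s,z) := w\bigl(s,\Phi(s,z)\bigr),
$$
where $\Phi(s,z) := E_\rho[w(s)]^{-1}(z) \in U[w(s)] \subset D_{1/2}$. The strategy is to reduce everything to an application of Theorem \ref{thm:analytbasic} with the enlarged parameter space $\widetilde{S} := S \times D_{1/2}$. Concretely, I plan to: (i) show that $\Phi : S \times D_{1/2} \to D_{1/2}$ is jointly analytic (jointly continuous); (ii) deduce that $\widetilde{w} : S \times D_{1/2} \to \WW_\xi^\#$ is analytic (c-continuous); (iii) invoke Remark \ref{rem:gartenhaag} to place $\widetilde{w}$ inside the ball $\mathcal{B}^\#(\rho/8,\rho/2,\rho/8)$; and (iv) apply Theorem \ref{thm:analytbasic} to $\widetilde{w}$.

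For step (i), recall that $E_\rho[w(s)](\zeta) = -\rho^{-1}\,w(s,\zeta)_{0,0}(0)$. The linear functional $u \mapsto u_{0,0}(0)$ is bounded on $\WW_\xi^\#$ (it is dominated by $\|\cdot\|^\#$), so in the analytic case $(s,\zeta) \mapsto E_\rho[w(s)](\zeta)$ inherits joint analyticity from $w$. Lemma \ref{renorm:thm3} furnishes the uniform bound $|\partial_\zeta E_\rho[w(s)](\zeta) - 1| \leq 8/9$ on $U[w(s)]$, so the holomorphic implicit function theorem applied to $F(s,z,\zeta) := E_\rho[w(s)](\zeta) - z$ delivers $\Phi$ as a joint holomorphic solution. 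In the c-continuous case, observe that for $(m,n) = (0,0)$ the norm $\|\cdot\|_2$ on $\WW_{0,0}^\#$ collapses to $\sup_{r\in[0,1]} |\cdot|$ (no momentum integration), so c-continuity of $w$ implies pointwise joint continuity of $(s,\zeta) \mapsto w(s,\zeta)_{0,0}(0)$ and thus of $(s,\zeta) \mapsto E_\rho[w(s)](\zeta)$. Joint continuity of $\Phi$ follows from a standard subsequence argument: given $(s_n,z_n) \to (s_0,z_0)$ and $\zeta_n := \Phi(s_n,z_n)$, one has $\zeta_n \in \overline{D_{5\rho/8}}$ by Lemma \ref{renorm:thm3}, and any subsequential limit $\zeta^*$ satisfies $E_\rho[w(s_0)](\zeta^*) = z_0$ by joint continuity, hence $\zeta^* = \Phi(s_0,z_0)$ by the injectivity of $E_\rho[w(s_0)]$ on $U[w(s_0)]$.

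For step (ii), in the analytic case $\widetilde{w}$ is the composition of analytic maps and so is analytic. In the c-continuous case, if $(s_n,z_n)\to(s_0,z_0)$, then step (i) gives $(s_n,\Phi(s_n,z_n)) \to (s_0,\Phi(s_0,z_0))$, whence c-continuity of $w$ on $S\times D_{1/2}$ yields $\|\widetilde{w}(s_n,z_n)_{m,n} - \widetilde{w}(s_0,z_0)_{m,n}\|_2 \to 0$ for each $(m,n)$, which is exactly c-continuity of $\widetilde{w}$.

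For steps (iii) and (iv), Remark \ref{rem:gartenhaag} shows that $\widetilde{w}(s,z) \in \mathcal{B}^\#(\rho/8,\rho/2,\rho/8)$ for every $(s,z) \in \widetilde{S}$, since $\Phi(s,z) \in U[w(s)]$. Theorem \ref{thm:analytbasic} applied with parameter space $\widetilde{S}$ then yields analyticity (c-continuity) of $(s,z) \mapsto \mathcal{R}_\rho^\#(\widetilde{w}(s,z)) = (\mathcal{R}_\rho(w(s)))(z)$, as required. The main obstacle is the c-continuity half of step (i): one must ensure that the implicitly-defined inverse $\Phi$ varies continuously in a parameter topology on integral kernels that is strictly coarser than the sup norm. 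This is resolved by the observation that $E_\rho[w(s)](\zeta)$ depends only on the $(0,0)$ component of $w$, on which the $\|\cdot\|_2$ norm coincides with $\sup_{r \in [0,1]}|\cdot|$, so no information is lost in passing to the weaker topology.
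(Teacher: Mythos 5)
Your proposal is correct and follows essentially the same route as the paper's proof: both establish analyticity (continuity) of $(s,z)\mapsto E_\rho[w(s)](z)$, invert via Lemma \ref{renorm:thm3} and the implicit identity $E_\rho[w(s)](E_\rho[w(s)]^{-1}(z))-z=0$, invoke Remark \ref{rem:gartenhaag} to land in $\mathcal{B}^\#(\rho/8,\rho/2,\rho/8)$, and then apply Theorem \ref{thm:analytbasic}. You have merely filled in details the paper leaves implicit (boundedness of $u\mapsto u_{0,0}(0)$, the observation that $\|\cdot\|_2$ reduces to the sup norm on the $(0,0)$ component so c-continuity controls $E_\rho$, and the compactness/subsequence argument for continuity of the inverse).
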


\begin{remark} {
Note that  by Hartogs' Theorem  joint analyticity is equivalent to individual analyticity. }
\end{remark}

\begin{proof}  First observe that
 $(s,z) \mapsto E_\rho[w(s)](z)$ is analytic (continuous).
It follows that the mapping $(s,z) \mapsto E_\rho[w(s)]^{-1}(z)$ on $S \times D_{1/2}$ is analytic (continuous), which can be seen from
Lemma  \ref{renorm:thm3} and the identity
\begin{equation} \label{eq:implicitE}
E_\rho[w(s)](E_\rho[w(s)]^{-1}(z)) - z = 0 .
\end{equation}
It follows that  the map  $(s,z) \in S \times D_{1/2} \to w(s,E_\rho[w(s)]^{-1}(z)) $  is analytic (c-continuous) and  by  Remark \ref{rem:gartenhaag} its
range is contained in $\mathcal{B}^\#(\rho/8,\rho/2,\rho/8)$. It follows now from Theorem   \ref{thm:analytbasic}
 that $(s,z) \mapsto \mathcal{R}_\rho^\#(w(s,E_\rho[w(s)]^{-1}(z)))$ is analytic (c-continuous).
\end{proof}

The remaining part of this section is devoted to the proof of  Theorem \ref{thm:analytbasic}.
First we show the statement regarding analyticity, then we show the statement regarding c-continuity.

To show the statement about analyticity we first show in Lemma  \ref{lem:frechetdiff1aa}, below, that the map
$ v_{\umm,\upp,\unn,\uqq}[w(\cdot )] : S \to \WW_\xi^\#$ is analytic. It then follows from   \eqref{eq:renormalizedkernels01} and \eqref{eq:renormalizedkernels00}
that the renormalized kernel  $\widehat{w(s)} = \mathcal{R}^\#_\rho(w(s))$ is given as a series of analytic mappings.
Analyticity of  the renormalized kernel  will follow, provided that the series converges uniformly on $S$. Since we are  not
able to show this on the whole set $S$ directly,  we will  show, below, uniform convergence on open subsets of $S$ which constitute a covering of $S$.
  This  is in fact sufficient to  conclude   the analyticity of  $s \mapsto \mathcal{R}^\#_\rho(w(s))$.

\begin{lemma} \label{lem:frechetdiff1aa} Let the assumptions of  Theorem \ref{thm:analytbasic} hold. Then $ v_{\umm,\upp,\unn,\uqq}[w(\cdot )] : S \to \WW_\xi^\#$ is analytic.
\end{lemma}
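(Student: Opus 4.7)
The plan is to recognize $v_{\umm,\upp,\unn,\uqq}[w]$ as a finite composition of bounded linear and locally analytic operations in $w$, and then invoke the basic calculus of analytic Banach-space-valued functions (preserved under composition, finite products, and evaluation of vacuum expectations). The $w$-dependence enters through only two kinds of factors: the operators $W_{p_l,q_l}^{m_l,n_l}[w]$, which depend linearly on the component $w_{p_l+m_l,\,q_l+n_l}$, and the functions $F_l[w]=\overline{\chi}_\rho^2/w_{0,0}$ for $l=1,\ldots,L-1$, which depend nonlinearly on $w_{0,0}$; the end factors $F_0[w]=F_L[w]=\chi_\rho$ are independent of $w$.

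For the $W$-factors the map $w\mapsto W_{p,q}^{m,n}[w]$ is continuous linear from $\WW_\xi^\#$ into the relevant space of operator-valued functions of $(r,K)$, with the bounds of Lemma \ref{lem:wwestimate}; composition with the analytic map $w(\cdot)$ thus yields an analytic map $s\mapsto W_{p_l,q_l}^{m_l,n_l}[w(s)]$. For each interior $F$-factor I would use the lower bound $|w_{0,0}(r)|\geq t:=3\rho/16$ on $\supp\overline{\chi}_\rho$ established in \eqref{eq:basicwest}, which is uniform over $\mathcal{B}^\#(\rho/8,\rho/2,\rho/8)$. At any $s_0\in S$, expand via Neumann series
$$
F[w(s)] \;=\; \frac{\overline{\chi}_\rho^2}{w_{0,0}(s_0)}\,\sum_{k=0}^\infty\Bigl(-\frac{w_{0,0}(s)-w_{0,0}(s_0)}{w_{0,0}(s_0)}\Bigr)^{\!k},
$$
which converges in $C^1([0,1])$-norm on a neighborhood of $s_0$ by the analyticity of $s\mapsto w_{0,0}(s)$ in $\|\cdot\|^\#$. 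This exhibits $s\mapsto F_l[w(s)]$ as a $C^1([0,1])$-valued analytic function.

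Assembling the pieces, $v_{\umm,\upp,\unn,\uqq}[w]$ is built by substituting $H_f+\rho(r+\widetilde r_l)$ into these scalar functions (functional calculus respects analyticity because the operator norm is controlled by the $C^1[0,1]$-norm on $[0,1]$), taking products of operators, and finally pairing against $\Omega$; the resulting $s$-analyticity is then automatic, since finite products and bilinear pairings of analytic Banach-space-valued maps are analytic.

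The main obstacle will be bookkeeping uniformity. Analyticity must be established in the norm $\|\cdot\|^\#$, which is an essential supremum over $K^{(|\umm|,|\unn|)}\in B_1^{|\umm|+|\unn|}$ together with a $C^1$ norm in $r\in[0,1]$; hence the Neumann expansions for the $F_l$-factors and the multilinear estimates for the $W$-factors must converge and be differentiated term by term \emph{uniformly} in $(r,K)$. Fortunately the bounds already obtained in Lemma \ref{codim:thm3first} (specifically, the factor $t^{-(L-1)}$ control and the separate estimates on $\|F[w]\|_\infty$ and $\|\partial_r F[w]\|_\infty$) provide exactly this uniformity on $\mathcal{B}^\#(\rho/8,\rho/2,\rho/8)$. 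Once uniformity in $(r,K)$ is verified, the coefficients of the power series in $s-s_0$ are elements of $\WW_{|\umm|,|\unn|}^\#\subset\WW_\xi^\#$ and the series converges in that norm, giving the claimed analyticity.
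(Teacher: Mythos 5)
Your proposal is correct and takes essentially the same route as the paper: both decompose $v_{\umm,\upp,\unn,\uqq}$ into the linear $W$-factors and the nonlinear reciprocal factors $F[\cdot]=\overline{\chi}_\rho^2/w_{0,0}$, show each piece is analytic in $w$ (using the uniform lower bound $|w_{0,0}(r)|\geq 3\rho/16$ from \eqref{eq:basicwest} for the reciprocal), and then use that the multilinear assembly preserves analyticity via the bounds of Lemma \ref{lem:wwestimate} and Lemma \ref{codim:thm3first}. The only cosmetic difference is that the paper proves Fr\'echet differentiability of the full map $w\mapsto v_{\umm,\upp,\unn,\uqq}[w]$ on $\mathcal{O}^{(\epsilon)}$ (Lemma \ref{lem:frechetdiff1a}, with the explicit derivative $-\overline{\chi}_\rho^2/t^2$) and then composes with the analytic $w(\cdot)$, whereas you carry out the Neumann expansion of $F$ pointwise in $s$ and build analyticity in $s$ factor by factor.
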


Lemma \ref{lem:frechetdiff1aa} follows since by part (a) of the following Lemma  and  Estimate \eqref{eq:basicwest}  the function  $v_{\umm,\upp,\unn,\uqq}[w(\cdot )]$ is   a
composition of an analytic map
with a F-differentiable map.

\begin{lemma} \label{lem:frechetdiff1a}  \label{lem:frechetdiff1}   \label{eq:frechetf}
Let $0 < \rho \leq 1/2$.   Then
the following statements hold for  $\epsilon  > 0$.
\begin{itemize}
\item[(a)] On   $\mathcal{O}^{(\epsilon)}  := \{ w \in \WW^\#_{\xi} | {\rm inf}_{r  \in [ \rho \frac{3}{4}, 1]} | w_{0,0}(r) | > \epsilon \}$ the following
map is  F-differentiable
\begin{align*}
 v_{\umm,\upp,\unn,\uqq}[\cdot ] : \mathcal{O}^{(\epsilon)}      \longrightarrow  \WW_\xi^{\#}     \quad , \quad
 w  \longmapsto v_{\umm,\upp,\unn,\uqq}[w] .
\end{align*}
\item[(b)]  On   $\mathcal{O}_{0,0}^{(\epsilon)}  := \{ t \in \WW^\#_{0,0} |  {\rm inf}_{ r  \in [ \rho \frac{3}{4}, 1]} | t (r) | > \epsilon \}$  the following
map is  F-differentiable
\begin{align}
F[\cdot ] : \mathcal{O}_{0,0}^{(\epsilon)}    \longrightarrow  \WW^\#_{0,0}  \label{eq:frechetfa}  \quad , \quad
t  \longmapsto \frac{\chib_{\rho}^2}{t}   .
\end{align}
 \end{itemize}
\end{lemma}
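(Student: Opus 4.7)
The plan is to handle part (b) first by direct computation, since part (a) will invoke it via a chain-rule argument.

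For (b), I take as candidate F-derivative of $F[t] = \chib_\rho^2/t$ at $t \in \mathcal{O}_{0,0}^{(\epsilon)}$ the bounded linear map
\begin{align*}
DF[t] s := -\frac{\chib_\rho^2 s}{t^2}.
\end{align*}
Since $\chib_\rho$ vanishes on $[0,3\rho/4]$ while $|t| \geq \epsilon$ on $[3\rho/4,1]$, the sup-norm of $DF[t]s$ is bounded by $\epsilon^{-2}\|s\|_{\underline{\infty}}$, and the Leibniz rule for $\partial_r$ gives the analogous bound for the derivative in terms of $\epsilon$, $\rho$, $\|\partial_r\chib_\rho\|_\infty$, and $\|t\|^\#$. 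The remainder computes algebraically to $F[t+s]-F[t]-DF[t]s = \chib_\rho^2 s^2/(t^2(t+s))$, and for $\|s\|^\# < \epsilon/2$ one has $|t+s| \geq \epsilon/2$ on $[3\rho/4,1]$, so the same Leibniz calculation yields that the $\|\cdot\|^\#$-norm of the remainder is $O(\|s\|^{\#\,2})$.

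For (a), I observe that $v_{\umm,\upp,\unn,\uqq}[w]$ is the vacuum matrix element of a product of $2L+1$ operator-valued factors: two constant end-factors $\chi_\rho(H_f+\cdots)$, $L$ factors $W_{p_l,q_l}^{m_l,n_l}[w](\cdots)$ each linear in the single component $w_{m_l+p_l,n_l+q_l}$ of $w$, and $L-1$ middle factors $F[w](H_f+\cdots)$ depending on $w$ only through $w_{0,0}$. The linear factors are trivially F-differentiable, with operator-norm and $\partial_r$-norm bounds furnished by Lemma \ref{lem:wwestimate}. For the middle factors the chain rule applies: the coordinate projection $w \mapsto w_{0,0}$ is bounded linear from $\WW_\xi^\#$ to $\WW_{0,0}^\#$ and sends $\mathcal{O}^{(\epsilon)}$ into $\mathcal{O}_{0,0}^{(\epsilon)}$, so part (b) gives F-differentiability of $w \mapsto F[w]$, and the subsequent functional-calculus step $g \mapsto g(H_f+\rho(r+\widetilde{r}_l))$ is a bounded linear operation (in the $r$-derivative one picks up only a factor of $\rho$).

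F-differentiability of the full product then follows from the standard Leibniz rule for products of F-differentiable maps with uniformly bounded factors: the candidate derivative is a sum of $2L-1$ terms obtained by differentiating one non-constant slot at a time, and the remainder decomposes into cross-terms each containing either a second-order remainder of one factor or a product of two first-order differences, multiplied by the remaining factors (uniformly bounded on a small neighborhood of $w$). Taking the vacuum expectation value is continuous bilinear, so the result lies in $\WW_{|\umm|,|\unn|}^\#$. The main obstacle is the bookkeeping of the $r$-derivative in the $\|\cdot\|^\#$-norm of the remainder: applying $\partial_r$ to a $(2L+1)$-fold product produces $2L+1$ terms, each of which must be separately estimated via Lemma \ref{lem:wwestimate} and (b). However, this is directly parallel to the estimate for $\|v_{\umm,\upp,\unn,\uqq}[w]\|^\#$ in the proof of Lemma \ref{codim:thm3first}, so no new ideas are required.
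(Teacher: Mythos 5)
Your proposal is correct and follows essentially the same route as the paper: the paper also proves (b) by exhibiting the remainder $\chib_\rho^2 \xi^2/(t^2(t+\xi))$ and bounding it via $\|\chib_\rho/(t+\xi)\|^\#$, and proves (a) by exhibiting $v_{\umm,\upp,\unn,\uqq}[\cdot]$ as the composition of the $F[\cdot]$ map with a bounded multilinear map $\widetilde v_{\umm,\upp,\unn,\uqq}[\underline w,\underline G]$ in the $W$-slots and $F$-slots (which is exactly your product Leibniz-rule argument phrased as "bounded multilinear $\Rightarrow$ F-differentiable"). The only cosmetic difference is that the paper packages the product structure explicitly as a multilinear map on $(\WW_\xi^\#)^L\times(\WW_{0,0}^\#)^{L+1}$ and cites the estimate \eqref{eq:contmultilinear211a}, whereas you unfold the same bound via Lemma \ref{lem:wwestimate} and the reference to Lemma \ref{codim:thm3first}.
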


\begin{proof}
First we show part (b).
We will use that for all $f,g \in \WW_{0,0}$ we have $\| f g \|^\# \leq \| f \|^\# \| g\|^\#$ and that   for all $\xi \in \WW_{0,0}$ with
$\| \xi \|^\# < \epsilon / 2$ we have
\begin{align*}
\left\| F[t + \xi ] - F[t] + \frac{\chib_{\rho}^2 \xi}{t^2} \right\|^\# &= \left\| \frac{\chib_{\rho}^2 \xi^2}{t^2 ( t + \xi )} \right\|^\# \\
&\leq \left\|  \frac{\chib_{\rho}}{t^2} \right\|^\#  \left\|  \frac{\chib_{\rho}}{t + \xi } \right\|^\# \| \xi^2 \|^\# \leq C \left( \| \xi \|^\# \right)^2 ,
\end{align*}
where in the last inequality we used the estimate
\begin{eqnarray}
 \left\|  \frac{\chib_{\rho}}{t + \xi } \right\|^\#  &&\leq  \left\|  \frac{\chib_{\rho}}{t + \xi } \right\|_\infty  +  \left\|  \frac{\partial_r \chib_{\rho}}{t + \xi } \right\|_\infty +   \left\|  \frac{\chib_{\rho} \partial_r ( t + \xi)}{(t + \xi)^2 } \right\|_\infty \\
 && \leq \frac{ 1 + \| \partial_r \chib_{\rho} \|_\infty}{ \epsilon - \| \xi \|^\#}   + \frac{\|  t \|^\# + \| \xi \|^\#}{ ( \epsilon - \| \xi \|^\#)^2 } \leq  C .
\end{eqnarray}
This implies that $F[ \cdot ]$ is differentiable with derivative  $-\chib_{\rho}^2 / t^2$.

(b)  The differentiability of  $v_{\umm,\upp,\unn,\uqq}[\cdot ]$ follows from the fact it can be written as a  composition of the F-differentiable  mapping
$\widetilde{v}_{\umm,\upp,\unn,\uqq}[\cdot ]$, defined   below,  and  $F[\cdot ]$.
  \label{lem:genvestimate1a}  \label{lem:genvestimate1}
For  $\underline{w} = (w_1,...,w_L) \in (\WW_\xi^\# )^L$ and $\underline{G} = (G_0,...,G_L) \in (\WW_{0,0}^\#)^{L+1}$ define
the multilinear  expression
\begin{eqnarray}
\lefteqn{ \widetilde{v}_{\umm,\upp,\unn,\uqq}[\underline{w},\underline{G}]( r, K^{(|\umm|,|\unn|)}) :=  } \label{eq:defofvtildea} \\
&
\left\langle \Omega  ,  G_0(H_f + \rho ( r + \widetilde{r}_0 ) ) \prod_{l=1}^L
\left\{  {W}_{p_l,q_l}^{m_l,n_l}[w_l](\rho(r+r_l), \rho K_l^{(m_l,n_l)} ) G_l( H_f + \rho (r  + \widetilde{r}_l ) ) \right\} \Omega
\right\rangle    \nonumber .
\end{eqnarray}
It satisfies  the inequality
\begin{eqnarray}
\lefteqn{ \| \widetilde{v}_{\umm,\upp,\unn,\uqq}[ \underline{w},\underline{G} ] \|^\# } \label{eq:contmultilinear211a} \\&& \leq
 \left( \prod_{l=0}^L
 \| G_l \|_\infty  + \sum_{l'=0}^L  \| \rho \partial_r G_{l'} \|_\infty \prod_{l=0, l\neq l'}^L \| G_l \|_\infty \right)     \prod_{l=1}^{L}
     \frac{ \|  {(w_l)}_{m_l + p_l, n_l + q_l}  \|^\# }  { \sqrt{p_l ! q_l !}}   .  \nonumber
\end{eqnarray}
To obtain \eqref{eq:contmultilinear211a}  we  use \eqref{eq:babyestimate1} and  Lemma \ref{lem:wwestimate}, and
 calculate the derivative with respect to $r$ using Leibniz' rule. 
From  \eqref{eq:contmultilinear211a} it follows that $\widetilde{v}_{\umm,\upp,\unn,\uqq}[\cdot ]$ is continuous, and hence by multilinearity   $\widetilde{v}_{\umm,\upp,\unn,\uqq}[\cdot ]$ is in fact differentiable.
\end{proof}

Next  we show that the defining sequence of $\widehat{w(s)}$, see \eqref{eq:renormalizedkernels01} and \eqref{eq:renormalizedkernels00}, converges uniformly on
open sets which constitute a covering of $S$.
To this end choose $s_0 \in S$
and define the set

$$
U_0 = \{ w \in \mathcal{B}^\#(\rho/8,\rho/2,\rho/8) | \| w - w(s_0) \|_\xi^\# < \epsilon \}
$$
where we set
$$
\epsilon := \frac{\rho/7 - \| w(s_0)_{\geq 1} \|_\xi^\#}{16 e^4} .
$$
The explicit choice of $\epsilon$ is needed for the estimate    \eqref{eq:boundonGa},         
below.
Note that by continuity there exists, $S_0$, an open subset of $S$ containing $s_0$, such that $w(S_0) \subset U_0$.
For $w \in U_0$, we have
\begin{equation*} 
\| w_{m,n} \|^\# \leq E_{m,n} := \| w(s_0)_{m,n} \|^\# + \xi^{m+n} \epsilon .
\end{equation*}
By Lemma \ref{codim:thm3first},
\begin{equation} \label{eq:estonvons}
{\rm sup}_{s \in S_0} \| v_{\umm,\upp,\unn,\uqq}[w( s)] \|^\# \leq C_L t^{-L + 1} \prod_{l=1}^L \frac{E_{m_l + p_l , n_l q_l}}{\sqrt{p_l ! q_l !}} ,
\end{equation}
where we used the notation introduced in that lemma. To establish the uniform convergence on $S_0$  of the series defining  $\widehat{w(s)}$
 it suffices, in view of \eqref{eq:renormalizedkernels01} and  \eqref{eq:renormalizedkernels00},  to show that the following expression is bounded
\begin{align}
\label{eq:rdiffstup1a}
& \sum_{M+N \geq 0} \sum_{L=1}^\infty
\sum_{\substack{ (\umm,\upp,\unn,\uqq) \in \N_0^{4L} \\ | \umm | = M , | \unn | = N \\ m_l + p _l + n_l + q_l \geq 1}} \xi^{-|\umm|-|\unn|} \rho^{|\umm|+|\unn|}
\prod_{l=1}^L \left\{  \binom{m_l + p_l }{ p_l} \binom{ n_l + q_l }{ q_l }      \right\}
 {\rm sup}_{s \in S_0} \| v_{\umm,\upp,\unn,\uqq}[w( s)] \|^\#  \\
&\leq  \sum_{L=1}^\infty C_L t^{1-L}  G^{L}  \nonumber ,
\end{align}
where we used Eq.  \eqref{eq:estonvons} and the definition
$$
G := \sum_{ m+p + n+q \geq 1 }  \binom{m+p }{ p} \binom{n+q }{ q} \xi^{p+q} (1/2)^{m+n} \xi^{-m-p-n-q}  \frac{E_{m+p, n+q}}{\sqrt{p! q!}} .
$$
Below we will show that
\begin{equation} \label{eq:boundonGa}
G \leq \| w(s_0)_{\geq 1} \|_\xi^\# + \epsilon 16 e^4 \leq \rho / 7 .
\end{equation}
Inequalities \eqref{eq:boundonGa}
imply the  convergence of \eqref{eq:rdiffstup1a} , since $t^{-1} G \leq t^{-1} \rho/7  < 1$.
The second inequality in \eqref{eq:boundonGa} follows from the definition of $\epsilon$.
To show the first inequality of \eqref{eq:boundonGa}, we will use the following estimate

\begin{equation}
\sum_{m+p \geq 0} \binom{ m + p }{ p} \xi^p (1/2)^m \frac{1}{\sqrt{p!}}
\leq \sum_{m+p \geq 0}  \binom{ m + p }{ p} (1/4)^p (1/2)^m e^{8 \xi^2}
= 4e^{8 \xi^2}
 \leq 4 e^2 , \label{eq:combestimatedera}
\end{equation}
where in the first inequality we used the trivial estimate  $(16\xi^2)^{p}/{p!} \leq e^{16\xi^2}$.
To show the first inequality in  \eqref{eq:boundonGa}, we   insert the definition of $E_{m,n}$  into the definition of $G$.
This yields two terms, which we have to estimate. The first term, involving $w_{m,n}(s_0)$, is estimated using the binomial formula and the second term, involving $\epsilon$,
is estimated using   \eqref{eq:combestimatedera}.

\vspace{1cm}

It remains to  show the statement regarding c-continuity.
By Lemma  \ref{lem:newcont}, shown next,  the map  $s \mapsto v_{\umm,\upp,\unn,\uqq}[w(s)]$ is  continuous with respect to $\| \cdot \|_2$.
By  \eqref{eq:renormalizedkernels01} and \eqref{eq:renormalizedkernels00} this will imply that the function
$s \mapsto \widehat{w(s)}_{M,N}$ is given as a series involving   expressions which are continuous with respect to  $\| \cdot \|_2$.
The c-continuity of  $s \mapsto \widehat{w(s)}$
will follow provided we show that this series converges uniformly in  $s \in S $ with respect to the $\| \cdot \|_2$ norm. In fact
we will first show uniform convergence with respect to $\| \cdot \|^\#$. In view of  \eqref{eq:infinity2ineq} this will imply the uniform convergence with respect to
the $\| \cdot \|_2$ norm.

\begin{lemma} \label{lem:newcont} Let $w : S \mapsto \WW_\xi^\#$ be   c-continuous, and let $w(S) \subset \mathcal{B}^\#(\rho/8,\rho/2,\rho/8)$. Then for all
$s_0 \in S$
\begin{equation} \label{eq:newcont:eq}
\lim_{s \in S,  s \to s_0} \left\|  v_{\umm,\upp,\unn,\uqq}[w(s_0)] -  v_{\umm,\upp,\unn,\uqq}[w(s)] \right\|_2 \to 0   .
\end{equation}
\end{lemma}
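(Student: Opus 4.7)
The plan is to expand the difference $v_{\umm,\upp,\unn,\uqq}[w(s_0)] - v_{\umm,\upp,\unn,\uqq}[w(s)]$ into a telescoping sum of $2L-1$ terms, one for each $w$-dependent factor in the chain $F_0[w], W_1[w], F_1[w], \ldots, W_L[w], F_L[w]$ appearing in \eqref{eq:defofv}, where I write $W_l[w] := W_{p_l,q_l}^{m_l,n_l}[w]$ and $F_l[w]$ is as in Section~\ref{sec:ren:ker}. In each telescoping term exactly one factor is replaced by its difference at $w(s_0)$ and $w(s)$, while the remaining factors are kept at either $w(s_0)$ or $w(s)$. Since $w(S) \subset \mathcal{B}^\#(\rho/8, \rho/2, \rho/8)$ all frozen factors are uniformly bounded in $\|\cdot\|^\#$, so it suffices to show that each telescoping piece tends to zero in $\|\cdot\|_2$ as $s \to s_0$.

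The central analytic ingredient is a refinement of Lemma~\ref{lem:wwestimate}, obtained by repeating the Cauchy--Schwarz argument in the proof of Lemma~\ref{lem:operatornormestimates2}:
\begin{equation}
\int \frac{dK^{(m,n)}}{|K^{(m,n)}|^2} \sup_{r \in [0,1]} \bigl\|W_{p,q}^{m,n}[w](r, K^{(m,n)})\bigr\|_{\rm op}^2 \;\leq\; \frac{\|w_{m+p,n+q}\|_2^2}{p!\,q!}.
\end{equation}
This is the estimate that transfers a $\|\cdot\|_2$-control on a single kernel component to a $\|\cdot\|_2$-control on $v$. For the telescoping term $\Delta_l$ in which $W_l$ is replaced by $\delta W_l := W_{p_l,q_l}^{m_l,n_l}[w(s_0)] - W_{p_l,q_l}^{m_l,n_l}[w(s)]$, I apply $|\langle \Omega, A_1 \cdots A_n \Omega\rangle| \leq \prod_i \|A_i\|_{\rm op}$, pass $\sup_r$ inside the product (sup of a product is majorized by the product of separate suprema), square and factor $\int |K|^{-2}\,dK$ as a product of integrals over each $K_{l'}$. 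The $F$-factors are bounded in operator norm by $1$ (for $l'=0,L$) or by $t^{-1}$ with $t = 3\rho/16$ (for interior $l'$), uniformly in $s$, by \eqref{eq:basicwest}. The displayed estimate applied at each $K_{l'}$ then yields
\begin{equation}
\|\Delta_l\|_2 \;\leq\; C(\rho, L) \cdot \|w(s_0)_{m_l + p_l, n_l + q_l} - w(s)_{m_l + p_l, n_l + q_l}\|_2,
\end{equation}
with $C(\rho,L)$ an $s$-uniform constant (using the uniform $\|\cdot\|^\#$-bound on $w(S)$ together with \eqref{eq:infinity2ineq} for the frozen factors); c-continuity at the $(m_l + p_l, n_l + q_l)$-component gives convergence to zero.

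For the telescoping terms in which an interior $F_l$ is replaced, I use the identity
\begin{equation}
F_l[w(s_0)] - F_l[w(s)] = \frac{\chib_\rho^2 \bigl(w(s)_{0,0} - w(s_0)_{0,0}\bigr)}{w(s_0)_{0,0}\, w(s)_{0,0}},
\end{equation}
so that $\|F_l[w(s_0)] - F_l[w(s)]\|_\infty \leq t^{-2}\|w(s_0)_{0,0} - w(s)_{0,0}\|_2$, where I have used that $\|w_{0,0}\|_2 = \|w_{0,0}\|_\infty$ for the scalar $(0,0)$-component. The same product estimate, with this $L^\infty$ factor slotted in place of one of the $K$-integrations, produces convergence to zero via c-continuity at $(0,0)$. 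The main obstacle is purely bookkeeping: one must verify that $\sup_r$ distributes through the operator-norm bound on the vacuum expectation, and that the $K$-integrations decouple cleanly so that exactly one slot carries the kernel difference in $\|\cdot\|_2$ while the remaining $L-1$ slots contribute uniformly bounded quantities via the refined estimate above.
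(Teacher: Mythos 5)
Your proposal follows the paper's proof closely: the same telescoping decomposition \eqref{eq:telescoping}, the same reduction via \eqref{eq:babyestimate1} to operator-norm bounds, the same key $\|\cdot\|_2$ estimate on $W^{m,n}_{p,q}[w]$, the same use of \eqref{eq:infinity2ineq} for the frozen factors, and the same appeal to c-continuity at the end. One small inaccuracy: your "refined" estimate
\begin{equation*}
\int \frac{dK^{(m,n)}}{|K^{(m,n)}|^2} \sup_{r \in [0,1]} \bigl\|W_{p,q}^{m,n}[w](r, K^{(m,n)})\bigr\|_{\rm op}^2 \;\leq\; \frac{\|w_{m+p,n+q}\|_2^2}{p!\,q!}
\end{equation*}
should not carry the extra $\frac{1}{p!q!}$: repeating the Cauchy--Schwarz argument of Lemma~\ref{lem:operatornormestimates2} produces the bound with right-hand side $\|w_{m+p,n+q}\|_2^2$, with no factorial factor, since the factorial in Lemma~\ref{lem:wwestimate} comes from the $\|\cdot\|_\infty$-based estimate \eqref{eq:intofwKminus2}, whereas here the $X$-integration is absorbed directly into the $\|\cdot\|_2$ norm (the $X$- and $K$-integrations jointly reconstitute $\|w_{p+m,q+n}\|_2^2$). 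The factor is harmless for the present lemma (dropping it only weakens the bound), so the proof still goes through.
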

\begin{proof}
The kernel  $v_{\umm,\upp,\unn,\uqq}$ is a multilinear expression of integral kernels.
To show continuity we use the following identity
\begin{eqnarray} \label{eq:telescoping}
\lefteqn{ A_1(s) \cdots A_n(s) - A_1(s_0) \cdots A_n(s_0) } \nonumber \\
&&= \sum_{i=1}^n A_1(s) \cdots A_{i-1}(s) ( A_i(s) - A_i(s_0)) A_{i+1}(s_0) \cdots A_n(s_0) .
\end{eqnarray}
Now \eqref{eq:newcont:eq} follows using estimate \eqref{eq:babyestimate1}, the following inequality, which is shown similarly as the estimate in Lemma \ref{lem:operatornormestimates2},
\begin{equation}
\int \frac{dK^{(m,n)}}{|K^{(m,n)}|^2} \sup_{r \in [0,1] } \| W^{m,n}_{p,q}[w]( r , K^{m,n} ) \|^2_{\rm op} \leq \| w_{m+p,n+q} \|_2^2  ,
\end{equation}
inequality \eqref{eq:infinity2ineq}, and  the limits
\begin{align*}
&  \left\| w(s_0)_{m,n}   -    w(s)_{m,n} \right\|_2  \stackrel{s \to s_0}{\longrightarrow} 0       ,           \\
&\sup_{r} \left| \frac{\chib_1^2(r)}{w(s_0)_{0,0}(r)}    -    \frac{\chib_1^2(r)}{w(s)_{0,0}(r)}  \right|  \stackrel{s \to s_0}{\longrightarrow} 0              ,
\end{align*}
which follow by assumption.
\end{proof}

It remains to  show that the defining series of $\widehat{w(s)}_{M,N}$ converges uniformly in  $s \in S $. In view
of \eqref{eq:renormalizedkernels01}
and  \eqref{eq:renormalizedkernels00} this will be established
if we can show that \eqref{eq:uniformpcontestmn} and \eqref{eq:uniformpcontest00} are finite.
To this end, first observe that  
it follows that for all $m+n \geq 1$
\begin{equation} \label{eq:initialestimate3455}
\sup_{s \in S} \| w(s)_{m,n} \|^\# \leq \xi^{m+n} \sup_{s \in S} \|(w(s)_{m,n})_{m+n \geq 1} \|_{\xi}^\# \leq  \xi^{m+n} \frac{\rho}{8} .
\end{equation}
Inserting  \eqref{eq:vestimate1} and the above estimate  into the following expression,
for  $M + N \geq 1$,  we find
\begin{eqnarray}
\lefteqn{
\sum_{L=1}^\infty \sum_{ \substack{ (\umm,\unn) \in \N_0^{2L}:  \\    |\umm|=M, |\unn|=N }}  \sum_{  \substack{ (\upp,\uqq) \in \N_0^{2L} \\ m_l+p_l+n_l+q_l \geq 1       } }
 \prod_{l=1}^L \left\{ \binom{ m_l + p_l }{ p_l} \binom{ n_l + q_l }{  q_l}   \right\}  {\rm sup}_{s \in S} \| v_{\umm,\upp,\unn,\uqq}[w( s)] \|^\# } \nonumber   \\
&& \leq
\sum_{L=1}^\infty (-1)^{L-1} \rho^{M+N - 1}  C_L t^{1-L} \sum_{ \substack{ (\umm,\unn) \in \N_0^{2L}:  \\    |\umm|=M, |\unn|=N }}   \nonumber   \\
&& \ \     \prod_{l=1}^L \left\{  {\sum_{p_l,q_l }}'  \binom{ m_l + p_l}{  p_l} \binom{ n_l + q_l }{ q_l}
 \xi^{m_l+p_l+n_l+q_l} \frac{\rho}{8}  \right\}   \nonumber  \\
&& \ \ =: F , \label{eq:uniformpcontestmn}
\end{eqnarray}
where $\sum'_{ p_l,q_l }$ denotes the sum over all $(p_l,q_l) \in \N_0^{2}$ such that $m_l+p_l+n_l+q_l \geq 1 $.
If $m_l \neq 0$ or $n_l \neq 0$  we estimate using $\binom{ n }{ k } \leq 2^n$
\begin{align}
 {\sum_{ p_l,q_l}}'  \binom{ m_l + p_l }{ p_l} \binom{ n_l + q_l }{ q_l}  \xi^{m_l+p_l+n_l+q_l}
 \leq (2\xi)^{m_l+n_l} \sum_{p_l \geq 0} (2\xi)^{p_l} \sum_{q_l \geq 0} (2 \xi)^{q_l} \leq   4  ,
\end{align}
where we used   that $0 < \xi \leq 1/4$. If both $m_l=0$ and $n_l=0$ then either $p_l \geq 1$ or $q_l \geq 1$ and we estimate
\begin{align}
 {\sum_{  p_l,q_l}}'  \binom{ m_l + p_l}{ p_l} \binom{ n_l + q_l }{  q_l}  \xi^{m_l+p_l+n_l+q_l}
 \leq \sum_{p_l + q_l \geq 1}   \xi^{p_l} \xi^{q_l}  \leq \frac{7}{9}  \leq 1 ,
\end{align}
where in the second last inequality we used again $0< \xi \leq 1/4$.
Inserting these estimates into $F$ and using that  there are at most $(M+1)(N+1)$ factors for which   $m_l \neq 0$ or $n_l \neq 0$  we find
$$
F \leq \rho^{M+N -1} \sum_{L=1}^\infty C_L t^{1-L} \sum_{ \substack{ (\umm,\unn) \in \N_0^{2L}:  \\    |\umm|=M, |\unn|=N }}   4^{(M+1)( N+1)} \left( \frac{\rho}{8} \right)^{L}
$$
Now the estimates
$$
\sum_{ \substack{ (\umm,\unn) \in \N_0^{2L}:  \\    |\umm|=M, |\unn|=N }}1 \leq (L+1)^{M+N}
$$
and $t^{-1} \rho/8 < 1$ imply $F < \infty$.  Now we consider \eqref{eq:renormalizedkernels00}.
 Using  \eqref{eq:vestimate1}  and \eqref{eq:initialestimate3455} we find
\begin{align}
\sum_{L=2}^\infty
\sum_{ \substack{ (\upp,\uqq) \in \N_0^{2L}: \\ p_l + q_l \geq 1 } }
\left\{ {\rm sup}_{s \in S} \| v_{\uzz,\upp,\uzz,\uqq}[w( s)] \|^\#  \right\}
\leq   \sum_{L=2}^\infty  C_L t^{L-1}
\left\{ \sum_{ p + q \geq 1  }\xi^{p+q}  \frac{\rho}{8}  \right\}^L . \label{eq:uniformpcontest00}
\end{align}
This converges since $\sum_{ p + q \geq 1  }\xi^{p+q} \leq 7/9 \leq 1$ and $t^{-1}\rho/8 < 1$.

\section{Codimension-1 Contractivity}
\label{sec:contract}

In this section we prove that the renormalization transformation is in certain directions a contraction in $\WW_\xi$. We recall Definition~\ref{def:defofrenorm} and the definition introduced in 
 Theorem~\ref{thm:maingenerala}.
In contrast to \cite{BCFS03},  the contraction originates from the fact the we restrict the renormalization transformation
to integral kernels for which the sum of the number of creation and annihilation operators is even, rather than  an infrared condition.
\label{sec:cod}

\begin{theorem} \label{codim:thm1} For any positive  numbers $\rho_0 \leq 1/2$ and $\xi_0 \leq 1/2$ there exist  numbers $\rho, \xi, \epsilon_0$ satisfying
$\rho \in (0, \rho_0]$, $\xi \in (0, \xi_0]$, and $0 < \epsilon_0 \leq \rho/8$
such that the following property holds,
\begin{equation}  \label{codim:thm1:eq}
\mathcal{R}_\rho : \mathcal{B}_0(\epsilon, \delta_1, \delta_2 ) \to
\mathcal{B}_0(   \epsilon + \delta_2/2  ,   \delta_2/2     ,   \delta_2/2   ) \quad , \quad \forall  \ \epsilon, \delta_1, \delta_2 \in [0, \epsilon_0)  .
\end{equation}
\end{theorem}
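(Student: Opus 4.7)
The plan is to verify the three defining inequalities of $\mathcal B_0(\epsilon + \delta_2/2, \delta_2/2, \delta_2/2)$ for $\widehat w := \mathcal R_\rho w$ by working directly with the explicit formulas \eqref{eq:renormalizedkernels01}--\eqref{eq:renormalizedkernels00} and the uniform kernel bound of Lemma~\ref{codim:thm3first}. For $z \in D_{1/2}$ set $z' := E_\rho[w]^{-1}(z)$ and $w' := w(z')$; by Lemma~\ref{renorm:thm3} and Remark~\ref{rem:gartenhaag}, $w' \in \mathcal B^\#(\rho/8, \rho/2, \rho/8)$ whenever $\epsilon_0 \leq \rho/8$, so $\widehat w(z) = \mathcal R_\rho^\#(w')$ is well defined. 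The first step is parity preservation: a summand $v_{\umm,\upp,\unn,\uqq}[w']$ in \eqref{eq:renormalizedkernels01} vanishes unless each $m_l + p_l + n_l + q_l$ is even (since $w'_{m_l+p_l, n_l+q_l} = 0$ otherwise) and unless $\sum_l p_l = \sum_l q_l$ (from the vacuum matrix element $\langle \Omega, \cdot\, \Omega\rangle$). Adding these parities forces $M + N$ even, placing $\widehat w$ in $\mathcal B_0$.

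For the two $w_{0,0}$ estimates, the defining relation $E_\rho[w](z') = z$ forces $w'_{0,0}(0) = -\rho z$, so the leading piece $\rho^{-1} w'_{0,0}(\rho r)\big|_{r=0}$ in \eqref{eq:renormalizedkernels00} equals $-z$ and $\widehat w_{0,0}(z, 0) + z$ is just the $L \geq 2$ tail at $r = 0$. Each factor $w'_{p_l, q_l}$ in that tail has $p_l + q_l \geq 1$, so $\|w'_{p_l, q_l}\|^\# \leq \xi^{p_l+q_l}\delta_2$; combined with Lemma~\ref{codim:thm3first} and dominated by the $L = 2$ term, the tail is $O(\delta_2^2/\rho^2)$. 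The decomposition $\partial_r \widehat w_{0,0}(z, r) - 1 = [\partial_r w'_{0,0}(\rho r) - 1] + \partial_r(\mathrm{tail})$ gives the first piece $\leq \epsilon$ and, by the derivative part of Lemma~\ref{codim:thm3first}, the second also $O(\delta_2^2/\rho^2)$. Taking $\epsilon_0$ small relative to $\rho^2$ makes both corrections $\leq \delta_2/2$.

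The decisive bound is $\|\widehat w_{\geq 1}\|_\xi \leq \delta_2/2$, where the codimension-$1$ gain from parity enters. Because $\widehat w_{M, N}$ vanishes unless $M + N \geq 2$, the prefactor $\rho^{M+N-1}$ in \eqref{eq:renormalizedkernels01} is at most $\rho$. For $L = 1$ the vacuum constraint additionally forces $p_l = q_l = 0$ (otherwise $a(\tilde x^{(q_l)}) \Omega = 0$ or $\langle \Omega, a^*(x^{(p_l)})\cdots \rangle = 0$), so the $L = 1$ contribution is bounded in $\|\cdot\|^\#$ by $C_1 \rho^{M+N-1} \|w'_{M, N}\|^\#$, giving $\|\widehat w^{(L=1)}_{\geq 1}\|_\xi \leq C_1 \rho \|w'_{\geq 1}\|^\#_\xi \leq C_1 \rho \delta_2$. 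The $L \geq 2$ contributions, each carrying at least two factors from $w'_{\geq 1}$, are bounded by the same geometric summation as in the proof of Theorem~\ref{renorm:thm2a} by $O(\delta_2^2/\rho^2)$. Choosing $\rho$ so that $C_1 \rho \leq 1/4$ and $\epsilon_0$ small enough relative to $\rho^2$ yields $\|\widehat w_{\geq 1}\|_\xi \leq \delta_2/2$.

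The main obstacle is this last bookkeeping: the $L = 1$ term must be isolated and handled using the vacuum constraint (which reduces it to $p_l = q_l = 0$) so that the extra factor of $\rho$ extracted from $\rho^{M+N-1}$ via parity actually translates into contractivity rather than mere boundedness; the bound from Theorem~\ref{renorm:thm2a} alone, where $t^{-L} \sim \rho^{-L}$ cancels the gain, would not suffice. Once this is organized correctly, the parameters are tuned in order: first $\xi \leq \xi_0$ so the geometric series converge, then $\rho \leq \rho_0$ small enough for the $L = 1$ contractivity, and finally $\epsilon_0 \leq \rho/8$ small enough for the $L \geq 2$ and $w_{0,0}$ corrections.
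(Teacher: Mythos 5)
Your proof is correct and rests on the same parity mechanism as the paper's: evenness forces $M+N\geq 2$, so $\rho^{M+N-1}\leq\rho$, and this extra factor of $\rho$ is what produces the contraction. However, your claim that the $L=1$ term ``must be isolated and handled using the vacuum constraint'' is a misdiagnosis of where the gain would be lost. The bound $\sum_L C_L t^{-L}(\|w_{\geq 1}\|_\xi^\#)^L$ from the proof of Theorem~\ref{renorm:thm2a} has already absorbed the $\rho^{-1}$ from the scaling prefactor into $t^{-L}$; if you instead keep Lemma~\ref{codim:thm3first} in its native form $\|v_{\umm,\upp,\unn,\uqq}[w]\|^\#\leq C_L\,t^{1-L}\prod_l\cdots$ (exponent $1-L$, hence no $\rho^{-1}$ at $L=1$) and pair it directly with $\rho^{M+N-1}\leq\rho$, the paper's Step~1 bounds the \emph{entire} sum, uniformly in $L$, by $8C_\theta\rho\,\|w_{\geq 2}\|_\xi$, so $\rho\leq 1/(16C_\theta)$ closes it with no vacuum-constraint argument at all. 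Your observation that $p_1=q_1=0$ at $L=1$ is correct and gives a slightly sharper $L=1$ bound, but it is a convenience, not a necessity. A second difference: for the $(0,0)$-component tails you estimate $O(\delta_2^2/\rho^2)$ and then demand $\epsilon_0\lesssim\rho^2$; the paper instead exploits that each contraction index pair satisfies $p_l+q_l\geq 2$ (parity again), extracts $\xi^{2L}\leq\xi^4$ for $L\geq 2$, and then picks $\xi^4\leq\rho/(2C_\theta)$ so that $\epsilon_0\leq\rho/32$ already suffices. Both tunings satisfy the theorem statement; yours shrinks $\epsilon_0$ further than necessary and does not use the structural $\xi$-gain, but it is a valid alternative, not a gap.
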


In fact we will prove the following remark which is a
slightly stronger statement than Theorem \ref{codim:thm1}.

\begin{remark} \label{rem:codim:thm1}  Define the constant $C_\theta := 3 + 2 \| \partial_r \chi_1 \|_\infty$.
The contraction property \eqref{codim:thm1:eq} holds
whenever $0 < \rho \leq  \frac{1}{16 C_\theta}$,  $0 < \xi \leq  \left[ {\rho}/ ({2 C_\theta})\right]^{1/4}$, and $0 < \epsilon_0 \leq  \frac{\rho}{32}$.
\end{remark}

\begin{proof}  We will prove Remark \ref{rem:codim:thm1}. Theorem  \ref{codim:thm1} will then follow.
First observe that if $w \in \mathcal{B}_0(\epsilon,\delta_1,\delta_2)$, then  $(\mathcal{R}_\rho w )_{m,n} = 0$, if $m+n$ is odd.
Since $C_\theta \geq 1$, we can assume  that  $\xi \leq 1/2$ and $\rho \leq 1/2$.
To show the contraction property, we will use the following estimate for $w \in \mathcal{B}^\#(\rho/8,\rho/2,\rho/8)$
\begin{equation} \label{eq:vestimate111}
 ||  v_{\umm,\upp,\unn,\uqq}[w] ||^\#   \leq  C_\theta \left(\frac{16}{\rho} \right)^{L-1}  \prod_{l=1}^{L}
    \frac{ \|  w_{m_l + p_l, n_l + q_l} \|^\# }  { \sqrt{p_l ! q_l !}}  ,
\end{equation}
which follows directly from Lemma  \ref{codim:thm3first}.
We shall use the notation $z = E_\rho[w]^{-1}(\zeta)$ where $\zeta \in D_{1/2}$, see Lemma \ref{renorm:thm3}.

\vspace{0.5cm}

\noindent
\underline{Step 1:}  We have
$$
\| (\mathcal{R}_\rho w)_{ \geq 2} \|_\xi \leq    \frac{1}{2}   \| w_{\geq 2} \|_\xi .
$$

\vspace{0.5cm}

By the definition of $\widehat{w}_{M,N}$, \eqref{eq:renormalizedkernels01},   we find for $M + N \geq 2$,
\begin{eqnarray*}
\lefteqn{ \|    \widehat{w(z)}_{M,N}             \|^\# } \\
&\leq & \sum_{L=1}^\infty \sum_{ \substack{ (\umm,\upp,\unn,\uqq) \in \N_0^{4L}:  \\  |\umm| = M  , | \unn| = N  , m_l+p_l+n_l+q_l \geq 1 }}
 \rho^{|\umm|+|\unn|-1}
      \prod_{l=1}^L     \binom{ m_l + p_l }{ p_l } \binom{ n_l + q_l }{  q_l }    \| v_{\umm,\upp,\unn,\uqq}[w(z)] \|^\#  \nonumber \\
 \end{eqnarray*}
  Inserting this below and using the Estimate \eqref{eq:vestimate111} we find with $\tau := 16/\rho$,
\begin{eqnarray}
\lefteqn{ \| ((\mathcal{R}_\rho w) (\zeta))_{M+N  \geq 2} \|_\xi^\# } \nonumber \\
&& = \sum_{M+N \geq 2} \xi^{-(M+N)} \| \widehat{w(z)}_{M,N} \|^\#  \nonumber \\
&&\leq   \sum_{L=1}^\infty          \sum_{ \substack{ (\umm,\upp,\unn,\uqq)  \in \N_0^{4L}: \\
   |\umm| + | \unn| \geq 2 , m_l+p_l+n_l+q_l \geq 1} }
\rho^{-1}  \left(  2 \rho\right)^{|\umm|+|\unn|}   (2 \xi)^{-(|\umm|+|\unn|)}  C_\theta \tau^{L-1} \nonumber \\
 && \quad \times
      \prod_{l=1}^L  \left\{   \binom{ m_l + p_l }{p_l } \binom{ n_l + q_l}{  q_l }            \frac{\| w(z)_{m_l + p_l, n_l + q_l}  \|^\# }{\sqrt{p_l ! q_l !}}  \right\}  \nonumber \\
   &&\leq
   \frac{C_\theta}{16} [2 \rho]^2   \sum_{L=1}^\infty  \tau^{L}
     \sum_{  \substack{  (\umm,\upp,\unn,\uqq) \in \N_0^{4L}:  \\
   m_l+p_l+n_l+q_l \geq 1 }} \nonumber \\
   && \times
     \prod_{l=1}^L \left\{   \binom{ m_l + p_l}{ p_l } \binom{ n_l + q_l }{ q_l }      \xi^{p_l + q_l} 2^{-(m_l+n_l)}
                                        \xi^{-(m_l + p_l +n_l + q_l)}   \| w(z)_{m_l + p_l, n_l + q_l} \|^\#
  \right\} \nonumber \\
&& \leq \frac{C_\theta}{4} \rho^2
    \sum_{L=1}^\infty   \tau^L
    \left[   \sum_{  m+p+n+q \geq 1 }    \binom{ m + p }{ p } \binom{ n + q}{  q }      \xi^{p + q} 2^{-(m+n)}
                                        \xi^{-(m + p +n + q)}   \| w(z)_{m + p, n + q} \|^\# \right]^L \nonumber \\
 &&\leq \frac{C_\theta}{4} \rho^2
    \sum_{L=1}^\infty  \tau^L
    \left[   \sum_{  l+k \geq 1 }
   \xi^{-(l + k )}   \| w(z)_{l , k } \|^\# \right]^L \nonumber \\
  && \leq
    \frac{C_\theta}{4} \rho^2     \sum_{L=1}^\infty  \tau^{L}     \left( \| w(z)_{\geq 2} \|^\#_\xi \right)^L
   \nonumber \\
   && \leq
   8 C_\theta \rho \| w(z)_{\geq 2} \|_\xi^\# , \nonumber,
\end{eqnarray}
where in the third last inequality we used the binomial formula, \eqref{eq:binomial},  and we used
$\tau \| w_{\geq 2} \|_\xi \leq 1/2$ in the last inequality.

\vspace{0.5cm}

\noindent
\underline{Step 2:}
\begin{align*}
\sup_{\zeta \in D_{1/2}}\| \partial_r (\mathcal{R}_\rho w)(\zeta)_{0,0}  - 1 \|_\infty \leq \sup_{z \in D_{1/2} } \| \partial_r w(z)_{0,0} - 1 \|_\infty + \frac{1}{2}  \| w_{\geq 1} \|_\xi .
\end{align*}

Using the definition of $\widehat{w}_{0,0}$, \eqref{eq:renormalizedkernels00}, and
we find,
\begin{eqnarray}
 \|  \partial_r (\mathcal{R}_\rho w)(\zeta)_{0,0}   - 1 \|_\infty
&& \leq \|  \partial_r w(z)_{0,0}  - 1 \|_\infty + \rho^{-1} \sum_{L=2}^{\infty}
\sum_{\substack{ (\upp,\uqq) \in \N_0^{2L}: \\ p_l + q_l \geq  1 } } \|  v_{\underline{0},\upp,
\uzz,\uqq}[w(z)]  \|^\#  \nonumber \\
&& \leq \| \partial_r w(z)_{0,0} - 1 \|_\infty +  \rho^{-1} \sum_{L=2}^{\infty} C_\theta \tau^{L-1}
\sum_{\substack{ (\upp,\uqq) \in \N_0^{2L}: \\ p_l + q_l \geq  2 }     }
\prod_{l=1}^L  \frac{ \| w(z)_{p_l,q_l} \|^\#  }{\sqrt{p_l! q_l!}}  \nonumber  \\
&& \leq \| \partial_r w(z)_{0,0} - 1 \|_\infty  + \frac{C_\theta}{16} \sum_{L=2}^{\infty}  \left[\tau \xi^2 \right]^{L}
 \left[ \sum_{p+q \geq 2}  \xi^{-(p+q)}\| w(z)_{p,q} \|^\#  \right]^L   \nonumber \\
 && \leq \| \partial_r w(z)_{0,0} - 1 \|_\infty +  \frac{C_\theta}{16}  \xi^4 \sum_{L=2}^{\infty}    \left[ \tau  \| w(z)_{\geq 2} \|^\#_{\xi} \right]^{L} \nonumber \\
 && \leq \| \partial_r w(z)_{0,0} - 1 \|_\infty  +   \frac{C_\theta}{16}  \xi^4  \tau
\| w(z)_{\geq 2} \|^\#_{\xi}   \label{eq:kernelestimate2}
\end{eqnarray}
where in  the last estimate we used $    \tau \| w_{\geq 1} \|_{\xi}  \leq 1/2$.

\vspace{0.5cm}

\noindent
\underline{Step 3:}
\begin{align*}
\sup_{\zeta \in D_{1/2}} |( \mathcal{R}_\rho w)(\zeta)_{0,0}(0) + \zeta | &  \leq  \frac{1}{2}  \| w_{\geq 1} \|_\xi  .
\end{align*}

 \vspace{0.5cm}

 We estimate
\begin{align*}
|  ( \mathcal{R}_\rho w)(\zeta)_{0,0}(0) + \zeta | & \leq \rho^{-1} \sum_{L=2}^\infty
\sum_{\substack{ (\upp,\uqq) \in \N_0^{2L}: \\ p_l + q_l \geq  1 } } \|  v_{\underline{0},\upp,
\uzz,\uqq}[w(z)]  \|^\#    \nonumber     \\
&  \leq   \frac{C_\theta}{16}  \xi^4 \tau  \| w(z)_{\geq 1} \|^\#_{\xi}   \; , \label{eq:kernelestimate2}
\end{align*}
where in the last step we used  an estimate from Step 2.

\end{proof}

\section{Construction of Eigenvectors and Eigenvalues}
\label{sec:con}

In this section we show how the contraction property of Theorem \ref{codim:thm1} and the Feshbach property allows us to
recover the eigenvectors and eigenvalues of the initial operator. The main theorems of this section are Theorems \ref{thm:bcfsmain}
and \ref{thm:continsigma}. Theorem  \ref{thm:bcfsmain}, apart from the last sentence, is from \cite{BCFS03}. We follow the
proof given there, and isolate a few estimates which will be needed to prove the analyticity and continuity results of
Theorem   \ref{thm:main2}. The last sentence in Theorem  \ref{thm:bcfsmain}, has been shown
in \cite{GH09} but in a different way, due to the different representation of the spectral parameter in \cite{GH09}.

Throughout this section we assume the following hypothesis.

\vspace{0.5cm}

\noindent
(R) \quad  Let $\rho, \xi, \epsilon_0$ be positive numbers such that the contraction property \eqref{codim:thm1:eq}
holds and $\rho \leq  1/4$, $\xi \leq 1/4$ and $\epsilon_0 \leq \rho/8$.

\vspace{0.5cm}

We note that many statements only require $0 < \rho \leq 1/2$ and $0 < \xi \leq 1/2$. But we will need $0 < \rho \leq 1/4$ in Lemma  \ref{lem:limitofe}, below,
and we will need $0 < \xi \leq 1/4$ for the statement about c-continuity in Theorem  \ref{thm:analytbasic2}.
Hypothesis (R)
allows us to iterate the renormalization transformation as follows,
\begin{equation} \nonumber
\mathcal{B}_0(\sfrac{1}{2}\epsilon_0 , \sfrac{1}{2} \epsilon_0 , \sfrac{1}{2} \epsilon_0 )  \stackrel{\mathcal{R}_\rho}{\longrightarrow}
\mathcal{B}_0( [ \sfrac{1}{2}+ \sfrac{1}{4} ] \epsilon_0 , \sfrac{1}{4} \epsilon_0 , \sfrac{1}{4} \epsilon_0 )    \stackrel{\mathcal{R}_\rho}{\longrightarrow}    \cdots  \stackrel{\mathcal{R}_\rho}{\longrightarrow}
\mathcal{B}_0( \Sigma_{l=1}^n \sfrac{1}{2^l} \epsilon_0 , \sfrac{1}{2^n} \epsilon_0 , \sfrac{1}{2^n} \epsilon_0 )  .
\end{equation}
For  $w \in \mathcal{B}_0(\epsilon_0/2,\epsilon_0/2,\epsilon_0/2)$ and $n \in \N_0$, we define
\beqn \label{construct:eq4}
w^{(n)} := \mathcal{R}_\rho^n(w) \in \mathcal{B}_0(\epsilon_0, 2^{-n-1}\epsilon_0, 2^{-n-1}\epsilon_0) \; .
\eeqn
We introduce the definitions
\begin{align*}
E_{n,\rho}[w](z) &:= E_\rho[w^{(n)}](z) = -  \rho^{-1}  \langle \Omega, H( w^{(n)}(z)) \Omega \rangle  \\
U_n[w] &:= U[w^{(n)}] = \{ z \in D_{1/2} | | E_n(z) | < \rho/2 \} .
\end{align*}
By Lemma  \ref{renorm:thm3} the map
$$
J_n[w] := E_{n,\rho}[w] : U_n[w] \to D_{1/2}  \quad , \quad z \mapsto E_{n,\rho}[w](z)
$$ is an analytic bijection and
  $J_n[w]^{-1} : D_{1/2} \to U_n[w] \subset D_{1/2}$. For
 $0 \leq n \leq m$, we define
\ben
e_{(n,m)}[w] := J_n[w]^{-1} \circ \cdots \circ J_m[w]^{-1}(0) \; .
\een
Lemma \ref{lem:limitofe} stated below  immediately implies that
the limit $e_{(n,\infty)}[w] := \lim_{m \to \infty} e_{(n,m)}[w]$ exists for all $n \in \N_0$.
\begin{lemma}  \label{lem:limitofe} Assume (R) and let $w \in \mathcal{B}_0(\epsilon_0/2, \epsilon_0/2, \epsilon_0/2)$. Then
\begin{equation} \label{eq:defofe}
| e_{(n,m)}[w] - e_{(n,m + k)}[w] | \leq \left(\frac{4\rho}{3} \right)^{m-n} .
\end{equation}
\end{lemma}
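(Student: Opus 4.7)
The plan is to show that each $J_l[w]^{-1}\colon D_{1/2}\to D_{1/2}$ is a contraction with Lipschitz constant $\le 4\rho/3$, and then iterate. First I would verify that Lemma~\ref{renorm:thm3} applies uniformly in $l$. By \eqref{construct:eq4} and hypothesis~(R), the iterate $w^{(l)}$ lies in $\mathcal{B}_0(\epsilon_0, 2^{-l-1}\epsilon_0, 2^{-l-1}\epsilon_0)$, and since $2^{-l-1}\epsilon_0\le\epsilon_0\le\rho/8$, this ball is contained in $\mathcal{B}(\cdot,\rho/8,\cdot)$; hence for every $z\in D_{1/2}$ the hypotheses of Lemma~\ref{renorm:thm3} are satisfied. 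That lemma tells us $J_l[w]=\rho^{-1}E[w^{(l)}]$ is an analytic bijection $U_l[w]\to D_{1/2}$ with $U_l[w]\subset D_{1/2}$, and that $|\partial_z E[w^{(l)}](z)-1|\le 4\rho(4-5\rho)^{-2}$ on $U_l[w]$.

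Next, I would extract the contraction constant. The elementary quadratic inequality $25\rho^2-56\rho+16\ge 0$ holds for $\rho\le 1/4$ and rearranges to $4\rho(4-5\rho)^{-2}\le 1/4$, so $|\partial_z E[w^{(l)}]|\ge 3/4$ on $U_l[w]$. Consequently $|J_l[w]'|\ge 3/(4\rho)$ there, and the inverse function theorem gives $|(J_l[w]^{-1})'(\zeta)|\le 4\rho/3$ for every $\zeta\in D_{1/2}$. Since $D_{1/2}$ is convex and $J_l[w]^{-1}(D_{1/2})\subset U_l[w]\subset D_{1/2}$, integrating the derivative along straight segments yields
\[
|J_l[w]^{-1}(\zeta_1)-J_l[w]^{-1}(\zeta_2)|\le \tfrac{4\rho}{3}\,|\zeta_1-\zeta_2|,\qquad \zeta_1,\zeta_2\in D_{1/2}.
\]

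To conclude, set $\Psi:=J_n[w]^{-1}\circ\cdots\circ J_{m-1}[w]^{-1}$ (the empty composition if $m=n$), $a:=J_m[w]^{-1}(0)$, and $b:=J_m[w]^{-1}\circ\cdots\circ J_{m+k}[w]^{-1}(0)$. Then $e_{(n,m)}[w]=\Psi(a)$ and $e_{(n,m+k)}[w]=\Psi(b)$, with $a,b\in D_{1/2}$ so $|a-b|\le 1$. The composition $\Psi$ consists of $m-n$ factors, each mapping $D_{1/2}$ into itself with Lipschitz constant $\le 4\rho/3$, so its Lipschitz constant is at most $(4\rho/3)^{m-n}$; this gives \eqref{eq:defofe}. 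The only real input is the derivative estimate, and the main obstacle is just verifying that the quantitative constants from Lemma~\ref{renorm:thm3} really produce the factor $4\rho/3$ in the regime $\rho\le 1/4$; once that is in hand, the rest is routine bookkeeping about iterated contractions.
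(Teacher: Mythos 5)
Your proof is correct and follows essentially the same approach as the paper: you obtain the uniform derivative bound $|(J_l[w]^{-1})'|\le 4\rho/3$ on $D_{1/2}$ from Lemma~\ref{renorm:thm3} together with the hypothesis $\rho\le 1/4$, and then iterate the Lipschitz estimate using convexity of $D_{1/2}$ and the chain rule. The only cosmetic difference is that you spell out the elementary check $4\rho(4-5\rho)^{-2}\le 1/4$ and organize the composition by isolating $\Psi=J_n^{-1}\circ\cdots\circ J_{m-1}^{-1}$, whereas the paper leaves the final $J_m^{-1}$ inside the common prefix; both give the stated bound.
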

\begin{proof} For notational simplicity we drop the $w$ dependence in the proof.
By Lemma \ref{renorm:thm3},
\begin{equation} \label{eq:diffJ}
|\rho \partial_z J_n(z) - 1 | \leq 1/4  \quad , \quad \forall z \in U_n \; .
\end{equation}
This implies by the inverse function theorem that
\beqn \label{construct:eq5}
|  \partial_\zeta J^{-1}_n(\zeta) | \leq \frac{4 \rho}{3}   \quad  , \quad \forall \zeta \in D_{1/2} \; .
\eeqn
An iterated application of  \eqref{construct:eq5}, the convexity of $D_{1/2}$, and  the chain rule yields
\begin{align*}
|e_{(n,m)} - e_{(n,m+k)} | &= |  J^{-1}_n \circ \cdots \circ J^{-1}_m(0) -   J^{-1}_n \circ \cdots \circ J^{-1}_m  ( J^{-1}_{m+1}  \circ \cdots \circ J^{-1}_k(0)) | \\
\leq \left(\frac{4 \rho}{3}\right)^{m-n} \frac{1}{2} \; .
\end{align*}
\end{proof}
Next we introduce some notation. Let
\begin{align*}
H_n[w] &:= H(w^{(n)}\left(e_{(n,\infty)}[w])\right) \\
T_n[w] &:=   w_{0,0}^{(n)}(e_{(n,\infty)}[w])(H_f)  \\
Q_n [w] &:= Q_{\chi_\rho}(H_{n}[w], T_{n}[w])
\end{align*}
For $n,m \in \N_0$ with $n\leq m$ we define vectors $\psi_{(n,m)}[w] \in \HH_{\rm red}$ by setting $\psi_{(n,n)}[w] = \Omega$ and
$$
\psi_{(n,m)}[w] = Q_{n}[w] \Gamma_\rho^* Q_{n+1}[w] \Gamma_\rho^* \cdots Q_{m-1}[w] \Omega .
$$
Lemma \ref{lem:limitofe} stated below  immediately implies that this sequence converges as $m \to \infty$, i.e.,
the limit
\begin{equation} \label{eq:defofpsi}
\psi_{(n,\infty)}[w] := \lim_{m \to \infty} \psi_{(n,m)}[w]
\end{equation}
exists for all $n \in \N_0$ .

\begin{lemma} \label{lem:limitofpsi}  Assume (R) and  let   $w \in \mathcal{B}_0(\epsilon_0/2,\epsilon_0/2,\epsilon_0/2)$. Then
\beqn \label{construct:eq7}
\| \psi_{(n,m+1)}[w] - \psi_{(n,m)}[w] \| \leq  2^{-m} \frac{16 \epsilon_0}{\rho} \exp[2^{-n}32 \epsilon_0 \rho^{-1}] \; .
\eeqn
\end{lemma}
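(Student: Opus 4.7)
The plan is to estimate the telescoped difference $\psi_{(n,m+1)}-\psi_{(n,m)}$ in terms of the elementary bounds $\|Q_k\|$ and $\|Q_k\Omega-\Omega\|$, and then use the contraction-type factor $2^{-k}$ coming from $w^{(k)}\in\mathcal{B}_0(\epsilon_0,2^{-k-1}\epsilon_0,2^{-k-1}\epsilon_0)$ plus the elementary inequality $\prod_k(1+a_k)\le\exp\sum_k a_k$ to get the claimed bound.

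First, I would rewrite, using $\Gamma_\rho^*\Omega=\Omega$,
\[
\psi_{(n,m+1)}[w]-\psi_{(n,m)}[w] \;=\; Q_n\Gamma_\rho^*\,Q_{n+1}\Gamma_\rho^*\,\cdots\,Q_{m-1}\Gamma_\rho^*\bigl(Q_m\Omega-\Omega\bigr),
\]
where I suppress the $w$-dependence. Since $\Gamma_\rho^*$ is unitary, the norm of the right-hand side is bounded by $\prod_{k=n}^{m-1}\|Q_k\|\cdot\|Q_m\Omega-\Omega\|$. Next, using the definition $Q_k=\chi_\rho-\overline{\chi}_\rho H_{k,\overline{\chi}_\rho}^{-1}\overline{\chi}_\rho W_k\chi_\rho$ together with $H_f\Omega=0$ (so $\chi_\rho\Omega=\Omega$), I get the key identity $Q_k\Omega-\Omega=-\overline{\chi}_\rho H_{k,\overline{\chi}_\rho}^{-1}\overline{\chi}_\rho W_k\Omega$.

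The pointwise bounds come from two ingredients. By \eqref{eq:opestimatgeq1} together with $w^{(k)}(e_{(k,\infty)})\in\mathcal{B}^\#(\epsilon_0,\rho/2,2^{-k-1}\epsilon_0)$ and $\xi\le 1/4$,
\[
\|W_k\|_{\rm op}\;\le\;\xi\,\|w^{(k)}_{\ge 1}(e_{(k,\infty)})\|_\xi^\#\;\le\;\xi\cdot 2^{-k-1}\epsilon_0\;\le\;2^{-k-3}\epsilon_0.
\]
Applying Lemma~\ref{renorm:thm1} (and the Neumann series for $(1+T_k^{-1}\overline{\chi}_\rho W_k\overline{\chi}_\rho)^{-1}$) gives $\|H_{k,\overline{\chi}_\rho}^{-1}\overline{\chi}_\rho\|\le 3\cdot\|T_k^{-1}\overline{\chi}_\rho\|\le 16/\rho$. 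Combining these two estimates yields simultaneously
\[
\|Q_k\Omega-\Omega\|\;\le\;\tfrac{16}{\rho}\|W_k\|_{\rm op}\;\le\;2^{-k+1}\epsilon_0/\rho,\qquad \|Q_k\|\;\le\;1+2^{-k+1}\epsilon_0/\rho.
\]
Multiplying and using $\prod_{k=n}^{m-1}(1+2^{-k+1}\epsilon_0/\rho)\le\exp\!\bigl(\sum_{k\ge n}2^{-k+1}\epsilon_0/\rho\bigr)=\exp(2^{-n+2}\epsilon_0/\rho)$ produces
\[
\|\psi_{(n,m+1)}-\psi_{(n,m)}\|\;\le\;2^{-m+1}\,\tfrac{\epsilon_0}{\rho}\,\exp\!\bigl(2^{-n+2}\epsilon_0/\rho\bigr),
\]
which is comfortably stronger than the stated bound \eqref{construct:eq7}.

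There is no real obstacle, only one point that requires attention: to invoke Lemma~\ref{renorm:thm1} at the spectral parameter $e_{(k,\infty)}[w]$ one must first verify that $e_{(k,\infty)}\in U_k[w]$ and consequently that $w^{(k)}(e_{(k,\infty)})\in\mathcal{B}^\#(\rho/8,\rho/2,\rho/8)$. The inclusion $e_{(k,\infty)}\in U_k[w]$ follows from the contraction estimate \eqref{construct:eq5} iterated along the Cauchy sequence (exactly the mechanism of Lemma~\ref{lem:limitofe}), together with the observation $E_{k,\rho}[w](e_{(k,\infty)})=e_{(k+1,\infty)}\in D_{5\rho/8}\subset D_{1/2}$. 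The ball inclusion then follows from Remark~\ref{rem:gartenhaag} and hypothesis~(R) ($\epsilon_0\le\rho/8$). The only remaining bookkeeping is to check that the iterated Feshbach pair condition is met at each step, which is ensured by \eqref{construct:eq4} and the same remark.
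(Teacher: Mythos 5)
Your proof is correct and takes essentially the same route as the paper: telescope the difference to isolate $(Q_m-\chi_\rho)\Omega$, bound $\|Q_k\|$ and $\|(Q_m-\chi_\rho)\Omega\|$ via the operator bound on $W_k$ from \eqref{eq:opestimatgeq1} and the invertibility estimate of Lemma~\ref{renorm:thm1}, then control the product with $\prod_k(1+a_k)\le\exp\sum_k a_k$. Your constants are slightly sharper (you retain the factor $\xi\le 1/4$ in bounding $\|W_k\|$, and you invoke Lemma~\ref{renorm:thm1} plus a Neumann series directly rather than re-deriving the resolvent bound from \eqref{eq:basicwest}), but the mechanism is identical, and your closing remark about verifying $e_{(k,\infty)}\in U_k[w]$ before applying Lemma~\ref{renorm:thm1} is a valid, if implicit, detail that the paper also relies on.
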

\begin{proof} For notational compactness we drop the $w$ dependence in the proof.
Note that
\beqn \label{construct:eq5b}
\psi_{(n,m+1)} - \psi_{(n,m)} = Q_{n} \Gamma_\rho^* Q_{n+1} \Gamma_\rho^* \cdots Q_{m-1} \Gamma_\rho^*( Q_m - \chi_\rho) \Omega,
\eeqn
where we used $\Gamma_\rho^* \chi_\rho \Omega = \Omega$.
Next we set $W_n := H_n - T_n$  and  estimate $Q_n - \chi_\rho$,
\begin{align}
\| Q_n  - \chi_\rho \| &\leq \|  \chib_\rho ( T_n + \chib_\rho W_n \chib_\rho )^{-1} \chib_\rho  W_n \chi_\rho \|  \nonumber \\
& \leq \left( \rho/8 - \| W_n \| \right)^{-1} \| W_n \|  \nonumber \\
&\leq \frac{ 16\epsilon_0}{\rho} 2^{-n} \; ,  \label{construct:eq6}
\end{align}
where in the second inequality we used that $|T_n(r)| \geq \rho/8$ if $r \in [\frac{3}{4} \rho,1 ]$, see \eqref{eq:basicwest}
and in the last inequality we used $\|W_n \| \leq 2^{-n-1} \epsilon_0 \leq \rho/16$, see \eqref{eq:opestimatgeq1}.
Eq. \eqref{construct:eq6} implies
$$
\| Q_n \| \leq 1 +  \frac{ 16\epsilon_0}{\rho} 2^{-n} \; .
$$
Using this and \eqref{construct:eq6} to estimate the difference  \eqref{construct:eq5b}, we find
\beqn \label{construct:eq7b}
\| \psi_{(n,m+1)} - \psi_{(n,m)} \| \leq 2^{-m} \frac{16 \epsilon_0}{\rho}
 \prod_{j=n}^{m-1} \left[ 1 + 2^{-j }{16 \epsilon_0}/{\rho} \right]  .
\eeqn
The estimate of the lemma follows from  $\prod_{j=0}^\infty (1 + \lambda_j) \leq \exp[\sum_{j=0}^\infty \lambda_j ]$, which holds for $\lambda_j \geq 0$.
\end{proof}
We are now ready to state the main theorem of this section.
\begin{theorem} \label{thm:bcfsmain} Assume Hypothesis (R).
Let $w \in \mathcal{B}_0(\epsilon_0/2,\epsilon_0/2,\epsilon_0/2)$. Then the complex number $e_{(0,\infty)}[w] \in D_{1/2}$ defined
in \eqref{eq:defofe} is an eigenvalue of $H(w)$, in the sense that
$$
{\rm dim} \ker \{ H(w(e_{(0,\infty)})[w]) \} \geq 1 \; .
$$
Moreover, the vector $\psi_{(0,\infty)}[w]$ defined in \eqref{eq:defofpsi} is a corresponding eigenvector, i.e., is a non-zero element of
$\ker \{ H(w(e_{(0,\infty)}[w]) \}$. We have the bound $\| \psi_{(0,\infty)}[w]\| \leq 4 e^4$.
If $w$ is symmetric and  $-1/2 <  z < e_{(0,\infty)}[w]$, then $H(w(z))$ is bounded invertible.
\end{theorem}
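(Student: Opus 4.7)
My plan is to prove the four assertions in order: the existence of an eigenvector (which gives both the kernel dimension claim and identifies $\psi_{(0,\infty)}[w]$ as a kernel element), its non-vanishing, the norm bound, and finally the bounded invertibility of $H(w(z))$ for symmetric $w$ and real $z<e_{(0,\infty)}[w]$. The backbone is the intertwining coming from Lemma~\ref{lem:feshbasic} together with the scaling built into $\mathcal{R}_\rho$.

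\textbf{Eigenvector and eigenvalue.} Write $\phi_{n,m}:=\psi_{(n,m)}[w]$ and set $H_n:=H_n[w]$, $T_n:=T_n[w]$, $Q_n:=Q_n[w]$. Lemma~\ref{lem:feshbasic} gives $H_nQ_n=\chi_\rho F_{\chi_\rho}(H_n,T_n)$, while the definition of $\mathcal{R}_\rho$ together with $J_n(e_{(n,\infty)}[w])=e_{(n+1,\infty)}[w]$ yields $\Gamma_\rho F_{\chi_\rho}(H_n,T_n)\Gamma_\rho^{\,*}\upharpoonright\HH_{\rm red}=\rho H_{n+1}$. Combined with the recursion $\phi_{n,m}=Q_n\Gamma_\rho^{\,*}\phi_{n+1,m}$ these give $H_n\phi_{n,m}=\rho\,\chi_\rho\Gamma_\rho^{\,*}H_{n+1}\phi_{n+1,m}$. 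Iterating $m-n$ times, using $\phi_{m,m}=\Omega$ and $\|\chi_\rho\Gamma_\rho^{\,*}\|\le 1$, one obtains $\|H_n\phi_{n,m}\|\le \rho^{m-n}\|H_m\Omega\|$. From $w^{(m)}\in\mathcal{B}_0(\epsilon_0,2^{-m-1}\epsilon_0,2^{-m-1}\epsilon_0)$ and the easy bound $|e_{(m,\infty)}[w]|\le 2^{-m}\epsilon_0$ (summing \eqref{eq:defofe} starting from $e_{(m,m)}=0$), one shows $\|H_m\Omega\|$ stays bounded, so $H_n\phi_{n,m}\to 0$. Since $\phi_{0,m}\to\psi_{(0,\infty)}[w]$ by Lemma~\ref{lem:limitofpsi} and $H_0$ is bounded, we conclude $H_0\psi_{(0,\infty)}[w]=0$.

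\textbf{Non-vanishing and norm bound.} Since $\chi_1(0)=1$ and hence $\chib_1(0)=0$, we have $\chi_\rho\Omega=\Omega$ and $\chib_\rho\Omega=0$. Thus, from $Q_n=\chi_\rho-\chib_\rho(T_n+\chib_\rho W_n\chib_\rho)^{-1}\chib_\rho W_n\chi_\rho$ we get $\langle\Omega,Q_n\phi\rangle=\langle\chi_\rho\Omega,\phi\rangle-\langle\chib_\rho\Omega,(\cdots)\phi\rangle=\langle\Omega,\phi\rangle$ for every $\phi$. Combined with $\Gamma_\rho^{\,*}\Omega=\Omega$, the inner product telescopes: $\langle\Omega,\phi_{0,m}\rangle=\langle\Omega,Q_1\Gamma_\rho^{\,*}\cdots Q_{m-1}\Omega\rangle=\cdots=1$, so $\langle\Omega,\psi_{(0,\infty)}[w]\rangle=1$ and $\psi_{(0,\infty)}[w]\neq 0$. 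The norm bound follows from the telescoping estimate $\|\psi_{(0,\infty)}[w]\|\le 1+\sum_{m\ge 0}\|\phi_{0,m+1}-\phi_{0,m}\|$ together with Lemma~\ref{lem:limitofpsi} evaluated at $n=0$ and the standing assumption $\epsilon_0\le \rho/8$ (which gives $16\epsilon_0/\rho\le 2$ and $\exp(32\epsilon_0/\rho)\le e^4$).

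\textbf{Invertibility for $z<e_{(0,\infty)}[w]$.} Suppose $w$ is symmetric and $z\in(-1/2,e_{(0,\infty)}[w])\cap\R$. Symmetry is preserved by $\mathcal{R}_\rho$ (Theorem~\ref{thm:maingenerala}), so every $w^{(n)}$ is symmetric; hence each $J_n$ maps reals to reals and the iterates $z^{(0)}:=z$, $z^{(n+1)}:=J_n(z^{(n)})$ stay real. By Lemma~\ref{renorm:thm3} and $\rho\le 1/4$, the real derivative satisfies $\partial_zJ_n\ge 3$ on $U_n\cap\R$, so $\delta_n:=z^{(n)}-e_{(n,\infty)}[w]$ remains negative with $|\delta_n|\ge 3^n|\delta_0|$, while $|e_{(n,\infty)}[w]|\le 2^{-n}\epsilon_0\to 0$. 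I apply the isospectrality of Theorem~\ref{thm:fesh}(i) with $V=\HH_{\rm red}$ (plus the unitary $\Gamma_\rho$): for each $k$ where the Feshbach pair hypothesis of Lemma~\ref{renorm:thm1} holds at $w^{(k)}(z^{(k)})$, $H(w^{(k)}(z^{(k)}))$ is bounded invertible iff $H(w^{(k+1)}(z^{(k+1)}))$ is. The hypothesis of Lemma~\ref{renorm:thm1} reduces essentially to $|z^{(k+1)}|\le\rho/2$, so the iteration may be performed up to the first $n^*$ with $|z^{(n^*)}|$ in a moderate range. At that stage, the kernel estimates give $w^{(n^*)}_{0,0}(z^{(n^*)},r)\ge |z^{(n^*)}|-\epsilon_0\ge c>0$ uniformly in $r\in[0,1]$, while $\|W_{n^*}\|\le 2^{-n^*-1}\epsilon_0<c$; as $T_{n^*}$ is self-adjoint with $T_{n^*}\ge c$, a Neumann series argument yields invertibility of $H(w^{(n^*)}(z^{(n^*)}))$. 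Backtracking the chain of equivalences proves $H(w(z))$ is bounded invertible.

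\textbf{Main obstacle.} The last part is the delicate one. One must simultaneously keep the real iterates $z^{(k)}$ in the regime where the Feshbach pair hypothesis persists ($|z^{(k)}|\le\rho/2-2^{-k-1}\epsilon_0$) long enough for $|\delta_n|$ to grow into the regime where the $n^*$-th kernel is manifestly positive. Reconciling the expansion rate of $J_n$ with the shrinking admissible radii is where Hypothesis (R), and in particular the requirement $\rho\le 1/4$ from Lemma~\ref{renorm:thm3} giving $\partial_zJ_n\ge 3$, enters essentially.
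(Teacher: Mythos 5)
Your construction of the eigenvector is the paper's own argument: the intertwining $H_nQ_n\Gamma_\rho^{\,*}=\rho\,\Gamma_\rho^{\,*}\chi_1 H_{n+1}$ iterated to give $\|H_n\psi_{(n,m)}[w]\|\le\rho^{m-n}\|H_m\Omega\|$, plus Lemma~\ref{lem:limitofpsi}; but you deviate, legitimately, in two places. For the non-vanishing, the paper shows $\|\psi_{(n_0,\infty)}-\Omega\|$ is small for large $n_0$ and transfers non-vanishing down through the kernels via Theorem~\ref{thm:fesh}(ii), whereas you use $\chib_\rho\Omega=0$, $\chi_\rho\Omega=\Omega$, $\Gamma_\rho^{\,*}\Omega=\Omega$ to telescope $\langle\Omega,\psi_{(0,m)}[w]\rangle=1$; this is cleaner and gives $\|\psi_{(0,\infty)}[w]\|\ge 1$ for free. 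For the invertibility on $(-\tfrac12,e_{(0,\infty)}[w])$ you push the real spectral parameter forward, $z^{(k+1)}=J_k(z^{(k)})$, stopping at the first $n^*$ where the Feshbach-pair condition fails and concluding there by positivity of $T_{n^*}$ and a Neumann series, while the paper proves coercivity of $H(w^{(n)}(\zeta))$ for every $n$ on the fixed interval $\zeta\in(-\tfrac12,-\tfrac{3}{16}\rho]$ and pulls back through $K_N=J_0^{-1}\circ\cdots\circ J_{N-1}^{-1}$ with the covering argument \eqref{eq:tedious1}. Both organizations rest on the same three facts (isospectrality of each renormalization step, positivity of the diagonal kernel at sufficiently negative real $\zeta$, and the expansion $\partial_zJ_n\ge\tfrac{3}{4\rho}\ge 3$ on $U_n\cap\R$ coming from $\rho\le\tfrac14$); your forward version trades the covering argument for a termination argument, which does close: if $z^{(k)}\in U_k$ for all $k$, then $|z^{(k)}-e_{(k,\infty)}[w]|\ge 3^{k}|z-e_{(0,\infty)}[w]|$ contradicts $U_k\subset D_{5\rho/8}$, and at exit $|E_{n^*}(z^{(n^*)})|>\rho/2$ forces $|z^{(n^*)}|\ge 3\rho/8>\epsilon_0\ge|e_{(n^*,\infty)}[w]|$, hence $z^{(n^*)}<0$ and your positivity step applies. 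Three small slips, none fatal: the Feshbach condition at stage $k$ is $|E_k(z^{(k)})|\le\rho/2$, i.e.\ $|z^{(k+1)}|\le\tfrac12$, not $|z^{(k+1)}|\le\rho/2$; the bound $|e_{(m,\infty)}[w]|\le 2^{-m}\epsilon_0$ does not follow from \eqref{eq:defofe} together with $e_{(m,m)}=0$ (one has $e_{(m,m)}=J_m^{-1}(0)\neq 0$ in general), though it does follow by iterating $|J_m^{-1}(\zeta)|\le\rho|\zeta|+2^{-m-1}\epsilon_0$, and in fact only $|e_{(n,\infty)}[w]|\le\epsilon_0$ is needed; and your bookkeeping yields $\|\psi_{(0,\infty)}[w]\|\le 1+4e^4$ rather than $4e^4$ --- use instead $\|\psi_{(0,m)}[w]\|\le\prod_{j\ge 0}\bigl(1+2^{-j}16\epsilon_0/\rho\bigr)\le e^{32\epsilon_0/\rho}\le e^4$.
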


\begin{proof} For compactness we suppress  the $w$ dependence in the proof.
We show that $\psi_{(0,\infty)}$ is a nonzero vector, which is in the kernel of  $H(w(e_{(0,\infty)})$.
By \eqref{construct:eq7} we have the norm estimate
\begin{equation} \label{eq:uniformgsestimate}
\| \psi_{(n,\infty)} - \Omega \| = \| \psi_{(n,\infty)} - \psi_{(n,n)} \| \leq  2^{-n} \frac{32 \epsilon_0}{\rho} \exp[2^{-n} 32 \epsilon_0 \rho^{-1} ] \; ,
\end{equation}
This  implies that $\psi_{(n_0,\infty)} \neq 0$ provided $n_0$ is sufficiently large.
Next we show that $\psi_{(n,\infty)}$ is in the kernel of $H_{n}$. To this end, we shall iterate  the following identity
\beqn  \nonumber
H_{n-1} Q_{n-1} \Gamma_\rho^* = \rho \Gamma_\rho^* \chi_1(H_f) H_n \; ,
\eeqn
which is a consequence of identities involving the Feshbach operator. For $n \leq m$,
\begin{align}
H_n \psi_{(n,m)} &:= ( H_n Q_n \Gamma_\rho^* ) ( Q_{n+1} \Gamma_\rho^* \cdots Q_{m-1} \Omega )  \nonumber \\
&\ =\rho \Gamma_\rho^*\chi_1 (H_{n+1} Q_{n+1} \Gamma_\rho^*) (Q_{n+2} \Gamma_\rho^* \cdots Q_{m-1} \Omega ) \nonumber  \\
&\ \  \vdots \nonumber  \\
&\ =  \rho^{m-n} (\Gamma_\rho^* \chi_1)^{m-n} H_{m} \Omega \; . \label{construct:eq8}
\end{align}
Since $H_n$ is a bounded operator on $\HH_{\rm red}$ the left hand side converges  to $H_n \psi_{(n,\infty)}$ as $m\to \infty$.
Also  the right hand side converges to 0 as $m \to \infty$, since by  \eqref{construct:eq4}
$$
\| H_m \Omega  \|  \leq  {\rm const.}
$$
and there is an overall factor $\rho^{m-n}$. Thus taking the limit as $m$ tends to infinity  in  \eqref{construct:eq8}
yields  for all $n \in \N_0$
$$
H_n \psi_{(n,\infty)} = 0 \; .
$$
In particular we have shown,  $H_{n_0} \psi_{(n_0,\infty)} =0$ and $\psi_{(n_0,\infty)} \neq 0$.
A repeated application of  the Feshbach property implies that $\psi_{(0,\infty)} \neq 0$ and
$H_0 \psi_{(0,\infty)} = 0$. The bound on  $\psi_{(0,\infty)}$ follows from Lemma \ref{lem:limitofpsi} and $\epsilon_0 \leq \rho/8$.

Now we show  the statement about symmetric kernels $w$. Thus let  $w$ be symmetric. Then all  $w^{(n)}$ are also symmetric, by Theorem \ref{thm:maingenerala}.
Let $- \frac{1}{2} <  \zeta \leq  - \frac{3}{16} \rho$. Then we estimate,  with $E_{n}(\zeta) := - \langle \Omega, H(w^{(n)}(\zeta) ) \Omega \rangle$,
\begin{eqnarray*}
\lefteqn{ \langle \varphi , H(w^{(n)}(\zeta)) \varphi \rangle } \\
& = &  \left\langle \varphi , \left( T[w^{(n)}(\zeta)] + E_{n}(\zeta) -\zeta  + \zeta - E_{n}(\zeta)  + W[w^{(n)}(\zeta)] \right) \varphi \right\rangle  \\
& \geq & \langle \varphi , ( T[w^{(n)}(\zeta)] + E_{n}(\zeta) )  \varphi \rangle - \zeta \| \varphi \|^2 - |\zeta - E_{n}(\zeta) | \| \varphi \|^2  - |  \langle \varphi ,  W[w^{(n)}(\zeta)]  ) \varphi \rangle |  \\
& \geq & \left(  \frac{3}{16}  - \frac{1}{16} - \frac{1}{16}  \right) \rho \| \varphi \|^2  = \frac{1}{16}\rho \| \varphi \|^2 ,
\end{eqnarray*}
where the first term in the second line is non-negative  since $\| \partial_r w^{(n)}_{0,0} - 1 \| \leq 1/2$ and $w^{(n)}$ is symmetric, and the last term in the second line is estimated using \eqref{eq:opestimatgeq1}.
Applying Theorem \ref{thm:fesh} iteratively, we find that $H(w(z))$ is bounded invertible if
$ z \in K_n  ( (-\frac{1}{2}, - \frac{3}{16} \rho] ) $,
where we have set $K_n := J^{-1}_0 \circ \cdots \circ J^{-1}_{n-1}$ if $n \geq 1$ and $K_0 := {\rm id}$. It follows that
 $H(w(z))$ is bounded invertible if
 $$
z \in  I_N := \bigcup_{n=0}^N K_n \left( (-\frac{1}{2}, -\frac{3}{16}\rho ] \right) ,
 $$
 for some $N \in \N$.
Below we will show that
\begin{equation} \label{eq:tedious1}
I_N \supset (- \frac{1}{2}, K_N( -\frac{3}{16}\rho ) ] .
\end{equation}
In view of estimate \eqref{construct:eq5} and the definition of $e_{(0,\infty)}$ we have  $\lim_{N \to \infty} K_N(  -\frac{3}{16}\rho ) = e_{(0,\infty)}$.
Thus \eqref{eq:tedious1}  will imply that $H(w(z))$ is bounded invertible for all $z \in (-\frac{1}{2}, e_{(0,\infty)})$.
To show \eqref{eq:tedious1} we first note that $J_n^{-1}: D_{1/2} \to U_n$ is bijective, differentiable, maps real numbers to real numbers, since $w^{(n)}$ is symmetric, and is increasing
because of \eqref{eq:diffJ}. Note that by  \eqref{construct:eq5} $J_n^{-1}$ extends continuously to the boundary of $D_{1/2}$. It follows that $U_n \cap \R = (J_n^{-1}(-1/2), J_n^{-1}(1/2) )$. Since $D_{\frac{3}{8} \rho} \subset U_n \subset D_{1/2}$ (see Lemma \ref{renorm:thm3})
we conclude that
$$
- \frac{1}{2} \leq J_n^{-1}(-1/2) \leq -\frac{3}{16} \rho .
$$
This implies that for any $b \in (-1/2,1/2)$, we have
$$
 \left( (-\frac{1}{2}, - \frac{3}{16} \rho] \cup J_n^{-1} ( (-\frac{1}{2}, b ]) \right) \supset   (-\frac{1}{2}, J_n^{-1}(b)] .
$$
Iterating this relation  one easily shows \eqref{eq:tedious1}.
\end{proof}

The next theorem states in what sense  analytic kernels lead to analytic  eigenvalues and eigenvectors. It also relates
 c-continuous kernels to continuous eigenvalues and eigenvectors.

\begin{theorem}  \label{thm:continsigma} Assume Hypothesis (R).
Let $S$ be an open  subset of $\C^\nu$ (a topological space). Suppose
\begin{eqnarray*}
w(\cdot, \cdot) : &&S \times D_{1/2} \to \WW_\xi^\# \\
&&(s,z) \mapsto w(s,z)
\end{eqnarray*}
is an analytic (a c-continuous) function such that
 $w(s)(\cdot) := w(s, \cdot)$ is in $\mathcal{B}_0(\epsilon_0/2,\epsilon_0/2,\epsilon_0/2)$.
Then $s \mapsto  e_{(0,\infty)}[w(s)]$ and  $s \mapsto \psi_{(0,\infty)}[w(s)]$ are analytic (continuous) functions.
\end{theorem}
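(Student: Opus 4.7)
The strategy is to establish the desired regularity level-by-level on the sequence $w^{(n)}$, transfer it to the sequence of approximations $e_{(0,m)}$ and $\psi_{(0,m)}$, and then pass to the limit using the uniform convergence bounds already proved in Lemmas \ref{lem:limitofe} and \ref{lem:limitofpsi}. First I would iterate Theorem \ref{thm:analytbasic2}: by induction on $n$, the map $(s,z) \mapsto w^{(n)}(s,z) := (\mathcal{R}_\rho^n w(s))(z)$ is analytic (c-continuous) from $S \times D_{1/2}$ to $\mathcal{W}_\xi^\#$, with range contained in $\mathcal{B}_0(\epsilon_0,2^{-n-1}\epsilon_0,2^{-n-1}\epsilon_0)$ by \eqref{construct:eq4}, which is precisely what is needed to apply the next renormalization step. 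From the identity $E_{n,\rho}[w(s)](z) = -\rho^{-1}\langle \Omega, H(w^{(n)}(s,z))\Omega\rangle$ combined with Lemma \ref{lem:pcontop} (using \eqref{eq:opestimatgeq12} for uniform boundedness) one obtains that $(s,z)\mapsto E_{n,\rho}[w(s)](z)$ is analytic (continuous). The implicit-function argument used in the proof of Theorem \ref{thm:analytbasic2}, based on \eqref{eq:implicitE} and the non-vanishing derivative bound of Lemma \ref{renorm:thm3}, then promotes this to joint analyticity (continuity) of $(s,\zeta) \mapsto J_n[w(s)]^{-1}(\zeta)$. By composition, $s \mapsto e_{(0,m)}[w(s)] = J_0[w(s)]^{-1}\circ\cdots\circ J_m[w(s)]^{-1}(0)$ inherits the desired regularity for every finite $m$.

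The estimate \eqref{eq:defofe} in Lemma \ref{lem:limitofe} is uniform in $s$, because its only input is membership of $w(s)$ in $\mathcal{B}_0(\epsilon_0/2,\epsilon_0/2,\epsilon_0/2)$. Hence $s \mapsto e_{(0,m)}[w(s)]$ converges to $s \mapsto e_{(0,\infty)}[w(s)]$ uniformly on $S$, and a uniform limit of analytic (continuous) functions is analytic (continuous). For the eigenvector I would establish norm-continuity (norm-analyticity) of $s \mapsto Q_n[w(s)]$, and observe that the finite approximations $\psi_{(0,m)}[w(s)] = Q_0[w(s)]\Gamma_\rho^* Q_1[w(s)]\Gamma_\rho^*\cdots Q_{m-1}[w(s)]\Omega$ are finite operator-valued products applied to the fixed vector $\Omega$, hence analytic (continuous) in $s$. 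Combined with the uniform bound \eqref{construct:eq7} of Lemma \ref{lem:limitofpsi}, this gives uniform convergence of $\psi_{(0,m)}[w(s)]$ on $S$, so the limit $\psi_{(0,\infty)}[w(s)]$ inherits analyticity (continuity).

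The main technical obstacle is the norm-continuity of $Q_n[w(s)]$, since $Q_n = \chi_\rho - \chib_\rho(T_n + \chib_\rho W_n \chib_\rho)^{-1}\chib_\rho W_n \chi_\rho$ involves an operator inverse that must be controlled uniformly in $s$. I would proceed in four steps: (i) note that $s \mapsto w^{(n)}(s, e_{(n,\infty)}[w(s)])$ is c-continuous (analytic), being the composition of the jointly c-continuous (analytic) map on $S\times D_{1/2}$ obtained above with the scalar function $s \mapsto e_{(n,\infty)}[w(s)]$ already shown to be continuous (analytic); (ii) invoke Lemma \ref{lem:pcontop} to turn this into norm-continuity (norm-analyticity) of $H_n[w(s)]$ and of $T_n[w(s)]$ separately; (iii) expand $(T_n + \chib_\rho W_n \chib_\rho)^{-1}\chib_\rho$ in a Neumann series, whose convergence is uniform in $s$ by the estimate $\|T_n^{-1}\chib_\rho W_n\chib_\rho\| < 2/3$ of Lemma \ref{renorm:thm1}; and (iv) conclude that each term of the series is norm-continuous (norm-analytic) in $s$ so the sum is too. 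In the analytic case, Hartogs' theorem handles separate-versus-joint analyticity wherever it becomes convenient to separate the variables $s$ and $z$.
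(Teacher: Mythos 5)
Your argument is correct and follows essentially the same route as the paper's proof: iterate Theorem \ref{thm:analytbasic2} to obtain joint regularity of $(s,z)\mapsto w^{(n)}(s,z)$, pass this to $E_{n,\rho}$ and $J_n^{-1}$ via the implicit-function argument of \eqref{eq:implicitE} and Lemma \ref{renorm:thm3}, deduce it for $H_n$, $T_n$, $Q_n$, and close by the uniform-in-$s$ convergence from Lemmas \ref{lem:limitofe} and \ref{lem:limitofpsi}. The only slight imprecision is citing Lemma \ref{lem:pcontop} for norm-analyticity of $H_n[w(s)]$: that lemma treats only the c-continuous case, and for analyticity one should instead appeal to the linear-boundedness estimate \eqref{eq:opestimatgeq12}, as the paper does.
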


\begin{proof} From  Theorems   \ref{codim:thm1} and  \ref{thm:analytbasic2}  it  follows that all integral kernels  $(s,z) \mapsto w^{(n)}(s,z)$  are analytic  (c-continuous).
In particular $(s,z) \mapsto E_\rho[w^{(n)}(s)](z)$ is analytic (continuous).
It follows that the mapping $(s,z) \mapsto E_\rho[w^{(n)}(s)]^{-1}(z)$ on $S \times D_{1/2}$ is analytic (continuous), which can be seen from
Lemma  \ref{renorm:thm3} and  identity \eqref{eq:implicitE}.
Now it follows from the definition that  $e_{(n,m)}[w(s)]$ is an analytic (continuous) function of $s$.
By Lemma \ref{lem:limitofe}  the limits of  $e_{(n,m)}[w(s)]$ as $m$ tends to infinity are uniform in $s$. Thus $s \mapsto e_{(n,\infty)}[w(s)]$ is analytic   (continuous).
It follows that $H_n[w(s)], T_n[w(s)]$ depend analytically on $s$ by the inequality   \eqref{eq:opestimatgeq12}    (continuously on $s$ by  Lemma \ref{lem:pcontop}). This implies
that  $Q_n[w(s)]$ is an analytic (continuous) function of $s$.
By definition now also  $\psi_{(n,m)}[w(s)]$ is an analytic (continuous) function of $s$.
Since by Lemma  \ref{lem:limitofpsi} the limit  of   $\psi_{(n,m)}[w(s)]$ as $m$ tends to infinity is uniform in $s$,
it follows that  $\psi_{(0,\infty)}[w(s)]$ is also an analytic (continuous) function of $s$.
\end{proof}

\section{Initial Feshbach Transformations}
\label{sec:ini}

We perform two initial Feshbach transformations before we
start the renormalization procedure. The main theorem of this section is Theorem \ref{initial:thmmain}.
Throughout  this section we will assume that Hypothesis (H) holds.
First we set
$$
\chi^{(I)} = P_1 \otimes 1  \quad , \quad \chib^{(I)} = P_2 \otimes 1
$$
where
$$
 P_1 =  \left( \begin{array}{cc} 0 & 0 \\ 0 & 1
\end{array} \right) \quad , \quad
P_2 =  \left( \begin{array}{cc} 1 & 0 \\ 0 & 0 \end{array} \right)
\; .
$$
We do not choose to  include a boson momentum cutoff in $\chi^{(I)}$, since the associated Feshbach map
 would otherwise contain terms which are linear in creation and
annihilation operators.
\begin{theorem}  \label{initial:thmA}  $(H_{\lambda,\sigma}  - z , H_f + \tau - z)$ is a Feshbach pair for $\chi^{(I)}$ provided $|z|<2$.
\end{theorem}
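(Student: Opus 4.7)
The plan is to verify conditions (a'), (b'), (c') of Lemma \ref{fesh:thm2}, which will imply the Feshbach pair property. Set $T = H_f + \tau - z$ and $W = H_{\lambda,\sigma} - T = \lambda\, \sigma_x \otimes \phi(\chi_\sigma f)$, so that $H_{\lambda,\sigma} - z = T + W$. Since $W$ is infinitesimally $H_f$-bounded (Lemma \ref{thm:estimates1}), $D(H_{\lambda,\sigma}-z) = D(T) = \C^2 \otimes D(H_f)$, as required.

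First I would check (a'): the projections $P_1$ and $P_2$ commute with $\tau$ (both are diagonal in the standard basis), commute with $H_f$ and the scalar $z$ trivially (they act on the $\C^2$-factor only), so $\chi^{(I)}$ and $\chib^{(I)}$ commute with $T$. For (b'), I would use that $\tau P_2 = 2 P_2$, so on $\ran \chib^{(I)} = P_2 \C^2 \otimes \FF$ the operator $T$ acts as $H_f + (2-z)$. Since $H_f \geq 0$ is self-adjoint and $\mathrm{dist}(z-2, [0,\infty)) \geq 2 - |z| > 0$ for $|z|<2$, spectral calculus gives
\[
\bigl\| \bigl(T \upharpoonright \ran \chib^{(I)}\bigr)^{-1} \bigr\| \leq \frac{1}{2 - |z|}.
\]

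The critical observation for (c') is the vanishing $P_2 \sigma_x P_2 = 0$, which is immediate from
\[
P_2 \sigma_x P_2 = \begin{pmatrix} 1 & 0 \\ 0 & 0 \end{pmatrix}\begin{pmatrix} 0 & 1 \\ 1 & 0 \end{pmatrix}\begin{pmatrix} 1 & 0 \\ 0 & 0 \end{pmatrix} = 0.
\]
Consequently $\chib^{(I)} W \chib^{(I)} = \lambda(P_2\sigma_x P_2) \otimes \phi(\chi_\sigma f) = 0$, so both norm conditions $\|T^{-1}\chib W \chib\| < 1$ and $\|\chib W T^{-1} \chib\| < 1$ hold trivially.

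It remains to show that $T^{-1}\chib^{(I)} W \chi^{(I)}$ extends to a bounded operator. Writing it out,
\[
T^{-1} \chib^{(I)} W \chi^{(I)} = \lambda\, (P_2 \sigma_x P_1) \otimes (H_f + 2 - z)^{-1} \phi(\chi_\sigma f) ,
\]
on the natural domain in $P_1\C^2 \otimes D(H_f)$. The standard estimate from Appendix A gives that $\phi(\chi_\sigma f)(H_f+1)^{-1/2}$ is a bounded operator. Hence, taking adjoints, $\bigl(\phi(\chi_\sigma f)(H_f + \bar z - 2)^{-1}\bigr)^*$ is bounded, and since this adjoint extends $(H_f + 2 - z)^{-1}\phi(\chi_\sigma f)$, the latter has a bounded closure. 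This is the only step that is not purely algebraic; the main (small) obstacle is this unbounded-composition issue, which is resolved by the adjoint argument plus the $H_f^{1/2}$-boundedness of $\phi(\chi_\sigma f)$. Thus (a'), (b'), (c') all hold and Lemma \ref{fesh:thm2} yields the Feshbach pair property.
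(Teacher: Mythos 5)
Your proof is correct and follows the same approach as the paper: verifying (a'), (b'), (c') of Lemma \ref{fesh:thm2} using the three key facts that $T$ restricted to $\ran\chib^{(I)}$ is $H_f + 2 - z$ (invertible for $|z|<2$), that $P_2\sigma_x P_2 = 0$ makes $\chib^{(I)}W\chib^{(I)} = 0$, and that $(H_f+2-z)^{-1}\phi(\chi_\sigma f)$ is bounded via the Appendix~A estimates. One small typo: in the adjoint step you should write $\phi(\chi_\sigma f)(H_f + 2 - \bar z)^{-1}$, not $\phi(\chi_\sigma f)(H_f + \bar z - 2)^{-1}$, for the adjoint to extend $(H_f + 2 - z)^{-1}\phi(\chi_\sigma f)$ as you intend.
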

\begin{proof}   We will identify the ranges of $P_1$ and $P_2$ with $\FF$.
If is sufficient to verify the assumptions of  Lemma \ref{fesh:thm2}. These are easily verified noting that:
 $ ( H_f + \tau - z)|_{\ran  \chib^{(I)}}  \cong H_f + 2 - z $ is invertible if  $|z|<2$,  $P_2\sigma_x \phi(f_\sigma)P_2=0$, and  by elementary estimates given in
  Appendix A we have that $ (H_f + 2 - z )^{-1} \phi(f_\sigma) $ is bounded if $|z|<2$
\end{proof}
By Theorem \ref{initial:thmA}
the  following definition makes sense for  $|z|<2$
\begin{align*}
H^{(I)}_{\lambda,\sigma}(z)  & := F_{\chi^{(I)}}(H_{\lambda,\sigma}, H_f + \tau - z)  \upharpoonright \ran \chi^{(I)}
\cong  H_f  - z - \lambda^2 \phi(f_\sigma)(  H_f + 2 - z )^{-1}  \phi(f_\sigma)  ,
\end{align*}
where we identified the range of $\chi^{(I)}$ with the Fock space. Next
we use   the pull-through formula, Lemma  \eqref{lem:pullthrough},  to express $H_{\lambda,\sigma}^{(I)}$ in terms of integral kernels.
To this end we introduce the notation
\beqn \label{eq:defhlinemn}
\overline{H}_{m,n}(w_{m,n}) = \int_{\R^{m+n}} \frac{ dK^{(m,n)}}{|K^{(m,n)}|^{1/2}} a^*(k^{(m)}) w_{m,n}(H_f , K^{(m,n)}) a(\tilde{k}^{(n)})  \; .
\eeqn
Definition \eqref{eq:defhlinemn} is understood in the sense of forms.
 We have
\begin{align*}
H_{\lambda,\sigma}^{(I)}  &=  T_{\lambda,\sigma}^{(I)}  + W_{\lambda,\sigma}^{(I)}
\end{align*}
with $T_{\lambda,\sigma}^{(I)}(z) = \overline{H}_{0,0}(w_{0,0}^{(I)}({\lambda,\sigma,z}))$ and   $W_{\lambda,\sigma}^{(I)}(z) :=  \sum_{m+n=2} \overline{H}_{m,n}(w^{(I)}(\lambda,\sigma,z))$
where
\begin{align*}
t^{(I)}(\lambda,\sigma, z ) (r) := w_{0,0}^{(I)}(\lambda,\sigma,z)(r)  &: = r  - z - \lambda^2 \int  \frac{d^3k}{(4 \pi)^2 \omega(k)} \frac{ |f_\sigma(k)|^2 }{r + |k| + 2 - z} \; ,
\end{align*}
and  $w_{m,n}^{(I)} = (\widehat{w}_{m,n}^{(I)})^{\rm sym}$, with
\begin{align*}
\widehat{w}^{(I)}_{2,0}(\lambda,\sigma,z)(r , k_1,k_2)  &:=  - \lambda^2     f_\sigma(k_1)f_\sigma (k_2)       \frac{1}{r + |k_1| +  2 - z}  , \\
\widehat{w}^{(I)}_{0,2}(\lambda,\sigma, z )(r ,\widetilde{k}_1,\widetilde{k}_2 ) &:=  - \lambda^2   \frac{1}{r + |\widetilde{k}_1| +  2 - z}  \overline{f_\sigma}(\widetilde{k}_1)\overline{f}_\sigma(\widetilde{k}_2)  , \\
\widehat{w}^{(I)}_{1,1}(\lambda,\sigma, z )(r, {k}_1,\widetilde{k}_1)  &:=  - \lambda^2  f_\sigma(k_1) \left[  \frac{1}{r  +  2 - z}   +
 \frac{1}{r  + |k_1| + |\widetilde{k}_1|  +   2 - z}     \right]   \overline{f}_\sigma(\widetilde{k}_1)        \; .
\end{align*}
By $w^{(I)}$ we denote the tuple consisting of the 4 components $w^{(I)}_{m,n}$ with $m+n=0,2$.
We will now  apply the Feshbach transformation  one more time.
The next theorem states that  for sufficiently small values of the coupling
constant $(H_{\lambda,\sigma}^{(I)},T_{\lambda,\sigma}^{(I)})$ is a Feshbach pair for $\chi_1$. To formulate the
theorem we introduce the following constant
$$
\mu_0 :=  \left( 8 \max( \| f/ ( 4 \pi \sqrt{ \omega} ) \|, \| f /  (4 \pi \omega) \| ) \right)^{-1}  .
$$

\begin{theorem} \label{initial:thmC} Let $|\lambda| < \mu_0$, $\sigma \geq 0$, and   $|z| \leq 1/2$. Then
the pair of operators $(H_{\lambda,\sigma}^{(I)}(z),T_{\lambda,\sigma}^{(I)}(z))$ is a Feshbach pair for $\chi_1$ and on $D_{1/2}$ we have
\begin{equation}  \label{initial:thmC:eq1}
\|  {T_{\lambda,\sigma}^{(I)}}^{-1} |_{\rm \ran \chib_1}\| \leq \frac{64}{15} \quad , \quad  \| {T_{\lambda,\sigma}^{(I)}}^{-1} \chib_1 W_{\lambda,\sigma}^{(I)} \| < \frac{7}{15} \quad , \quad \|W_{\lambda,\sigma}^{(I)} { T_{\lambda,\sigma}^{(I)}}^{-1} \chib_1 \| < \frac{7}{15} .
\end{equation}
\end{theorem}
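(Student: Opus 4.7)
I plan to verify the three hypotheses of Lemma \ref{fesh:thm2} for the pair $(H_{\lambda,\sigma}^{(I)}(z),T_{\lambda,\sigma}^{(I)}(z))$ with the partition $\chi_1^2+\chib_1^2=1$. Condition (a'), the commutation of $T_{\lambda,\sigma}^{(I)}$ with $\chi_1$ and $\chib_1$, is immediate because $T_{\lambda,\sigma}^{(I)}(z) = t^{(I)}(\lambda,\sigma,z)(H_f)$ is a function of $H_f$, while $\chi_1=\chi_1(H_f)$ and $\chib_1=\chib_1(H_f)$ are also functions of $H_f$.

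For (b') I would derive a pointwise lower bound on $|t^{(I)}(\lambda,\sigma,z)(r)|$ for $r \in \supp\chib_1 \subset [3/4,\infty)$. Starting from
\begin{equation*}
t^{(I)}(\lambda,\sigma,z)(r) = r - z - \lambda^2 \int \frac{d^3k}{(4\pi)^2\omega(k)} \frac{|f_\sigma(k)|^2}{r+|k|+2-z},
\end{equation*}
the elementary estimates $|r-z|\geq r - |z| \geq 1/4$ and $|r+|k|+2-z|\geq r+|k|+3/2 \geq 9/4$ (valid for $r\geq 3/4$, $|z|\leq 1/2$), combined with $|\lambda|^2\|f/(4\pi\sqrt\omega)\|^2 \leq \mu_0^2\|f/(4\pi\sqrt\omega)\|^2 \leq 1/64$ from the definition of $\mu_0$, yield $|t^{(I)}(\lambda,\sigma,z)(r)|\geq 1/4 - 4/(9\cdot 64) = 35/144 \geq 15/64$. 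This gives the first estimate in \eqref{initial:thmC:eq1} and in particular shows that $T_{\lambda,\sigma}^{(I)}$ is boundedly invertible on $\ran\chib_1$.

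For (c') the task is to control $\|W_{\lambda,\sigma}^{(I)}\|$ uniformly in $z \in D_{1/2}$ and $\sigma\geq 0$. Since $W_{\lambda,\sigma}^{(I)} = H_{\lambda,\sigma}^{(I)} - T_{\lambda,\sigma}^{(I)}$ is obtained from $-\lambda^2\phi(f_\sigma)(H_f+2-z)^{-1}\phi(f_\sigma)$ by subtracting its $\Omega$--preserving contribution, I would write $\phi(f_\sigma)=a(\tilde f_\sigma)+a^*(\tilde f_\sigma)$ with $\tilde f_\sigma = f_\sigma/(4\pi\sqrt\omega)$ and bound each of the four resulting terms separately, splitting the central resolvent as $(H_f+2-z)^{-1}=(H_f+2-z)^{-1/2}(H_f+2-z)^{-1/2}$ and absorbing one factor against each $a^\#(\tilde f_\sigma)$. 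The standard relative bounds of Lemma \ref{thm:estimates1} together with the pull-through formula (Lemma \ref{lem:pullthrough}), which replaces $(H_f+2-z)^{-1}$ by $(H_f+|k|+2-z)^{-1}$ after commuting an annihilator, produce extra weights of $\omega^{-1}$ on the form factors. Each term is thus bounded by a constant multiple of $|\lambda|^2\max(\|f/(4\pi\sqrt\omega)\|^2,\|f/(4\pi\omega)\|^2)$, which by the definition of $\mu_0$ is strictly less than $1/64$. Multiplying by $\|(T_{\lambda,\sigma}^{(I)})^{-1}\chib_1\|\leq 64/15$ yields the two remaining estimates in \eqref{initial:thmC:eq1}, and the Feshbach pair property then follows from Lemma \ref{fesh:thm2}, since $\|\chi_1\|_\infty,\|\chib_1\|_\infty\leq 1$.

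The main obstacle will be the precise numerical bookkeeping in the bound on $\|W_{\lambda,\sigma}^{(I)}\|$: one must distribute the intermediate resolvent $(H_f+2-z)^{-1}$ between the two $\phi(f_\sigma)$ factors so that in every one of the four cases (creator--creator, creator--annihilator, annihilator--creator, annihilator--annihilator) the appropriate $\omega^{-1/2}$ or $\omega^{-1}$ weight is generated. This is precisely why both $\|f/(4\pi\sqrt\omega)\|$ and $\|f/(4\pi\omega)\|$ appear in the definition of $\mu_0$, and the choice of the factor $8$ in $\mu_0^{-1}$ should leave enough slack for the resulting constants to fit the explicit bound $7/15$.
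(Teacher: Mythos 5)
Your strategy---verify the sufficient conditions (a'), (b'), (c') of Lemma \ref{fesh:thm2}---is exactly the paper's. Condition (a') and the lower bound on $t^{(I)}$ for (b') are handled the same way (your denominator bound $|r+|k|+2-z|\geq 9/4$ is a small sharpening of the paper's $\geq |k|+1$; both yield $|t^{(I)}|\geq 15/64$). For (c') the paper is more economical: instead of splitting $\phi(f_\sigma)=a(\tilde f_\sigma)+a^*(\tilde f_\sigma)$ into four cases, it writes
$W^{(I)}_{\lambda,\sigma}= -\lambda^2\phi(f_\sigma)(H_f+2-z)^{-1}\phi(f_\sigma)+\lambda^2\int\frac{|f_\sigma(k)|^2\,d^3k}{(4\pi)^2\omega(k)(H_f+|k|+2-z)}$,
factors the resolvent as $(H_f+1)^{-1/2}\bigl[(H_f+1)(H_f+2-z)^{-1}\bigr](H_f+1)^{-1/2}$ with middle factor of norm $\leq 1$, and applies Lemma \ref{thm:estimates1}(vi), $\|\phi(f_\sigma)(H_f+1)^{-1/2}\|\leq(1+\sqrt 2)\delta_0$, giving $\|W^{(I)}_{\lambda,\sigma}\|\leq\bigl((1+\sqrt 2)^2+1\bigr)|\lambda|^2\delta_0^2\leq 7|\lambda|^2\delta_0^2<7/64$. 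Your four-term bookkeeping, if carried out, reproduces exactly $(1+\sqrt 2)^2+1<7$, so the slack from the factor $8$ in $\mu_0$ does suffice; but two small adjustments are needed. First, you should factor through $(H_f+1)^{-1/2}$ rather than $(H_f+2-z)^{-1/2}$, since the relative bounds in Lemma \ref{thm:estimates1} are stated with $(H_f+1)^{-1/2}$ and $z$ is complex. Second, the pull-through formula is not actually used in this operator-norm estimate; it is used earlier only to cast $H^{(I)}_{\lambda,\sigma}$ in normal-ordered kernel form, so invoking it here mixes up two distinct steps. With those corrections your argument is sound and equivalent to the paper's.
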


\begin{proof} Let  $\delta_0 := \max( \| f/ (4 \pi \sqrt{ \omega}) \|, \| f / (4 \pi \omega) \| ) $.
First we show that   on $D_{1/2}$  we have
\begin{equation} \label{eq:sttpd1}
\| W_{\lambda,\sigma}^{(I)} \| \leq |\lambda|^2  7 \delta_0^2 .
\end{equation}
To this end note  that
$$
W_{\lambda,\sigma}^{(I)}(z) = - \lambda^2 \phi(f_\sigma) ( H_f + 2 - z )^{-1} \phi(f_\sigma) + \lambda^2 \int \frac{d^3k}{(4 \pi)^2 \omega(k)} \frac{|f_\sigma(k)|^2}{H_f + |k| + 2 - z } .
$$
For  $|z| \leq 1/2$, this yields the estimate
\begin{equation} \label{eq:sttpd3}
\| W_{\lambda,\sigma}^{(I)} \| \leq |\lambda|^2 \| \phi(f_\sigma) (H_f + 1 )^{-1/2} \|   \| (H_f + 1 )^{-1/2} \phi(f_\sigma) \| + |\lambda|^2  \| f /  ( 4 \pi  \sqrt{\omega} ) \|^2 .
\end{equation}
Now using elementary estimates collected in Lemma \ref{thm:estimates1} to estimate the first term  in \eqref{eq:sttpd3} one obtains  \eqref{eq:sttpd1}.
Next  observe that $T_{\lambda,\sigma}^{(I)}$ commutes with $\chi_1$ and $\chib_1$. From the following
estimate it follows that $T_{\lambda,\sigma}^{(I)}$ is bounded invertible on the range of $\chib_1$. For $r \geq 3/4$,
\begin{align*}
|t^{(I)}(\lambda,\sigma,z)(r)| \geq 3/4 - 1/2 - |\lambda|^2  \int \frac{d^3k}{(4 \pi )^2 \omega(k)} \frac{|f(k)|^2}{|k| + 1}
\geq 15/64 \; .
\end{align*}
This and \eqref{eq:sttpd1} imply \eqref{initial:thmC:eq1}.  In view of
Lemma \ref{fesh:thm2} it follows that  $(H_{\lambda,\sigma}^{(I)},T_{\lambda,\sigma}^{(I)})$ is a Feshbach pair for $\chi_1$.
\end{proof}

Let   $|\lambda| < \mu_0$ and  $|z| \leq 1/2$. Then  by Theorem \eqref{initial:thmC} the second Feshbach map,
$$
H_{\lambda,\sigma}^{(0)}(z) := F_{\chi_1}(H_{\lambda,\sigma}^{(I)}(z),T_{\lambda,\sigma}^{(I)}(z))  \upharpoonright \mathcal{H}_{\rm red},
$$
is well defined and we are allowed to expand the operator $H_{\lambda,\sigma}^{(0)}(z)$ in a Neumann series. We obtain
on $\HH_{\rm red}$
\begin{align*}
H^{(0)} = T^{(I)} + \chi_1 W^{(I)} \chi_1 - \chi_1 W^{(I)} \chib_1 \sum_{n=0}^\infty \left( - {T^{(I)}}^{-1} \chib_1 W^{(I)} \chib_1 \right)^n
{T^{(I)}}^{-1} \chib_1 W^{(I)} \chi_1 \; ,
\end{align*}
where we dropped the $\lambda,\sigma,z$ dependence and assumed that $z \in D_{1/2}$.
Again we normal order the above expression, using the pull-through formula. To this end we use the identity of Theorem \ref{thm:wicktheorem},
which also holds for the integral kernels considered here since its proof is based on algebraic identities.
This yields a sequence of integral kernels
$\widetilde{w}^{(0)}$, which  are given as follows.
For $M+N \geq 1$,
\begin{eqnarray}
\lefteqn{ \widetilde{w}^{(0)}_{M,N}(\lambda,\sigma,z)(r , K^{(M,N)})} \label{initial:eq7}  \\ &&= \sum_{L=1}^\infty (-1)^{L+1}
\sum_{\substack{ (\umm,\upp,\unn,\uqq)  \in \N_0^{4L}: \\ |\umm|=M,  |\unn|=N, \\  m_l+p_l+q_l+n_l = 2 } }   \prod_{l=1}^L \left\{
\binom{ m_l + p_l }{ p_l}  \binom{ n_l + q_l }{ q_l }
\right\}
V_{(\umm,\upp,\unn,\uqq)}[w^{(I)}(\lambda,\sigma,z)](r,K^{(M,N)})  .\nonumber
\end{eqnarray}
 Furthermore,
\begin{align*}
\widetilde{w}^{(0)}_{0,0}(\lambda,\sigma,z)(r) = t^{(I)}(\lambda,\sigma,z)(r) + \sum_{L=2}^\infty (-1)^{L+1}
\sum_{(\upp,\uqq)\in \N_0^{2L}: p_l+q_l = 2}
V_{(\uzz,\upp,\uzz,\uqq)}[w^{(I)}(\lambda,\sigma,z)](r) \; ,
\end{align*}
where
we have used the definition
\begin{eqnarray} \label{eq:defofV}
V_{\umm,\upp,\unn,\uqq}[w](r, K^{(|\umm|,|\unn|)}) :=
\left\langle \Omega  ,  \bar{F}_0(H_f + r) \prod_{l=1}^L \left\{
\overline{W}_{p_l,q_l}^{m_l,n_l}[w](r + r_l,K^{(m_l,n_l)})
 \bar{F}_l(H_f + r +  \widetilde{r}_l ) \right\} \Omega
\right\rangle
\end{eqnarray}
with $\bar{F}_0[w](r) := \chi_1(r )$, $\bar{F}_L[w](r) := \chi_1(r  )$, and for $l = 1, ... , L - 1$ we set
\begin{eqnarray*}
\bar{F}_l[w](r) := \bar{F}[w](r ) := \frac{ \overline{ \chi}_1(r  )^2}{ w_{0,0}(r )} .
\end{eqnarray*}
Here, we used the definition
\begin{eqnarray} \label{eq:defofWW}
\lefteqn{
\overline{W}_{p,q}^{m,n}[w](r,K^{(m,n)}]} \\
&:=& \int \frac{d X^{(p,q)}}{|X^{(p,q)}|^{1/2}} a^*(x^{(p)}) w_{m+p,n+q}(H_f + r , k^{(m)}, x^{(p)}, \widetilde{k}^{(n)},
\widetilde{x}^{(q)}) a(\widetilde{x}^{(q)}) . \nonumber
\end{eqnarray}
Recall also the notation    introduced  in     Theorem \ref{thm:wicktheorem}.
Since we want to consider symmetric kernels we set $w^{(0)} := \left( \widetilde{w}^{(0)} \right)^{({\rm sym})}$.
We are now ready to state the main theorem of this section.

\begin{theorem} \label{initial:thmmain} Let $0< \xi < 1$ and  $\delta_1, \delta_2, \delta_3 > 0$. Then there exists a positive
$\lambda_0 \leq \mu_0$
such that for all $\lambda \in B_{\lambda_0}$ and $\sigma \geq 0$
we have
\begin{align} \label{eq:initialthmmain}
&w^{(0)}(\lambda,\sigma, \cdot) \in \mathcal{B}_0(\delta_1,\delta_2,\delta_2) \\
&H^{(0)}_{\lambda,\sigma}(z) = H(w^{(0)}(\lambda,\sigma,z))  , \quad \forall z \in D_{1/2} . \label{eq:initialthmmain2}
\end{align}
Moreover the following is true.
\begin{itemize}
\item[(i)] For $\sigma \geq 0$, the map $(\lambda,z) \mapsto w^{(0)}(\lambda,\sigma,z)$ is a $ \WW_{\xi}^\#$-valued analytic function  on $B_{\lambda_0} \times D_{1/2}$.
\item[(ii)] For each $\lambda \in B_{\lambda_0}$, the map $(\sigma,z) \mapsto w^{(0)}(\lambda,\sigma,z) \in \WW_{\xi}^\#$ is a c-continuous function on  $ [0,\infty) \times  D_{1/2} $.
\item[(iii)] For real $\lambda \in B_{\lambda_0}$ and $\sigma \geq 0$, the kernel $w^{(0)}(\lambda,\sigma)$ is symmetric.
\end{itemize}
\end{theorem}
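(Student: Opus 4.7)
The plan is to verify the conclusions of Theorem \ref{initial:thmmain} by analyzing the Neumann-series representation \eqref{initial:eq7} of $w^{(0)}$ term by term, obtaining $\mathcal O(|\lambda|^2)$ bounds uniform in $\sigma \ge 0$, and then choosing $\lambda_0$ small.

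First, I would read off the analytic and symmetry properties of the building block $w^{(I)}(\lambda,\sigma,z)$ directly from its explicit formulas: each $\widehat w^{(I)}_{m,n}$ is a polynomial of degree $2$ in $\lambda$ times factors of the form $(r+|k|+2-z)^{-1}$, which are jointly analytic on $B_{\mu_0}\times D_{1/2}$; the dependence on $\sigma$ enters only through $f_\sigma = \chi_\sigma f$, and since $f_\sigma \to f_{\sigma_0}$ in $L^2$ (with $L^\infty$ bound $\|f\|_\infty$) by dominated convergence, the maps $\sigma\mapsto w^{(I)}_{m,n}(\sigma)$ are c-continuous. For real $\lambda$, inspection shows $\overline{\widehat w^{(I)}_{2,0}(r,k_1,k_2)} = \widehat w^{(I)}_{0,2}(r,k_2,k_1)$ and $\widehat w^{(I)}_{1,1}$ is pointwise real and symmetric in $(k_1,\tilde k_1)$; after the symmetrization \eqref{eq:symmetrization} this gives symmetry of $w^{(I)}$ in the sense of \eqref{eq:symmetricsa}. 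Using Hypothesis (H) to estimate $f_\sigma$ uniformly in $\sigma$, a direct computation yields $\|w^{(I)}_{m,n}(\lambda,\sigma,\cdot)\|^\# \le C |\lambda|^2$ for $m+n=2$, as well as $\|\partial_r t^{(I)} - 1\|_\infty + |t^{(I)}(0)+z| \le C|\lambda|^2$, uniformly in $\sigma\ge 0$ and $z\in D_{1/2}$.

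Next, I would adapt Lemma \ref{codim:thm3first} and the summation argument of Theorem \ref{renorm:thm2a} to the quantities $V_{\umm,\upp,\unn,\uqq}[w^{(I)}]$ defined by \eqref{eq:defofV}, with the roles played by $\chi_\rho$ and $t$ now taken by $\chi_1$ and by $\inf_{r\in[3/4,1]}|t^{(I)}(r)|\ge 15/64$ from the proof of Theorem \ref{initial:thmC}. This gives
\[
\|V_{\umm,\upp,\unn,\uqq}[w^{(I)}]\|^\# \le \widetilde C_L (64/15)^{L-1} \prod_{l=1}^L \frac{\|w^{(I)}_{m_l+p_l,n_l+q_l}\|^\#}{\sqrt{p_l!\,q_l!}},
\]
for a combinatorial constant $\widetilde C_L$ growing only linearly in $L$. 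The crucial simplification compared with Section \ref{sec:cod} is that the sum is restricted to $m_l+p_l+n_l+q_l=2$ so only finitely many index tuples occur per $L$; combining the resulting bound $\le (C'|\lambda|^2)^L$ per $L$-term with the binomial identity \eqref{eq:binomial} and summing the geometric series (convergent once $C'|\lambda|^2 \cdot (64/15) < 1/2$, say) yields
\[
\|w^{(0)}_{\ge 1}(\lambda,\sigma,\cdot)\|_\xi^\# \le C_1 |\lambda|^2, \quad \sup_{z\in D_{1/2}}|w^{(0)}_{0,0}(\lambda,\sigma,z,0)+z|\le C_2|\lambda|^2,
\]
and the same bound for $\|\partial_r w^{(0)}_{0,0} -1\|_\infty$. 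Choosing $\lambda_0$ so small that $C_i\lambda_0^2 < \min(\delta_1,\delta_2)$ and $\lambda_0\le\mu_0$ yields \eqref{eq:initialthmmain}, while parity in $m+n$ is automatic since each factor $w^{(I)}$ has $m_l+p_l+n_l+q_l = 2$. The identity \eqref{eq:initialthmmain2} is then a consequence of the pull-through formula (Lemma \ref{lem:pullthrough}) together with Wick's theorem (Theorem \ref{thm:wicktheorem}), which justify turning the Neumann expansion of the second Feshbach map into \eqref{initial:eq7}; the injectivity from Theorem \ref{thm:injective} shows uniqueness of the integral-kernel representative.

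Finally, the assertions (i)–(iii) follow by propagating the properties of $w^{(I)}$ through the uniformly convergent series. For (i), each $V$-term is jointly analytic in $(\lambda,z)$ on $B_{\lambda_0}\times D_{1/2}$ (Hartogs, or the F-differentiability argument of Lemma \ref{lem:frechetdiff1a}), and the uniform bound above allows termwise analyticity to pass to the sum. For (ii), the c-continuity of $\sigma\mapsto w^{(I)}(\sigma)$ is promoted to c-continuity of each $V$-term via the telescoping identity \eqref{eq:telescoping} as in Lemma \ref{lem:newcont}; the uniform bound in $\|\cdot\|^\#$ (hence in $\|\cdot\|_2$ by \eqref{eq:infinity2ineq}) permits exchange of limit and sum. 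For (iii), symmetry is preserved under both the Neumann expansion (by taking adjoints of $H^{(0)}$) and the symmetrization operator, so symmetry of $w^{(I)}$ passes to $w^{(0)}$. The main obstacle is really only bookkeeping: writing down $\widetilde C_L$ cleanly and verifying that the $r$-derivative estimates survive the normal-ordering and symmetrization steps — but this is exactly the template provided by Lemma \ref{codim:thm3first} and Lemma \ref{lem:frechetdiff1a}, applied with $\rho=1$ and the elementary bound $\|f\|_\infty < \infty$ from Hypothesis (H) in place of the inductive estimate on the renormalized kernels.
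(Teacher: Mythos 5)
Your proposal takes the same route as the paper: estimate the $V_{\umm,\upp,\unn,\uqq}[w^{(I)}]$-terms uniformly in $\sigma\ge 0$ and $z\in D_{1/2}$, sum the geometric series exploiting that $m_l+p_l+n_l+q_l=2$ forces finitely many tuples per $L$, deduce the $\mathcal O(|\lambda|^2)$ bounds \eqref{eq:feb2:1}--\eqref{eq:feb2:3}, and then propagate analyticity (Hartogs/F-differentiability), c-continuity (telescoping), and symmetry through the uniform convergence. The only imprecision is the suggestion to reuse Lemma \ref{codim:thm3first} with $\rho=1$: since $w^{(I)}$ lives on $\R^3$ rather than $B_1$, the paper instead builds the parallel apparatus $\overline{\WW}^\#$ with the weighted norms $\|\cdot\|^\flat_{s,p,t,q}$, Lemma \ref{kernelopestimate}, and Lemmas \ref{initial:thmE}, \ref{initial:thmE2222}, \ref{lem:analytwI0}, \ref{lem:contwI} -- but you correctly flag this as bookkeeping rather than a conceptual difference.
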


The remaining part of this section is devoted to the proof of Theorem \ref{initial:thmmain}. Let us first outline
the proof. From the previous discussion we know that once  \eqref{eq:initialthmmain} has been established then
 \eqref{eq:initialthmmain2} will follow.
Thus first we will show \eqref{eq:initialthmmain}. The fact that the number of creation and annihilation operators of $w^{(0)}$ is  even, follows
 directly from the definition.  Showing    \eqref{eq:initialthmmain}  also requires an  estimate of the kernel.
To this end we use an estimate on $V_{\umm,\upp,\unn,\uqq}[w^{(I)}]$ which
is  given in Lemma \ref{initial:thmE2222}, below.
Using that estimate for $V_{\umm,\upp,\unn,\uqq}[w^{(I)}]$, we will then obtain estimates \eqref{eq:feb2:1}, \eqref{eq:feb2:2}, and \eqref{eq:feb2:3},
which imply    \eqref{eq:initialthmmain}. Those estimates establish uniform convergence which will then be  used
to show (i) and (ii) using the corresponding statement for $V_{\umm,\upp,\unn,\uqq}[w^{(I)}]$.
(iii) follows from the definition and
\eqref{eq:symmetricsa}.

First we show the following Lemma.
\begin{lemma}   \label{initial:thmE2222} Let $|z| \leq  1/2 $,  $|\lambda| \leq \mu_0 $, and $\sigma \geq 0$. Then
\begin{align} \label{initial:thmmain:eq2}
 \| V_{\umm,\upp,\unn,\uqq}[w^{(I)}(\lambda,\sigma,z)] \|^\# \leq   (L + 1) C_{\bar{F}}^{L+1} C_W(\lambda)^L   .
\end{align}
where
\begin{eqnarray*}
C_{\bar{F}} &:=&  10 \| \partial_r \chib_1 \|_\infty + 31  , \\
C_W(\lambda) &:=&  \sup_{m+n+p+q=2} |\lambda|^{2} 6  ( \| f /{\omega} \|^2 + 2 \| f \| \| f / \omega \| )^{\frac{p+q}{2}} \| f \|_\infty^{m+n} .
 \end{eqnarray*}
\end{lemma}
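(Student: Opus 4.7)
The strategy is to bound each of the $2L+1$ operator-valued factors appearing inside the vacuum expectation \eqref{eq:defofV} separately, then apply \eqref{eq:babyestimate1} together with the Leibniz rule to estimate the $\partial_r$ derivative. Carrying this out will yield the claimed bound exponential in $L$, by analogy with Lemma \ref{codim:thm3first}, provided uniform bounds on the $\bar F_l$'s and on the $\overline W^{m_l,n_l}_{p_l,q_l}$'s are established.

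The first step will bound $\|\bar F_l[w^{(I)}]\|_\infty$ and $\|\partial_r \bar F_l[w^{(I)}]\|_\infty$. For $l=0,L$ these are the smooth cutoffs $\chi_1$, whose sup-norm and derivative-norm are controlled by $1$ and $\|\partial_r \chi_1\|_\infty = \|\partial_r \chib_1\|_\infty$. For intermediate $l$, $\bar F_l = \chib_1^2/w^{(I)}_{0,0}$ is supported in $r \geq 3/4$, where the lower bound $|w^{(I)}_{0,0}(\lambda,\sigma,z)(r)| \geq 15/64$ was already derived in the proof of Theorem \ref{initial:thmC}; hence $\|\bar F_l\|_\infty \leq 64/15$. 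Applying a Leibniz expansion for $\partial_r \bar F_l$ in the spirit of \eqref{eq:derofFFF}--\eqref{est:sss2}, together with the easy estimate $\|\partial_r w^{(I)}_{0,0}\|_\infty \leq 2$ (which follows from the explicit definition of $t^{(I)}$ and from $|\lambda| \leq \mu_0$), will produce an explicit bound on $\|\partial_r \bar F_l\|_\infty$ involving $\|\partial_r \chib_1\|_\infty$. Collecting both cases yields $\|\bar F_l\|_\infty + \|\partial_r \bar F_l\|_\infty \leq C_{\bar F}$ uniformly in $l$.

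The second step will bound the operator norm of $\overline W^{m_l,n_l}_{p_l,q_l}[w^{(I)}](r,K_l)$ and of its $r$-derivative. Following the Cauchy--Schwarz proof of Lemma \ref{lem:operatornormestimates} applied to \eqref{eq:defofWW} (with $B_1$ replaced by $\mathbb{R}^3$ in the internal integration), these norms are controlled by a weighted $L^2$-norm of $w^{(I)}_{m_l+p_l,\,n_l+q_l}$ in the internal variables $x^{(p_l)}, \tilde x^{(q_l)}$, and an $L^\infty$-norm in the external variables $k^{(m_l)}, \tilde k^{(n_l)}$. Inserting the explicit formulas for $\widehat w^{(I)}_{2,0}, \widehat w^{(I)}_{1,1}, \widehat w^{(I)}_{0,2}$ and using the lower bound $r + |k_j| + 2 - z \geq 3/2$ on $[0,1] \times D_{1/2}$, each external slot contributes a factor $\|f\|_\infty$ and each internal slot contributes either $\|f/\omega\|$ or, from the two-term structure of $\widehat w^{(I)}_{1,1}$, the cross factor $\|f\|\,\|f/\omega\|$. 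These contributions assemble into the stated $C_W(\lambda)$. The same estimate applies to $\partial_r \overline W^{m_l,n_l}_{p_l,q_l}[w^{(I)}]$ since $\partial_r$ only differentiates the denominators, costing at most a bounded factor.

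In the final step, combining via \eqref{eq:babyestimate1} bounds $\|V_{\umm,\upp,\unn,\uqq}[w^{(I)}]\|_{\underline\infty}$ by $C_{\bar F}^{L+1} C_W(\lambda)^L$. The Leibniz rule expands $\partial_r V$ into $2L+1$ terms, each bounded by the same product since the derivative bounds are already encoded in $C_{\bar F}$ and $C_W(\lambda)$. Summing the undifferentiated and differentiated contributions yields a prefactor of at most $2L+2 = 2(L+1)$; the factor of $2$ is absorbed into the generous offset in $C_{\bar F} = 10\|\partial_r \chib_1\|_\infty + 31$, giving the stated $(L+1)\,C_{\bar F}^{L+1} C_W(\lambda)^L$. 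The main technical obstacle will be the bookkeeping in the second step: one must carefully unwind the asymmetric two-term structure of $\widehat w^{(I)}_{1,1}$ when passing to the weighted $L^2$-norm, and identify which of $\|f\|_\infty$, $\|f/\omega\|$, or $\|f\|\,\|f/\omega\|$ each of the $p_l+q_l$ internal slots contributes, so that the resulting expression matches exactly the form of $C_W(\lambda)$ stated in the lemma.
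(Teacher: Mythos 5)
Your overall approach matches the paper's: the paper isolates exactly your two preliminary bounds as Lemma~\ref{initial:thmE} (part (a) for $\overline{W}^{m,n}_{p,q}[w^{(I)}]$ via the $\|\cdot\|^\flat$--norm and Lemma~\ref{kernelopestimate}, part (b) for $\bar F[w^{(I)}]$) and then combines them with \eqref{eq:babyestimate1} and Leibniz. However, two steps in your proposal do not go through as written.

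First, your final bookkeeping. You count $1+(2L+1)=2(L+1)$ terms and then assert the factor of $2$ ``is absorbed into the generous offset in $C_{\bar F}$.'' That absorption is not available: the slack between the actual bound $\|\bar F\|+\|\partial_r\bar F\|\le 10\|\partial_r\chib_1\|_\infty+26+64/15$ and $C_{\bar F}=10\|\partial_r\chib_1\|_\infty+31$ is less than $1$, and in any case a fixed prefactor of $2$ cannot be swallowed into a base raised to the power $L+1$ uniformly in $L$. The correct way to obtain $(L+1)$ is to group terms: writing $a_l:=\|\bar F_l\|_\infty$, $b_l:=\|\partial_r\bar F_l\|_\infty$, $c_l:=\|\overline W_l\|_{\rm op}$, $d_l:=\|\partial_r\overline W_l\|_{\rm op}$, the undifferentiated term together with the $L+1$ Leibniz terms in which some $\bar F_{l'}$ is differentiated are jointly dominated by $\prod_{l=0}^{L}(a_l+b_l)\prod_{l=1}^{L}c_l\le C_{\bar F}^{L+1}C_W^{L}$, because the multinomial expansion of $\prod(a_l+b_l)$ majorises all of them. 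Only the remaining $L$ Leibniz terms (one $\overline W_{l'}$ differentiated) contribute the extra factor, each $\le C_{\bar F}^{L+1}C_W^{L}$, which yields exactly $(L+1)C_{\bar F}^{L+1}C_W^{L}$.

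Second, your bound on $\overline W^{m_l,n_l}_{p_l,q_l}[w^{(I)}](r,K)$ ``using the lower bound $r+|k_j|+2-z\ge 3/2$ on $[0,1]\times D_{1/2}$'' restricts $r$ to $[0,1]$, which is incorrect. The operator $\overline W^{m,n}_{p,q}[w^{(I)}](r,\cdot)$ acts on all of Fock space (it is \emph{not} cut by $P_{\rm red}$, cf.\ \eqref{eq:defofWW}), and the intermediate $\bar F_l=\chib_1^2/t^{(I)}$ are supported on $[3/4,\infty)$, so the internal energy $H_f+r$ inside the vacuum expectation is unbounded. Correspondingly, the $\|\cdot\|^\flat$--norm controlling $\|\overline W^{m,n}_{p,q}\|_{\rm op}$ takes $\sup_{r\ge 0}$, and it carries the growing weight $(r+\Sigma[x^{(p)}])^p(r+\Sigma[\widetilde x^{(q)}])^q$. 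That weight must be beaten by the genuine decay $|r+|k|+2-z|^{-1}\lesssim (r+1)^{-1}$ of the resolvent (which is exactly how the paper's $E_p(f)$--integral is made finite for $p+q\le 2$), not by a crude pointwise bound on a compact $r$--interval. Finally, you assert $\|\partial_r\chi_1\|_\infty=\|\partial_r\chib_1\|_\infty$; the relation $\chi_1^2+\chib_1^2=1$ only gives $\chi_1\partial_r\chi_1=-\chib_1\partial_r\chib_1$, not equality of sup norms, so you should instead directly verify $1+\|\partial_r\chi_1\|_\infty\le C_{\bar F}$ for the endpoint factors (or keep them as a separate, harmless constant).
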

Lemma \ref{initial:thmE2222} will essentially follow from Lemma  \ref{initial:thmE}, shown below.
To this end we introduce the following norms.
Again we will use the canonical identification of  the space
 $L^\infty(\R^{3 m + 3 n}; C[0,\infty))$ with a subset of $L^\infty([0,\infty) \times \R^{3 m + 3 n})$.
For functions
$w_{m,n} \in L^\infty(\R^{3 m + 3 n}; C[0,\infty))$ and  any $(s,p,t,q)$ with $m=s+p$ and $n=t+q$ we define the norm
\begin{eqnarray*}
\| w \|_{s,p,t,q}^\flat
&& :=
\esssup_{ (k^{(s)},\widetilde{k}^{(t)}) \in   \R^{3s + 3t}} \Big[ \int
\frac{d X^{(p,q)}}{|X^{(p,q)}|^2}  \\
&& \times  \sup_{r \geq 0} \left| w_{m,n}(r  , k^{(s)},x^{(p)}, \widetilde{k}^{(t)} , \widetilde{x}^{(q)}) \right|^2 ( r + \Sigma[x^{(p)}])^p  ( r + \Sigma[\widetilde{x}^{(q)}])^q \Big]^{1/2}. \nonumber
\end{eqnarray*}

Using Lemma  \ref{kernelopestimate} in  Appendix A, we see that
\begin{align}
&\| \overline{W}^{m,n}_{p,q}[w](r,K^{(m,n)})  \|_{\rm op} \leq \| w_{m+p,n+q} \|^\flat_{m,p,n,q} \label{eq:estonWintermofw}
\end{align}
and if $w_{m,n} \in L^\infty(B_1^{m+n} ; C^1[0, \infty) )$ we have
\begin{align}
&\| \partial_r \overline{W}^{m,n}_{p,q}[w](r,K^{(m,n)})  \|_{\rm op} \leq \| \partial_r w_{m+p,n+q} \|^\flat_{m,p,n,q} , \label{eq:estonWrintermofw}
\end{align}
where the partial derivative $\partial_r \overline{W}^{m,n}_{p,q}[w]$ is understood with respect to the weak operator topology.
\begin{lemma}  \label{initial:thmE} Let $|z| \leq  1/2 $ and $\sigma \geq 0$. Then the following statements are true.
\begin{itemize}
\item[(a)]  If   $m+n+p+q=2$ and $s=0,1$, then  we have
\begin{eqnarray*}
\|  \partial_r^s W_{p,q}^{m,n}[w^{(I)}(\lambda,\sigma,z)](r,K^{(m,n)} ) \|_{\rm op}
\leq |\lambda|^{2} 6  ( \| f /{\omega} \|^2 + 2 \| f \| \| f / \omega \| )^{\frac{p+q}{2}} \| f \|_\infty^{m+n} .
\end{eqnarray*}
\item[(b)]  Let  $|\lambda| \leq \mu_0 $. Then
\begin{equation} \label{eq:thmE22-1}
\left\| ( \partial_r  \bar{F}[w^{(I)}(\lambda,\sigma,z)])(H_f + r) \right\| + \left\| \bar{F}[w^{(I)}(\lambda,\sigma,z)](H_f + r)   \right\|  \leq  10 \| \partial_r \chib_1 \|_\infty + 31 .
\end{equation}
\end{itemize}
\end{lemma}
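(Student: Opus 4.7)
Both parts will be proved by direct computation from the explicit formulas for $w^{(I)}_{m,n}$ and $t^{(I)}=w^{(I)}_{0,0}$ displayed just above the lemma. The assumption $|z|\le 1/2$ gives the uniform bound $r+|k|+2-z\ge 3/2$ for $r,|k|\ge 0$, so every denominator appearing in $w^{(I)}$ is bounded below by $3/2$.

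For part (a), only finitely many quadruples $(m,n,p,q)\in\N_0^4$ with $m+n+p+q=2$ arise, and for each, $w^{(I)}_{m+p,n+q}$ is given by one of the explicit closed-form expressions for $w^{(I)}_{2,0}$, $w^{(I)}_{1,1}$, $w^{(I)}_{0,2}$. I would substitute these into the definition of $W^{m,n}_{p,q}[w^{(I)}]$ and use the pull-through formula (Lemma \ref{lem:pullthrough}) to move the resolvent factors $(H_f+r+\cdots+2-z)^{-1}$ past the creation and annihilation operators, placing them where they act as bounded functions of $H_f$. The $m+n$ external field momenta enter only through pointwise values of $f_\sigma$, so $|f_\sigma|\le\|f\|_\infty$ produces the factor $\|f\|_\infty^{m+n}$. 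The $p+q$ integrated field operators combined with the measure $dX^{(p,q)}/|X^{(p,q)}|^{1/2}$ are bounded using the standard Appendix~A estimates $\|a(g)\psi\|\le\|g/\sqrt{\omega}\|\,\|H_f^{1/2}\psi\|$ and $\|a^*(g)\psi\|\le\|g\|\|\psi\|+\|g/\sqrt{\omega}\|\,\|H_f^{1/2}\psi\|$ (Lemma \ref{thm:estimates1}), applied with $g=f/(4\pi\sqrt{\omega})$ and with vectors in $\HH_{\rm red}$ (so $\|H_f^{1/2}\psi\|\le\|\psi\|$). Collecting the $p+q$ such contributions together with the uniformly bounded resolvents produces the combination $(\|f/\omega\|^2+2\|f\|\,\|f/\omega\|)^{(p+q)/2}$. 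For $s=1$ one simply differentiates a resolvent, inserting an extra bounded factor of order $(3/2)^{-1}$; the constant $6$ comfortably absorbs the resulting numerical overhead in either case $s\in\{0,1\}$.

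For part (b), I use
\begin{align*}
\bar F[w^{(I)}(\lambda,\sigma,z)](r) &= \frac{\chib_1(r)^2}{t^{(I)}(\lambda,\sigma,z)(r)}, \\
\partial_r\bar F[w^{(I)}(\lambda,\sigma,z)](r) &= \frac{2\chib_1(r)(\partial_r\chib_1)(r)}{t^{(I)}(r)}-\frac{\chib_1(r)^2\,\partial_r t^{(I)}(r)}{t^{(I)}(r)^2}.
\end{align*}
Both vanish outside the support of $\chib_1$; on the support one has $r\ge 3/4$ and the lower bound $|t^{(I)}(r)|\ge 15/64$ established in the proof of Theorem~\ref{initial:thmC}, so $|t^{(I)}(r)|^{-1}\le 64/15$. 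Explicit differentiation gives
$$\partial_r t^{(I)}(\lambda,\sigma,z)(r)=1+\lambda^2\int\frac{d^3k}{(4\pi)^2\omega(k)}\,\frac{|f_\sigma(k)|^2}{(r+|k|+2-z)^2},$$
which for $|\lambda|\le\mu_0$ is bounded by $1+1/64$. Since the resulting scalar functions are supported in $[3/4,1]$, the operator norms $\|\bar F[w^{(I)}](H_f+r)\|$ and $\|(\partial_r\bar F[w^{(I)}])(H_f+r)\|$ coincide with the scalar suprema; substituting the bounds above yields an estimate of the form $c_1+c_2\|\partial_r\chib_1\|_\infty$ for which $31$ and $10$ are comfortable overestimates.

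The principal obstacle is the bookkeeping in part (a): each of the relevant quadruples $(m,n,p,q)$ must be tracked through the pull-through procedure, and the specific combination $\|f/\omega\|^2+2\|f\|\,\|f/\omega\|$ must be extracted via the correct choice of Appendix~A estimate at each stage. Once those estimates are in place, the remaining computation is essentially mechanical.
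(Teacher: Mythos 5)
Your treatment of part (b) is essentially correct and matches the paper (lower bound on $|t^{(I)}|$ from Theorem~\ref{initial:thmC}, explicit derivative of $\bar F$, bound on $\partial_r t^{(I)}$), up to a small slip: $\chib_1$ is supported on $[3/4,\infty)$, not $[3/4,1]$, but the lower bound $|t^{(I)}(r)|\ge 15/64$ holds on the whole half-line $r\ge 3/4$ so this is harmless.

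Part (a) has a genuine gap. You write that the Appendix~A estimates are to be applied ``with vectors in $\HH_{\rm red}$ (so $\|H_f^{1/2}\psi\|\le\|\psi\|$).'' But the operators estimated in this lemma are $\overline{W}^{m,n}_{p,q}[w^{(I)}]$ as defined in \eqref{eq:defofWW}: they carry no $P_{\rm red}$, the $x$-integrations run over all of $\R^3$, and they act on the full Fock space. Indeed they sit inside $V_{\umm,\upp,\unn,\uqq}$, where the intermediate factors $\bar F_l = \chib_1^2/t^{(I)}$ are supported on $[3/4,\infty)$ and impose no upper cutoff on $H_f$. So the bound $\|H_f^{1/2}\psi\|\le\|\psi\|$ is not available here, and applying Lemma~\ref{thm:estimates1} directly the way you propose produces an unbounded factor of $\|H_f^{1/2}\psi\|$ in the $a^\ast$ term. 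The paper's actual mechanism is different: it invokes \eqref{eq:estonWintermofw}--\eqref{eq:estonWrintermofw}, i.e.\ the $\|\cdot\|^\flat_{s,p,t,q}$-norm estimate of Lemma~\ref{kernelopestimate}, which trades the unbounded $H_f$-powers for the weight $(r+\Sigma[x^{(p)}])^p(r+\Sigma[\widetilde x^{(q)}])^q$ in an integral norm; this weight is then controlled by the decay of the explicit resolvent $(r+|k|+2-z)^{-1}$ inside $w^{(I)}$, which decays like $r^{-1}$ and exactly compensates. That interplay is the whole content of part (a), it is what produces the specific combination $\|f/\omega\|^2 + 2\|f\|\,\|f/\omega\|$ (via the quantity $E_p(f)$ in the paper), and it is not captured by your route. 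You would need to replace the $\HH_{\rm red}$ shortcut with the $\flat$-norm estimate and the explicit resolvent decay of $w^{(I)}$.
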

\begin{proof}
(a).
In view of  inequalities \eqref{eq:estonWintermofw} and \eqref{eq:estonWrintermofw} we need to estimate the
following, where we use  the abbreviation $f(x^{(p)}) = f(x_1) \cdots f(x_p)$,
\begin{eqnarray}
\lefteqn{ \left[ \| \partial_r^s w^{(I)}_{m+p,n+q}(\lambda,\sigma,z) \|^\flat_{m,p,n,q}  \right]^2  }  \nonumber \\
&& \leq |\lambda|^{2} \| f \|_\infty^{m+n} 4 \int_{} \frac{d X^{(p,q)}}{|X^{(p,q)}|^2} \sup_{r \geq 0}  \frac{ |f(x^{(p)})|^2 |f(\widetilde{x}^{(q)})|^2}{|r  + 1 |^{2+s}}
( r + \Sigma[x^{(p)}])^p  ( r + \Sigma[\widetilde{x}^{(q)}])^q  \nonumber  \\
&& \leq  |\lambda|^2 4 \| f \|_\infty^{m+n} E_p(f) E_q(f) ,  \label{eq:basicIestimate}
\end{eqnarray}
where
$$
E_p(f) := \int_{\R^{p}} \frac{d x^{(p)}}{|x^{(p)}|^2} | |f(x^{(p)})|^2
( 1 + \Sigma[x^{(p)}])^p .
$$
we have $E_0(f) =1$ and for $p=1,2$ it is an elementary estimate to obtain $E_{p}(f) \leq 3 ( \| f /\omega \|^2 + 2 \| f \| \| f / \omega \| )^p$.

(b).
We have using    Lemma \ref{initial:thmC}
$$
\| \bar{F}[w^{(I)}(\lambda,\sigma,z)](H_f + r ) \| \leq \left[ \inf_{r \geq 3/4} t^{(I)}(\lambda,\sigma,z)(r) \right]^{-1} \leq 64/15  .
$$
Similarly we estimate the derivative. First we have
$$
\partial_r \bar{F}[w^{(I)}]  = \frac{2 \chib_1 \partial_r \chib_1}{ t^{(I)}} - \frac{\chib_1^2 \partial_r t^{(I)}}{(t^{(I)})^2} .
$$
Now
\begin{align*}
\partial_r t^{(I)}(\lambda,\sigma,z)(r) &=  1  +  \lambda^2 \int  \frac{d^3k}{(4 \pi)^2 \omega(k)} \frac{ |f_\sigma(k)|^2 }{(r + |k| + 2 - z)^2} .
\end{align*}
This yields the estimate  $| \partial_r t^{(I)}(\lambda,\sigma,z)(r)|  \leq   1  +  |\lambda|^2 \| f / ( 4 \pi\sqrt{ \omega} ) \|^2  \leq 1 + 1/64$.
Thus we find using again Lemma \ref{initial:thmC},
$\| \partial_r \bar{F}(H_f + r) \| \leq  10 \| \partial_r \chib_1 \|_\infty + 26 $,
and hence (b) follows.
\end{proof}

\vspace{0.5cm}

\noindent {\it Proof of Lemma \ref{initial:thmE2222}.}
First observe that $ V_{\umm,\upp,\unn,\uqq}[w^{(I)}]$ satisfies the property (i) of the definition of $\WW^\#_{|\umm|,|\unn|}$.
To estimate the norm $\| V_{\umm,\upp,\unn,\uqq}[w^{(I)}] \|^\#$ we use \eqref{eq:babyestimate1}, the estimates of  Lemma  \ref{initial:thmE},
and we calculate the derivative with respect to $r$ using Leibniz rule.
\qed

\vspace{0.5cm}

Using Lemma  \ref{initial:thmE2222}  we are now ready to show    \eqref{eq:initialthmmain}.
To this end we let $S^L_{M,N}$ denote the set of tuples $(\umm,\upp,\unn,\uqq) \in \N_0^{4L}$ with
$|\umm|=M$, $|\unn|=N$, and  $m_l+p_l+q_l+n_l = 2$.
We estimate the  combinatorial factor in \eqref{initial:eq7} by $2^L$ and obtain
 for $z \in D_{1/2}$
\begin{align}
\| w^{(0)}_{\geq 1}(\lambda,\sigma,z) \|^\# &= \sum_{M+N \geq 1} \xi^{-(M+N)} \| \widetilde{w}_{M,N}(\lambda,\sigma,z) \|^\# \nonumber \\
&\leq \sum_{M+N\geq 1} \sum_{L=1}^\infty \sum_{(\umm,\upp,\unn,\uqq) \in S^L_{M,N}}
\xi^{-(M+N)} 2^L \| V_{\umm,\upp,\unn,\uqq}[w^{(I)}(\lambda,\sigma,z)] \|^\#  \nonumber \\
&\leq \sum_{L=1}^\infty \sum_{M+N\geq 1}  \sum_{(\umm,\upp,\unn,\uqq) \in S^L_{M,N} }
\xi^{-|\umm|-|\unn|} (L+1) C_{\bar{F}} \left[ 2  C_W(\lambda) C_{\bar{F}} \right]^L  \nonumber \\
&\leq\sum_{L=1}^\infty (L+1) 10^L \xi^{-2L}  C_{\bar{F}}  \left[ 2 C_W(\lambda) C_{\bar{F}}  \right]^L \; , \label{eq:feb2:1}
\end{align}
where we used \eqref{initial:thmmain:eq2} in the first inequality, and in the second
inequality we used  $|\umm|+|\unn| \leq 2L$ and that the number of elements $(\umm,\upp,\unn,\uqq) \in \N_0^L$ with
$m_l+n_l+p_l+q_l=2$ is bounded by $10^{L}$.
A similar but simpler estimate, yields
\begin{align}
\| \partial_r w^{(0)}_{0,0}(\lambda,\sigma,z) - 1  \|
&\leq  \| \partial_r t^{(I)}(\lambda,\sigma,z) - 1 \| + \sum_{L=2}^\infty  \sum_{(\upp,\uqq) \in \N_0^{2L}: p_l+q_l=2}
 \| V_{\uzz,\upp,\uzz,\uqq}[w^{(I)}(\lambda,\sigma,z)] \|^\# \nonumber \\
&\leq       |\lambda|^2  \| f/(4 \pi \sqrt{\omega}) \|^2  +  \sum_{L=2}^\infty  3^L (L+1) C_{\bar{F}}   \left[ C_W(\lambda) C_{\bar{F}} \right]^L \; . \label{eq:feb2:2}
\end{align}
 Analogously we have
\begin{align}
\|  w^{(0)}_{0,0}(\lambda, \sigma,z)(0) - z  \|
&\leq  \|  t^{(I)}(\lambda,\sigma,z)(0) - z \| + \sum_{L=2}^\infty  \sum_{(\upp,\uqq) \in \N_0^{2L}: p_l+q_l=2}
 \| V_{\uzz,\upp,\uzz,\uqq}[w^{(I)}(\lambda,\sigma,z)]  \|^\#  \nonumber \\
&\leq       |\lambda|^2  \| f/ ( 4 \pi \sqrt{\omega}) \|^2  +  \sum_{L=2}^\infty  3^L (L+1) C_{\bar{F}}   \left[  C_W(\lambda) C_{\bar{F}} \right]^L \; . \label{eq:feb2:3}
\end{align}
In view of the definition of $C_W(\lambda)$  the right hand side in \eqref{eq:feb2:1}--\eqref{eq:feb2:3}  can
be made arbitrarily small for sufficiently small $|\lambda|$. It now remains to show (i) and (ii)  of Theorem \ref{initial:thmmain}.

Part (i) follows from the convergence established in estimates  \eqref{eq:feb2:1}--\eqref{eq:feb2:3},  which is
uniform in $(\lambda,z) \in B_{\lambda_0}  \times D_{1/2}$ for some ball $B_{\lambda_0}$ of nonzero radius,
 and the following lemma.

  \begin{lemma}   \label{lem:analytwI0}    For $(\umm,\upp,\unn,\uqq) \in \N_0^{4L}$ and $\sigma \geq 0$,
the  function
\begin{equation} \label{eq:Vanalyt}
(\lambda,z) \mapsto  V_{\umm,\upp,\unn,\uqq}[w^{(I)}(\lambda,\sigma,z)]
\end{equation}
 is an analytic
$\overline{\WW}_{|\umm|,|\unn|}^\#$-valued function on $B_{\mu_0} \times D_{1/2}$.
\end{lemma}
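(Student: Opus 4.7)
The plan is to realize $(\lambda,z)\mapsto V_{\umm,\upp,\unn,\uqq}[w^{(I)}(\lambda,\sigma,z)]$ as a composition
\[
(\lambda,z)\;\xrightarrow{\;\Phi\;}\;w^{(I)}(\lambda,\sigma,z)\;\xrightarrow{\;\Psi\;}\;V_{\umm,\upp,\unn,\uqq}[\,\cdot\,],
\]
verify separately that each factor is analytic into a suitable Banach space of kernels (one equipped with a $\flat$-type norm strong enough that inequalities \eqref{eq:estonWintermofw}--\eqref{eq:estonWrintermofw} convert kernel bounds into operator-norm bounds, together with the corresponding $\partial_r$-control), and then invoke the Hartogs-type remark already used in the proof of Theorem \ref{thm:analytbasic2} so that joint analyticity follows from separate analyticity in $\lambda$ and $z$.

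For $\Phi$, I would read off the explicit formulas for $t^{(I)}$ and $\widehat{w}^{(I)}_{m,n}$. Each component is a polynomial in $\lambda$ of degree at most two whose coefficients depend on $z$ only through the resolvent kernels $1/(r+|k|+2-z)$. Because $r+|k|+2\geq 2$ while $|z|<1/2$, the denominators stay bounded below by $3/2$ uniformly in $(r,k,z)$, and the same bound applies to the $r$-derivative $-1/(r+|k|+2-z)^{2}$. A geometric series expansion in $z$ about any point of $D_{1/2}$ therefore converges in the $\|\cdot\|_{\infty}$ norm of the coefficient and of its $r$-derivative, and a direct dominated-convergence argument promotes this to convergence in the $\|\cdot\|^{\flat}$ norms that govern $\overline{\WW}^{\#}_{|\umm|,|\unn|}$. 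The polynomial dependence on $\lambda$ is trivial.

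For $\Psi$, I would mimic Lemma \ref{lem:frechetdiff1}. The expression \eqref{eq:defofV} is a vacuum expectation value of a product whose factors are either (a) the operators $\overline{W}^{m_l,n_l}_{p_l,q_l}[w]$, which depend \emph{linearly} on $w_{m_l+p_l,n_l+q_l}$, or (b) the multiplication operators $\bar F_l[w](H_f+r+\widetilde r_l)=\chib_1^2/w_{0,0}$, which depend on $w_{0,0}$ only. By Theorem \ref{initial:thmC} the image of $\Phi$ lies in the open set $\{w:\inf_{r\geq 3/4}|w_{0,0}(r)|>15/64\}$, on which the Neumann-series argument from Lemma \ref{lem:frechetdiff1}(b) shows $w\mapsto \bar F[w]$ is Frechet differentiable. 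Combining linear dependence of the $\overline{W}$-factors with differentiability of the $\bar F$-factors, using the telescoping identity \eqref{eq:telescoping}, the basic estimate \eqref{eq:babyestimate1}, and the norm bounds \eqref{eq:estonWintermofw}--\eqref{eq:estonWrintermofw}, one concludes as in Lemma \ref{lem:frechetdiff1}(a) that $\Psi$ is Frechet differentiable into $\overline{\WW}^{\#}_{|\umm|,|\unn|}$ on that open domain, and therefore analytic. The composition $\Psi\circ\Phi$ is then analytic, proving the lemma.

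The main obstacle is purely bookkeeping: one has to fix the $\flat$-flavored norm on $\overline{\WW}^{\#}_{|\umm|,|\unn|}$ so that it simultaneously (i) controls operator norms of the $\overline{W}$-factors via \eqref{eq:estonWintermofw}--\eqref{eq:estonWrintermofw}, (ii) is preserved under the $\partial_r$-differentiation that enters the $\|\cdot\|^{\#}$ norm, and (iii) is a space into which the resolvent-type kernels $\widehat w^{(I)}_{m,n}(\lambda,\sigma,\cdot)$ genuinely map analytically in $z$. The key quantitative input that makes all three compatible is that the relevant resolvent singularity $z=r+|k|+2$ stays uniformly bounded away from $D_{1/2}$, which both secures the $z$-analyticity of $\Phi$ and, via Theorem \ref{initial:thmC}, keeps $w^{(I)}_{0,0}$ away from zero on $\supp\chib_1$ so that $\Psi$ is defined and differentiable.
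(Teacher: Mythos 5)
Your proposal is correct and follows essentially the same strategy as the paper: factor the map as $\Psi\circ\Phi$ with $\Phi(\lambda,z)=w^{(I)}(\lambda,\sigma,z)$ analytic into $\overline{\WW}^\#$ (polynomial in $\lambda$, resolvent-type kernels in $z$ with denominators bounded away from zero) and $\Psi=V_{\umm,\upp,\unn,\uqq}[\,\cdot\,]$ Frechet differentiable on the open set $\inf_{r\geq 3/4}|w_{0,0}(r)|>\epsilon$, obtained by composing the bounded multilinear expectation $\widetilde V$ (differentiable by multilinearity and the bound \eqref{eq:contmultilinear2}) with the Frechet-differentiable map $t\mapsto\chib_1^2/t$. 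The only cosmetic differences are that the paper establishes $z$-analyticity of $w^{(I)}$ directly via uniform control of difference quotients of $Q(r,a,z)=1/(r+2+a-z)$ rather than a geometric-series expansion, and it derives differentiability of the multilinear part from boundedness rather than invoking the telescoping identity \eqref{eq:telescoping} by name; these are equivalent.
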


\begin{proof} The idea of the proof is to show that \eqref{eq:Vanalyt}  is a composition of an analytic
map with an F-differentiable mapping between suitable Banach spaces. The lemma will follow from Steps 1 and 2, below.
First  we introduce the following Banach spaces.
Let
$$
\overline{\WW}_{0,0}^\# := \{ w \in C^1[0,\infty) | \| w \|^\# := \| w \|_\infty + \| \partial_r w \|_\infty \} \ .
$$
Let $\overline{\WW}_{m,n}^\#$ be the Banach space consisting of functions $w_{m,n} \in L^\infty( \R^{3m + 3n} ; C^1[0,\infty))$
satisfying the following properties:
\begin{itemize}
\item[(a)] $w_{m,n} = w_{m,n}^{({\rm sym})}$,
\item[(b)] the following norm is finite
$$
\| w_{m,n} \|^\# := \sup_{\substack{ s + p = m \\ t + q = n}} \| w_{m,n} \|_{s,p,t,q}^\flat
+ \sup_{\substack{ s + p = m \\ t + q = n}} \| \partial_r w_{m,n} \|_{s,p,t,q}^\flat  .
$$
\end{itemize}
Hence for almost all  $K^{(m,n)} \in \R^{3m+3n}$ we have $w_{m,n}(\cdot,K^{(m,n)}) \in C^1[0,\infty)$, where
the derivative is denoted by $\partial_r w_{m,n}$.
Furthermore we introduce the Banach space
$
\overline{\WW}^\# := \overline{\WW}_{0,0}^\# \oplus \overline{\WW}_{0,2}^\# \oplus  \overline{\WW}_{2,0}^\# \oplus \overline{\WW}_{1,1}^\#
$
with norm $\| w \| := \sum_{m+n=0,2} \| w_{m,n} \|^\#$.

\vspace{0.5cm}

\noindent
\underline{Step 1:}  \ \ Let $\epsilon > 0$. Then the  map on $\bar{\mathcal{O}}^{(\epsilon)} := \{ w \in \overline{\WW}^\#  | \inf_{r \geq  \frac{3}{4}} | w_{0,0}(r) | > \epsilon \}$
\begin{eqnarray*}
V_{ \umm,\upp,\unn,\uqq}[\cdot ] \ : \ \bar{\mathcal{O}}^{(\epsilon)}    &\to&  \WW^\#_{|\umm|,|\unn|} \\
w &\mapsto& V_{ \umm,\upp,\unn,\uqq}[w]
\end{eqnarray*}
is F-differentiable.

\vspace{0.5cm}

Step 1 is shown by writing $V_{ \umm,\upp,\unn,\uqq}[\cdot ]$  as a composition of the F-differentiable maps defined in \eqref{eq:defofVtildefunction} and \eqref{eq:defofFfunction}.
Let  $\underline{w} = (w_1,...,w_L)$ with $w_l \in \overline{\WW}_{n_l,m_l}$
and $\underline{G} = (G_1,...,G_L)$ with $G_l  \in \overline{\WW}_{0,0}$.
Then the expression
\begin{eqnarray}
\lefteqn{ \widetilde{V}_{\umm,\upp,\unn,\uqq}[\underline{w},\underline{G}]( r, K^{(|\umm|,|\unn|)}) :=  } \label{eq:defofVtildefunction} \\
&
\left\langle \Omega  ,  G_0( r + \widetilde{r}_0 ) \prod_{l=1}^L
\left\{  \overline{W}_{p_l,q_l}^{m_l,n_l}[w_l](r+r_l,  K_l^{(m_l,n_l)} ) G_l(  r  + \widetilde{r}_l ) \right\} \Omega
\right\rangle  ,  \nonumber
\end{eqnarray}
satisfies the property (a) of the definition of $\WW_{\umm,\unn}^\#$ if $G_0$ and $G_L$ have support contained in $[0,1)$.
Moreover one easily shows the bound
\begin{equation} \label{eq:contmultilinear2}
 \| \widetilde{V}_{\umm,\upp,\unn,\uqq}[\underline{w},\underline{G}] \|^\# \leq \prod_{l=0}^L \left\{ \| G_l \|^\# \right\}   \prod_{l=1}^{L}
  \left\{ \| w_l \|^\#  \right\}
\end{equation}
calculating  the derivative  $\partial_r \widetilde{V}_{\umm,\upp,\unn,\uqq}[\underline{w},\underline{G}]$ using  Leibniz' rule and
estimating the  resulting expression using inequality \eqref{eq:babyestimate1} and estimates
\eqref{eq:estonWintermofw}  and  \eqref{eq:estonWrintermofw}.
To show Step 1  it remains to observe  that  the map on $\bar{\mathcal{O}}_{0,0}^{(\epsilon)}
 :=\{ t \in \overline{\WW}_{0,0}^\#  | {\rm inf}_{r  \geq \frac{3}{4}} | w_{0,0}(r) | > \epsilon \}$
\begin{align} \label{eq:defofFfunction}
 \bar{F} [\cdot ]  : \bar{\mathcal{O}}_{0,0}^{(\epsilon)} & \longrightarrow  \overline{\WW}_{0,0}^\#   \quad , \quad
 t \mapsto \frac{\chib_1^2}{t}
\end{align}
 is F-differentiable. Now this is proved essentially the same way as the first part of Lemma \ref{eq:frechetf}.

\vspace{0.5cm}

\noindent
\underline{Step 2:} \ \
The  function $(\lambda,z) \mapsto w^{(I)}(\lambda,\sigma,z)$ is an analytic $\overline{\WW}^\#$-valued function on $B_{\mu_0} \times D_{1/2}$.

\vspace{0.5cm}

 Let $(\lambda,z) \in B_{\mu_0} \times D_{1/2}$. First observe that $w^{(I)}(\lambda,\sigma,z) \in \overline{\WW}^\#$,
which follows from inequalities  \eqref{eq:thmE22-1} and \eqref{eq:basicIestimate}.
Analyticity in $\lambda$ is trivial since $w^{(I)}$ is a polynomial of second degree in $\lambda$,
with coefficients which are elements of $\overline{\WW}^\#$, which again follows from inequality \ref{eq:basicIestimate} and estimates used
to show  \eqref{eq:thmE22-1}.
To show analyticity in $z$ we
show the following estimate for $\lambda \in B_{\lambda_0}$ and $z,z+h \in D_{1/2}$,
\begin{equation} \label{eq:analyticityinzwovereline}
\|  \left( w^{(I)}(\lambda,\sigma,z+h) - w^{(I)}(\lambda,\sigma,z) \right) -  \partial_z w^{(I)}(\lambda,\sigma, z) \|^\# = o(1) , \quad h \to 0  .
\end{equation}
First we need to show that  $\partial_z w^{(I)}(\lambda,\sigma,z)  \in  \overline{\WW}^\#$. To show this, consider
the denominator occurring in in the expressions for $w_{m,n}^{(I)}$,
$$
Q(r,a,z) := \frac{1}{r + 2 + a - z} ,
$$
where $a$ stands for $0$, $|k_1|$, $|\widetilde{k}_1|$ or  $|k_1| + |\widetilde{k}_1|$.
Now   $|\partial_r^s \partial_z Q(r,a,z) | \leq | Q(r,a,z) |$ for $r \geq 0$,
$z \in D_{1/2}$, and $s=0,1$.  Thus   $\partial_z w^{(I)}(\lambda,\sigma,z)  \in  \overline{\WW}^\#$ follows from
the corresponding estimate for $w^{(I)}$. To show \eqref{eq:analyticityinzwovereline} it suffices to consider the difference quotients of the
denominators $Q(r,a,z)$,
\begin{equation}
 \frac{1}{h} \left( Q(r,a,z+h) - Q(r,a,z)  \right) - \partial_z Q(r,a,z) = h P(r,a,z,h)  ,
\end{equation}
with
\begin{equation}
 P(r,a,z,h) :=  \frac{1}{(r + 2 + a+z)^2(r + 2 + a - z + h)} .
\end{equation}
If $h$ is sufficiently small than $| \partial_r^s P(r,a,z,h) | \leq 2 | Q(r,a,z) |$
for $r \geq 0$, $z \in D_{1/2}$, and $s=0,1$.
Thus \eqref{eq:analyticityinzwovereline}, and hence the lemma, now follow again from the corresponding estimates for $w^{(I)}$.
\end{proof}

(ii) of Theorem \ref{initial:thmmain} follows from the convergence established in estimates  \eqref{eq:feb2:1}--\eqref{eq:feb2:3},  which is
uniform in $(\sigma,z) \in [0,\infty)  \times D_{1/2}$,
 and the following lemma.

\begin{lemma}  \label{lem:contwI}  Let $|\lambda| < \mu_0$. Then the function
$(\sigma,z) \mapsto V_{\umm,\upp,\unn,\uqq} [w^{(I)}(\lambda,\sigma,z)]$ is a continuous
function on   $D_{1/2} \times [0,\infty)$ with respect to the $\| \cdot \|_2$-norm.
\end{lemma}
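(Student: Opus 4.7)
The plan is to mimic the strategy used in the proof of Lemma \ref{lem:newcont}, namely to reduce continuity of the multilinear expression $V_{\umm,\upp,\unn,\uqq}[w^{(I)}(\lambda,\sigma,z)]$ to continuity of each individual factor in \eqref{eq:defofV} and then combine them through the telescoping identity \eqref{eq:telescoping}. The parameter dependence enters only through $w^{(I)}(\lambda,\sigma,z)$, so the core of the proof is to establish appropriate continuity of $(\sigma,z) \mapsto w^{(I)}(\lambda,\sigma,z)$.

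First, I would show that as $(\sigma',z') \to (\sigma,z)$ in $[0,\infty) \times D_{1/2}$, the zeroth component $t^{(I)}(\lambda,\sigma',z')$ converges to $t^{(I)}(\lambda,\sigma,z)$ in the norm $\|\cdot\|^\#$ of $C^1[0,\infty)$. This follows from dominated convergence applied to the defining integral: as $\sigma' \to \sigma$ the integrand $|f_{\sigma'}(k)|^2/[\omega(k)(r+|k|+2-z')]$ converges pointwise a.e. and is dominated by the $(r,z)$-uniformly integrable function $|f(k)|^2/[\omega(k)(|k|+3/2)]$, using Hypothesis (H). Together with the lower bound $|t^{(I)}| \geq 15/64$ on $[3/4,\infty)$ from Lemma \ref{initial:thmC}, this yields the $\|\cdot\|^\#$-continuity of $\bar F[w^{(I)}(\lambda,\sigma',z')]$ to $\bar F[w^{(I)}(\lambda,\sigma,z)]$, and hence operator-norm continuity of the corresponding factors $\bar F_l[w^{(I)}](H_f + r + \widetilde{r}_l)$ together with their $\partial_r$-derivatives.

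Second, for the kernels $w^{(I)}_{m,n}$ with $m+n = 2$, I would establish convergence in each of the weighted norms $\|\cdot\|^\flat_{s,p,t,q}$. Since $|f_{\sigma'}|\leq |f|$ and $f_{\sigma'} \to f_\sigma$ pointwise a.e., dominated convergence gives $\|f_{\sigma'} - f_\sigma\|$ convergence in any weighted $L^2$-norm controlled by the bounds from \eqref{eq:basicIestimate}; the $z$-dependence of the denominators is continuous uniformly on compacta by the same estimates. Through \eqref{eq:estonWintermofw}--\eqref{eq:estonWrintermofw}, this translates into operator-norm continuity of each $\overline{W}^{m_l,n_l}_{p_l,q_l}[w^{(I)}(\lambda,\sigma',z')](r+r_l, K_l^{(m_l,n_l)})$, in fact a continuity statement of the $L^2$-in-$K$ type analogous to Lemma \ref{lem:newcont}.

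Finally I would apply the telescoping identity \eqref{eq:telescoping} to the product in \eqref{eq:defofV}, bound each of the resulting differences using \eqref{eq:babyestimate1} combined with the uniform operator-norm bounds from Lemma \ref{initial:thmE} on the unchanged factors and the convergence estimates above on the changing factor, and then integrate out the external momenta against $|K^{(|\umm|,|\unn|)}|^{-2}$ to deduce $\|\cdot\|_2$-continuity of $V_{\umm,\upp,\unn,\uqq}[w^{(I)}(\lambda,\sigma,z)]$. The main obstacle I anticipate is the mismatch of norms: the conclusion asks for $\|\cdot\|_2$-continuity of $V$, while the factors $\bar F_l$ must be controlled in sup-norm and the factors $\overline{W}^{m_l,n_l}_{p_l,q_l}$ only in a weighted $L^2$-sense. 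One must therefore separate the telescoping carefully, absorbing the $|K|^{-2}$ weight and the $\int dX^{(p_l,q_l)}$ integrations precisely once, into whichever single factor changes in each telescoping term, so that continuity in $\sigma$ (which may be discontinuous pointwise along the sphere $|k|=\sigma$) is used only inside an $L^2$ integral where it is harmless.
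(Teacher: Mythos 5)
Your proposal is correct and follows essentially the same route as the paper's proof: telescoping via \eqref{eq:telescoping} and \eqref{eq:babyestimate1}, dominated convergence for $t^{(I)}$ and the $w^{(I)}_{m,n}$ with $m+n=2$, and uniform operator bounds from Lemma \ref{initial:thmE} on the unchanged factors. The ``obstacle'' you flag at the end is exactly what the paper's introduction of the norm $\| \cdot \|^\sharp$ and the estimate $\int |K^{(m,n)}|^{-2}\sup_r\|\overline W^{m,n}_{p,q}\|^2_{\rm op}\le (\|w_{m+p,n+q}\|^\sharp)^2$ makes precise, so your ``absorb the $|K|^{-2}$ weight into the single changing factor'' resolution is the one the paper carries out.
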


The proof uses essentially the same idea as the proof of Lemma  \ref{lem:newcont}.
\begin{proof}
First observe  that  the kernel  $V_{\umm,\upp,\unn,\uqq}$ is a multi-linear expression of integral kernels, thus to show c-continuity
we will use \eqref{eq:telescoping} and \eqref{eq:babyestimate1}. Moreover, we
 we will use   the estimate
\begin{eqnarray*}
\left[ \int_{B_1^{m+n} } \frac{dK^{(m,n)}}{|K^{(m,n)}|^2} \sup_{r \geq 0} \left\| \overline{W}[w]_{p,q}^{m,n}(r,K^{(m,n)} )\right\|^2_{\rm op}  \right]^{1/2} \leq  \| w_{m+p,n+q} \|^\sharp ,
\end{eqnarray*}
with
\begin{eqnarray*}
 \lefteqn { \| w_{m,n} \|^\sharp } \\
&& :=
\left[ \int
\frac{d K^{(m,n)}}{|K^{(m,n)}|^2}  \sup_{r \geq 0}
\left| w_{m,n}(r  , k^{(m)}, \widetilde{k}^{(n)})  \right|^2 ( 1 + r + \Sigma[k^{(m)}])^m  ( 1 + r +  \Sigma[\widetilde{k}^{(n)}])^n \right]^{1/2}. \nonumber
\end{eqnarray*}
which follows from Lemma \ref{kernelopestimate}.
Above estimates together with  the estimates in  Lemma \ref{initial:thmE} and the
limits \eqref{lem:contwIproofeq1} and  \eqref{lem:contwIproofeq2}, shown next, yield c-continuity.
From the explicit expression of $w^{(I)}$  it follows from dominated convergence that   for $m+n=2$
\begin{equation}
\lim_{  (z,\sigma) \to (z_0,\sigma_0) }  \left\| w_{m,n}^{(I)}(\lambda,\sigma_0,z_0) -  w_{m,n}^{(I)}(\lambda,\sigma,z) \right\|^\sharp =  0 ,\label{lem:contwIproofeq1}
\end{equation}
and it follows that
\begin{equation}
\lim_{  (z,\sigma) \to (z_0,\sigma_0) } \sup_{r \in [0,\infty)}
\left|
  \frac{\chib_1^2(r)}{t^{(I)}(\lambda,\sigma_0,z_0)(r)} -    \frac{\chib_1^2(r)}{t^{(I)}(\lambda,\sigma,z)(r)}   \right| =  0  \label{lem:contwIproofeq2} .
\end{equation}
\end{proof}

\section{Proving the Main Theorem}

\label{sec:prov}

In this section, we prove Theorem  \ref{thm:main2}, the main result of this paper.
The basic idea behind the proof of Theorem   \ref{thm:main2}, is that the ground state and its energy are given as
limits of  uniformly convergent sequences having terms which are analytic in $\lambda$ and continuous in $\sigma$, respectively.
A similar idea was  used in \cite{GH09} to show the analyticity in the coupling constant
of the ground state and ground state energy. The exposition  in \cite{GH09} is  different
due to the different representation of the spectral parameter.

\vspace{0.5cm}

\noindent
{\it Proof of Theorem \ref{thm:main2}.} Choose  $\rho, \xi, \epsilon_0$ such that Hypothesis (R) holds.
By Theorem \ref{initial:thmmain} we know that there exists a positive $\lambda_0$ which is less or equal than  $\mu_0$ such that for
all $ \lambda \in B_{\lambda_0}$ and $\sigma \geq0$ we have
$w^{(0)}(\lambda,\sigma)  := w^{(0)}(\lambda,\sigma,\cdot)  \in B(\epsilon_0/2,\epsilon_0/2,\epsilon_0/2)$,  and that the analyticity, continuity and
symmetry property as stated in (i)-(iii) of Theorem \ref{initial:thmmain} hold.
By Theorem  \ref{thm:bcfsmain}   and a twofold application of the Feshbach property, see Theorem \ref{thm:fesh}, it follows that
$E_\sigma(\lambda) = e_{(0,\infty)}[w^{(0)}(\lambda,\sigma)]$ is an eigenvalue
of $H_{\lambda,\sigma}$ with nonzero eigenvector $\psi_\sigma(\lambda) =  Q_{\lambda,\sigma} Q_{\lambda,\sigma}^{(I)} \psi_{(0,\infty)}[w^{(0)}(\lambda,\sigma)]$ where
\begin{align*}
Q_{\lambda,\sigma} &:= Q_{\chi^{(I)}}(H_{\lambda,\sigma} - E_\sigma(\lambda) ,\tau + H_f - E_\sigma(\lambda))  \\
Q_{\lambda,\sigma}^{(I)} &:=  Q_{\chi_1}(H^{(I)}_{\lambda,\sigma}(E_\sigma(\lambda))  , T_{\lambda,\sigma}^{(I)}(E_\sigma(\lambda))  )  .
\end{align*}
Note that the first two Feshbach transformations do not involve any transformation of the spectral parameter.
By Theorem \ref{thm:continsigma}  it follows that $E_\sigma(\lambda)$ and $\psi_{(0,\infty)}[w^{(0)}(\lambda,\sigma)]$ are analytic in $\lambda \in B_{\lambda_0}$ for all $\sigma \geq 0$ and continuous in $\sigma \in [0,\infty)$ for all $\lambda \in B_{\lambda_0}$.
It follows  using Theorem \ref{initial:thmA} that $(\lambda,\sigma) \mapsto Q_{\lambda,\sigma}$ is uniformly bounded on $B_{\lambda_0} \times [0,\infty)$, analytic in $\lambda$, and continuous in $\sigma$ where the continuity follows from  estimate (vi) in Lemma \ref{thm:estimates1}.
Similarly it follows using Theorem \ref{initial:thmC} that  $(\lambda,\sigma) \mapsto Q_{\lambda,\sigma}$ is uniformly bounded on $B_{\lambda_0} \times [0,\infty)$, analytic in $\lambda$, and continuous in $\sigma$ 
Now it follows that $\psi_\sigma(\lambda)$ is analytic for $\lambda \in B_{\lambda_0}$, continuous in $\sigma \geq 0$. Thus we have shown (ii).
Next we show that the expansion coefficients are bounded and continuous in $\sigma$ and use Cauchy's formula
$$
E_\sigma^{(n)} = \frac{1 }{2 \pi i} \int_{|\lambda|= r} \frac{E_\sigma(\lambda)}{\lambda^{n+1}} d\lambda  \quad , \quad
\psi_\sigma^{(n)} = \frac{1 }{2 \pi i} \int_{|\lambda|= r} \frac{\psi_\sigma(\lambda)}{\lambda^{n+1}} d\lambda  ,
$$
for some  positive $r$ which is less than $\lambda_0$.
Using Cauchys formula it follows that
 $\sup_{\sigma \geq 0 } \| \psi_\sigma^{(n)}\| $ is bounded because of the uniform bound $\|\psi_\sigma(\lambda)\| \leq 4 e^4$, see Theorem \ref{thm:bcfsmain}, and the boundedness
of $Q_{\lambda,\sigma}$ and   $Q_{\lambda,\sigma}^{(I)}$. Moreover $\sigma \mapsto \psi_\sigma^{(n)}$ is continuous, which follows from Cauchy's formula
and dominated convergence. Likewise it follows that $\sup_{\sigma \geq 0} | E_\sigma^{(n)}|$  is  bounded, since by definition $| E_\sigma( \lambda)| \leq 1/2$.
Again using dominated convergence and Cauchy's formula we see that $\sigma \mapsto  E_\sigma^{(n)}$ is continuous.
By  possibly choosing  $\lambda_0$ smaller but still positive one can ensure that  for all $\sigma \geq 0$ the projection
\begin{equation} \label{eq:projectionlambda}
P_\sigma(\lambda) := \frac{ \left|         \psi_\sigma({\lambda})      \right\rangle \left\langle    \psi_\sigma(\overline{\lambda})              \right| }{ \left\langle    \psi_\sigma(\overline{\lambda})       ,         \psi_\sigma({\lambda})      \right\rangle }
\end{equation}
is well defined
for $| \lambda | < {\lambda}_0$. To this end we need to show the uniformity in $\sigma$.
By choosing a phase and a suitable normalization  we can assume that  $\psi_\sigma(\lambda)  = \Omega_{\downarrow} + \sum_{n=1}^\infty
\psi_\sigma^{(n)}\lambda^n$ with radius of convergence greater or equal to $\lambda_0$,
where $\Omega_{\downarrow}$ is defined in \eqref{eq:perturbproj00}.
Since we have shown that   $\psi_\sigma^{(n)}$ can be estimated
uniformly in $\sigma \geq 0$, it follows that  the denominator in  \eqref{eq:projectionlambda} can be estimated from below by a positive constant uniformly in $\sigma \geq 0$.
In view of
\eqref{eq:projectionlambda}, it follows that $\lambda \mapsto P_\sigma(\lambda)$ is an analytic function on $B_{\lambda_0}$ and that $P_\sigma(\lambda)^* = P_\sigma(\overline{\lambda})$.
Thus we have shown (iii).
The continuity of $P_\sigma^{(n)}$ in $\sigma$ follows from
$$
P_\sigma^{(n)} = \frac{n! }{2 \pi i} \int_{|\lambda|= r} \frac{P_\sigma(\lambda)}{\lambda^{n+1}} d\lambda , \quad 0 < r < \lambda_0 ,
$$
 dominated convergence, and in view of \eqref{eq:projectionlambda} the continuity of $\psi_\sigma(\lambda)$ in $\sigma$.
To show (i) assume that $\lambda \in B_{\lambda_0}$ is real. Then $w^{(0)}(\lambda,\sigma)$ is a symmetric kernel, see Theorem \ref{initial:thmmain}.
It now follows from  Theorem \ref{thm:bcfsmain} that $H^{(0)}_{\lambda,\sigma}(z)$ is bounded invertible if $z \in (-\frac{1}{2}, E_\sigma(\lambda))$. Applying the Feshbach property twice
it follows that $H_\lambda-z$ is bounded invertible for $z \in (-\frac{1}{2}, E_\sigma(\lambda))$. For $ z \leq  - 1/2$ the bounded invertibility of $H_{\sigma,\lambda} - z$ follows from
the  estimate  $\| \lambda  \sigma_x \phi(f_\sigma) (H_f + \tau + \frac{1}{2} )^{-1} \| \leq 6 | \lambda | /( 8 \mu_0)   < 1 $, see Lemma \ref{thm:estimates1}.
Thus $E_\sigma(\lambda) = \inf \sigma(H_{\sigma,\lambda})$ for real $\lambda \in B_{\lambda_0} \cap \R$. The uniqueness follows from Theorem  \ref{thm:uniqueness}.
Finally observe that $(-1)^N H_{\lambda,\sigma}(-1)^N = H_{-\lambda,\sigma}$ where $N$ is the closed linear operator on $\FF$ with $N \upharpoonright S_n(\hh^{\otimes n}) = n$.
This implies that the ground state energy $E_{\sigma}(\lambda)$ cannot depend on odd powers of $\lambda$.
\qed

\section{Analytic Perturbation Theory}
\label{sec:ana}

In this section we discuss analytic perturbation theory.
We put the discussion in Section \ref{sec:mod} about analytic perturbation theory
on a sound mathematical footing and justify Equation  \eqref{eq:perturbproj0}, by proving Theorem
\ref{thm:perturb1} below. Moreover, we elaborate on  Remark   \ref{thm:formalperturb00} 
at the end of this section.

\begin{theorem} \label{thm:perturb1}
For $\sigma > 0$, there is a $\lambda_0(\sigma) > 0$ such that for all
$\lambda \in B_{\lambda_0(\sigma)}$, the Hamiltonian
$H_{\lambda,\sigma}$ has a non-degenerate eigenvalue
$\widehat{E}_\sigma(\lambda)$ with eigen-projection $\widehat{P}_\sigma(\lambda)$ such that
\begin{itemize}
\item[(i)] $\widehat{E}_{\sigma}(\lambda) = \inf \sigma(H_{\lambda,\sigma})$ for $\lambda \in B_{\lambda_0(\sigma)}$  and $\widehat{E}_\sigma(0) = 0$.
\item[(ii)] $\lambda \mapsto \widehat{E}_\sigma(\lambda)$ and $\lambda \mapsto \widehat{P}_\sigma(\lambda)$ are analytic functions on $B_{\lambda_0(\sigma)}$.
\item[(iii)] $\widehat{P}^*_\sigma(\lambda) = \widehat{P}_\sigma(\overline{\lambda})$ for all $\lambda \in B_{\lambda_0(\sigma)}$.
\end{itemize}
Moreover on $B_{\lambda_0(\sigma)}$ we have  a convergent power series expansion $\widehat{P}_\sigma(\lambda) = \sum_{n=0}^\infty \widehat{P}_\sigma^{(n)} \lambda^n$,
where $\widehat{P}^{(n)}_{\sigma}$ is given by   \eqref{eq:perturbproj0}.
\end{theorem}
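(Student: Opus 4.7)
The plan is to exploit the infrared cutoff to factor the Fock space so that the ground state becomes an isolated eigenvalue on one tensor factor, and then apply standard Kato analytic perturbation theory on that factor. Write $\hh = \hh_\sigma^{(+)} \oplus \hh_\sigma^{(-)}$ where $\hh_\sigma^{(-)} := L^2(\{|k|<\sigma\})$, so that $\FF \cong \FF(\hh_\sigma^{(+)}) \otimes \FF(\hh_\sigma^{(-)})$. Since $\chi_\sigma f \in \hh_\sigma^{(+)}$, the perturbation $T_\sigma$ acts trivially on the low-frequency factor, and
\[
H_{\lambda,\sigma} \;=\; \widetilde H_{\lambda,\sigma} \otimes 1 \;+\; 1 \otimes H_f^{(-)},
\]
where $\widetilde H_{\lambda,\sigma} := \tau\otimes 1 + 1\otimes H_f^{(+)} + \lambda \sigma_x \otimes \phi^{(+)}(\chi_\sigma f)$ acts on $\widetilde \HH_\sigma := \C^2 \otimes \FF(\hh_\sigma^{(+)})$.

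On $\widetilde\HH_\sigma$ the unperturbed operator has spectrum $\{0,2\}\cup[\sigma,\infty)$, since $\tau$ has eigenvalues $\{0,2\}$ and $H_f^{(+)}$ has spectrum $\{0\}\cup[\sigma,\infty)$. Thus the ground state $\Omega_\downarrow^{(+)} = \binom{0}{1}\otimes\Omega^{(+)}$ is a non-degenerate isolated eigenvalue at $0$ with gap $\min(2,\sigma)>0$. Because $\phi^{(+)}(\chi_\sigma f)$ is infinitesimally bounded with respect to $H_f^{(+)}$ (Lemma \ref{thm:estimates1}), $\widetilde H_{\lambda,\sigma}$ is a self-adjoint analytic family of type (A) in $\lambda$. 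Kato's theorem then produces $\lambda_0(\sigma)>0$ and analytic functions $\widetilde E_\sigma(\lambda)$, $\widetilde P_\sigma(\lambda)$ on $B_{\lambda_0(\sigma)}$ which are the isolated non-degenerate eigenvalue and the associated rank-one eigen-projection of $\widetilde H_{\lambda,\sigma}$, with $\widetilde E_\sigma(0)=0$, $\widetilde P_\sigma(\lambda)^*=\widetilde P_\sigma(\overline\lambda)$, and the convergent Rayleigh--Schr\"odinger expansions of \cite{K}.

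I then define $\widehat E_\sigma(\lambda):=\widetilde E_\sigma(\lambda)$ and $\widehat P_\sigma(\lambda):=\widetilde P_\sigma(\lambda)\otimes P_{\Omega^{(-)}}$, and verify directly from the factorization that $\widehat P_\sigma(\lambda)$ is a rank-one eigen-projection of $H_{\lambda,\sigma}$ at $\widehat E_\sigma(\lambda)$. For real $\lambda$, $H_f^{(-)}\ge 0$ with $\Omega^{(-)}$ attaining $0$, so $\inf\sigma(H_{\lambda,\sigma})=\inf\sigma(\widetilde H_{\lambda,\sigma})=\widehat E_\sigma(\lambda)$, giving (i); items (ii) and (iii) transfer immediately from the corresponding properties of $\widetilde E_\sigma$ and $\widetilde P_\sigma$.

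For the explicit formula \eqref{eq:perturbproj0}, I specialize Kato's formula \cite[II]{K} to $\widetilde H_{\lambda,\sigma} = \widetilde H_0 + \lambda \widetilde T_\sigma$ on $\widetilde\HH_\sigma$, and then transport it to the full space via the identification $\xi \longleftrightarrow \xi\otimes\Omega^{(-)}$. Under this identification the unperturbed projection $P_{\Omega_\downarrow^{(+)}}$ becomes $P_{\Omega_\downarrow}$, the perturbation $\widetilde T_\sigma$ becomes $T_\sigma$, and the reduced resolvent $(\widetilde H_0\upharpoonright \Omega_\downarrow^{(+)\perp})^{-\nu}\bar P_{\Omega_\downarrow^{(+)}}$ becomes $H_0^{-\nu}\bar P_{\Omega_\downarrow}Q_\sigma$, because on vectors of the form $\xi\otimes\Omega^{(-)}$ one has $H_0(\xi\otimes\Omega^{(-)}) = (\widetilde H_0\xi)\otimes\Omega^{(-)}$, and $Q_\sigma$ is precisely the projection onto this subspace. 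The main bookkeeping obstacle is this last identification and the verification that the factor $Q_\sigma$ inserted between successive $T_\sigma$'s in \eqref{eq:perturbproj0} is correct: one uses that $T_\sigma$ preserves the embedded subspace $\widetilde\HH_\sigma\otimes\Omega^{(-)}$, so the inserted $Q_\sigma$'s are redundant but harmless, which reconciles \eqref{eq:perturbproj00} with the standard Rayleigh--Schr\"odinger formula in the reduced problem.
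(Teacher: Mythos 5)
Your proof follows essentially the same route as the paper's: factor $\FF \cong \FF(\hh_\sigma^{(+)}) \otimes \FF(\hh_\sigma^{(-)})$, observe that the infrared cutoff makes $T_\sigma$ act trivially on the low-frequency factor and that the reduced unperturbed operator has $0$ as an isolated simple eigenvalue with gap $\min(2,\sigma)$, apply Kato's analytic perturbation theory on that factor, tensor with $P_{\Omega^{(-)}}$, and transport the Rayleigh--Schr\"odinger coefficients back via the identification $U(1\otimes P_{\Omega_\sigma^{(-)}})U^* = Q_\sigma$. The paper executes the final step by expanding the Riesz projection in a Neumann series and inserting the Laurent expansion of the reduced resolvent, but the bookkeeping (and the role of the $Q_\sigma$ factors) is exactly what you describe, so there is no genuine difference of approach.
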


We will use the notation $P_\downarrow := \left( \begin{array}{cc} 0 & 0 \\ 0 & 1 \end{array} \right)$.

\begin{proof} Define the subspaces of $\hh$,

$$
\hh_\sigma^{(+)} := L^2 (\{ k \in \R^3 | |k| \geq \sigma \}) \quad , \quad \hh_\sigma^{(-)} := L^2 (\{ k \in \R^3 | |k| <  \sigma \})
$$
and the associated Fock-spaces $\FF_{\sigma}^{(\pm)} := \FF(\hh_{\sigma}^{(\pm)})$, where we denote the vacua by $\Omega_\sigma^{(\pm)}$.
We consider the natural unitary isomorphism
$$
U : \FF_{\sigma}^{(+)} \otimes \FF_{\sigma}^{(-)} \to \FF_s \; ,
$$
which is uniquely characterized by
$$
U \left(  S_n ( h_1 \otimes  \cdots \otimes h_n )  \otimes  S_m ( g_1 \otimes  \cdots \otimes g_m )  \right)
=  S_{n+m} ( h_1 \otimes \cdots  \otimes h_n \otimes  g_1 \otimes  \cdots \otimes g_m ) ,
$$
for any $h_1,...,h_n  \in \hh_{\sigma}^{(+)}$ and $g_1,...,g_m \in \hh_{\sigma}^{(-)}$.
We denote the trivial extension of $U$ to  $\C^2 \otimes  \FF_{\sigma}^{(+)} \otimes \FF_{\sigma}^{(-)}$ by the same symbol.
We  write
$$
U^* H_\sigma(\lambda) U = \left( H_{0,\sigma}  + \lambda T_\sigma^{(+)} \right) \otimes 1 +  1 \otimes 1 \otimes H_{f,\sigma}^{(-)}  \; ,
$$
where we introduced the following operators acting on the corresponding spaces
\begin{align*}
H_{0,\sigma} &:= \tau  \otimes 1 + 1 \otimes H_{f,\sigma}^{(+)}  \\
T_\sigma^{(+)} &:= \sigma_x \otimes \phi(\chi_\sigma f) , \quad
H_{f,\sigma}^{(+)}   := d \Gamma (\chi_\sigma \omega) , \quad
H_{f,\sigma}^{(-)}   := d \Gamma ((1-\chi_\sigma) \omega)           .
\end{align*}
Now observe that  $H_{f,\sigma}^{(-)}$ has only one eigenvalue. That eigenvalue is zero, it is at the bottom
of the spectrum, it is non-degenerate and  and its
eigenvector is the vacuum of $\FF_{\sigma}^{(-)}$. This implies
that $H_\sigma(\lambda)$ and $H_{0,\sigma} + \lambda T^{(+)}_\sigma$ have the same eigenvalues and the corresponding
eigen-spaces are in bijective correspondence.
Next observe that $H_{0,\sigma}$ has at the bottom of its spectrum  an isolated non-degenerate eigenvalue.
In fact we have $\sigma(H_{0,\sigma}) = \{ 0 \} \cup [\sigma , \infty)$.
Moreover, $\lambda \mapsto H_{0,\sigma} + \lambda T_\sigma^{(+)}$ is
an analytic family, since the interaction term is infinitesimally bounded with respect to $H_{0,\sigma}$.
Now by analytic perturbation theory, it follows that there exists an $\epsilon > 0$
such that for $\lambda$ in a neighborhood of zero  the following operator
is well defined
\begin{align} \label{eq:orginialprojection}
P_{\sigma}^{(+)}(\lambda)  &:=  - \frac{1}{2 \pi i} \int_{|z| = \epsilon} (H_{0,\sigma}^{(+)} + \lambda T_\sigma^{(+)}  - z )^{-1} dz .
\end{align}
Moreover, the operator $P_{\sigma}^{(+)}(\lambda) $  projects
onto a one-dimensional space which is the eigen-space of $H_{0,\sigma}^{(+)} + \lambda T^{(+)}_\sigma$ with eigenvalue
$E_\sigma(\lambda)$. Furthermore, $P_{\sigma}^{(+)}(\lambda) $ and $E_\sigma(\lambda)$ depend analytically on $\lambda$
and $E_\sigma(0) = 0$. We conclude that $E_\sigma(\lambda)$  is a non-degenerate eigenvalue of
$H_{\lambda,\sigma}$ with corresponding eigen-projection
\begin{equation} \label{eq:orginialprojection1}
\widehat{P}_\sigma(\lambda) = U ( P_{\sigma}^{(+)}(\lambda)\otimes P_{\Omega_\sigma^{(-)}      } ) U^* ,
\end{equation}
and properties (i)-(iii) of the theorem are satisfied, where $P_{\Omega_\sigma^{(\pm)}}$ denotes the orthogonal projection in
$\FF_{\sigma}^{(\pm)}$
onto $\Omega_\sigma^{(\pm)}$.
Expanding in a Neumann series we find
\begin{align*}
P_{\sigma}^{(+)}(\lambda)
& =  - \sum_{n=0}^\infty \lambda^n  \frac{1}{2 \pi i} \int_{|z| = \epsilon} (H_{0,\sigma}^{(+)}  - z )^{-1} \left( T_\sigma^{(+)}  (H_{0,\sigma}^{(+)}  - z )^{-1} \right)^n  dz .
\end{align*}
Inserting the Laurent expansion of the resolvent about 0,
$$
(H_{0,\sigma}^{(+)}  - z )^{-1}  = \sum_{\nu=0}^\infty S_\sigma^{(\nu,+)}  z^{\nu-1}
$$
where
$$
S_\sigma^{(\nu,+)} = \left\{ \begin{array}{ll} -   P_\downarrow \otimes P_{\Omega_\sigma^{(+)}}  \quad &, \ \nu = 0 \\   \left(H_{0,\sigma}^{(+)}\right)^{-\nu} (1- P_\downarrow \otimes P_{\Omega_\sigma^{(+)}})

 \quad &, \ \nu \geq 1  , \end{array} \right.
$$
 and calculating the contour integral we arrive at $P_{\sigma}^{(+)}(\lambda) = \sum_{n=0}^\infty P_{\sigma}^{(n,+)}\lambda^n$, with
\begin{equation} \label{eq:perturbproj}
P^{(n,+)}_{\sigma} = -  \sum_{\nu_1 + ... + \nu_{n+1} = n, \ \nu_i \geq 0} S_\sigma^{(\nu_1,+)} T_\sigma^{(+)} S_\sigma^{(\nu_2,+)} ... T_\sigma^{(+)} S_\sigma^{(\nu_{n+1},+)} .
\end{equation}
Now using identity  \eqref{eq:orginialprojection1}, we find \eqref{eq:perturbproj0} noting that  that $U ( 1 \otimes P_{\Omega_\sigma^{(-)}}) U^* = Q_\sigma$ and
$U( S_\sigma^{(\nu,+)} \otimes P_{\Omega_\sigma^{(-)}}  ) U^* = S_\sigma^{(\nu)}  $.
\end{proof}

Next we will elaborate on the  statement of Remark \ref{thm:formalperturb00}.
We  calculate the first four coefficients of $P(\lambda)$ using
\eqref{eq:perturbproj0}.
We set  $b(f) = a(f_\sigma /\sqrt{\omega})$ and $b^*(f) = a^*(f_\sigma /\sqrt{\omega})$. To keep the notation
simple the  integration  symbol   $\int$ stands in this section for $\prod_{i} \int_{|k_i| \geq \sigma}$ and we
drop the $\sigma$-subscript of $\widehat{P}^{(n)}_\sigma$, $S^{(\nu)}_\sigma$, and $T_\sigma$.
By divergent we mean an expression which diverges in the limit $\sigma \downarrow 0$. We will write $P_{\downarrow}$ for
a short hand notation of $ P_{\downarrow} \otimes 1$ and define $P_\Omega := 1 \otimes | \Omega \rangle \langle \Omega |$.
We find
$$
\widehat{P}^{(0)} =  -  S^{(0)} =  P_{\downarrow} P_\Omega .
$$
Furthermore, we have
\begin{align*}
\widehat{P}^{(1)} = - S^{(1)} T S^{(0)} -  S^{(0)} T S^{(1)} =     ( 2 + H_f)^{-1} b^*(f) P_{\downarrow} P_\Omega +  P_\Omega P_{\downarrow}  b(f) ( 2 + H_f)^{-1} .
\end{align*}
To determine  higher order expressions, we note that terms of the form $S^{(0)} T S^{(0)}$ vanish. We find
\begin{align*}
 P^{(2)} &=  - S^{(0)} T S^{(1)} T S^{(1)} -   S^{(1)} T S^{(0)} T S^{(1)} -  S^{(0)} T S^{(1)} T S^{(1)}  -  S^{(0)} T S^{(2)} T S^{(0)}
\end{align*}
with
\begin{align*}
S^{(0)} T S^{(2)} T S^{(0)}  &=       \int \frac{d^3k |f(k)|^2}{(4 \pi)^2|k|(|k| + 2)^2}     P_\downarrow P_\Omega \\
S^{(1)} T S^{(1)} T S^{(0)}  &=       -  H_f^{-1} b^*(f) (H_f + 2)^{-1} b^*(f)    P_\downarrow P_\Omega \\
S^{(0)} T S^{(1)} T S^{(1)}  &=    -     P_\downarrow P_\Omega  b(f) (H_f + 2)^{-1} b(f)   H_f^{-1} \\
S^{(1)} T S^{(0)} T S^{(1)}  &=    -     P_\uparrow  (H_f + 2 )^{-1}   b^*(f)  P_\Omega  b(f) (H_f + 2)^{-1} P_\uparrow .
\end{align*}
One checks that all above terms are not divergent. Similarly, one finds that $P^{(3)}$ does not contain any divergent terms.
Next we consider the following terms occurring  in   $\widehat{P}^{(4)}$
$$
A := S^{(1)} T S^{(1)} T S^{(1)} T S^{(1)} T S^{(0)} , \quad B := S^{(2)}  T S^{(1)} T  S^{(0)} T S^{(1)} T S^{(0)} .
$$
Inserting the definition of $T$ we find
$
A = A_1 + A_2 + A_3
$
with
\begin{align*}
A_1 & = - H_f^{-1} b(f) (H_f + 2)^{-1} b^*(f) H_f^{-1} b^*(f) (H_f + 2 )^{-1} b^*(f) P_{\downarrow} P_\Omega \\
A_2 & =  -  H_f^{-1} b^*(f) (H_f + 2)^{-1} b(f) H_f^{-1} b^*(f) (H_f + 2 )^{-1} b^*(f) P_{\downarrow} P_\Omega \\
A_3 & =  -  H_f^{-1} b^*(f) (H_f + 2)^{-1} b^*(f) H_f^{-1} b^*(f) (H_f + 2 )^{-1} b^*(f) P_{\downarrow} P_\Omega .
\end{align*}
Above we used that the other contributions to $A$ vanish, which can be seen by using the following identity which holds
 for $k\geq 1$,
$$
S^{(k)} = H_f^{-k} P_\Omega^\perp P_\downarrow
+ ( 2 + H_f)^{-k} P_\Omega^\perp P_\uparrow + 2^{-k} P_\uparrow P_\Omega \ ,
$$
with $P_\uparrow :=  1 - P_\downarrow$ and $P_\Omega^\perp := 1 - P_{\Omega}$.
Using the pull-through formula and  the canonical commutation relations, we find using $dk^{(3)} = d^3k_1 d^3 k_2 d^3 k_3$,
\begin{align*}
 A_1
& =  - \int \frac{  dk^{(3)}  |f(k_1)|^2 f(k_2) f(k_3)}{(4 \pi)^4 |k_1| |k_2|^{1/2} |k_3|^{1/2}} \left[ \frac{1}{|k_2| + |k_3|   } \right]    \left[ \frac{1}{ |k_1| + |k_2| + |k_3|  +2}\right]   \\
& \times
\left\{  \frac{1}{|k_2| + |k_3|   }  \frac{1}{ |k_3|   +2} +    \frac{1}{|k_1|+|k_3| } \frac{1}{|k_3|+ 2  } +
        \frac{1}{|k_3|+|k_1| } \frac{1}{|k_1|+2 } \right\}
 a^*(k_2) a^*(k_3) P_\Omega P_{\downarrow} .
\end{align*}
Note that only the first term in the brackets $\{ \cdots \}$ yields a  divergent expression.  Similarly  one
finds
\begin{align*}
 A_2  & =  - \int \frac{ dk^{(3)}  |f(k_1)|^2 f(k_2) f(k_3)}{(4 \pi)^4 |k_1| |k_2|^{1/2} |k_3|^{1/2}}
\left[ \frac{1}{   |k_2|   + |k_3|    } \right]
\left[ \frac{1}{ |k_3|   +2}\right]
\left[\frac{1}{ |k_1|  + |k_3|   }\right] \\
&  \times  \left\{ \frac{1}{ |k_3|    +2} + \frac{1}{ |k_1|    +2} \right\}
 a^*(k_2) a^*(k_3) P_\Omega P_{\downarrow} .
\end{align*}
One sees that  $A_2$ is  not divergent. Likewise one checks that $A_3$ is not divergent.
For $B$ we obtain
\begin{align*}
 B
 = \int \frac{   dk^{(3)} |f(k_1)|^2 f(k_2) f(k_3)}{( 4 \pi)^4 |k_1| |k_2|^{1/2} |k_3|^{1/2}}
\left[ \frac{1}{ |k_2| + |k_3|}\right]^2
\left[\frac{1}{ |k_2|    + 2  }\right]
\left[\frac{1}{ |k_1|    +2} \right]
a^*(k_2) a^*(k_3) P_\Omega P_{\downarrow} .
\end{align*}
 Now one sees that the divergence in $B$ and the divergence in $A$ cancel, which follows using the  identity
\begin{align*}
 \frac{1}{ |k_1| + 2 } -
\frac{1}{ |k_1| + |k_2| + |k_3|    + 2  }
=
\frac{1}{ |k_1|    + 2}  \left( |k_2| + |k_3|  \right)  \frac{1}{ |k_1|  +|k_2| + |k_3|    +2} ,
\end{align*}
and the symmetry of the corresponding expressions with respect to  $k_2$ and $k_3$.

\section*{Acknowledgements}

D.H. acknowledges the summer research grant of the College of William \& Mary. D.H. wants to thank  ETH Zurich for hospitality.
D.H. wants to thank G.M. Graf, J. Fr\"ohlich, M. Griesemer, and A. Pizzo for interesting conversations.

\section*{Appendix A: Elementary Estimates and the Pull-through Formula}
\label{sec:appA}

To give a precise meaning to expressions which occur in  \eqref{eq:defhmnrig} and \eqref{eq:defhlinemn},  we introduce the following definition.
For $\psi$ having finitely many particles  we set
\beqn \label{eq:defofa}
\left[ a(k_1) \cdots a(k_l) \psi \right]_n(k_{l+1},...,k_{l+n}) :=  \sqrt{\frac{(l+n)!}{n!}} \psi_{l+n}(k_{1},...,k_{l+n}) ,
\eeqn
for $k_1,...,k_{l+n} \in \R^3$.
Using a theorem of Fubini it is elementary to see that for such $\psi$ the vector valued map
 $(k_1,...,k_l) \mapsto a(k_1) \cdots a(k_l) \psi$ is
an element of $L^2(\R^{3l}; \FF)$. We note that  definition \eqref{eq:defofa} is consistent with \eqref{eq:formala}, since
a straightforward computation shows that
\begin{eqnarray*}
\lefteqn{ \int \overline{f_1}(k_1) \cdots \overline{f_l}(k_l) \left[ a(k_1) \cdots a(k_l) \psi \right]_n(k_{l+1},...,k_{l+n})  d^3k_1 \cdots d^3k_l  }
\\ & \quad \quad  = \left[  a(f_1) \cdots a(f_l) \psi \right]_n(k_{l+1},...,k_{l+n}) .
\end{eqnarray*}

We will make repeated use of the well known pull-through formula which is stated in the following lemma.
\begin{lemma} \label{lem:pullthrough}
Let $f : \R_+ \to \C$ be a bounded measurable function. Then
$$
f(H_f) a^*(k) = a^*(k) f(H_f + \omega(k) ) , \quad a(k) f(H_f) = f(H_f + \omega(k) ) a(k) \; .
$$
\end{lemma}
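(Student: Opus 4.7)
Since $a(k)$ and $a^*(k)$ are operator-valued distributions rather than bona fide operators, each identity in the lemma must be interpreted in the distributional sense in $k$, using the convention \eqref{eq:defofa} that fixes the meaning of $a(k)\psi$ as an element of $L^2(\R^3;\FF)$ for $\psi$ with finitely many particles. With this interpretation in place, the two statements reduce to a direct pointwise calculation on a core, followed by an adjoint/duality argument.

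\textbf{Step 1 (annihilation identity).} Let $\psi\in\FF$ have only finitely many nonvanishing components. By the definition of $d\Gamma(\om)$ in \eqref{eq:defhf} and the boundedness of $f$, the functional calculus gives $(f(H_f)\psi)_n(k_1,\ldots,k_n)=f\!\left(\textstyle\sum_{j=1}^n\om(k_j)\right)\psi_n(k_1,\ldots,k_n)$. Combining this with \eqref{eq:defofa} yields, for a.e.\ $k\in\R^3$,
\begin{align*}
[a(k)f(H_f)\psi]_n(k_1,\ldots,k_n)
&=\sqrt{n+1}\,f\!\Big(\om(k)+\sum_{j=1}^n\om(k_j)\Big)\psi_{n+1}(k,k_1,\ldots,k_n)\\
&=[f(H_f+\om(k))a(k)\psi]_n(k_1,\ldots,k_n),
\end{align*}
which is the second identity. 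Both sides lie in $L^2(\R^3;\FF)$ because $\|f\|_\infty<\infty$ and $\psi$ has finite particle number, so no domain issue arises.

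\textbf{Step 2 (creation identity).} For the first identity I smear against a test vector. Given $g\in\hh$, multiplying the annihilation identity by $\overline{g(k)}$ and integrating gives, on vectors with finitely many particles,
\[
a(g)f(H_f)\;=\;\int \overline{g(k)}\,f(H_f+\om(k))\,a(k)\,d^3k,
\]
an identity between everywhere-defined operators. Taking the adjoint, using $a(g)^*=a^*(g)$ and $f(H_f)^*=\bar f(H_f)$ together with the self-adjointness of $\om$, produces
\[
\bar f(H_f)a^*(g)\;=\;\int g(k)\,a^*(k)\,\bar f(H_f+\om(k))\,d^3k,
\]
which, since $g$ is arbitrary, is exactly the distributional statement $f(H_f)a^*(k)=a^*(k)f(H_f+\om(k))$ (replacing $\bar f$ by $f$).

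\textbf{Main obstacle.} The only delicate point is fixing the interpretation of the identities when $a(k),a^*(k)$ are operator-valued distributions; once the interpretation via \eqref{eq:defofa} and smearing by $g\in\hh$ is adopted, both identities collapse to the one-line pointwise computation of Step~1, and no spectral-theoretic machinery beyond the functional calculus on the diagonal operator $H_f$ on each $n$-particle sector is needed.
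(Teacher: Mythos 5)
Your proof is correct and follows the same route as the paper: verify the annihilation identity by a direct computation on finite-particle vectors using \eqref{eq:defofa}, then obtain the creation identity by taking adjoints. The only difference is that you are slightly more explicit about the adjoint step, smearing against $g\in\hh$ first before passing to adjoints, whereas the paper simply states the creation identity follows by adjunction.
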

The pull through formula can be shown as follows. Using definition \eqref{eq:defofa},
we have for any $\psi \in \FF$ with finitely many particles
\begin{align*}
[ f(H_f + \omega(k) ) a(k) \psi]_{n}(k_1,...,k_n) &= f(\omega(k_1) + \cdots + \omega(k_n) + \omega(k) ) \sqrt{n+1}  \psi_{n+1}(k,k_1,...,k_n) \\
&= [ a(k)  f(H_f  ) \psi]_{n}(k_1,...,k_n) .
\end{align*}
This shows the second identity, since vectors with finitely many particles are dense in $\FF$.
The first identity  follows by taking the adjoint of the second identity.

\begin{remark}
{ Note that for $\psi \in \HH_{\rm red}$, the statements of Lemma \ref{lem:multanihiest} and Corollary \ref{lem:multanihiest2}
also hold  in the case $X = B_1$.}
\end{remark}

\begin{lemma} \label{lem:multanihiest} Let $X = \R^3$ and $  P_\Omega^\perp = 1 - | \Omega \rangle \langle \Omega |$. Then for $n \geq 1$,
\begin{eqnarray}
\lefteqn{ \int_{X^n}  d k^{(n)} | k^{(n)} | \left\| \prod_{l=1}^n \left[  H_f + \Sigma[k^{(l)}] \right]^{-1/2} a(k^{(n)}) \psi \right\|^2 }   \nonumber  \\
&&=
\int_{X^n}  d k^{(n)} | k^{(n)} | \left\| \prod_{l=1}^n \left[ a(k_l) H_f^{-1/2} \right]  P_\Omega^\perp \psi \right\|^2   \nonumber \\
&&= \| P_\Omega^\perp \psi \|^2 . \nonumber
\end{eqnarray}
\end{lemma}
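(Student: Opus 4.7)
The proof naturally splits into two parts matching the two equalities in the statement.

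For the first equality, the plan is to apply the pull-through formula (Lemma \ref{lem:pullthrough}) iteratively. Since each factor $(H_f + \Sigma[k^{(l)}])^{-1/2}$ is a function of $H_f$, all such factors commute. Using $a(k_j) f(H_f) = f(H_f + |k_j|) a(k_j)$ one shows, for each $l$,
\[
(H_f + \Sigma[k^{(l)}])^{-1/2}\, a(k_1) \cdots a(k_l) = a(k_1) \cdots a(k_l)\, H_f^{-1/2}
\]
on the range of $P_\Omega^\perp$. Telescoping through $l = 1,\ldots,n$ rewrites $\prod_l (H_f + \Sigma[k^{(l)}])^{-1/2}\, a(k^{(n)})$ as the interleaved product $\prod_l [a(k_l) H_f^{-1/2}]$. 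The appearance of $P_\Omega^\perp$ in the middle expression reflects that $H_f^{-1/2}$ is defined only on the orthogonal complement of $\Omega$; any residual vacuum components are harmlessly annihilated by the subsequent $a(k)$. Integrating both sides against $|k^{(n)}|\,dk^{(n)}$ yields the first equality.

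For the second equality, the central identity is
\[
\int_X dk\,|k|\,a^\ast(k) a(k) = d\Gamma(\omega) = H_f,
\]
which gives $\int dk\,|k|\,\|a(k)\phi\|^2 = \|H_f^{1/2}\phi\|^2$. I would peel off one variable at a time: integrating over $k_1$ with $\phi = H_f^{-1/2}\prod_{l=2}^n [a(k_l) H_f^{-1/2}] P_\Omega^\perp \psi$ produces $\|H_f^{1/2} H_f^{-1/2}(\cdots)\|^2 = \|P_\Omega^\perp(\cdots)\|^2$, using that $H_f^{1/2} H_f^{-1/2}$ coincides with $P_\Omega^\perp$ on its natural domain. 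Iterating for $k_2, k_3, \ldots, k_n$ removes each layer in turn, and the final outcome is $\|P_\Omega^\perp \psi\|^2$.

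The hard part will be justifying the iterated peeling rigorously, since the intermediate vectors need to lie in the domain of $H_f^{-1/2}$. I would handle this by passing to the explicit $m$-particle sector decomposition given by \eqref{eq:defofa}: each term $[a(k^{(n)}) \psi]_m$ becomes $\sqrt{(m+n)!/m!}\,\psi_{m+n}(k^{(n)}, q^{(m)})$ times a product of resolvent weights, and the $dk^{(n)}$-integration is carried out after symmetrising the non-wavefunction factor (permissible by the total symmetry of $\psi_{m+n}$). The essential combinatorial content then reduces to the identity
\[
\sum_{\pi \in S_N} |r_{\pi(1)}|\cdots|r_{\pi(n)}| \prod_{l=1}^n \bigl(S_\pi + |r_{\pi(1)}| + \cdots + |r_{\pi(l)}|\bigr)^{-1} = (N-n)!,
\]
with $S_\pi = |r_{\pi(n+1)}|+\cdots+|r_{\pi(N)}|$, which follows by induction on $n$: the last factor in the product always equals $|r_1|+\cdots+|r_N|$, and factoring it out collapses the remaining sum to the analogous identity on $N-1$ variables via the inductive hypothesis.
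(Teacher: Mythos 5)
Your plan matches the paper's short proof: pull-through for the first identity, and iterating the single-variable formula $\int dk\,|k|\,\|a(k)H_f^{-1/2}\phi\|^2=\|\phi\|^2$ with $\phi=P_\Omega^\perp\psi$ for the second. The telescoping calculation for the first identity is fine.

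The peeling argument for the second identity has a genuine gap, and your own fallback actually exposes it. Integrating out $k_1$ returns $\|P_\Omega^\perp\chi\|^2$ with $\chi=\prod_{l=2}^n[a(k_l)H_f^{-1/2}]P_\Omega^\perp\psi$, and $\chi$ generically carries a vacuum component (for $n=2$ its vacuum sector is $|k_2|^{-1/2}\psi_1(k_2)$), so one cannot pass from $\|P_\Omega^\perp\chi\|^2$ to $\|\chi\|^2$ and iterate. Your sector-decomposition route is the right way to make this precise, and the stated identity $\sum_{\pi\in S_N}|r_{\pi(1)}|\cdots|r_{\pi(n)}|\prod_{l=1}^n\bigl(S_\pi+|r_{\pi(1)}|+\cdots+|r_{\pi(l)}|\bigr)^{-1}=(N-n)!$ is correct (fix $\pi(n)$, note the $l=n$ factor is the constant $|r_1|+\cdots+|r_N|$, and the remaining sum reduces to the $(N-1,n-1)$ case). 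But carrying it through gives $\sum_{N\geq n}\|\psi_N\|^2$, not $\|P_\Omega^\perp\psi\|^2=\sum_{N\geq1}\|\psi_N\|^2$: the sectors $1\leq N<n$ are annihilated by $a(k^{(n)})$ on the left yet survive on the right (for $n=2$ and $\psi$ purely one-particle, $a(k_1)a(k_2)\psi=0$ so the integral vanishes while $\|P_\Omega^\perp\psi\|^2>0$). So the iteration cannot close for general $\psi$; the extra $P_\Omega^\perp$'s appearing at each step are precisely the record of those lost sectors. Since the lemma is invoked only through the inequality of Corollary \ref{lem:multanihiest2}, what your combinatorial argument actually delivers, namely $\sum_{N\geq n}\|\psi_N\|^2\leq\|P_\Omega^\perp\psi\|^2$, is the usable statement; you should either prove that inequality or restrict the equality to $\psi$ vanishing in sectors below $n$, rather than assert the stated value for general $\psi$.
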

\begin{proof}
The first identity follows from the pull-through formula. The second identity follows from iteration of the following
equality where $\phi = P_\Omega^\perp \psi$,
$$
\int_{X} d^3k |k| \| a(k) H_f^{-1/2} \phi \|^2 = \langle H_f^{-1/2} \phi , H_f H_f^{-1/2} \phi \rangle = \| \phi \|^2 .
$$
\end{proof}

\begin{corollary} \label{lem:multanihiest2} Let $X = \R^3$. Then
\begin{eqnarray} \nonumber
\int_{X^n}  d k^{(n)} | k^{(n)} | \left\|  a(k^{(n)})  \psi \right\|^2  \leq \| H_f^{n/2} \psi \|^2 .
\end{eqnarray}
\end{corollary}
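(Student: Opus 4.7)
The plan is to deduce this directly from Lemma \ref{lem:multanihiest}, applied not to $\psi$ but to $H_f^{n/2}\psi$, combined with a pull-through argument that converts the product of $(H_f+\Sigma[k^{(l)}])^{-1/2}$ factors into $H_f^{n/2}$ up to an error that I will show only works in our favor. The case $n=0$ is trivial, so I would assume $n\geq 1$ throughout.

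First, since $H_f\Omega=0$, one has $P_\Omega^\perp H_f^{n/2}\psi = H_f^{n/2}\psi$. Therefore Lemma \ref{lem:multanihiest} applied to $H_f^{n/2}\psi$ in place of $\psi$ yields
$$\int_{X^n} dk^{(n)}\,|k^{(n)}|\,\Big\|\prod_{l=1}^n \bigl(H_f+\Sigma[k^{(l)}]\bigr)^{-1/2} a(k^{(n)}) H_f^{n/2}\psi\Big\|^2 = \|H_f^{n/2}\psi\|^2.$$
Iterating the pull-through formula of Lemma \ref{lem:pullthrough} gives $a(k^{(n)})H_f^{n/2} = (H_f+\Sigma[k^{(n)}])^{n/2} a(k^{(n)})$, so the integrand on the left may be rewritten as $\|A(k^{(n)}) a(k^{(n)})\psi\|^2$, where
$$A(k^{(n)}) := \bigl(H_f+\Sigma[k^{(n)}]\bigr)^{n/2}\prod_{l=1}^n \bigl(H_f+\Sigma[k^{(l)}]\bigr)^{-1/2}$$
is a commuting function of $H_f$.

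The key (and essentially only) observation is that $A(k^{(n)})\geq 1$ as a self-adjoint operator. Indeed, $\Sigma[k^{(l)}]\leq \Sigma[k^{(n)}]$ for each $1\leq l\leq n$, so the scalar function $x\mapsto (x+\Sigma[k^{(n)}])^{n/2}/\prod_l(x+\Sigma[k^{(l)}])^{1/2}$ is bounded below by $1$ on $[0,\infty)$; by the spectral theorem, $\|A(k^{(n)})\varphi\|\geq\|\varphi\|$ for every $\varphi$. Applying this inequality pointwise in $k^{(n)}$ with $\varphi=a(k^{(n)})\psi$ and integrating against $|k^{(n)}|\,dk^{(n)}$ gives
$$\int_{X^n} dk^{(n)}\,|k^{(n)}|\,\|a(k^{(n)})\psi\|^2 \leq \int_{X^n} dk^{(n)}\,|k^{(n)}|\,\|A(k^{(n)}) a(k^{(n)})\psi\|^2 = \|H_f^{n/2}\psi\|^2,$$
which is the claim. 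There is no real obstacle; the only technical point is that the pull-through identity involving $H_f^{n/2}$ must be justified on the natural domain of $H_f^{n/2}$, but this is standard and can be verified first for vectors with finitely many particles (where both sides are computed explicitly on each sector) and then extended by density, exactly in the setting of Lemma \ref{lem:multanihiest}.
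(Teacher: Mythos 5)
Your proof is correct and is essentially the same argument as the paper's: apply Lemma \ref{lem:multanihiest} to $H_f^{n/2}\psi$, use the pull-through formula to move $H_f^{n/2}$ past $a(k^{(n)})$, and exploit $\Sigma[k^{(l)}]\le\Sigma[k^{(n)}]$ for $l\le n$ as an operator inequality between commuting functions of $H_f$. The paper reads the same chain of (in)equalities starting from the left-hand side, while you start from the right-hand side and package the key step as the bound $A(k^{(n)})\ge 1$; this is only a presentational difference.
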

\begin{proof}
\begin{eqnarray*}
\lefteqn{ \int_{X^n}  d k^{(n)} | k^{(n)} | \left\|  a(k^{(n)})  \psi \right\|^2  } \nonumber \\
&&= \int_{X^n}  d k^{(n)} | k^{(n)} | \left\| (H_f + \Sigma[k^{(n)}] )^{-n/2}  a(k^{(n)})  H_f^{n/2} \psi \right\|^2 \nonumber \\
&&\leq  \int_{X^n}  d k^{(n)} | k^{(n)} | \left\| \prod_{l=1}^n \left[  H_f + \Sigma[k^{(l)}] \right]^{-1/2} a(k^{(n)} ) H_f^{n/2}\psi \right\|^2  \nonumber \\
&&= \| H_f^{n/2} \psi \|^2 ,
\end{eqnarray*}
where in the first equality we used the pull-through formula. The second line follows from an elementary operator inequality using
 $\Sigma[k^{(l)}] \leq \Sigma[k^{(n)}]$, if $l \leq n$.
The last equality follows from Lemma
\ref{lem:multanihiest}.
\end{proof}

\begin{lemma} \label{kernelopestimate}  For $\overline{H}_{m,n}(\cdot)$ as defined in \eqref{eq:defhlinemn}
we have
\begin{eqnarray}
\lefteqn{
\| \overline{H}_{m,n}(w_{m,n}) \|^2 } \nonumber \\
&&\leq  \int  \frac{d K^{(m,n)}}{|K^{(m,n)}|^2} \sup_{r \geq 0} |w_{m,n}(r,K^{(m,n)}) |^2  \prod_{l=1}^m \left[  r + \Sigma[k^{(l)}] \right]
 \prod_{\widetilde{l}=1}^n \left[  r + \Sigma[\widetilde{k}^{(\widetilde{l})}] \right] \nonumber \\
 &&
 \leq   \int  \frac{d K^{(m,n)}}{|K^{(m,n)}|^2} \sup_{r \geq 0} | w_{m,n}(r,K^{(m,n)}) |^2  \left[  r + \Sigma[k^{(m)}] \right]^m
  \left[  r + \Sigma[\widetilde{k}^{({n})}] \right]^n \nonumber
\end{eqnarray}
\end{lemma}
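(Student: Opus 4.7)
The plan is to adapt the strategy used in the proofs of Lemmas \ref{lem:operatornormestimates} and \ref{lem:operatornormestimates2}, with the key enhancement of inserting compensating operator factors that produce the $[r+\Sigma[k^{(l)}]]^{1/2}$ weights appearing on the right hand side. First I would start from the form representation
$$ \langle \psi, \overline{H}_{m,n}(w_{m,n})\phi\rangle = \int \frac{dK^{(m,n)}}{|K^{(m,n)}|^{1/2}} \left\langle a(k^{(m)})\psi,\, w_{m,n}(H_f, K^{(m,n)})\, a(\tilde{k}^{(n)}) \phi\right\rangle, $$
valid for $\psi,\phi$ with finitely many particles, and introduce the commuting, self-adjoint, parameter-dependent operators
$$ U(k^{(m)}) := \prod_{l=1}^m [H_f + \Sigma[k^{(l)}]]^{-1/2}, \qquad V(\tilde{k}^{(n)}) := \prod_{\tilde{l}=1}^n [H_f + \Sigma[\tilde{k}^{(\tilde{l})}]]^{-1/2}. $$
Inserting $U(k^{(m)})^{-1}U(k^{(m)})$ to the right of $a(k^{(m)})\psi$ in the left slot of the inner product and $V(\tilde{k}^{(n)})V(\tilde{k}^{(n)})^{-1}$ to the left of $a(\tilde{k}^{(n)})\phi$ in the right slot, then moving the self-adjoint factors $U(k^{(m)})$ and $V(\tilde{k}^{(n)})$ across the inner product, turns the middle operator into $g(H_f, K^{(m,n)})$ via functional calculus, with spectral function
$$ g(r, K^{(m,n)}) := w_{m,n}(r, K^{(m,n)}) \prod_{l=1}^m [r + \Sigma[k^{(l)}]]^{1/2} \prod_{\tilde{l}=1}^n [r+\Sigma[\tilde{k}^{(\tilde{l})}]]^{1/2}. $$

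Next I would bound the rewritten integrand pointwise in $K^{(m,n)}$ by $\sup_{r\geq 0} |g(r, K^{(m,n)})|$ times $\|U(k^{(m)}) a(k^{(m)})\psi\|\,\|V(\tilde{k}^{(n)}) a(\tilde{k}^{(n)})\phi\|$, using $\|g(H_f,K^{(m,n)})\|_{\rm op}\leq \sup_r|g(r,K^{(m,n)})|$. Splitting $|K^{(m,n)}|^{-1/2} = |K^{(m,n)}|^{-1}\cdot |k^{(m)}|^{1/2}\cdot |\tilde{k}^{(n)}|^{1/2}$ and applying Cauchy--Schwarz in the $K$-integral, the matrix element is bounded by
$$ \left[\int \frac{dK^{(m,n)}}{|K^{(m,n)}|^2}\sup_r|g(r,K^{(m,n)})|^2 \right]^{1/2}\!\left[ \int |k^{(m)}| \|U a(k^{(m)})\psi\|^2 dk^{(m)}\right]^{1/2}\!\left[\int |\tilde{k}^{(n)}| \|V a(\tilde{k}^{(n)})\phi\|^2 d\tilde{k}^{(n)}\right]^{1/2}\!. $$
By Lemma \ref{lem:multanihiest} each of the latter two factors is bounded by $\|\psi\|$ and $\|\phi\|$, respectively, using $\|P_\Omega^\perp \chi\|\leq \|\chi\|$. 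Taking suprema over unit vectors $\psi,\phi$ and squaring yields the first stated inequality.

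The second inequality is immediate from the elementary monotonicity $\Sigma[k^{(l)}] \leq \Sigma[k^{(m)}]$ for $l\leq m$, which gives $\prod_{l=1}^m[r+\Sigma[k^{(l)}]] \leq [r+\Sigma[k^{(m)}]]^m$, and analogously for the tilde variables. I do not foresee a genuine obstacle here: the argument is a bookkeeping exercise. The most subtle step is the functional calculus manipulation that identifies $U(k^{(m)})^{-1} w_{m,n}(H_f,K^{(m,n)}) V(\tilde{k}^{(n)})^{-1}$ with multiplication by $g(H_f, K^{(m,n)})$, but this is transparent once one observes that all three factors are bounded Borel functions of the single operator $H_f$, with the momentum variables entering as scalar parameters.
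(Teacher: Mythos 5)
Your proposal is correct and follows essentially the same route as the paper's proof: the paper likewise inserts identities of the form $1 = [H_f + \Sigma[k^{(l)}]]^{1/2}[H_f + \Sigma[k^{(l)}]]^{-1/2}$ (which is exactly your $U^{-1}U$ and $V^{-1}V$ decomposition), applies Cauchy--Schwarz in the $K$-integral after splitting $|K^{(m,n)}|^{-1/2}$, invokes Lemma \ref{lem:multanihiest}, and derives the second inequality from the monotonicity $\Sigma[k^{(p)}]\leq\Sigma[k^{(q)}]$ for $p\leq q$. Your write-up simply fills in the details that the paper leaves implicit.
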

\begin{proof}
The first inequality is obtained by estimating
$$
| \langle \phi, \overline{H}_{m,n}(w_{m,n}) \psi \rangle | .
$$
To this end one inserts identities of the form  $1 =   \left[  H_f + \Sigma[k^{(l)}] \right]^{1/2}    \left[  H_f + \Sigma[k^{(l)}] \right]^{-1/2} $, uses the Cauchy-Schwarz
inequality, and  Lemma  \ref{lem:multanihiest}.  The second inequality follows from the first since $\Sigma[k^{(p)}] \leq \Sigma[k^{(q)}]$, if $p \leq q$.
\end{proof}

Now we collect some elementary estimates.

\begin{lemma} \label{thm:estimates1} We have the estimates
\begin{itemize}
\item[(i)] $\| a(f) \varphi \| \leq \| f/\sqrt{\omega} \| \| H_f^{1/2}  \varphi \|$
\item[(ii)] $ \|  H_f^{-1/2} P_\Omega^\perp  a^*(f)  \| =    \| a(f) H_f^{-1/2} P_\Omega^\perp  \| \leq \| f/\sqrt{\omega} \|$
\item[(iii)] $\| a^*(f) \varphi \|^2  \leq  \| f \|^2 \| \varphi \|^2  + \| f/\sqrt{\omega} \|^2   \| H_f^{1/2}  \varphi \|^2$
\item[(iv)] $\| a(f) (H_f + 1)^{-1/2} \| =      \|  (H_f + 1)^{-1/2} a^*(f) \|     \leq   \| f/\sqrt{\omega} \| $
\item[(v)] $ \|  (H_f + 1)^{-1/2} a(f) \|  =   \| a^*(f) (H_f + 1)^{-1/2} \|    \leq   (\| f/\sqrt{\omega} \|^2  + \| f\|^2)^{1/2} $
\item[(vi)] $ \|  (H_f + 1)^{-1/2} \phi(f) \| =  \| \phi(f) (H_f + 1)^{-1/2} \|  \leq  \left\| \frac{f }{ 4 \pi \sqrt{\omega}}       \right\| +
 \left[\left\| \frac{f}{4 \pi \omega  }  \right\|^2  + \left\| \frac{f}{ 4 \pi  \sqrt{\omega}  } \right\|^2  \right]^{1/2}$
\end{itemize}
\end{lemma}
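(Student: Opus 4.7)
The six inequalities are the standard ``infinitesimal boundedness'' bounds for creation/annihilation operators relative to $H_f$, and they have a clean dependency structure: (i) is the only genuine input, everything else is obtained from (i) by elementary manipulations (duality, the canonical commutation relations, substitution $\varphi\mapsto(H_f+1)^{-1/2}\psi$, and finally splitting $\phi(f)$ into its two parts). I would therefore organize the proof around proving (i) carefully and then deducing the remaining five items in the listed order.

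For (i), the plan is to start from the representation $a(f)\varphi=\int\overline{f(k)}\,a(k)\varphi\,d^3k$ valid on vectors with finitely many particles. Viewing $k\mapsto a(k)\varphi$ as an element of $L^2(\mathbb{R}^3;\mathcal{F})$ (as justified in the discussion of \eqref{eq:defofa}), I apply the Cauchy--Schwarz inequality in the form
\[
\|a(f)\varphi\|\;\le\;\int \frac{|f(k)|}{\sqrt{\omega(k)}}\,\sqrt{\omega(k)}\,\|a(k)\varphi\|\,d^3k\;\le\;\|f/\sqrt{\omega}\|\,\Bigl(\int\omega(k)\,\|a(k)\varphi\|^2\,d^3k\Bigr)^{1/2},
\]
and then observe that the last integral equals $\langle\varphi,H_f\varphi\rangle=\|H_f^{1/2}\varphi\|^2$ by the very definition of $d\Gamma(\omega)$ (a special case of the computation in Lemma~\ref{lem:multanihiest}). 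This gives (i) on the form core, and the inequality extends by density to the whole domain of $H_f^{1/2}$.

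For (ii), the adjoint identity $\|H_f^{-1/2}P_\Omega^\perp a^*(f)\|=\|a(f)H_f^{-1/2}P_\Omega^\perp\|$ is immediate, and (i) applied to $\varphi=H_f^{-1/2}P_\Omega^\perp\psi$ (which lies in $D(H_f^{1/2})$ for every $\psi$, with $\|H_f^{1/2}\varphi\|=\|P_\Omega^\perp\psi\|\le\|\psi\|$) yields the bound. For (iii) I invoke the canonical commutation relation $a(f)a^*(f)=a^*(f)a(f)+\|f\|^2$, giving the identity $\|a^*(f)\varphi\|^2=\|a(f)\varphi\|^2+\|f\|^2\|\varphi\|^2$, after which (i) takes care of the first summand. (iv) is (i) with $\varphi=(H_f+1)^{-1/2}\psi$ together with the trivial operator inequality $H_f^{1/2}(H_f+1)^{-1/2}\le 1$; the adjoint identity gives the other half. (v) is the analogous substitution in (iii).

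Finally, for (vi) I rewrite the field as
\[
\phi(f)=a^*\!\Bigl(\tfrac{f}{4\pi\sqrt{\omega}}\Bigr)+a\!\Bigl(\tfrac{f}{4\pi\sqrt{\omega}}\Bigr),
\]
apply the triangle inequality, bound the creation term by (v) with $f$ replaced by $f/(4\pi\sqrt{\omega})$ (producing the square-root term), and bound the annihilation term by (iv) with the same replacement. The adjoint equality on the left of (vi) is automatic. The only mildly annoying step is keeping track of the normalization factors $1/(4\pi)$ and the substitution $(f/(4\pi\sqrt{\omega}))/\sqrt{\omega}=f/(4\pi\omega)$, but there is no real obstacle here; the main thing to be careful about is the density argument extending each bound from vectors with finitely many particles to the natural operator domains, which is standard.
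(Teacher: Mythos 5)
Your proof follows essentially the same route as the paper, which simply records the dependency chain: (i) is proved exactly as you do (Cauchy--Schwarz together with the $n=1$ case of Corollary~\ref{lem:multanihiest2}, i.e.\ $\int \omega(k)\|a(k)\psi\|^2\,d^3k \le \|H_f^{1/2}\psi\|^2$), (iii) from (i) plus the CCR, (v) from (iii), and (vi) from (iv), (v). The one organizational difference is that you obtain (iv) directly from (i) by substituting $\varphi = (H_f+1)^{-1/2}\psi$ and using $\|H_f^{1/2}(H_f+1)^{-1/2}\|\le 1$, whereas the paper routes through (ii); both arguments are one line, so this is immaterial.

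One small point worth noting about (vi): your derivation (triangle inequality, (iv) for the annihilation piece, (v) for the creation piece with $f$ replaced by $g=f/(4\pi\sqrt\omega)$) yields
\[
\|\phi(f)(H_f+1)^{-1/2}\| \;\le\; \Bigl\|\tfrac{f}{4\pi\omega}\Bigr\| \;+\; \Bigl(\Bigl\|\tfrac{f}{4\pi\omega}\Bigr\|^2 + \Bigl\|\tfrac{f}{4\pi\sqrt\omega}\Bigr\|^2\Bigr)^{1/2},
\]
since (iv) applied to $g$ gives $\|g/\sqrt\omega\| = \|f/(4\pi\omega)\|$. The paper's statement of (vi) has $\|f/(4\pi\sqrt\omega)\|$ as the first summand, which is not what falls out of (iv)--(v); the two norms are not comparable in general, so this looks like a typo in the paper rather than an omission on your part. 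Your route gives the correct first term $\|f/(4\pi\omega)\|$, and either version suffices for how the lemma is used downstream (the definition of $\mu_0$ takes the maximum of the two norms).
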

\begin{proof}
(i) follows from Corollary \ref{lem:multanihiest2} with $n=1$ and the Cauchy-Schwarz inequality, (ii) follows from (i), (iii) follows from (i) and the canonical
commutation relations of the creation and annihilation operators, (iv) follows from (ii), (v) follows from (iii), and (vi) follows from (iv) and (v).
\end{proof}

\section*{Appendix B: Wick's Theorem}
\label{sec:appB}

Let $(\sigma_1, \sigma_2, ... , \sigma_n \} \in \{ + , - \}^n$. For any subset $Z\subset N_n := \{ 1,2, ... , n\}$, $n \in \N$,
we set $Z_{\pm} := \{ j \in Z | \sigma_j = \pm \}$.
We use the notation $a^+(k) = a^*(k)$ and $a^-(k)= a(k)$. We define the Wick-ordered product $: \cdot :$ by
$$
: \prod_{j \in Z} a^{\sigma_j}(k_j) : := \prod_{j \in Z_+} a^{\sigma_j}(k_j)  \prod_{j \in Z_-} a^{\sigma_j}(k_j)
$$
We introduce a notation for  the vacuum expectation
of an operator $A$ on Fock space by setting
$$
\langle A \rangle :=  \langle \Omega | A \Omega \rangle .
$$
\begin{lemma} \label{lem:wick} (Wick's Theorem) For any $(\sigma_1, \sigma_2, ... , \sigma_n) \in \{ + , - \}^n$
$$
\prod_{j \in N_n} a^{\sigma_j}(k_j) = \sum_{Z \subset N_n } \left\langle \prod_{j \in N_n \setminus Z} a^{\sigma_j} (k_j) \right\rangle : \prod_{j \in Z} a^{\sigma_j}(k_j):
$$
\end{lemma}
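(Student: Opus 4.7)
The plan is to induct on $n$, using only the canonical commutation relations. The cases $n = 0$ and $n = 1$ are immediate: for $n=0$ both sides equal $1$, and for $n = 1$ the only nonvanishing term on the right is $Z = \{1\}$ since $\langle a^{\sigma_1}(k_1)\rangle = 0$.

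For the inductive step, I would first establish as an auxiliary identity that, for any finite index set $J$, any tuple $(\sigma_j)_{j \in J}$, and any $\sigma \in \{+,-\}$,
\begin{equation*}
a^\sigma(k) : \prod_{j \in J} a^{\sigma_j}(k_j) : \;=\; : a^\sigma(k) \prod_{j \in J} a^{\sigma_j}(k_j) : \;+\; \sum_{i \in J} \langle a^\sigma(k)\, a^{\sigma_i}(k_i)\rangle \; : \prod_{j \in J \setminus \{i\}} a^{\sigma_j}(k_j) : .
\end{equation*}
When $\sigma = +$ both sides coincide with $: a^*(k) \prod_j a^{\sigma_j}(k_j):$ because every two-point function $\langle a^*(k)\, a^{\sigma_i}(k_i)\rangle$ vanishes; when $\sigma = -$, commuting $a(k)$ through the leftmost block of creation operators in the normal-ordered product via $[a(k), a^*(k')] = \delta(k-k')$ produces precisely one scalar contraction $\delta(k - k_i) = \langle a(k)\, a^*(k_i)\rangle$ for each $i \in J_+$ and leaves $:a(k) \prod_j a^{\sigma_j}(k_j):$ as the residual normal-ordered term.

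I would then split $\prod_{j=1}^n a^{\sigma_j}(k_j) = a^{\sigma_1}(k_1) \prod_{j=2}^n a^{\sigma_j}(k_j)$, invoke the inductive hypothesis on the rightmost $n-1$ factors to expand as a sum indexed by $Z' \subset \{2,\ldots,n\}$, and then apply the auxiliary identity to each summand. The first piece of the auxiliary identity produces, after reindexing $Z = Z' \cup \{1\}$, exactly the Wick sum restricted to subsets $Z \subset N_n$ containing $1$, with the correct coefficient $\langle \prod_{j \in N_n \setminus Z} a^{\sigma_j}(k_j)\rangle$. The contraction piece, after reindexing $Z = Z' \setminus \{i\}$, contributes to subsets $Z \subset N_n$ with $1 \notin Z$, and the coefficient of $: \prod_{j \in Z} a^{\sigma_j}(k_j):$ works out to
\begin{equation*}
\sum_{i \in (N_n \setminus Z) \setminus \{1\}} \langle a^{\sigma_1}(k_1)\, a^{\sigma_i}(k_i)\rangle \left\langle \prod_{j \in (N_n \setminus Z) \setminus \{1,i\}} a^{\sigma_j}(k_j) \right\rangle .
\end{equation*}

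The last step is to identify this sum with $\langle \prod_{j \in N_n \setminus Z} a^{\sigma_j}(k_j)\rangle$, which is the analogous recursion for the vacuum expectation itself. I would prove this directly: if $\sigma_1 = +$ both sides vanish because $\langle \Omega | a^*(k_1) = 0$ and $\langle a^*(k_1)\, a^{\sigma_i}(k_i)\rangle \equiv 0$; if $\sigma_1 = -$, commuting $a(k_1)$ to the right past each factor using $[a(k_1), a^{\sigma_j}(k_j)] = \delta_{\sigma_j,+}\,\delta(k_1-k_j)$ until it annihilates $\Omega$ yields exactly the desired sum of contractions. I expect no analytic difficulty at any point; the only real work is the combinatorial bookkeeping that turns the double sum over $(Z', i)$ with $i \in Z'$ into a single sum over $Z \subset N_n$ with the correct Wick weight.
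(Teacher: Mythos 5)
Your proposal is correct. The paper does not actually prove Lemma \ref{lem:wick}; it simply cites Schweber's textbook (Chapter 13, Eq.~(106)), so there is no in-paper argument to compare against. Your induction on $n$ via the auxiliary identity
\[
a^\sigma(k)\, : \prod_{j\in J} a^{\sigma_j}(k_j): \;=\; :a^\sigma(k)\prod_{j\in J} a^{\sigma_j}(k_j): \;+\; \sum_{i\in J}\langle a^\sigma(k)\,a^{\sigma_i}(k_i)\rangle\, :\prod_{j\in J\setminus\{i\}} a^{\sigma_j}(k_j):
\]
is the standard way to prove the result from the CCR alone, and all the steps check out: the case split $\sigma=\pm$ in the auxiliary identity matches the vanishing pattern of the two-point functions, the reindexing $Z = Z'\cup\{1\}$ versus $Z = Z'\setminus\{i\}$ cleanly partitions the sum over subsets of $N_n$ according to whether $1\in Z$, and the final identification of the contraction sum with $\langle\prod_{j\in N_n\setminus Z}a^{\sigma_j}(k_j)\rangle$ is exactly the $n{=}|N_n\setminus Z|$ instance of Wick's theorem projected onto $Z=\emptyset$, which you correctly reprove directly by commuting $a(k_1)$ rightward onto the vacuum. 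One small presentational remark: it is worth stating explicitly that the bosonic CCR make the ordering inside each of $Z_+$, $Z_-$ immaterial, which is what licenses the reindexings without sign or permutation bookkeeping; you use this implicitly throughout.
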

See for example  \cite{Sch61} (Chapter 13, Eq. (106) and references therein).
The next lemma is from \cite{BFS98}.
\begin{lemma}   \label{lem:wick2} Let $f_j$ be measurable functions on $\R_+$. Then
\begin{eqnarray*}
\lefteqn{ \prod_{j=1}^n \{ a^{\sigma_j}(k_j)f_j(H_f)\} } \\
 &= \sum_{Z \subset N_n} \prod_{j \in Z_+} a^+(k_j) \Big\langle \prod_{j=1}^n \Big\{ \left[ a^{\sigma_j}(k_j) \right]^{\chi_{Z^c}(j)}
  f_j\left( H_f + r +  \omega_j \right) \Big\} \Big\rangle \Big|_{r = H_f} \prod_{j \in Z_-} a^-(k_j) ,
\end{eqnarray*}
where we set
$$
\omega_j = \sum_{\substack{i=1 \\ i \in Z_-}}^{j} |k_i| + \sum_{\substack{i=j+1 \\ i \in Z_+}}^{n} |k_i|
$$
\end{lemma}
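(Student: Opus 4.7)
The plan is to combine the pull-through formula (Lemma \ref{lem:pullthrough}) with the classical Wick theorem (Lemma \ref{lem:wick}) by a direct algebraic manipulation. Write $\epsilon_i = +1$ if $\sigma_i=+$ and $\epsilon_i=-1$ if $\sigma_i=-$. The first step is to iteratively push each $f_j(H_f)$ past all subsequent $a^{\sigma_i}(k_i)$, $i>j$, using Lemma \ref{lem:pullthrough}; this rewrites the left-hand side as
\[\Big[\prod_{j=1}^n a^{\sigma_j}(k_j)\Big] \prod_{j=1}^n f_j\!\big(H_f + S_j\big), \qquad S_j := \sum_{i>j}\epsilon_i|k_i|.\]

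Next I apply the standard Wick theorem to the monomial $\prod_{j=1}^n a^{\sigma_j}(k_j)$, obtaining a sum over subsets $Z\subset N_n$ of $\langle\prod_{j\in Z^c}a^{\sigma_j}(k_j)\rangle\,\prod_{j\in Z_+}a^+(k_j)\prod_{j\in Z_-}a^-(k_j)$. For each $Z$, the annihilation operators $\prod_{j\in Z_-}a^-(k_j)$ are then pushed through the product of $f_i$'s to the far right; by Lemma \ref{lem:pullthrough} this shifts each argument by $\sum_{j\in Z_-}|k_j|$. Splitting $S_i = \sum_{l>i,\,l\in Z_+}|k_l|-\sum_{l>i,\,l\in Z_-}|k_l|+\sum_{l>i,\,l\in Z^c}\epsilon_l|k_l|$ and $\sum_{j\in Z_-}|k_j|=\sum_{j\in Z_-,\,j\leq i}|k_j|+\sum_{j\in Z_-,\,j>i}|k_j|$, the $Z_-$-contributions with $l>i$ cancel and we obtain
\[S_i+\sum_{j\in Z_-}|k_j|=\omega_i+\sum_{l>i,\,l\in Z^c}\epsilon_l|k_l|,\]
with $\omega_i$ as in the statement.

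It remains to recognize the scalar prefactor $\langle\prod_{j\in Z^c}a^{\sigma_j}(k_j)\rangle\cdot\prod_j f_j(H_f+\omega_j+\sum_{l>j,\,l\in Z^c}\epsilon_l|k_l|)$ as the vacuum expectation appearing on the right-hand side of the lemma. For this I evaluate the right-hand side expectation by the same pull-through argument applied inside $\langle\cdot\rangle$: push every $f_j(H_f+r+\omega_j)$ past the operators $a^{\sigma_i}(k_i)$ with $i\in Z^c$, $i>j$, producing exactly the shifts $\sum_{l>j,\,l\in Z^c}\epsilon_l|k_l|$. Because $\Omega$ is annihilated by $H_f$, the $f$-factors $f_j(r+\omega_j+\sum_{l>j,\,l\in Z^c}\epsilon_l|k_l|)$ then pull out of $\langle\cdot\rangle$ as scalars, leaving precisely $\langle\prod_{j\in Z^c}a^{\sigma_j}(k_j)\rangle$; setting $r=H_f$ matches the expression obtained above and closes the argument.

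The main obstacle is purely combinatorial bookkeeping. The asymmetry in the definition of $\omega_j$ (with $Z_-$ indices at $i\leq j$ and $Z_+$ indices at $i>j$) arises precisely because we chose first to push the $f_j$'s to the right and then to push the $a^-$'s still further right; with this choice, the cancellation between $-\sum_{l>i,\,l\in Z_-}|k_l|$ inside $S_i$ and $+\sum_{j\in Z_-,\,j>i}|k_j|$ from the second pull-through leaves exactly $\omega_i$ plus the $Z^c$-contribution. Checking this identity, and verifying that the same cancellation reappears when the inner vacuum expectation is unfolded by a second application of Lemma \ref{lem:pullthrough}, is the only delicate point in the argument.
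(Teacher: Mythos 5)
Your proof follows the paper's own argument essentially verbatim: pull each $f_j(H_f)$ to the right of the monomial $\prod_j a^{\sigma_j}(k_j)$ via Lemma \ref{lem:pullthrough}, apply the classical Wick theorem (Lemma \ref{lem:wick}), and then commute the $f$-factors back between the surviving creation and annihilation operators, recording the resulting shift as $\omega_j$; your extra final step of unfolding the inner vacuum expectation simply makes explicit what the paper leaves as a one-line remark about $\omega_j$. The only small technicality you omit (which the paper does flag) is that the first pull-through pushes $f_j(H_f)$ rightward past annihilation operators $a(k_i)$ with $i>j$, producing arguments of the form $H_f+S_j$ that may be negative; one should therefore extend each $f_j$ to all of $\R$ by $f_j(r)=0$ for $r<0$ before invoking the pull-through formula in that direction.
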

\begin{proof}
To prove the lemma it is convenient to extend $f_j$ to the real line by setting $f_j(r)=0$ for any $r < 0$. That way we can
use the pull-through formula backwards. This and Wick's theorem yield
\begin{align*}
\prod_{j=1}^n \{ a^{\sigma_j}(k_j)f_j(H_f)\}
& = \prod_{j=1}^n a^{\sigma_j}(k_j) \prod_{j=1}^n f_j(H_f + \sum_{i =j+1}^n \sigma_i |k_i | ) \\
& = \sum_{Z \subset N_n } \left\langle \prod_{j \in N_n \setminus Z} a^{\sigma_j} (k_j) \right\rangle : \prod_{j \in Z} a^{\sigma_j}(k_j): \prod_{j=1}^n f_j(H_f + \sum_{i =j+1}^n \sigma_i |k_i | )
\end{align*}
Now using again the pull-through formula to bring the $f_j$'s to the desired position yields the  claim, noting that
$$\omega_j = \sum_{i=j+1}^n \sigma_i |k_i| + \sum_{j \in Z^-} |k_i| - \sum_{\substack{ i=j+1 \\ i \in Z^c}}^n \sigma_i |k_i| .$$
\end{proof}

\vspace{0.5cm}

\noindent
{\it Proof of Theorem \ref{thm:wicktheorem}}. For $m + n \geq 1$
we set
$$
\widehat{W}_{m,n} = \int a^*(z^{(m)}) w(H_f, Z^{(m,n)}) a(\widetilde{z}^{(n)}) \frac{d Z^{(m,n)}}{|Z^{(m,n)}|^{1/2}}.
$$
Using Lemma \ref{lem:wick2} and the observation that there are
$$\left( \begin{array}{c} m_l + p_l \\  p_l \end{array} \right)$$
ways to choose $p_l$ elements out of a set containing $m_l + p_l$ elements, we find
\begin{align*}
 & F_0(H_f) \widehat{W}_{M_1,N_1} F_1(H_f) \widehat{W}_{M_2,N_2}  \cdots  \widehat{W}_{M_L,N_L} F_L(H_f)  \\
& \quad = \sum_{\substack{p_1,...,p_L,  m_1,...,m_L, \\ q_1,...,q_L,n_1,...,n_L : \\ p_i + m_i = M_i,q_i + n_i = N_i, \\ p_l + q_l + m_l + n_l \geq 1 }}
\left( \begin{array}{c} m_l + p_l \\ p_l \end{array} \right) \left( \begin{array}{c} n_l + q_l \\ q_l \end{array} \right)  \int a^*(k_1^{(m_1)}) \cdots a^*(k_L^{(m_L)})
\\
& \quad
\langle \Omega   , F_0(H_f + r + \widetilde{r}_0 )  \widehat{W}_1( r+r_1, K_1^{(m_1,n_1)}) F_1(H_f + r + \widetilde{r}_1 ) \widehat{W}_2(r+r_2, K_2^{(m_2,n_2)}) \cdots \\
& \cdots F_{L-1}(H_f +  r + \widetilde{r}_{L-1}) \widehat{W}_L(r + r_L , K^{(m_L,n_L)}_L )    F_L(H_f + r + \widetilde{r}_L)      \Omega \rangle  \big|_{r =H_f}
\\
& \quad  a(\widetilde{k}_1^{(n_1)}) \cdots a(\widetilde{k}_L^{(n_L)}) \prod_{j=1}^L \frac{d K_j^{(m_j,n_j)}}{|K_j^{(m_j,n_j)}|^{1/2}}
\end{align*}
where we have set $\widehat{W}_i = \widehat{W}_{p_i,q_i}^{m_i,n_i}$ with
\begin{eqnarray*}
\lefteqn{  \widehat{W}_{p_l,q_l}^{m_l,n_l}[w](r, K_l^{(m_l,n_l)}) } \\
&& = P_{\rm red} \int_{B_1^{p_l + q_l} } \frac{d X^{(p_l,q_l)} }{|X^{(p_l,q_l)}|^{1/2}}
 a^*(x^{(p_l)}) w_{p_l + m_l , q_l + n_l }(H_f + r , x^{(p_l)},k_l^{(m_l)},   \widetilde{x}^{(q_l)}, \widetilde{k_l}^{(n_l)} ) a(\widetilde{x}^{(q_l)}) P_{\rm red}  .
\end{eqnarray*}
With this notation we have used the permutation symmetry of $w_{m,n}$ and
$$
d Z_l^{(M_l,N_l)} = d X^{(p_l, q_l )} d K_l^{(m_l,n_l )} .
$$
Summing over all $(M_i, N_i)$, it is easy to read off the kernels of the resulting operator, $H(\widetilde{w}^{\rm sym})$.
\qed


\begin{thebibliography}{30}






\bibitem{BCFS03} V. Bach, T. Chen, J. Fr\"ohlich, I.M. Sigal,
{\em Smooth Feshbach map and operator-theoretic renormalization group methods},
 J. Funct. Anal. 203 (2003),  44--92.

\bibitem{BFP06} V. Bach, J. Fr\"ohlich, A. Pizzo, {\it  Infrared-finite algorithms in QED: the groundstate of an atom interacting with the quantized radiation field},
Comm. Math. Phys. 264 (2006), no. 1, 145--165.

\bibitem{BFP09} V. Bach, J. Fr\"ohlich, A. Pizzo, {\it  Infrared-finite algorithms in QED. II. The expansion of the groundstate of an atom interacting with the quantized radiation field},
Adv. Math. 220 (2009), no. 4, 1023--1074.



\bibitem{BFS98}  V. Bach,  J. Fr\"ohlich, I.M. Sigal,
{\em Renormalization group analysis of spectral problems in quantum field theory},
 Adv. Math.  137 (1998),  205--298.


\bibitem{BFS99} V. Bach, J. Fr\"ohlich, I.M.  Sigal, {\it  Spectral analysis for systems of atoms and molecules coupled to the quantized radiation field},
 Comm. Math. Phys. 207 (1999), no. 2, 249--290.

\bibitem{BCVV09} J-M. Barbaroux, T. Chen, S. Vugalter, V. Vougalter
{\it  Quantitative estimates on the Hydrogen ground state energy in non-relativistic QED }. {\tt mp\_arc 09-48 }

\bibitem{CH04} I. Catto, C. Hainzl {\it Self-energy of one electron in non-relativistic QED}
 J. Funct. Anal. 207 (2004),  68--110.


\bibitem{Dix} J. Dixmier, \emph{Von Neumann algebras}, Translated from the second French edition by F. Jellett. North-Holland Mathematical Library, 27. North-Holland Publishing Co., Amsterdam-New York, 1981.

\bibitem{GLL01}
M. Griesemer, E. Lieb, M. Loss, {\it
Ground states in non-relativistic quantum electrodynamics}.
Invent. Math. 145 (2001), no. 3, 557--595.

\bibitem{G00} C. G\'{e}rard, {\it
On the existence of ground states for massless Pauli-Fierz Hamiltonians}.
 Ann. Henri  Poincar\'{e}  1 (2000),  443--459.


\bibitem{GH08} M. Griesemer, D. Hasler, {\em  On the smooth Feshbach-Schur Map},
J. Funct. Anal. 254 (2008),  2329--2335.

\bibitem{GH09} M. Griesemer, D. Hasler, {\em Analytic Perturbation Theory and Renormalization Analysis of Matter Coupled to Quantized Radiation},
Ann. Henri Poincar\'{e}.

\bibitem{HHS05} C. Hainzl, M. Hirokawa, H. Spohn, {\em
Binding energy for hydrogen-like atoms in the Nelson model without cutoffs},
J. Funct. Anal. 220 (2005),  424--459.

\bibitem{HS95}  M. H\"ubner, H. Spohn, {\em Spectral properties of the spin-boson Hamiltonian.}
 Ann. Inst. H. Poincar\'{e}   \ Phys. Tho\'{e}r. 62 (1995),  289--323.

\bibitem{K} T. Kato, \emph{Perturbation theory for linear operators}, Springer Verlag, New York, 1966, pp.75-80.

\bibitem{reesim4} M. Reed and B. Simon, {\it Methods of modern mathematical physics. IV. Analysis of operators},
Academic Press, New York-London, 1978.


\bibitem{Sch61}  S. Schweber, {\em An introduction to relativistic quantum field theory.} Foreword by Hans A. Bethe. Row, Peterson and Company, Evanston, Ill.-Elmsford, N.Y. 1961.

\bibitem{sim74} B. Simon, {\it The $P(\phi )_{2}$ Euclidean (quantum) field theory},
Princeton Series in Physics. Princeton University Press, Princeton, N.J., 1974.

\bibitem{spohn89} H. Spohn, {\it
Ground state(s) of the spin-boson Hamiltonian},
Comm. Math. Phys. 123 (1989),  277--304.










\end{thebibliography}
\end{document}